\theoremstyle{definition}
\newtheorem{definition}{Definition}[section]
\newtheorem{example}[definition]{Example}
\newtheorem{remark}[definition]{Remark}
\theoremstyle{plain}
\newtheorem{lemma}[definition]{Lemma}
\newtheorem{corollary}[definition]{Corollary}
\newtheorem{theorem}[definition]{Theorem}
\newtheorem{observation}[definition]{Observation}
\newtheorem{fact}[definition]{Fact}
\Crefname{fact}{Fact}{Facts}
\theoremstyle{remark}
\newtheorem{claim}{Claim}[definition]
\Crefname{claim}{Claim}{Claims}
\newenvironment{claimproof}[1][Proof of Claim]{\begin{proof}[#1] }{ \end{proof}}
\setlist[enumerate, 1]{font=\upshape, itemsep=0.05em}
\setlist[enumerate, 2]{font=\upshape, itemsep=0.05em, nolistsep}
\setlist[itemize, 1]{noitemsep, nolistsep,font=\upshape}
\setlist[itemize, 2]{noitemsep, nolistsep,font=\upshape}
\newcommand{\Tt}{\mathcal{T}}
\DeclareMathOperator{\supDepth}{supDp}
\DeclareMathOperator{\clique}{clique}
\DeclareMathOperator{\child}{children}
\DeclareMathOperator{\td}{td}
\DeclareMathOperator{\vol}{vol}
\let\sup\relax
\DeclareMathOperator{\sup}{sup}
\newcommand{\bbN}{\mathbb{N}}
\newcommand{\bbQ}{\mathbb{K}}
\newcommand{\Pp}{\mathcal{P}}
\newcommand{\Oo}{O}
\tikzset{
	vertex/.style={draw,circle,fill=gray},
	every node/.style={anchor=center},
	lbl/.style={color=lightgray}
}
\newcommand{\myhref}[1]{\ifboolexpr{test {\ifhyperref}
    and
    not test {\iftoggle{bbx:url}}
    and
    not test {\iftoggle{bbx:doi}}
  }
    {\href{\doiorurl}{#1}}
    {#1}}
\newif\ifcomments
  \renewcommand{\comment}[2]{\marginpar{\tiny{\textbf{#1: }\textit{#2}}}}
  \newcommand{\important}[1]{{\color{red}#1}}
  \newcommand{\note}[1]{{\color{blue}#1}}
  \renewcommand{\comment}[2]{}
  \newcommand{\important}[1]{}
  \newcommand{\note}[1]{}
\declaretheoremstyle[bodyfont=\normalfont \itshape]{thmstyle} 
\declaretheorem[name=Theorem, style=thmstyle]{mainthm}
\Crefname{mainthm}{Theorem}{Theorems}
\newenvironment{proof-sketch}{\medskip\noindent{\em Proof Sketch.}}{\qed\bigskip}
\newenvironment{proof-attempt}{\medskip\noindent{\em Proof attempt.}}{\bigskip}
             \newcommand{\inbrace }[1]{\left\{#1\right\}}
\newcommand{\setdef}[2]{\inbrace{{#1}\ \middle| \ {#2}}}
\newcommand{\F}{\mathbb{F}}
\newcommand{\K}{\mathbb{K}}
\newcommand{\N}{\mathbb{N}}
  \newcommand{\poly}{\operatorname{poly}}
\newcommand{\size}{\operatorname{size}}
\newcommand{\depth}{\operatorname{depth}}
\newcommand{\maxorb}{\operatorname{maxOrb}}
\newcommand{\maxsup}{\operatorname{maxSup}}
\DeclareMathOperator{\tw}{tw}
\DeclareMathOperator{\pw}{pw}
\DeclareMathOperator{\emb}{emb}
\DeclareMathOperator{\perm}{perm}
\DeclareMathOperator{\cl}{cl}
\DeclareMathOperator{\id}{id}
\DeclareMathOperator{\Aut}{\mathbf{Aut}}
\DeclareMathOperator{\Sym}{\mathbf{Sym}}
\DeclareMathOperator{\Orb}{\mathbf{Orb}}
\DeclareMathOperator{\Stab}{\mathbf{Stab}}
\DeclareMathOperator{\StabP}{\mathbf{Stab}^{\bullet}}
\newcommand{\veca}{\boldsymbol{a}}
\newcommand{\vecb}{\boldsymbol{b}}
\newcommand{\vecv}{\boldsymbol{v}}
\newcommand{\vecw}{\boldsymbol{w}}
\newcommand{\calG}{\mathcal{G}}
\newcommand{\calP}{\mathcal{P}}
\newcommand{\calT}{\mathcal{T}}
\newcommand{\calX}{\mathcal{X}}
\newcommand{\symVP}{\ensuremath{\mathsf{symVP}}\xspace}
\newcommand{\symVBP}{\ensuremath{\mathsf{symVBP}}\xspace}
\newcommand{\symVF}{\ensuremath{\mathsf{symVF}}\xspace}
\newcommand{\VP}{\ensuremath{\mathsf{VP}}\xspace}
\newcommand{\VNP}{\ensuremath{\mathsf{VNP}}\xspace}
\newcommand{\VBP}{\ensuremath{\mathsf{VBP}}\xspace}
\newcommand{\VF}{\ensuremath{\mathsf{VF}}\xspace}
\newcommand{\VW}{\ensuremath{\mathsf{VW}[1]}\xspace}
\newcommand{\VFPT}{\ensuremath{\mathsf{VFPT}}\xspace}
\newcommand{\FPT}{\ensuremath{\mathsf{FPT}}\xspace}
\renewcommand{\epsilon}{\varepsilon}
\renewcommand{\phi}{\varphi}
\renewcommand{\epsilon}{\varepsilon}
\newcommand{\ignore}[1]{}
\title{Lower Bounds in Algebraic Complexity via Symmetry and Homomorphism Polynomials}
\author{Prateek Dwivedi \orcidlink{0000-0002-0572-3721} \\ IT-Universitetet i København \and Benedikt Pago \orcidlink{0000-0001-6377-1230} \\University of Cambridge \and Tim Seppelt \orcidlink{0000-0002-6447-0568} \\ IT-Universitetet i København}
\newcommand{\Xx}{\mathcal{X}}
\DeclareMathOperator{\sgn}{sgn}
\DeclareMathOperator{\colhom}{colhom}
\newcommand{\bitvectors}{\genfrac{\langle}{\rangle}{0pt}{}}
\begin{document}
	\maketitle

    \begin{abstract}
        Valiant's conjecture from 1979 asserts that the circuit complexity classes \VP and \VNP are distinct, meaning that the permanent does not admit polynomial-size algebraic circuits.
        As it is the case in many branches of complexity theory, the unconditional separation of these complexity classes seems elusive.
        In stark contrast, the symmetric analogue of Valiant's conjecture has been proven by Dawar and Wilsenach (2020): the permanent does not admit symmetric algebraic circuits of polynomial size, while the determinant does.
        Symmetric algebraic circuits are both a powerful computational model and amenable to proving unconditional lower bounds.

        In this paper, we develop a symmetric algebraic complexity theory by introducing symmetric analogues of the complexity classes \VP, \VBP, and \VF called \symVP, \symVBP, and \symVF.
        They comprise polynomials that admit symmetric algebraic circuits, skew circuits, and formulas, respectively, of polynomial orbit size.
        Having defined these classes, we show unconditionally
        that
        \[ \symVF \subsetneq \symVBP \subsetneq \symVP. \]
        To that end, we characterise the polynomials in \symVF and \symVBP as those that can be written as linear combinations of homomorphism polynomials for patterns of bounded treedepth and pathwidth, respectively.
        This extends a previous characterisation by Dawar, Pago, and Seppelt (2025) of \symVP.
        The separation follows via model-theoretic techniques and the theory of homomorphism indistinguishability.

        Although \symVBP and \symVP admit strong lower bounds, we are able to show that these complexity classes are rather powerful:
        They contain homomorphism polynomials which are \VBP- and \VP-complete, respectively.
        Vastly generalising previous results,
        we give general graph-theoretic criteria for homomorphism polynomials and their linear combinations to be \VBP-, \VP-, or \VNP-complete.
        These conditional lower bounds drastically enlarge the realm of natural polynomials known to be complete for \VNP, \VP, or \VBP. 
        Under the assumption $\VFPT \neq \VW$,
        we precisely identify the homomorphism polynomials that lie in \VP as those whose patterns have bounded treewidth and thereby resolve an open problem posed by Saurabh (2016).
    \end{abstract}

\newpage
    \tableofcontents

   \newpage
    \section{Introduction}

    Algebraic complexity theory aims to classify families of polynomials according to the size of their smallest algebraic circuit representation. 
    An \emph{algebraic circuit} is a directed acyclic graph whose internal vertices represent the addition and multiplication gates and whose input gates are labelled with variables or field constants. The circuit computes a polynomial by propagating the computation from the input gates to the output gate through the internal gates. The size of the circuit is defined as the total number of gates and wires it contains.
    Two canonical examples of polynomials are the determinant $\det_n = \sum_{\pi \in \Sym_n} (-1)^{\sgn(\pi)} \prod_{i \in [n]} x_{i\pi(i)}$, which admits polynomial-size circuits, and the permanent $\perm_n = \sum_{\pi \in \Sym_n} \prod_{i \in [n]}  x_{i\pi(i)}$, for which only exponential-size circuits are known. 
    Valiant's conjecture \cite{valiant_completeness_1979}, the central open problem in the field, postulates that $(\perm_n)$ does not admit circuits of polynomial size -- or equivalently that $\VP \neq \VNP$.
    Here, $\VP$ denotes the class of polynomial families $(p_n)_{n \in \bbN}$ where each $p_n$ has polynomially bounded degree and can be computed by families of algebraic circuits of size polynomial in $n$. 
    The class $\VP$ is contained in $\VNP$, which consists of all polynomial families that can be reduced to the permanent.

    Besides \VP and \VNP, the complexity classes \VF and \VBP, which are structurally more restrictive and widely believed to be smaller, have been extensively studied.
    The class $\VF$ comprises all polynomials computed by polynomial-size \emph{formulas}, i.e.\ tree-shaped circuits, which are incapable to reuse previous subcomputations. 
    The class \VBP contains the polynomials computed by polynomial-size skew circuits. A circuit is \emph{skew} 
    if all but at most one child of every multiplication gate are input gates.
    This model is equivalent to algebraic branching programs \cite{mahajan_algebraic_2013}.

    The resulting complexity hierarchy is $\VF \subseteq \VBP \subseteq \VP \subseteq \VNP$.
   	While all inclusions are believed to be strict,
   	any unconditional separation appears to be currently out of reach.
	Facing this impasse,
	a substantial line of work 
	has focussed on proving lower bounds
	on restricted circuit models
	such as monotone or multilinear circuits and formulas \cite{agrawal_selection_2014,LST2021}. 
A shortcoming of these results -- such as the exponential lower bound for multilinear formulas computing the permanent~\cite{Raz04} -- is that they apply equally to the determinant, even though the latter belongs to $\mathsf{VBP}$.

    To the best of our knowledge, the only restricted circuit model for which the permanent is provably hard while the determinant is  still easy, are \emph{symmetric} circuits \cite{dawar_symmetric_2020}. 
    It is this symmetry restriction that we are interested in.
    We introduce  symmetric analogues of \VF, \VBP, and \VP and develop a systematic symmetric algebraic complexity theory.
    Most notably, we unconditionally separate the symmetric versions of \VF, \VBP and \VP.
    
    Curiously, our symmetric circuit model imposes a restriction only on one aspect of algebraic computation, while it relaxes it in others.
    Thus, the symmetric classes \symVF, \symVBP, \symVP that we define below are a priori not subclasses of, but orthogonal to \VF, \VBP, and \VP.

    The sense in which they are restricted is, as already said, symmetry on variables:
    Many polynomials studied in algebraic complexity are symmetric under certain permutations of the variables.
    Thus, it is natural to demand that the circuits computing these polynomials also possess these symmetries. 
    To be concrete, consider the permanent $\perm_n \in \mathbb{K}[\mathcal{X}_n]$ in variables $\mathcal{X}_n \coloneqq \{x_{i,j} \mid i,j \in [n]\}$. 
    The group $\Sym_n \times \Sym_n$ acts on $\mathbb{K}[\mathcal{X}_n]$ by mapping variables  $x_{i,j}$ to $x_{\pi(i),\sigma(j)}$ for $(\pi, \sigma) \in \Sym_n \times \Sym_n$.
    It is not hard to see that $\perm_n$ is invariant under this action.
    We call a polynomial\footnote{Throughout, we fix a field $\mathbb{K}$ of characteristic zero.} $p \in \mathbb{K}[\mathcal{X}_n]$ \emph{matrix-symmetric} if it is invariant under the action of $\Sym_n \times \Sym_n$ \cite{DW2022}.
    
    A \emph{matrix-symmetric circuit} for computing a matrix-symmetric polynomial is one where the action of $\Sym_n \times \Sym_n$ on the input variables can be extended to an automorphism of the circuit, that is, a permutation of the gates that preserves wires and non-wires.
    This model was first considered by \textcite{dawar_symmetric_2020}, 
    who showed that the permanent does not admit matrix-symmetric circuits of subexponential size, while the determinant can be computed with polynomial-size symmetric\footnote{The determinant is strictly speaking outside  the scope of this article because it is not matrix-symmetric but \emph{square-symmetric}, i.e.\ it is invariant under the simultaneous action of $\Sym_n$ on rows and columns given by $\pi(x_{i,j}) = x_{\pi(i),\pi(j)}$ for $\pi \in \Sym_n$. Nevertheless, our framework can be adapted to square symmetries, see \cite{DW2022}.} circuits. To be precise, they established a lower bound on the orbit size rather than total circuit size. 
    The \emph{orbit size} of a matrix-symmetric circuit is the size of the largest set of gates that can all be mapped to each other under the action of $\Sym_n \times \Sym_n$, and it is a lower bound for the total size. 
    So, even if symmetric circuits are permitted to be of arbitrary size, 
    they cannot compute the permanent unless they contain exponentially many gates whose subcircuits are mutually symmetric.
	This motivates our definition of the complexity classes \symVP, \symVBP, and \symVF. 
	For precise definitions of symmetric circuits and orbit size, see \cref{sec:preliminaries-first-12-pages}.
    
    \begin{definition}[Symmetric algebraic complexity classes]
    \label{def:symClasses}
    Let $p_n \in \mathbb{K}[\mathcal{X}_n]$ be a family of matrix-symmetric polynomials.
    \begin{enumerate}
    \item $(p_n) \in \symVP$ if the $p_n$ admit matrix-symmetric circuits of polynomial orbit size.
     \item $(p_n) \in \symVBP$ if the $p_n$ admit matrix-symmetric skew circuits of polynomial orbit size.
     \item $(p_n) \in \symVF$ if the $p_n$ admit matrix-symmetric formulas of polynomial orbit size.
     \end{enumerate}
    \end{definition}

These symmetric complexity classes are quite powerful, as we prove that $\symVBP$ and $\symVP$ contain $\VBP$- and $\VP$-complete polynomials, respectively. Nonetheless, they remain restricted enough to be meaningful: \textcite{dawar_symmetric_2020} proved that the permanent polynomial does not belong to $\symVP$.

	We stress that \symVP, \symVBP, and \symVF are in general incomparable with \VP, \VBP, and \VF for three reasons:
	our symmetric classes only contain matrix-symmetric polynomials,
	they do not impose a degree bound on the polynomials, and
	the complexity of symmetric circuits is measured in terms of orbit size rather than total size.
We justify these design choices by proving natural characterisations of our symmetric classes culminating in the following unconditional separation.
    
    \begin{mainthm}[restate=thmSeparation, label=thm:separation]
        $\symVF \subsetneq \symVBP \subsetneq \symVP$.
    \end{mainthm}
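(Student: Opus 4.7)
The plan is to derive the separation from three graph-theoretic characterisations of the symmetric classes combined with homomorphism indistinguishability. As announced in the abstract, we aim to show that \symVF coincides with the linear span of $\hom(F,\cdot)$ over patterns $F$ of bounded treedepth, and \symVBP with the span over patterns of bounded pathwidth; for \symVP this has been done with bounded treewidth by Dawar, Pago, and Seppelt. Together with the strict hierarchy of graph parameters (paths have pathwidth $1$ but treedepth $\Theta(\log n)$, and complete binary trees have treewidth $1$ but pathwidth $\Theta(\log n)$), these characterisations reduce the separation to exhibiting explicit polynomial families that witness each strict inclusion.

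First I would prove the two new characterisations. For the \emph{easy} direction, a matrix-symmetric skew circuit (respectively formula) of polynomial orbit size computing $\hom(F,\cdot)$ can be built by traversing a path decomposition (respectively an elimination forest) of $F$ and arranging the construction so that the canonical $\Sym_n \times \Sym_n$ action on variables extends to automorphisms of the circuit. For the \emph{hard} direction, starting from a small symmetric skew circuit or formula, I would adapt the support-theoretic machinery that underlies the \symVP characterisation; the additional constraints imposed by formulas (no sharing forces a tree-shaped support hierarchy, matching a treedepth decomposition of the extracted pattern) and by skewness (the linear spine of multiplications forces a path-shaped support, matching a path decomposition) should yield the sharper structural bounds.

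Second, for the separation itself, I would take the candidate families $(\hom(T_n,\cdot))_n$ with $T_n$ a complete binary tree of depth $\log n$ to witness $\symVBP \subsetneq \symVP$, and $(\hom(P_n,\cdot))_n$ with $P_n$ the $n$-vertex path to witness $\symVF \subsetneq \symVBP$. Membership in the larger class follows in each case from the easy direction. To prove non-membership in the smaller class, I invoke homomorphism indistinguishability: if $p$ is a linear combination of $\hom(F,\cdot)$ with all $\pw(F)\leq k$ (resp.\ $\td(F)\leq k$), then $p(A_G)=p(A_H)$ for every pair of graphs $G,H$ that are homomorphism-indistinguishable over the corresponding pattern class. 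The Dvo\v{r}\'ak and Dell--Grohe--Rattan line of results and its pathwidth/treedepth refinements supply explicit distinguishing pairs $(G_n,H_n)$, forcing the target polynomial out of the smaller class.

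The main obstacle I anticipate is the lower direction of the characterisation, especially at the \symVF level: the Dawar--Pago--Seppelt argument for \symVP grants arbitrary reuse of intermediate computations, but for formulas one must match the rigidly tree-shaped support hierarchy of the circuit to a treedepth decomposition of each surviving homomorphism pattern, with depth controlled by the orbit-size parameter. A secondary subtlety is the transition between the ordered variable set $\mathcal{X}_n$ and the unordered nature of homomorphism indistinguishability; matrix-symmetry of the evaluation map is precisely what makes this transition valid, and tracking it cleanly across all three levels of the hierarchy will require careful bookkeeping.
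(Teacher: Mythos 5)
Your strategy matches the paper's: characterise \symVF and \symVBP by bounded-treedepth and bounded-pathwidth homomorphism polynomials, place $(\hom_{P_n,n})$ and $(\hom_{B_n,n})$ in the larger class by the constructive direction, and rule them out of the smaller class semantically via homomorphism indistinguishability --- thereby avoiding the Lov\'asz-linear-independence shortcut, which fails at fixed host size (see \cref{ex:not-linearly-independent}). One imprecision is worth correcting in your final step. The Dvo\v{r}\'ak and Dell--Grohe--Rattan theorems identify homomorphism indistinguishability over bounded-treewidth patterns with $\mathsf{C}^k$-equivalence, but by themselves they do not supply, for a graph $F$ outside the class, a pair $(G,H)$ that is hom-indistinguishable over the class yet has $\hom(F,G)\neq\hom(F,H)$. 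That stronger conclusion is \emph{homomorphism distinguishing closure}, which rests on Roberson's oddomorphism framework, Cai--F\"urer--Immerman constructions, and duality theorems for the width parameters, and is proved for treewidth by Neuen, for pathwidth by Seppelt, and for treedepth by Fluck, Seppelt, and Spitzer. In addition, the non-uniformity of the symmetric circuit model means the distinguishing pair must be realised within $(n,n)$ vertices for each fixed $n$, which requires the non-uniform adaptation carried out in the earlier Dawar--Pago--Seppelt work and invoked in the proof of \cref{thm:symmetric-classes-single-hom}. With these substitutions your proposal aligns with the paper's proof.
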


    To obtain \cref{thm:separation}, we precisely characterise \symVF and \symVBP in terms of so-called \emph{homomorphism polynomials}.
    This argument is based on the following fundamental observation:
    The matrix-symmetric polynomials $p \in \mathbb{K}[\mathcal{X}_n]$, such as the permanent,
    are precisely the finite linear combinations of homomorphism polynomials of bipartite multigraphs \cite[Lemma~3.1]{DPS2025}.\footnote{Similarly, the square-symmetric polynomials, such as the determinant, are precisely those that can written as linear combinations of suitably defined homomorphism polynomials from directed looped multigraphs. In order not to deal with loops and directed edges, we focus on matrix symmetries. } 
    For a bipartite multigraph $F$ with bipartition $V(F) = A \uplus B$, and a number $n \in \bbN$, the \emph{$n$-th homomorphism polynomial} of $F$~is 
    \begin{equation}
    \label{eq:homPoly}
        \hom_{F,n} \; \coloneqq\; \sum_{h \colon V(F) \to [n]} \; \prod_{\substack{ab \in E(F) \\ a \in A, b \in B}} \; x_{h(a),h(b)}\,.
    \end{equation}
    If the variables of $\hom_{F,n}$ are instantiated with the $\{0,1\}$-entries of the bi-adjacency matrix of some $(n,n)$-vertex bipartite graph~$G$, then $\hom_{F,n}$ evaluates to the number of homomorphisms from $F$ to $G$.
    This motivates calling $F$ the \emph{pattern graph} of the homomorphism polynomial.

    \textcite[Theorem~1.1]{DPS2025} showed that a family of matrix-symmetric polynomials $(p_n)$ is in \symVP if, and only if, $(p_n)$ can be written as linear combinations of homomorphism polynomials for pattern graphs of bounded treewidth. 
    By significantly refining their arguments,
    we characterise \symVF and \symVBP, which were not considered before. 

    \begin{mainthm}[restate=mainThmSymVF, label=thm:characterisationVF, name=Characterising $\symVF$]
    	A family of matrix-symmetric polynomials $(p_n)$ is in $\symVF$ if, and only if, $(p_{n})$ can be written as linear combinations of homomorphism polynomials for patterns of bounded treedepth.
    \end{mainthm}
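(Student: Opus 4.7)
The proof will mirror the treewidth characterisation of $\symVP$ from \cite{DPS2025}, refining both directions to extract and exploit the formula structure rather than general DAG structure. Since every formula is in particular a circuit, one direction of the inclusion already follows from the $\symVP$ result, so the content is in showing that bounded \emph{treedepth} replaces bounded treewidth precisely when the computational model is restricted from circuits to formulas.

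For the easy direction, I plan to show that if $(p_n)$ is a linear combination of $\hom_{F,n}$ with $\td(F) \leq d$ for all patterns $F$, then $(p_n) \in \symVF$. Given a treedepth decomposition $T$ of $F$ of depth $d$, I would define, for every subtree $T_v$ rooted at $v \in V(F)$ and every assignment $\vecy \colon \text{anc}(v) \to [n]$ of values to the ancestors of $v$, a gate computing the partial sum
\[
    \hom^{T_v}_{F,n}[\vecy] \;=\; \sum_{h \colon V(T_v) \to [n]} \prod_{ab \in E(F[V(T_v) \cup \text{anc}(v)])} x_{(h \cup \vecy)(a),(h\cup \vecy)(b)}.
\]
Because $T_v$ has depth at most $d$, there are at most $d$ ancestors so $\vecy$ ranges over at most $n^d$ assignments. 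Crucially the tree shape of the recursion ensures that at each internal node I can take a product over the children's subformulas and sum over the value at $v$ without reusing any subformula twice, yielding a formula of size $n^{\Oo(d)} \cdot |F|$. Symmetrising this construction over $\Sym_n \times \Sym_n$ (treating $A$ and $B$ separately and symmetrising over each) keeps the orbit size polynomial, following the symmetrisation scheme of \cite{DPS2025}.

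For the harder direction, suppose $(p_n) \in \symVF$ is witnessed by matrix-symmetric formulas $\Phi_n$ of polynomial orbit size. By Lemma~3.1 of \cite{DPS2025}, each $p_n$ is already a linear combination of homomorphism polynomials of bipartite multigraphs; what remains is to show that only patterns of treedepth at most some $d = d(\Phi)$ contribute nontrivially. Following \cite{DPS2025}, I would associate with each gate $g$ of $\Phi_n$ a \emph{support} -- the set of orbits of variables on which the polynomial at $g$ depends -- and a \emph{pattern} $F_g$ obtained from this support via the substitution argument. For general circuits, \cite{DPS2025} extract a tree decomposition of $F_g$ from the reverse topological order on the circuit. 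Here I would replace the DAG argument by exploiting that $\Phi_n$ is a tree: the tree of gates, restricted to gates whose support interacts with the given pattern, directly yields a \emph{rooted forest} on $V(F_g)$ witnessing treedepth, because the parent-child relation in the formula now forces an ancestor-descendant relation between the involved vertex orbits.

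The main obstacle I anticipate is precisely this last step: ensuring that the rooted forest obtained from the formula genuinely witnesses treedepth of $F_g$, i.e.\ that every edge of $F_g$ has endpoints in ancestor-descendant relation, not merely the bag containment required for treewidth. This requires a careful support-propagation lemma: if an edge $\{a,b\}$ appears in $F_g$ then both $a$ and $b$ must lie in the support of some gate on a root-to-leaf path in the formula above $g$, because neither vertex can ``appear from nowhere'' at a multiplication gate without being present in an ancestor. The symmetry of $\Phi_n$ is essential here, as it forces the supports of orbit-equivalent gates to behave uniformly and rules out edges spanning incomparable subformulas. Pushing this through combines the supports framework of \cite{DPS2025} with the well-known equivalence between formula depth and treedepth for tree-shaped objects; the orbit-size polynomial bound is what keeps the resulting treedepth bounded by a constant depending only on the circuit family, not on $n$.
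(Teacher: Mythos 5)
Your easy direction matches the paper's: build a formula by dynamic programming along an elimination tree of each pattern $F$, with partial sums indexed by assignments of $[n]$-values to the at most $d$ ancestors of each tree node, yielding $n^{O(d)}$-size subformulas whose tree shape avoids reuse (cf.\ \cref{lem:singleTreedepthHomCount}). That part is fine.

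The hard direction has a genuine gap. You are right that the formula's tree structure should yield a rooted forest on the pattern vertices, and right to flag that the crux is bounding the depth of that forest by a constant. But the mechanism you invoke---``the well-known equivalence between formula depth and treedepth for tree-shaped objects,'' controlled by the orbit-size bound---does not work, because a matrix-symmetric formula of polynomial orbit size may still have depth $\Theta(n)$ or larger: orbit size bounds how many gates are mutually symmetric, not how tall the tree is. What \emph{is} bounded is a more refined quantity: the number of support-changing gates along any root-to-leaf path, where a gate $g$ is support-changing if some child $h$ satisfies $\sup(h) \setminus \sup(g) \neq \emptyset$. The paper introduces this as \emph{support depth} (\cref{def:supportDepth}), and the engine of the whole lower-bound direction is the counting argument of \cref{lem:largeGatesInFormulas,lem:supportChangingGatesAreLarge,cor:supportDepthBounded}: every support-changing gate must have $\Omega(n)$ children lying in one $\Stab(g)$-orbit, and---\emph{because in a formula, unlike a circuit, the subformulas rooted at these orbit-siblings are disjoint}---$d$ support-changing gates on a single path force orbit size $\Omega\bigl((n/2)^d\bigr)$. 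This multiplicativity across disjoint subformulas is exactly what distinguishes formulas from circuits and is the step your proposal does not supply. Once support depth is bounded, the inductive step (\cref{lem:supportDepthImpliesTreedepth}) shows that support depth $d$ and support size $k$ together give patterns with tree decompositions of depth $\le dk$, hence treedepth $\le dk$ by \cref{lem:treedepth}. Your ``support-propagation lemma'' for ancestor--descendant relations is close in spirit to the latter step, but without the support-depth quantity and the orbit-size lower bound that controls it, you have no leverage to conclude the forest depth is a constant independent of $n$.
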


    While the \emph{treewidth} $\tw(F)$ measures how tree-like a graph $F$ is, 
    the \emph{pathwidth} $\pw(F)$ measures how close $F$ is to being a path, see \cite{bodlaender_partial_1998,nesetril_sparsity_2012}.
    In turn, the \emph{treedepth} $\td(F)$ measures how star-like the graph $F$ is \cite{nesetril_tree-depth_2006}.
    
    \begin{mainthm}[restate=mainThmSymVBP, label=thm:sym-vbp-orbit, name=Characterising $\symVBP$]
    	A family of matrix-symmetric polynomials $(p_n)$ is in $\symVBP$ if, and only if, $(p_{n})$ can be written as linear combinations of homomorphism polynomials for patterns of bounded pathwidth.
    \end{mainthm}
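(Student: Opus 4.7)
The plan is to adapt the characterisation of $\symVP$ via bounded treewidth from \cite{DPS2025} to the more restrictive skew circuit model, for which pathwidth is the natural graph parameter. The underlying intuition is that the DAG structure of a general circuit gives rise to tree-shaped decompositions of associated pattern graphs, while the linear product structure of a skew circuit should force a more rigid path-shaped decomposition. Concretely, the argument has two directions, and both the forward construction and the reverse extraction must be tuned so that ``tree'' is replaced by ``path''.

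For the easy direction, suppose $(p_n)$ is a linear combination of polynomially many homomorphism polynomials $\hom_{F_n, n}$ with $\pw(F_n) \leq k$ for a constant $k$. I would fix a path decomposition of each $F_n$ of width $k$ and evaluate $\hom_{F_n, n}$ by the standard dynamic programming along the decomposition. Each bag induces a layer of gates indexed by maps from the bag into $[n]$, and adjacent layers are connected by skew multiplication gates that multiply in the new edges introduced between consecutive bags. The resulting circuit is a matrix-symmetric ABP, since the dynamic programming is equivariant under the action of $\Sym_n \times \Sym_n$, and each orbit corresponds to an orbit of a bag assignment, which has size at most $n^{k+1}$.

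For the harder direction, I would start with a matrix-symmetric skew circuit $C$ of polynomial orbit size computing $p_n$. Following \cite{DPS2025}, the first step is to apply the support theorem to assign to every orbit $O$ of gates a pattern graph $F_O$ whose vertices are labelled by the bounded set of coordinates on which gates in $O$ depend, and whose edges record which variables are multiplied together in the sub-computation leading to $O$. The key refinement is to exploit skewness: because skew circuits compute iterated matrix products, the gate orbits inherit a linear order from the source-to-sink layering of the ABP, and the supports of gates in adjacent layers differ only by the variables introduced by the single non-input multiplicand. From this, one should derive that each $F_O$ admits a path decomposition whose bags are indexed by the layers of $C$ and whose width is controlled by the local support size. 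Summing over the polynomially many orbits then yields the required linear combination of bounded-pathwidth homomorphism polynomials.

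The main obstacle is ensuring that the extracted pattern graphs genuinely have bounded pathwidth and not merely bounded treewidth. In the general case treated by \cite{DPS2025} the tree structure comes essentially for free from the DAG of the circuit; in the skew setting one additionally needs to show that the interface between consecutive layers of the ABP, that is, the set of coordinates that remain ``active'' across a layer boundary, is of bounded size. This requires a careful refinement of the orbit argument that tracks, for each orbit, not only the set of coordinates it depends on but also the first and last layer at which each coordinate appears, and then exhibits a linear layout of $F_O$ in which every vertex spans only a contiguous range of layers. I expect this interval-like locality of supports, enforced by skewness and symmetry together, to be the crux of the proof.
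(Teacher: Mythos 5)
Your easy direction matches the paper exactly: a symmetric dynamic program along a width-$k$ path decomposition, with gates indexed by bag assignments and skew multiplication connecting consecutive layers, is precisely the construction in the paper's \cref{lem:pathwidth-small-circuit}.

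For the hard direction, your high-level intuition is correct---skewness should collapse the tree-shaped decomposition of \cite{DPS2025} into a path-shaped one---but the concrete plan you sketch does not match how the argument actually has to go, and I think it has genuine gaps. First, you propose to "assign to every orbit $O$ of gates a pattern graph $F_O$," but the induction in \cite{DPS2025}, and in this paper, does not associate a single pattern with a gate or an orbit. Each gate $g$ computes a \emph{linear combination} of labelled homomorphism polynomial maps $\sum_i \alpha_i \boldsymbol{F}^i_{n,m}(\vec{\sup}_L(g), \vec{\sup}_R(g))$, and the number of constituent patterns can grow as you go up the circuit (notably at product gates, where the symmetric product over an orbit of children is rewritten, via Möbius-type identities, as a sum over partitions). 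Your plan to extract one graph and read a path decomposition off the circuit's layer structure has no obvious way to account for this. Second, the claim that "supports of gates in adjacent layers differ only by the variables introduced by the single non-input multiplicand" fails at addition gates: in a symmetric skew circuit, a sum gate has unbounded fan-in and its support can differ substantially from the supports of its children (the children are typically fused into orbits whose common stabiliser shrinks the support). Third---and this is the real crux the proposal misses---the step that threatens to break pathwidth is not the layering but the \emph{restricted product} that arises when a product gate multiplies together an orbit of input-gate children: for pathwidth this operation incurs a multiplicative blowup of the width (a factor of 3 per applied summation direction, see \cref{lem:restrictedProd-Pathwidth}), which could compound across the depth of the circuit and ruin boundedness. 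The paper's proof of \cref{lem:skew-to-hompoly} exploits skewness precisely here: at each product gate, the factors entering restricted products are only the \emph{input-gate} children (single labelled edges, so patterns of trivial pathwidth), and there is exactly \emph{one} internal child carrying the inductive invariant; gluing the two back together keeps the width at a fixed multiple $9k$ of the support bound $k$, uniformly across the circuit. Without some argument of this shape, controlling the blowup is not just a technicality---it is where the skew assumption is actually used and where a naive version of your "bags indexed by layers" picture would silently fail.
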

    
    \cref{thm:separation} follows from
    \cref{thm:sym-vbp-orbit,thm:characterisationVF} by invoking involved results from finite model theory \cite{cai_optimal_1992}, graph structure theory \cite{seymour_graph_1993},
    and the theory of homomorphism indistinguishability \cite{roberson_oddomorphisms_2022}, 
    see \cite{seppelt_homomorphism_2024}.
    At first glance, the need for this technical depth is counter-intuitive since treewidth, pathwidth, and treedepth form a strict hierarchy of graph parameters -- a fact that may mislead one to conclude that \cref{thm:separation} is an immediate corollary of
    \cref{thm:sym-vbp-orbit,thm:characterisationVF}.
    However, 
    as we elaborate on in \cref{sec:extended-abstract-unconditional-separation},
    homomorphism polynomials of non-isomorphic pattern graphs are not necessarily linearly independent.
    Specifically, 
    there exist polynomials that can be written both as a linear combination of homomorphism polynomials for bounded-treewidth patterns and as one for unbounded-treewidth patterns.

    This is a blatant difference to  
    the line of work initiated by \textcite{curticapean_homomorphisms_2017} on the fixed-parameter tractability of so-called \emph{graph motif parameters}, i.e.\ linear combinations of homomorphism counts $\sum_F \alpha_F \hom(F, \star)$.
    Here,
    the same graph parameter $\sum_F \alpha_F \hom(F, \star)$
    is evaluated on graphs of arbitrary size and this premise is what renders homomorphism counts of non-isomorphic patterns linearly 
    independent \cite{lovasz_operations_1967}.
    The titular result of \cite{curticapean_homomorphisms_2017} asserts that the fixed-parameter tractability of a parameter $\sum_F \alpha_F \hom(F, \star)$ is dictated by the treewidth of the patterns $F$ whose unique coefficients $\alpha_F$ are non-zero.
    A substantial and beautiful theory \cite{roth_counting_2020,dorfler_counting_2021,roth_detecting_2021,doring_counting_2024,curticapean_counting_2025} has been developed based on this fundamental observation, which strikingly fails in the non-uniform world of algebraic complexity, see \cref{ex:not-linearly-independent}. 

    In the final part of this paper,
    we address this issue 
    by studying the non-symmetric algebraic complexity of homomorphism polynomials and thereby also leave the symmetry-framework from \cite{DPS2025} completely.
    For these polynomials, we prove novel and very general lower bounds, thereby answering an open question raised by \textcite{saurabh_analysis_2016}. Combined with our characterisations of the symmetric algebraic complexity classes, these lower bounds also shed light on the surprising power of the symmetric classes. They imply that $\symVP$ and $\symVBP$ contain a large variety of $\VP$-complete and $\VBP$-complete polynomials, respectively.

    In previous works \cite{durand_homomorphism_2016,mahajan_complete_2018,du_variants_2019,hrubes_hardness_2017,saurabh_analysis_2016},
    homomorphism polynomials have been identified as a source for natural complete polynomials for \VBP, \VP, and \VNP.
    However,
    these publications only studied homomorphism polynomials $(\hom_{F_n,n})$ for specific families of patterns $F_n$ such as $n \times n$ grids or $n$-leaf complete binary trees.
    In the context of monotone circuit complexity,
    homomorphism polynomials have also been studied \cite{KomarathPR23,bhargav_monotone_2025}, here however under the even more restrictive assumption that all $F_n$ are the same graph~$F$.
    Our \cref{thm:uncoloured-hom-complexity}
    gives general graph-theoretic criteria for a family of patterns $F_n$ to have homomorphism polynomials $(\hom_{F_n, n})$ that are \VBP-, \VP-, or \VNP-complete.
    Here, by a \emph{$p$-family} $(F_n)$ of multigraphs, we mean multigraphs satisfying $\lVert F_n \rVert \coloneqq |V(F_n)| + |E(F_n)| \leq q(n)$ for some polynomial $q$ and all $n \in \mathbb{N}$.\label{def:p-family-multigraphs}

    \begin{mainthm}[restate=thmUncolouredHom, label=thm:uncoloured-hom-complexity]
        Let $(F_n)_{n \in \mathbb{N}}$ be a $p$-family of bipartite multigraphs.
        Let $\epsilon > 0$.
        \begin{enumerate}
            \item If $\tw(F_n) \geq n^\epsilon$ for all $n \geq 2$, 
            then $(\hom_{F_n, n})$ is \VNP-complete.
            \item If $\tw(F_n) \in O(1)$ and $\pw(F_n) \geq \epsilon \log(n)$ for all $n \geq 2$,
            then $(\hom_{F_n, n})$ is \VP-complete.
            \item If $\pw(F_n) \in O(1)$ and  $\td(F_n) \geq \epsilon \log(n)$ for all $n \geq 2$,
            then $(\hom_{F_n, n})$ is \VBP-complete.
        \end{enumerate}
        In all cases, hardness holds under constant-depth $c$\nobreakdash-reductions over any field of characteristic zero.
    \end{mainthm}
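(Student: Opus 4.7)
The plan is to prove each of the three parts by separately establishing containment in and hardness for the relevant complexity class.

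For containment, I would use the standard decomposition-driven dynamic programming paradigm. Part~1 is immediate from the sum-of-monomials definition of $\hom_{F_n,n}$, which exhibits it as a $\VNP$ family since each summand is a polynomial-size product and the verifier bits encode the assignment $h\colon V(F_n) \to [n]$. For Part~2, a tree decomposition of $F_n$ of constant width $w$ yields a circuit that, processing bags bottom-up, maintains at each bag a table of $n^{w+1}$ partial polynomials indexed by assignments of the bag vertices to $[n]$; composing these via sum-products over the edges introduced at each bag gives a polynomial-size algebraic circuit, hence membership in $\VP$. For Part~3, the same scheme run along a path decomposition of constant width becomes a linear sequence of matrix-vector multiplications where each matrix entry is a single variable or field constant, which is precisely an algebraic branching program, giving membership in $\VBP$.

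For hardness, the approach is to invoke classical excluded-structure theorems to extract a \emph{hard} substructure inside $F_n$ and then reduce from a known complete homomorphism polynomial supported on that substructure. For Part~1 I would use the polynomial grid-minor theorem of Chekuri and Chuzhoy to extract an $n^{\Omega(\epsilon)} \times n^{\Omega(\epsilon)}$ grid minor from $F_n$ and reduce from the permanent, realised as the homomorphism polynomial of a grid-like pattern, which is known to be $\VNP$-complete (see \cite{durand_homomorphism_2016}). For Part~2, the theorem of Bienstock, Robertson, Seymour, and Thomas implies that $\pw(F_n) \geq \epsilon \log n$ forces a complete binary tree minor of depth $\Omega(\log n)$, from which I would reduce from a $\VP$-complete homomorphism polynomial for complete binary trees. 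For Part~3, the Ne\v{s}et\v{r}il--Ossona de Mendez characterisation of treedepth implies that $\td(F_n) \geq \epsilon \log n$ forces $F_n$ to contain a path of length $n^{\Omega(1)}$ as a subgraph, and I would reduce from the path homomorphism polynomial, which corresponds to iterated matrix multiplication and is $\VBP$-complete.

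The main obstacle will be executing these reductions correctly in the bipartite multigraph setting while ensuring they are constant-depth $c$-reductions, i.e.\ essentially projections. In particular, minor containment (as opposed to subgraph containment) corresponds at the polynomial level to variable identification after contracting connected branch sets; this operation must respect the bipartition of $F_n$, which may require a preliminary subdivision argument that preserves the relevant width parameter up to small factors. Furthermore, the variables of $\hom_{F_n,n}$ that lie outside the extracted substructure must be specialised to field constants in a way that precisely isolates the target polynomial without introducing spurious terms, which demands careful control over the combinatorics of the branch sets supplied by the excluded-structure theorems. A secondary subtlety is that the hypotheses in Parts~2 and~3 cleanly separate into upper-bound and lower-bound assumptions (bounded treewidth/pathwidth for containment, large pathwidth/treedepth for hardness), so the two directions can be argued independently.
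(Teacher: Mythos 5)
Your containment argument is essentially correct and matches the paper's (see \cref{lem:uncoloured-containment}), and your instinct for hardness---extract a grid/binary-tree/path minor via the polynomial excluded-structure theorems and reduce from the known-complete homomorphism polynomials of \cite{durand_homomorphism_2016,hrubes_hardness_2017}---is the right starting point. However, there is a genuine gap in how you propose to realise the reduction, namely via ``specialising the variables of $\hom_{F_n,n}$ outside the extracted substructure to field constants.'' For the \emph{uncoloured} polynomial $\hom_{F_n,n} = \sum_{h\colon V(F_n)\to[n]}\prod_{ab\in E(F_n)} x_{h(a),h(b)}$, you do not control which vertices of $F_n$ land in which parts of $[n]$, so no substitution of the variables $x_{i,j}$ can prevent spurious homomorphisms that collapse or scramble your branch sets. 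Variable specialisation isolates a target cleanly only when the pattern is \emph{coloured}: that is why the paper first establishes hardness for colourful homomorphism polynomials $\colhom_{F,n}$ (\cref{thm:colourful-hom-vnp-hard}), where minor containment does give a clean projection (\cref{lem:minor-projection}), exactly along the lines you describe.

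The heart of the proof, which your proposal omits entirely, is the reduction from the colourful to the uncoloured setting (\cref{lem:single-hom-projection-minors}). This is not a projection: it uncolours via $\hom_{F,|V(S)|\cdot n}=\sum_{c} \colhom_{F,c,n}$, shifts by $x\mapsto 1+x$ to expand over all subgraphs, interpolates on edge count (\cref{lem:colhom-interpolation}), and then evaluates at even/odd Cai--F\"urer--Immerman graphs of $S$ to isolate the terms with $F''\cong S$, invoking Roberson's oddomorphism theorem \cite{roberson_oddomorphisms_2022}. A further quotient construction (\cref{lem:quotient}) handles minors rather than subgraphs. This machinery is also what keeps the blow-up of the host size to $2^{\Delta(S)}\cdot|V(S)|\cdot n$ rather than $2^{\Delta(S)}\cdot|V(F)|\cdot n$; the latter (which a literal translation of \textcite{curticapean_count_2024} would give) would fail for patterns of polynomial size as allowed here. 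Two smaller imprecisions: for Part~1 the paper reduces from Hrube\v{s}--Yehudayoff's clique polynomial encoded in a grid gadget (\cref{lem:colourful-grid-vnp-complete}), not directly from the permanent, which is not a homomorphism polynomial of a grid; and the forced path in Part~3 has length $\Theta(\log n)$ in general (since $\td\geq\epsilon\log n$ forces a path of length $2^{\Omega(\td)}$, which can be as small as $n^{\Omega(\epsilon)}$ only after the polynomial reparameterisation $n\mapsto p(n)$ done in the paper's proof).
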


    In particular, \cref{thm:uncoloured-hom-complexity,thm:sym-vbp-orbit} and \cite{DPS2025} imply that the symmetric classes \symVBP and \symVP respectively contain polynomials that are \VBP- and \VP-complete.
    Building on techniques from \cite{curticapean_complexity_2014,curticapean_count_2024},
    the proof of \cref{thm:uncoloured-hom-complexity} is by devising a chain of algebraic reductions from the complete polynomials of \cite{durand_homomorphism_2016,hrubes_hardness_2017} and employing very recent graph-structural results \cite{chekuri_polynomial_2016,groenland_approximating_2023,hatzel_tight_2024}.
    In \cref{thm:uncoloured-lincomb-hom-complexity}, we generalise \cref{thm:uncoloured-hom-complexity} to linear combinations of homomorphism polynomials.

    Under the assumption $\VP \neq \VNP$,
    \cref{thm:uncoloured-hom-complexity} does not ascertain for all pattern families $(F_n)$ whether $(\hom_{F_n, n}) \in \VP$.
    This is due to the fact that families such as $(\hom_{K_{\log(n), \log(n)},n})$ counting homomorphisms from the complete bipartite graphs $K_{\log(n), \log(n)}$ on $\Theta(\log n)$ vertices are unlikely to be \VNP-complete as argued in \cite[Proposition~3.6.2]{saurabh_analysis_2016}, see also \cite{papadimitriou_limited_1996} and \cite[Section~4.4]{flum_parameterized_2006}.
    In order to deal with such pattern families, a potentially stronger complexity-theoretic assumption is required:

    \begin{mainthm}[restate=thmUncolouredhomParametrised, label=thm:hom-parametrised-simplified]
        Let $(F_n)_{n \in \mathbb{N}}$ be a $p$-family of bipartite multigraphs such that $\tw(F_n)$ is non-decreasing.
        Unless $\VFPT = \VW$,
        it holds that $(\hom_{F_n, n}) \in \VP$
        if, and only if,
        $\tw(F_n) \in O(1)$.
    \end{mainthm}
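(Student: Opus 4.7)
The plan is to prove the two directions separately. For the forward implication (bounded treewidth $\Rightarrow$ $\VP$), given $F_n$ with a width-$k$ tree decomposition for some constant $k$, I would construct an algebraic circuit via the standard bottom-up dynamic programming. For each bag $B$ of the decomposition tree, I introduce $n^{|B|} \leq n^{k+1}$ partial-sum gates indexed by maps $B \to [n]$; I then propagate these sums from leaves to root while multiplying in the variables $x_{h(a), h(b)}$ contributed by each bag's edges (accounting for multiplicities in the multigraph setting). The resulting circuit has size $O(\lVert F_n \rVert \cdot n^{k+1})$, which is polynomial in $n$ because $(F_n)$ is a $p$-family and $k$ is constant.

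For the converse, assume $(\hom_{F_n, n}) \in \VP$ while $\tw(F_n)$ is unbounded. Combined with the non-decreasing hypothesis, this gives $\tw(F_n) \to \infty$, and I will derive $\VFPT = \VW$ by exhibiting a parameterised reduction from a $\VW$-complete polynomial family -- for instance a clique- or grid-homomorphism family -- to evaluations of the circuits for $(\hom_{F_n, n})$. For each parameter $k$, pick the minimal $n(k)$ with $\tw(F_{n(k)}) \geq h(k)$ for a sufficiently fast-growing unbounded function $h$; since $\tw$ is non-decreasing and $\lVert F_n \rVert$ is polynomial, such $n(k)$ exists and is bounded by a computable $f(k)$ depending only on $k$. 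By the polynomial excluded grid theorem \cite{chekuri_polynomial_2016,hatzel_tight_2024}, the pattern $F_{n(k)}$ contains a grid minor whose size grows unboundedly with $k$, and hence a $\VW$-complete substructure to target.

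The heart of the argument is then the extraction step: turning the polynomial-size $\VP$-circuit for $\hom_{F_{n(k)}, n(k)}$ into an $\VFPT$-sized circuit for the chosen $\VW$-complete target. I would adapt the chain of algebraic reductions underlying \cref{thm:uncoloured-hom-complexity} to the parameterised regime. The first step is a variable substitution which, along the branch sets of a grid-minor decomposition of $F_{n(k)}$, contracts the host graph and expresses homomorphisms from $F_{n(k)}$ in terms of homomorphisms from the minor into a suitably coloured host. The second step is a Möbius inversion on the quotient lattice of the minor, in the spirit of \cite{curticapean_homomorphisms_2017,curticapean_complexity_2014} but adapted to the non-uniform algebraic setting, which isolates the target homomorphism polynomial as a combination of at most $f(k)$ evaluations of the original circuit. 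Since each evaluation has size $\poly(n(k)) \leq \poly(f(k))$, the result is a $\VFPT$-circuit, giving the collapse.

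The main obstacle is precisely this extraction step. Unlike the uniform counting framework of \cite{curticapean_homomorphisms_2017}, where linear independence of the counting functions $\hom(F, -)$ across all hosts \cite{lovasz_operations_1967} enables a clean inversion, here we work with a single polynomial on a host of fixed size $n(k)$, and the inversion must therefore be implemented by algebraic substitutions on the circuit rather than as a solution to an abstract linear system. The challenge is to combine minor-extraction substitutions with a quotient-lattice Möbius inversion in such a way that all auxiliary overhead is polynomial in the host graph and at most $f(k)$ in the parameter. Handling that the grid minor of $F_{n(k)}$ need not be an induced subgraph, and ensuring that the reduction from generic $\VW$-hard patterns (cliques, bicliques) to grid-homomorphism targets stays within the parameterised budget, are the technically delicate points of the plan.
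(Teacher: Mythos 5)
Your forward direction (bounded treewidth $\Rightarrow \VP$) is correct and is the standard dynamic programming over a tree decomposition, which the paper obtains via \cref{lem:uncoloured-containment}. The high-level plan for the converse -- extract a large grid minor via the polynomial excluded grid theorem, reduce a $\VW$-hard grid family to evaluations of the $\VP$-circuit for $(\hom_{F_n,n})$, and conclude $\VFPT=\VW$ -- also matches the paper's skeleton. However, there is a genuine gap exactly where you flag the difficulty, and the technical route you sketch is the one the paper explicitly rules out.

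You propose to realise the extraction step by a M\"obius inversion on the quotient lattice of the minor, in the spirit of \cite{curticapean_homomorphisms_2017,curticapean_complexity_2014}. The paper makes a point of \emph{not} doing this: a direct translation of that framework to the algebraic setting would reduce the colourful $\colhom_{S,n}$ to $\hom_{F, 2^{d-1}\cdot\lvert V(F)\rvert\cdot n}$, and the blow-up factor $\lvert V(F)\rvert$ (rather than $\lvert V(S)\rvert$) is too large when $F=F_{n(k)}$ is the host pattern whose size grows with $n(k)$. The actual mechanism is an interpolation argument built on tensor products and Cai--F\"{u}rer--Immerman graphs: one uncolours via \cref{lem:uncolour}, restricts to patterns with exactly $\lvert E(S)\rvert$ edges by degree interpolation (\cref{lem:colhom-interpolation}), passes from subgraphs to minors through the quotient identity of \cref{lem:quotient} via the bipartite doubling substitution $G^{-}$, and then picks out the target minor $S$ from all other patterns with the same edge count by evaluating against the CFI twins $S_0,S_1$ (\cref{claim:cfi-colourful-s} together with \cref{lem:product-colourful}). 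This yields \cref{lem:single-hom-projection-minors}, which is the linchpin of the reduction; it is absent from your plan, and the M\"obius-inversion substitute you propose does not achieve the needed $\lvert V(S)\rvert$-only blow-up.

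A second issue is the mechanism by which $\VFPT$-boundedness is supposed to fall out. You choose, for each parameter $k$, a single index $n(k)$ and extract from the circuit for $\hom_{F_{n(k)}, n(k)}$. But $\VFPT$-membership requires circuits for \emph{all} pairs $(n,k)$ of size $\leq g(k)\cdot q(n)$, not one circuit per~$k$. The paper obtains this from \cref{lem:grid-not-in-vp} via a case split: for $n$ below a threshold depending only on~$k$, the brute-force circuit for $\colhom_{G_{k\times k},n}$ suffices (its size $n^{k^2}k^2$ is then controlled by $k$ alone), while for larger~$n$ one projects $G_{k\times k}$ from the bigger grid $G_{f(n)\times f(n)}$ that the assumed $\VP$-circuit handles. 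Your proposal does not spell out this dichotomy, and without it the claimed $\VFPT$-membership (and hence the collapse) does not follow.
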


    Here, the complexity classes \VFPT and \VW,
    as introduced by \textcite{blaser_parameterized_2019}, are algebraic analogues of the well-studied parametrised complexity classes $\mathsf{FPT}$ and $\mathsf{\#W}[1]$, see \cite{flum_parameterized_2006,cygan_parameterized_2015}.
    Assuming $\VFPT \neq \VW$, 
    \cref{thm:hom-parametrised-simplified}
    exhaustively classifies the homomorphism polynomials which are in \VP
    and thus resolves the open problem posed by \textcite[88]{saurabh_analysis_2016} to determine the algebraic complexity of homomorphism polynomials $(\hom_{F_n, n})$ of pattern families $(F_n)$ of treewidth $o(n)$.
    
    Furthermore,
    \cref{thm:hom-parametrised-simplified} equates the complexity-theoretic hypothesis $\VFPT \neq \VW$ 
    with the collapse of the ability of symmetric and non-symmetric algebraic circuits to compute homomorphism polynomials, i.e.\ $(\hom_{F_n, n}) \in \VP \iff (\hom_{F_n, n}) \in \symVP$ for all $p$-families $(F_n)$ of bipartite multigraphs, see \cref{cor:symmetric-vs-non-symmetric-computation}.
    In other words, the hypothesis $\VFPT \neq \VW$ and the symmetry assumptions on the circuits seem to be surprisingly related in terms of the lower bounds that can be derived from them.

    \section{Preliminaries}
    \label{sec:preliminaries-first-12-pages}
    As above, consider the variable set $\calX_{n}$ with the action of $\Sym_n \times \Sym_n$ previously described.
    For any $S \subseteq \calX_n$, let $\Stab(S) \coloneqq \setdef{\pi \in \Sym_n \times \Sym_n}{\pi(S) = S} \leq \Sym_n \times \Sym_n.$
    The \emph{pointwise stabiliser} is defined as
    $\StabP(S) \coloneqq \setdef{\pi \in \Sym_n \times \Sym_n}{\pi(x) = x \text{ for every } x \in S} \leq \Sym_n \times \Sym_n.$

    \paragraph*{Algebraic circuits and formulas.} 
    An \emph{algebraic circuit} $C$ over a set of variables $\calX$ and a field $\K$ is a directed acyclic graph (possibly with multiedges), where each vertex -- called a \emph{gate} -- is labelled by an element of $\calX \cup \K \cup \{+, \times\}$. We write $\lambda(g) \in \calX \cup \K \cup \{+, \times\}$ for the label of a gate. 
    The \emph{input gates}, labelled by elements of $\calX \cup \K$, have no incoming edges. Internal gates are labelled with $+$ or $\times$. The circuit has a unique \emph{output gate}, which has no outgoing edges. The circuit computes a polynomial in $\F[\calX]$ by propagating values from the input gates using addition and multiplication operations at the internal gates. Edges are directed according to the flow of computation, that is, from the output of a gate towards the next gate where the value is used. The \emph{size} of a circuit $C$, denoted $\|C\|$, is the total number of gates and wires (counted with multiplicity). We insist throughout that input gates are unique: For every $x \in \calX \cup \K$, there is a unique input gate $g \in V(C)$ with $\lambda(g) = x$.

    Whenever a permutation group $\Gamma$ acts on the variables $\calX$, we can define what it means for a circuit $C$ over $\calX$ to be $\Gamma$-symmetric. Let $\Aut(C) \leq \Sym(V(C))$ denote the group of permutations of gates that preserve non-edges, edges with multiplicities, and gate types. Formally, $\sigma \in \Sym(V(C))$ is in $\Aut(C)$ if and only if, for every internal gate $g$, $\lambda(\sigma(g)) = \lambda(g)$, and for every pair of gates $(g,h)$, the number and direction of edges between $(\sigma(g), \sigma(h))$ is the same as between $(g,h)$.  
    
    We say that $\pi \in \Gamma$ \emph{extends to an automorphism} of $C$ if there exists a $\sigma \in \Aut(C)$ such that $\lambda(\sigma(g)) = \pi(\lambda(g))$ for every input gate $g$. Here, $\pi(a) = a$ for every $a \in \K$.

    \begin{definition}[Symmetric circuits]
    \label{def:symmetric-ckts}
        Let $\Gamma$ be a group acting on $\calX$. An algebraic circuit $C$ over $\calX$ is \emph{$\Gamma$-symmetric} if every $\pi \in \Gamma$ on the input gates $\calX$ extends to an automorphism of $C$.
    \end{definition}

    In this paper, we mostly focus on restricted circuit models such as \emph{formulas} and \emph{skew circuits}.
    In general, a formula is a circuit without multiedges that is a tree. For symmetric formulas, we require treelikeness only on the internal gates because we always identify input gates that are labelled with the same variable or field element. 

    \begin{definition}[Symmetric formulas]
        A \emph{symmetric formula} is a symmetric circuit $C$ without multiedges such that the subgraph of $C$ induced by the internal gates is a tree.
    \end{definition}

    \begin{definition}[Symmetric skew circuits]
    \label{def:skew-ckts}
        An algebraic circuit is \emph{skew} if for every multiplication gate, at most one of its children is an internal (or a non-input) gate. A symmetric skew circuit is a symmetric circuit that is additionally skew.
    \end{definition}

    Note that \cref{def:symmetric-ckts} does not require that there is a \emph{unique} circuit automorphism $\sigma$ that $\pi \in \Gamma$ extends to. Often, however, uniqueness of this $\sigma$ is desirable in order to have a well-defined action of $\Gamma$ on the entire circuit.
    We call a $\Gamma$-symmetric circuit $C$ \emph{rigid} if the only circuit automorphism in $\Aut(C)$ that fixes every input gate is the identity. This is equivalent to saying that every $\pi \in \Gamma$ extends to a unique $\sigma \in \Aut(C)$. Thankfully, we may always assume that symmetric circuits are rigid, essentially by removing all redundancies in the circuit.
    The following \cref{lem:rigidification}
    generalises \cite[Lemma~7]{AD17} and \cite[Lemma~4.3]{DPS2025}
    by the final two assertions.

    \begin{lemma}[restate=rigidify, label=lem:rigidification, name=Rigidification]
    Let $\Gamma$ be a group acting on a variable set $\calX$, and let $\K$ be a field.
    Let $C$ be a $\Gamma$-symmetric circuit over $\calX \cup \K$.
    There exists a $\Gamma$-symmetric rigid circuit $C'$ with $\lVert C' \rVert \leq \lVert C \rVert$ that computes the same polynomial as $C$.
    If $C$ is a symmetric formula, then $C'$ is a symmetric formula with multiedges. 
    If $C$ is a symmetric skew circuit, then so is $C'$.
    \end{lemma}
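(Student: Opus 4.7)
The plan is to construct $C'$ by quotienting $C$ by the subgroup $N \coloneqq \{\sigma \in \Aut(C) : \sigma(g) = g \text{ for every input gate } g\}$. Explicitly, the gates of $C'$ are the $N$-orbits of gates of $C$, each carrying the common label of its members (well-defined since automorphisms preserve labels), and the multiplicity of a wire from $[g]$ to $[h]$ is the number of wires in $C$ from $g$ to gates in $[h]$, which is well-defined because the $N$-action on each orbit is transitive and hence every representative sees the same neighbourhood up to the action.

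I would first verify the ``generic'' properties. The quotient $C'$ computes the same polynomial as $C$ because gates in a common $N$-orbit compute the same polynomial (induct along the topological order), so the inductive evaluation of $C'$ agrees with that of $C$. The quotient is $\Gamma$-symmetric because $N$ is normal in $\Aut(C)$, being the kernel of the restriction-to-inputs homomorphism $\Aut(C) \to \Sym(\calX \cup \K)$; hence any automorphism of $C$ extending some $\pi \in \Gamma$ descends to an automorphism of $C'$ that extends $\pi$. Rigidity follows by iteration: any input-fixing automorphism of $C'$ lifts to $C$ and hence must already be in $N$, becoming trivial in $C'$; if this still fails in some degenerate setup, iterate the quotient, using that each non-trivial step strictly decreases $|V|$. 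The size bound $\|C'\| \leq \|C\|$ holds since the quotient map is surjective on gates and on wires counted with multiplicity.

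The substantive new content, and the main obstacle, is preserving formula and skew structure. The skew case is direct: the children of $[m]$ in $C'$ are the $N$-orbits of the children of $m$; since every $\sigma \in N$ fixes input gates, input children of $m$ map to input-gate classes, and the at-most-one non-input child of $m$ contributes at most one non-input child class to $[m]$. Hence $[m]$ remains skew.

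The formula case is more delicate. I would exploit the fact that in a tree every automorphism maps the unique parent of a vertex to the unique parent of its image: if $g$ and $g' = \sigma(g)$ are internal gates with unique parents $p, p'$ in the tree on internal gates of $C$, then $\sigma(p) = p'$, so $p \sim p'$. Consequently, every internal-gate class in $C'$ has a unique parent class among internal gates, i.e.\ the underlying simple graph of the subgraph of $C'$ induced by the internal gates is again a tree. Multiedges between a child class and its parent class may arise when the merger collapses distinct wires of $C$ onto the same pair of classes, which is precisely the relaxation permitted by the phrase ``symmetric formula with multiedges'' in the statement.
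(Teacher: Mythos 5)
Your proposal is correct and takes essentially the same approach as the paper: quotient by the subgroup of automorphisms fixing all input gates, define wire multiplicities by summing over collapsed edges, iterate if rigidity still fails, and then observe that the skew property passes through because input gates are fixed pointwise while the formula property passes through because tree automorphisms map the unique parent to the unique parent. The only cosmetic difference is that you phrase $\Gamma$-symmetry of the quotient via normality of the kernel, whereas the paper argues it more concretely by pushing a lifted automorphism through $\delta$; both are the same idea.
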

    Note that rigidifying a symmetric formula with this lemma yields a formula with multiedges, which is not allowed by definition; however, these will only appear at intermediate steps inside proofs, so unless explicitly stated, formulas do not have multiedges.

    \paragraph*{Complexity measures and supports.} 

    Let $C$ be a $\Sym_n \times \Sym_n$-symmetric circuit over variables $\Xx_n$ and a field $\K$ as defined in \cref{def:symmetric-ckts}. It is standard in algebraic complexity to measure the complexity of $C$ by its size $\lVert C \rVert$. However, in the context of symmetric circuits, we also want to take the symmetry group into account. If $C$ is rigid, then for a gate $g$ in $C$ and $(\pi_1, \pi_2) \in \Sym_n \times \Sym_n$, we write $(\pi_1, \pi_2)(g)$ to denote the image of $g$ under the unique automorphism of $C$ that extends $(\pi_1, \pi_2)$. The \emph{orbit size} of $C$ is defined as 
    \[
       \maxorb(C)  \coloneqq \max_{g \in V(C)} \left| \setdef{(\pi_1, \pi_2)(g)}{(\pi_1, \pi_2) \in \Sym_n \times \Sym_n}\right|.
    \]
    Following previous works, our result relies on the notion of the \emph{support} of a gate, see \cite{AD17,dawar_notions_2025}.
    Since $C$ is $\Sym_n \times \Sym_n$-symmetric, the polynomial computed at the output gate is invariant under that group action. 
    However, this invariance need not hold for internal gates: The polynomials they compute are generally only invariant under a subgroup of $\Sym_n \times \Sym_n$, namely the stabiliser of the gate.
    The \emph{support set} of a gate $g$ captures precisely those input variables that remain fixed under such permutations. We denote by $\sup(g) \subseteq [n] \uplus [n]$ the \emph{minimal support} of the stabiliser group 
    $\Stab(g) \coloneqq \setdef{(\pi_1, \pi_2) \in \Sym_n \times \Sym_n}{(\pi_1,\pi_2)(g) = g}$, that is, $\sup(g)$ is the smallest set $S \subseteq [n] \uplus [n]$ for which $\StabP(S)$ is a subgroup of $\Stab(g)$. It is known \cite[Lemma 4.1]{DPS2025} that this smallest set $S$ is indeed unique, as long as there exists at least one $S \subseteq [n] \uplus [n]$ whose intersection with each copy of $[n]$ has size less than $n/2$ and which satisfies $\StabP(S) \leq \Stab(g)$.
    The \emph{maximum support size} of a $\Sym_n \times \Sym_n$-symmetric circuit $C$ is defined as 
    \[
        \maxsup(C) \coloneqq \max_{g \in V(C)} |\sup(g)|.
    \]
    We have the following relationships between orbit and support size. 
    For an intuitive (oversimplified) explanation of the lemma, one may imagine that the orbit of a gate $g$ essentially corresponds to the orbit of $\sup(g)$, which has size $\binom{n}{|\sup(g)|}$.
    \begin{lemma}[\protect{\cite[Lemmas 2.1 and 2.2]{DPS2025}}]
    \label{lem:constantSupportOfGates}
        For each $n \in \N$, let $C_{n}$ be a $\Sym_n \times \Sym_n$-symmetric rigid circuit.
        Then $\maxorb(C_{n})$ is polynomially bounded in $n$ if, and only if, there exists a constant $k \in \N$ such that $\maxsup(C_{n}) \leq k$ for all $n \in \N$.
    \end{lemma}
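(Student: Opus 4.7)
The plan is to relate the orbit size to the minimal support size of each gate via the orbit-stabiliser theorem and a classical structural result on subgroups of symmetric groups of polynomial index. Rigidity of $C_n$ ensures that $\Sym_n \times \Sym_n$ acts in a well-defined way on the gates of $C_n$, so for every gate $g$ we have $|\Orb(g)| \cdot |\Stab(g)| = |\Sym_n \times \Sym_n|$ by orbit-stabiliser.

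For the easy forward direction, I would assume $\maxsup(C_n) \leq k$ uniformly in $n$. For a gate $g$, write $\sup(g) = S_1 \uplus S_2$ with $S_i \subseteq [n]$ and $|S_1| + |S_2| \leq k$. By the defining property of the minimal support, $\StabP(\sup(g)) = \Sym([n] \setminus S_1) \times \Sym([n] \setminus S_2) \leq \Stab(g)$, so orbit-stabiliser yields
\[
|\Orb(g)| \leq \frac{(n!)^2}{(n-|S_1|)!\,(n-|S_2|)!} \leq n^{|S_1|+|S_2|} \leq n^k,
\]
which is polynomial in $n$. Taking the maximum over gates gives $\maxorb(C_n) \leq n^k$.

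For the converse, assume $\maxorb(C_n) \leq q(n)$ for a polynomial $q$. Then for every gate $g$, $\Stab(g) \leq \Sym_n \times \Sym_n$ has index at most $q(n)$. Here I would invoke the standard structural result on polynomial-index subgroups of $\Sym_n$-type groups used in essentially this form in \cite{AD17} and going back to Bochert's theorem and the classification of small-index subgroups of symmetric groups: for every polynomial $q$, there exists a constant $c = c(\deg q)$ such that any subgroup of $\Sym_n \times \Sym_n$ of index at most $q(n)$ contains $\StabP(T)$ for some $T \subseteq [n] \uplus [n]$ with $|T| \leq c$, provided $n$ is sufficiently large. Applying this to $\Stab(g)$ produces such a $T$ with $\StabP(T) \leq \Stab(g)$. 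For $n$ large enough that $|T \cap [n]| < n/2$ in each factor, the uniqueness of the minimal support set recalled in \cite[Lemma~4.1]{DPS2025} gives $\sup(g) \subseteq T$, hence $|\sup(g)| \leq c$. Absorbing the finitely many small $n$ into the constant yields a uniform bound $\maxsup(C_n) \leq k$ for some $k$ depending only on $q$.

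The main obstacle is the invocation of the structural theorem in the converse direction, not the orbit-stabiliser computation itself. Three technical subtleties arise: one must handle the product-group $\Sym_n \times \Sym_n$ rather than a single symmetric group, typically by analysing how the subgroup interacts with each factor; one must bridge the gap between pointwise stabilisers and alternating groups (which the classification naturally produces), possibly at the cost of a constant enlargement of $T$; and one must verify that the constant $c$ depends only on $\deg q$ and not on $n$, so the bound is genuinely uniform. The forward direction, by contrast, is a clean one-line consequence of orbit-stabiliser.
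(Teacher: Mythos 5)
The paper does not prove this lemma itself but imports it from \cite[Lemmas~2.1 and~2.2]{DPS2025}, which in turn rests on the support-theorem machinery of \cite{AD17}. Your two-pronged strategy --- orbit--stabiliser for the easy direction, a group-theoretic support theorem for the hard one --- matches how that proof is actually organised, and your computation for the forward direction is clean and correct: $\StabP(\sup(g)) \leq \Stab(g)$ gives $\lvert \Orb(g) \rvert \leq (n!)^2 / \bigl((n-|S_1|)!\,(n-|S_2|)!\bigr) \leq n^{k}$.

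There is, however, a genuine gap in the converse direction as you state it. You assert that any subgroup $G \leq \Sym_n \times \Sym_n$ of polynomial index contains $\StabP(T)$ for some set $T$ of constant size, and you suggest the alternating-group obstruction can be absorbed by ``a constant enlargement of $T$''. This is false as a statement about abstract subgroups: $\mathrm{Alt}_n$ has index $2$ in $\Sym_n$, yet $\StabP(T) \cong \Sym_{n-|T|}$ is a subgroup of $\mathrm{Alt}_n$ only when $|T| \geq n-1$, so no constant enlargement helps. A stabiliser such as $\mathrm{Alt}_n \times \mathrm{Alt}_n$ would thus witness constant orbit size but support size $\Theta(n)$, exactly the situation the lemma must exclude. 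The support theorem of \cite{AD17} is in fact \emph{conditional}: it converts a moderately small support (of size $< \varepsilon n$) into a constant-size one, and a separate argument exploiting the combinatorial structure of the rigid circuit --- not merely the index bound --- is needed to establish that circuit stabilisers satisfy this precondition, i.e.\ that the $\mathrm{Alt}_n$-like cases do not arise for gates. Your proposal offloads precisely this circuit-specific step onto a black-box classification of small-index subgroups that does not, on its own, deliver the conclusion.
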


    \section{Main contributions}

    \subsection{Characterisations of \symVF and \symVBP by homomorphism polynomials}
    \label{sec:char-sym-contributions}

    We give an overview of the proofs of \cref{thm:characterisationVF} and \cref{thm:sym-vbp-orbit}, which characterise \symVF and \symVBP as the classes of polynomials that can be expressed as linear combinations of bounded treedepth and pathwidth homomorphism polynomials, respectively. Let us begin with \symVF.

    \mainThmSymVF*    

   The harder direction is to show that any symmetric formula with polynomial orbit size computes a linear combination of homomorphism polynomials of bounded treedepth. \textcite[Theorem~1.1]{DPS2025} proved that every matrix-symmetric circuit of polynomial orbit size computes a linear combination of homomorphism polynomials of bounded treewidth. 
   Their proof is by induction on the circuit structure and essentially shows that the circuit corresponds to a \enquote{simultaneous tree decomposition} of all pattern graphs in the linear combination.
This result already implies that the output of a symmetric \emph{formula} of polynomial orbit size must also lie in the class of bounded-treewidth homomorphism counts. It remains to prove that in addition to the treewidth, also the \emph{treedepth} of the pattern graphs is bounded by a constant.
   The crucial technical insight in \cite{DPS2025} is that the support size of the gates corresponds to the treewidth of the pattern graphs. And, by \cref{lem:constantSupportOfGates}, the support size must be constant if the orbit size of the circuit is polynomial.

   Here, we refine this argument by taking treedepth into account. The proof consists of two parts. Firstly, we introduce the measure of \emph{support depth} for symmetric circuits (\cref{def:supportDepth}). The support depth describes how often the support of a gate changes along any path from a leaf to the root. That is, there exists some element in the support of a child gate~$h$ that no longer exists in the support of its parent~$g$, i.e.\ $\sup(h) \setminus \sup(g) \neq \emptyset$. One may think of this as analogous to forget-nodes in tree decompositions \cite{bodlaender_partial_1998}. 
    
    We show in \cref{cor:supportDepthBounded} that a symmetric formula with orbit size $\Oo(n^d)$ has support depth at most $d$. This is the case because, whenever the support changes at some gate $g$, in the above sense, then the gate $g$ has at least $\Omega(n)$ many mutually symmetric children. In a formula, the subcircuits rooted at these children are all disjoint, so if there were more than $d$ such gates $g$ along any root-to-leaf path, then the orbit size of the formula would be at least $\Omega(n^{d+1})$. This shows that, for symmetric formulas, bounded orbit size implies bounded support depth.

    As a next step, we show that when support depth is taken into account, then the technical core of \cite{DPS2025} actually yields an upper bound not only on the treewidth but also on the treedepth of the pattern graphs whose homomorphism polynomials are computed by the symmetric formula. It turns out that the support depth translates into a corresponding notion of depth of the tree decomposition \cite{fluck_going_2024}, which is tied to the treedepth of the pattern graphs.  

   The easier direction of \cref{thm:characterisationVF} requires to show that whenever $F$ is a bipartite multigraph of treedepth $d$, then $\hom_{F,n}$ is expressible by a matrix-symmetric formula of orbit size at most $\Oo(n^d)$. The idea is to use the standard dynamic programming approach for inductively computing homomorphism counts along a depth-$d$ elimination tree of $F$. 
   An \emph{elimination tree} of $F$ is a tree $T$ with the same vertex set as $F$ such that any two vertices that are adjacent in $F$ are in ancestor-descendant relation in $T$. The proof of \cite[Theorem~11]{KomarathPR23} contains an explicit construction of a formula for $\hom_{F,n}$ from such an elimination tree, and we observe that this is symmetric.
    Any linear combination of homomorphism polynomials of bounded treedepth can then simply be written as a symmetric formula that sums up the individual homomorphism polynomials. This proves the easier direction of \cref{thm:characterisationVF}.

    A similar dynamic programming approach also applies in the case of bounded pathwidth: whenever $F$ is a bipartite multigraph of pathwidth $d$, then $\hom_{F, n}$ can be expressed as matrix-symmetric skew circuit of orbit size at most $\Oo(n^d)$. This proves the easier direction of our second main result.
 
    \mainThmSymVBP*

    As above, the starting point for the harder direction is to observe that, because the orbit size of the circuits is assumed to be polynomially bounded, the supports of the gates are of some constant size $k$, implying that the pattern graphs admit constant-width tree decompositions. In the special case of skew circuits, we can show that these tree decompositions can be turned into path decompositions: Intuitively, since every product gate has at most one internal (non-leaf) child and the other children are input gates, the tree decomposition produced in this fashion never needs to branch.

    \subsection{Unconditional separations of symmetric classes}
    \label{sec:extended-abstract-unconditional-separation}
    
    In this section, we show our main result.

    \thmSeparation*

    Given the characterisation of \symVBP and \symVF in \cref{thm:sym-vbp-orbit,thm:characterisationVF} as linear combinations of homomorphism polynomials of bounded treedepth and pathwidth, respectively,
    it may seem that \cref{thm:separation} is immediate.
    For example, the family $(P_n)_{n \in \mathbb{N}}$ of $n$-vertex paths has bounded pathwidth but unbounded treedepth.
    On the one hand, \cref{thm:characterisationVF} places $(\hom_{P_n, n})$ in \symVBP. 
    On the other hand,
    \cref{thm:sym-vbp-orbit} asserts that $(\hom_{P_n, n}) \in \symVF$ if, and only if, for every $n \in \mathbb{N}$, $\hom_{P_n, n}$ can be written as a linear combination of homomorphism polynomials of bounded treedepth. 
    Readers familiar with homomorphism counts, e.g.\ with \cite{curticapean_homomorphisms_2017},
may recall a result of \textcite{lovasz_operations_1967} asserting that the homomorphism counting functions $\hom(F_i, \star)$ are linearly independent for non-isomorphic graphs $F_i$.
    So doesn't this readily imply that $(\hom_{P_n, n}) \not\in \symVF$?

    It does not. The crux here is that linear independence only holds when the $\hom(F, \star)$ can be evaluated at arbitrarily large graphs.
    In the non-uniform world of algebraic complexity,
    homomorphism polynomials can only be evaluated at graphs with a certain number of vertices.
    Thereby, homomorphism polynomials for non-isomorphic patterns fail to be linearly independent in general.
    The following example shows that $(\prod_{v,w \in [n]} x_{v,w}) \in \symVF$ can be written as linear combinations of homomorphism polynomials of patterns of unbounded treewidth, seemingly placing $(\prod_{v,w \in [n]} x_{v,w})$ outside of \symVP.

    \begin{example}\label{ex:not-linearly-independent}
        For $n \in \mathbb{N}$,
        the polynomial $\prod_{v,w \in [n]} x_{v,w}$
        can be written as
        \begin{enumerate}
            \item a linear combination of homomorphism polynomials of patterns of bounded treedepth, by \cref{thm:characterisationVF} since $(\prod_{v,w \in [n]} x_{v,w}) \in \symVF$, and as
            \item a linear combination of homomorphism polynomials comprising all patterns that are quotients of the complete bipartite graph $K_{n,n}$.
            This linear combination is obtained by observing that $\prod_{v,w \in [n]} x_{v,w}$ is the subgraph polynomial for the pattern $K_{n,n}$ and applying M\"obius inversion \cite[Theorem~8.1]{DPS2025}.
            In particular, $\tw(K_{n,n}) \in \Theta(n)$.
        \end{enumerate}
    \end{example}    

    In order to deal with the lack of linear independence,
    we prove a semantic separation of the functions represented by polynomials in \symVF, \symVBP, and \symVP.
    A similar approach was taken by \textcite{dawar_symmetric_2020} when showing that $(\perm_n)\not\in \symVP$:
    They proved that all polynomials in \symVP have counting width $O(1)$ while the permanent has counting width $\Theta(n)$.
    A  family of matrix-symmetric polynomials $(p_n)$ has \emph{counting width} \cite{dawar_definability_2017} at most $k \in \mathbb{N}$ if, whenever two $n$-vertex graphs $G$ and $H$ satisfy the same sentences in the $k$-variable fragment $\mathsf{C}^k$ of first-order logic with counting quantifiers,
    then it holds that $p_n(G) = p_n(H)$.
    Here it is essential that the matrix symmetries allow the polynomial $p_n \in \mathbb{K}[\mathcal{X}_n]$ to be viewed as a parameter of $(n,n)$-vertex bipartite graphs. In particular, the value $p_n(G)$ does not depend on the ordering of the rows and columns of the bi-adjacency matrix of $G$.

    The logic $\mathsf{C}^k$ is well-studied not only in finite model theory \cite{grohe_descriptive_2017} but also e.g.\ in machine learning \cite{grohe_logic_2021} due to its connections to the Weisfeiler--Leman algorithm \cite{cai_optimal_1992,kiefer_weisfeiler-leman_2020} and Graph Neural Networks \cite{xu_how_2019,morris_weisfeiler_2019}.
    Crucially, it was shown in \cite{dvorak_recognizing_2010,dell_lovasz_2018}
    that two graphs $G$ and $H$ satisfy the same $\mathsf{C}^k$-sentences if, and only if,
    they are \emph{homomorphism indistinguishable} over all graphs of treewidth less than $k$, i.e.\ every graph $F$ with $\tw(F) < k$ admits the same number of homomorphisms to $G$ and to $H$.
    Homomorphism indistinguishability over various graph classes was shown to characterise diverse graph isomorphism relaxations such as isomorphism \cite{lovasz_operations_1967},
    quantum isomorphism \cite{mancinska_quantum_2020,kar_npa_2025},
    or equivalence w.\ r.\ t.\ various logics \cite{grohe_counting_2020,montacute_pebble-relation_2024,fluck_going_2024,schindling_homomorphism_2025}.
    Beyond that, the distinguishing power of graph isomorphism relaxations can be described using homomorphism indistinguishability \cite{roberson_oddomorphisms_2022}, see the monograph \cite{seppelt_homomorphism_2024}.
    We adopt this viewpoint when making the following definition
    for e.g.\ $w \in \{ \tw, \pw, \td\}$.

    \begin{definition}\label{def:w-counting-width}
        Let $w$ be an $\mathbb{N}$-valued graph parameter.
        A family of matrix-symmetric polynomials $(p_n)$ has \emph{$w$-counting width}  at most $k \in \mathbb{N}$ if,
        whenever two $(n,n)$-vertex bipartite graphs $G$ and $H$ are homomorphism indistinguishable over all graphs $F$ such that $w(F) < k$,
        then $p_n(G) = p_n(H)$.
    \end{definition}
    
    By \cite{dvorak_recognizing_2010,dell_lovasz_2018},
    $\tw$-counting width coincides with the original counting width from \cite{dawar_definability_2017}.
    Our \cref{thm:characterisationVF,thm:sym-vbp-orbit} and \cite[Theorem~1.1]{DPS2025} yield the following semantic properties of \symVP, \symVBP, and \symVF.

    \begin{corollary}\label{cor:bounded-w-counting-width}
        Let $(p_n)$ be a family of matrix-symmetric polynomials.
        \begin{enumerate}
            \item If $(p_n) \in \symVP$, then the $\tw$-counting width of $(p_n)$ is  $O(1)$.
            \item If $(p_n) \in \symVBP$, then the $\pw$-counting width of $(p_n)$ is  $O(1)$.
            \item If $(p_n) \in \symVF$, then the $\td$-counting width of $(p_n)$ is  $O(1)$.
        \end{enumerate}
    \end{corollary}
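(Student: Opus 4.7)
The plan is to leverage the three characterisations (\cref{thm:characterisationVF,thm:sym-vbp-orbit} and \cite[Theorem~1.1]{DPS2025}) to reduce the statement to a routine calculation with homomorphism counts. I treat all three items in parallel: let $w \in \{\tw, \pw, \td\}$ correspond to membership in \symVP, \symVBP, or \symVF, respectively. Suppose $(p_n)$ lies in the corresponding class. By the relevant characterisation, there is a constant $c \in \N$ such that for every $n \in \N$, $p_n$ admits a representation
\[
    p_n \;=\; \sum_{F} \alpha_{F,n}\, \hom_{F, n},
\]
as a finite linear combination of homomorphism polynomials for bipartite patterns $F$ satisfying $w(F) \leq c$. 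Set $k \coloneqq c+1$.

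Next, I would make the connection between evaluation of polynomials and homomorphism counting explicit. Given an $(n,n)$-vertex bipartite graph $G$ with biadjacency matrix $M_G \in \{0,1\}^{n \times n}$, the substitution $x_{i,j} \mapsto (M_G)_{i,j}$ in \cref{eq:homPoly} leaves only the terms corresponding to maps $h \colon V(F) \to [n]$ that preserve every edge of $F$; hence $\hom_{F,n}(G) = \hom(F, G)$. Consequently,
\[
    p_n(G) \;=\; \sum_F \alpha_{F,n}\, \hom(F, G).
\]

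Now suppose $G$ and $H$ are $(n,n)$-vertex bipartite graphs that are homomorphism indistinguishable over every graph $F'$ with $w(F') < k$. Every pattern $F$ appearing in the decomposition of $p_n$ satisfies $w(F) \leq c < k$, so $\hom(F, G) = \hom(F, H)$ for each such $F$, and therefore $p_n(G) = p_n(H)$. This is exactly the assertion that $(p_n)$ has $w$-counting width at most $k = O(1)$, concluding all three items.

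The only subtlety worth highlighting is that each characterisation must supply a bound on $w(F)$ that is \emph{uniform} in $n$: a priori one might only obtain a separate decomposition with a separate bound for each $p_n$. This uniformity is built into the statements of \cref{thm:characterisationVF,thm:sym-vbp-orbit} and \cite[Theorem~1.1]{DPS2025}: they assert that $(p_n)$ belongs to the symmetric class if, and only if, a single constant bound on the pattern parameter holds simultaneously for all $n$. Consequently, I do not anticipate a technical obstacle here; the substance of \cref{cor:bounded-w-counting-width} rests entirely on the three characterisations, and the step from bounded-parameter decompositions to bounded $w$-counting width is a direct translation via the definition of homomorphism indistinguishability.
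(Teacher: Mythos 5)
Your proof is correct and takes exactly the approach the paper intends: the paper derives \cref{cor:bounded-w-counting-width} directly from the three characterisation theorems without spelling out the short argument, and your write-up supplies precisely that missing translation from bounded-width linear combinations of homomorphism polynomials to bounded $w$-counting width via the identity $\hom_{F,n}(G) = \hom(F,G)$ on $\{0,1\}$-inputs. Your remark on uniformity of the bound is indeed the only point worth flagging, and you correctly observe that it is built into the statements of the characterisations.
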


    Towards \cref{thm:separation},
    it remains to show that there are polynomials in \symVP and \symVBP of unbounded $\pw$- and $\td$-counting width, respectively.
    This is implied by the following theorem,
    whose first assertion is implied by \cite[Theorem~7.1]{DPS2025}.

    \begin{theorem}[restate=thmCWsinglehom, label=thm:symmetric-classes-single-hom]
        Let $(F_n)$ be a family of bipartite multigraphs.
        \begin{enumerate}
            \item $(\hom_{F_n, n}) \in \symVP$ if, and only if, $\tw(F_n) \in O(1)$.
            \item $(\hom_{F_n, n}) \in \symVBP$ if, and only if, $\pw(F_n) \in O(1)$.
            \item $(\hom_{F_n, n}) \in \symVF$ if, and only if, $\td(F_n) \in O(1)$.
        \end{enumerate}
    \end{theorem}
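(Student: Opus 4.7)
The first assertion is established in \cite[Theorem~7.1]{DPS2025}; my plan addresses the two new assertions, for \symVBP and \symVF, in parallel since the arguments are structurally identical modulo changing the relevant graph parameter.

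For the easy ($\Leftarrow$) directions, if $\pw(F_n)$, respectively $\td(F_n)$, is uniformly bounded by some constant, then $(\hom_{F_n,n})$ is trivially a linear combination of homomorphism polynomials of bounded-pathwidth, respectively bounded-treedepth, patterns -- namely a single-term combination for each $n$. The easy directions of \cref{thm:sym-vbp-orbit} and \cref{thm:characterisationVF} then place $(\hom_{F_n,n})$ in \symVBP, respectively \symVF.

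For the hard ($\Rightarrow$) directions, I would argue by contrapositive. Assume $(\hom_{F_n,n}) \in \symVBP$, respectively \symVF. By \cref{cor:bounded-w-counting-width}, the family has $\pw$-counting width, respectively $\td$-counting width, bounded by some constant $k$. The key semantic observation is that when $\hom_{F_n,n}$ is instantiated at the $\{0,1\}$-entries of the bi-adjacency matrix of any $(n,n)$-bipartite graph $G$, it evaluates to $\hom(F_n, G)$. The counting-width hypothesis therefore states that $\hom(F_n, G) = \hom(F_n, H)$ whenever $G, H$ are $(n,n)$-bipartite graphs that are homomorphism indistinguishable over all patterns of pathwidth, respectively treedepth, less than $k$. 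Now suppose for contradiction that $\pw(F_n)$, respectively $\td(F_n)$, were unbounded; then for this fixed $k$ there would exist some $n$ with $\pw(F_n) \geq k$, respectively $\td(F_n) \geq k$. Here I would invoke the theory of homomorphism indistinguishability \cite{seppelt_homomorphism_2024}: the classes of graphs of bounded pathwidth and of bounded treedepth are \emph{homomorphism distinguishing closed}, with the indistinguishability relations captured respectively by pebble-relation logic \cite{montacute_pebble-relation_2024} and by a bounded-quantifier-rank counting logic \cite{fluck_going_2024}. These yield, for the offending $F_n$, a pair of graphs distinguished by $\hom(F_n, \cdot)$ but not by any pattern of width below $k$, contradicting the bounded counting-width conclusion.

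The main obstacle I anticipate is ensuring that the distinguishing pairs can be realised as bipartite graphs with exactly $n$ vertices on each side of the bipartition, since the polynomial $\hom_{F_n,n}$ is only defined on $(n,n)$-bipartite inputs. Standard CFI- and pebble-game-based constructions produce graphs whose sizes are dictated by a base graph and which need not be bipartite out of the box. I would handle this by (i) using bipartite variants of the CFI/pebble-game gadgets, which mirror the bipartite setup used implicitly in the proof of \cite[Theorem~7.1]{DPS2025} for the treewidth case, and (ii) padding each side of the bipartition with isolated vertices to reach the prescribed $(n,n)$ dimensions, checking that such padding preserves both homomorphism indistinguishability over the relevant pattern class and the distinguishing value of $\hom(F_n, \cdot)$ (using that $F_n$ has no isolated vertices, or renormalising if it does). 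This is consistent with the fact that the characterisations of \symVBP and \symVF in \cref{thm:sym-vbp-orbit,thm:characterisationVF} only invoke bipartite patterns, so all of the hom-indistinguishability reasoning can be performed inside the bipartite world.
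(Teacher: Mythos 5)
Your proposal takes essentially the same route as the paper: apply \cref{cor:bounded-w-counting-width} and then derive a contradiction from unbounded $\pw(F_n)$ (resp.\ $\td(F_n)$) via the homomorphism-distinguishing closure of bounded pathwidth (resp.\ treedepth), citing the same supporting results. One caution about the obstacle you raise: padding with isolated vertices only \emph{enlarges} the distinguishing pair $(G,H)$, so it does not by itself ensure the pair fits into $(n,n)$ vertices when $F_n$ is large relative to~$n$; the paper instead invokes the \emph{non-uniform} closure result $\cl_{n}^{\textup{bip}}(\mathcal{F}_k) \subseteq \mathcal{F}_k$ from \cite[Theorem~7.1 and Proviso~7.2]{DPS2025}, whose weak-oddomorphism-closure condition for $\pw$ and $\td$ is supplied by \cite[Theorem~6.4.6]{seppelt_homomorphism_2024} and \cite[Theorem~3]{fluck_going_2024}, so your deference to \cite[Theorem~7.1]{DPS2025} in point~(i) is the load-bearing ingredient rather than the padding in~(ii).
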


    In other words, \cref{thm:symmetric-classes-single-hom} 
    asserts that $\hom_{F_n, n}$ cannot be written as linear combinations of homomorphism polynomials for patterns $F'_n$ with $w(F'_n) \in O(1)$
    if $w(F_n)$ is unbounded.
    To prove the theorem, we show that the $w$-counting width of $(\hom_{F_n, n})$ is unbounded if $w(F_n)$ is unbounded and apply \cref{cor:bounded-w-counting-width}.

    Here, the key is to argue that, for $w \in \{\tw, \pw, \td\}$ and $k \in \mathbb{N}$, the class of all graphs $F$ such that $w(F) \leq k$ is \emph{non-uniformly homomorphism distinguishing closed} \cite{DPS2025}.
    A graph class $\mathcal{F}$ is \emph{(uniformly) homomorphism distinguishing closed} \cite{roberson_oddomorphisms_2022}
    if, for all graphs $F \not\in \mathcal{F}$,
    there exist graphs $G$ and $H$
    which are homomorphism indistinguishable over $\mathcal{F}$ but
    admit a different number of homomorphisms from $F$,
    see \cite[Chapter~6]{seppelt_homomorphism_2024}.
    It was shown in \cite{neuen_homomorphism-distinguishing_2024,seppelt_homomorphism_2024,fluck_going_2024} that the class of all graphs $F$ with $w(F) \leq k$ is homomorphism distinguishing closed, for $w$ and $k$ as above.
    These results are based on Cai-Fürer-Immerman graphs \cite{cai_optimal_1992,roberson_oddomorphisms_2022} and duality theorems for the width parameters~$w$ \cite{bienstock_quickly_1991,seymour_graph_1993,nesetril_sparsity_2012}.
    The analogous statement in the non-uniform setting 
    is that the $w$-counting width of the family $(\hom_{F_n, n})$ is in $O(1)$ iff $w(F_n) \in O(1)$.
    Hence, if e.g.\ $(\hom_{F_n, n}) \in \symVP$,
    then, by \cref{cor:bounded-w-counting-width}, 
     the $\tw$-counting width of $(\hom_{F_n, n})$ is in $O(1)$,
     which implies that $\tw(F_n) \in O(1)$.
    \cref{thm:separation} follows from \cref{thm:symmetric-classes-single-hom}
    by considering the family $P_n$ of $n$\nobreakdash-vertex paths and the family $B_n$ of $n$-leaf complete binary trees.
    It holds that $\pw(P_n) = 1$ and $\td(P_n) \in \Theta(\log n)$
    as well as $\tw(B_n) = 1$ and $\pw(B_n) \in \Theta(\log n)$.
    Hence, $(\hom_{P_n, n}) \in \symVBP \setminus \symVF$
    and $(\hom_{B_n, n}) \in \symVP \setminus \symVBP$.

    \Cref{thm:symmetric-classes-single-hom,ex:not-linearly-independent}
    hint at two extremes: 
    While generic matrix-symmetric polynomials may be written as both constant-width and unbounded-width linear combinations of homomorphism polynomials,
    single homomorphism polynomials do not exhibit this intricacy. 
    Generalising the latter case,
    we map out a large class of matrix-symmetric polynomials whose symmetric complexity can also be read off from its expansion in terms of homomorphism polynomials:
    For $n \in \mathbb{N}$,
    let $p_n = \sum_{F} \alpha_{F, n} \hom_{F, n}$ be a finite linear combination of homomorphism polynomials for pairwise non-isomorphic bipartite multigraphs $F$ without isolated vertices\footnote{This assumption does not constitute a loss of generality since coefficients can be rescaled to handle isolated vertices, i.e.\ $\hom_{F + K_1, n} = n \cdot \hom_{F, n}$. Throughout we consider only $\hom$-linear combinations without isolated vertices.}
    and coefficients $\alpha_{F, n} \in \mathbb{K}$.
    The \emph{volume} of such a linear combination is  $\vol(n) \coloneqq \max\{|V(F)| : \alpha_{F, n} \neq 0\}$.
    For a width parameter $w$ as above, write $\max w(p_n) \coloneqq \max\{w(F_n) : \alpha_{F, n} \neq 0 \}$.
    Under the assumption that the volume is sublinear,
    we prove the following \cref{thm:symmetric-lincomb-sublinear} by applying the descriptive complexity monotonicity theorem from \cite[Theorem~7.11]{DPS2025}.
    The first assertion is implied by \cite[Theorem~7.9]{DPS2025}.
    
    \begin{theorem}[restate=thmSymmetricLincomb, label=thm:symmetric-lincomb-sublinear]
        Let $(p_n)$ be a family of linear combinations of homomorphism polynomials for patterns without isolated vertices of volume~$o(n)$.
        \begin{enumerate}
            \item $(p_n) \in \symVP$ if, and only if, $\max \tw(p_n) \in O(1)$.
            \item $(p_n) \in \symVBP$ if, and only if, $\max \pw(p_n) \in O(1)$.
            \item $(p_n) \in \symVF$ if, and only if, $\max \td(p_n) \in O(1)$.
        \end{enumerate}
    \end{theorem}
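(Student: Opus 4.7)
The first assertion is cited from \cite[Theorem~7.9]{DPS2025}, so I focus on parts (2) and (3). Both follow the same template, instantiated with $w = \pw$ and $w = \td$ respectively, and in each case I prove the two directions separately. The easy direction is immediate from the characterisation theorems already proved; the hard direction combines \cref{cor:bounded-w-counting-width} with a monotonicity-type argument over homomorphism indistinguishability.

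For the easy direction, if $\max w(p_n) \leq k$ for some constant $k$ and every $n$, then by hypothesis $(p_n)$ is already written as a linear combination of homomorphism polynomials for patterns of width at most $k$. \cref{thm:sym-vbp-orbit} (for $w = \pw$) and \cref{thm:characterisationVF} (for $w = \td$) then place $(p_n)$ in \symVBP, respectively \symVF.

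For the hard direction, suppose $(p_n) \in \symVBP$ or $(p_n) \in \symVF$. \cref{cor:bounded-w-counting-width} yields that the respective $w$-counting width of $(p_n)$ is in $O(1)$, where $w = \pw$ or $w = \td$. To conclude $\max w(p_n) \in O(1)$, I invoke the descriptive complexity monotonicity theorem \cite[Theorem~7.11]{DPS2025}, which, informally, asserts that for a family of linear combinations of homomorphism polynomials of sublinear volume without isolated vertices, bounded $w$-counting width forces every pattern appearing with non-zero coefficient to have bounded $w$. Its applicability hinges on the class of graphs with $w$-width at most $k$ being non-uniformly homomorphism distinguishing closed; for $w \in \{\pw, \td\}$ this is supplied by the same route as for $w = \tw$, namely by combining the uniform closure results of \cite{neuen_homomorphism-distinguishing_2024,seppelt_homomorphism_2024,fluck_going_2024}, which rely on Cai--F\"urer--Immerman constructions and duality theorems for pathwidth and treedepth, with the volume hypothesis on $(p_n)$.

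The main obstacle is pairing the monotonicity theorem correctly with these non-uniform closure statements for $\pw$ and $\td$. The sublinear volume hypothesis is indispensable: \cref{ex:not-linearly-independent} already exhibits a polynomial in \symVF that can nonetheless be expanded in terms of homomorphism polynomials of unbounded treewidth, and the offending expansion there has volume linear in $n$. Sublinear volume ensures that the Cai--F\"urer--Immerman-style witnesses distinguishing a pattern of width $>k$ can be realised on an $n$-vertex host graph that dominates all patterns $F$ with $\alpha_{F,n} \neq 0$, so that these witnesses genuinely certify the failure of bounded $w$-counting width whenever $\max w(p_n)$ is unbounded. Once the monotonicity theorem is applied in this way, contradicting bounded $w$-counting width forces $\max w(p_n) \in O(1)$, completing the argument.
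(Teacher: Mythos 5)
Your proposal matches the paper's argument exactly: both directions are handled as you describe, with the easy direction following from \cref{thm:sym-vbp-orbit,thm:characterisationVF} and the hard direction from \cref{cor:bounded-w-counting-width} combined with the descriptive-complexity monotonicity theorem \cite[Theorem~7.11]{DPS2025}, whose applicability is guaranteed by verifying \cite[Proviso~7.2]{DPS2025} for $\pw$ and $\td$ via the homomorphism-distinguishing-closedness results you cite. The paper compresses this into one sentence referencing the verification already carried out in \cref{thm:symmetric-classes-single-hom}, but the content is the same.
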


    To reiterate: For example, \cref{thm:characterisationVF} generally shows that $(p_n) \in \symVF$ if, and only if, each $p_n$ can be written as some linear combination of bounded-treedepth homomorphism polynomials. 
    \cref{thm:symmetric-lincomb-sublinear} yields more specifically that for every concrete linear combination $(\sum_{F} \alpha_{F, n} \hom_{F, n})$ of sublinear volume, the treedepth of the patterns in precisely this linear combination dictates whether it is in \symVF. 
    If the volume is linear or superlinear, then this is not true and the linear combination may not be unique, as in \cref{ex:not-linearly-independent}.

    \subsection{Classical algebraic lower bounds for homomorphism polynomials}
    \label{sec:main-contributions-classical-complexity}
    Having identified homomorphism polynomials as the pivotal protagonists 
    of a symmetric algebraic complexity theory,
    we finally turn to the classical algebraic complexity of homomorphism polynomials.
    \textcite{durand_homomorphism_2016}
    identified specific families of patterns $(F_n)$ whose homomorphism polynomials are \VBP-, \VP-, or \VNP-complete.
    More precisely, they showed that the homomorphism polynomials\footnote{The homomorphism polynomials of \cite{durand_homomorphism_2016} are technically different from our \cref{eq:homPoly} but can be translated, see \cref{lem:colourful-path-vbp-complete,lem:colourful-binary-tree-vp-complete}.} of the family of paths, of complete binary trees, and of cliques are respectively \VBP-, \VP-, and \VNP-complete.
    We give a sweeping generalisation of these results by providing graph-theoretic criteria on the $(F_n)$ under which their homomorphism polynomials are \VBP-, \VP-, or \VNP-complete.

    \thmUncolouredHom*

    \Cref{thm:uncoloured-hom-complexity} is proven by reducing from the hard families given in \cite{durand_homomorphism_2016,hrubes_hardness_2017}.
    Our reduction makes use of colourful homomorphism polynomials and techniques developed by \textcite{curticapean_complexity_2014,curticapean_count_2024} for studying $\mathsf{\#W}[1]$- and $\mathsf{\#P}$\nobreakdash-hardness of homomorphism counts of unbounded-treewidth patterns.
    However, the nature of algebraic computation requires significant alterations to this framework:
    Most crucially, the patterns $F_n$ are counted only in $(n,n)$-vertex bipartite graphs.
    This means that gadget constructions need to be very succinct to fit into the size of the respective target graph.
    In particular, a literal translation of the techniques from \cite{curticapean_complexity_2014,curticapean_count_2024} would yield the assertion of \cref{thm:uncoloured-hom-complexity} only for pattern families with $|V(F_n)| \leq n^{1-\epsilon}$.
    To overcome these challenges, we make use of sophisticated graph-structural results \cite{chekuri_polynomial_2016,groenland_approximating_2023,hatzel_tight_2024}
    and recent insights from homomorphism indistinguishability \cite{roberson_oddomorphisms_2022,seppelt_logical_2024}.

    \Cref{thm:uncoloured-hom-complexity} falls short of establishing the complexity of families of homomorphism polynomials such as $(\hom_{K_{\log(n),\log(n)},n})$ since the treewidth $\Theta(\log(n))$ of the complete bipartite graph $K_{\log(n),\log(n)}$ grows too slowly.
    In \cite[88]{saurabh_analysis_2016}, 
    \citeauthor{saurabh_analysis_2016} asked to determine the complexity of homomorphism polynomials $(\hom_{F_n, n})$ for pattern families $(F_n)$ with $\tw(F_n) \in o(n)$.
    We resolve this question by proving the following theorem:
    \thmUncolouredhomParametrised*
    Here, \VFPT and \VW, as introduced by \textcite{blaser_parameterized_2019}, are the algebraic analogues 
    of the well-known parametrised complexity classes $\mathsf{FPT}$ and $\mathsf{\#W}[1]$, see \cite{cygan_parameterized_2015}.
    Both classes comprise parametrised families of polynomials, i.e.\ families of polynomials $(p_{n,k})$ indexed by both $n,k \in \mathbb{N}$.
    It holds that $(p_{n,k}) \in \VFPT$ if there exist algebraic circuits for $p_{n,k}$ of size at most $f(k) \cdot q(n)$ for an arbitrary function $f \colon \mathbb{N} \to \mathbb{N}$ and a polynomial $q$.
    In contrast, \VW contains polynomials that are believed not to be in \VFPT.
    
    Note that 
    we do not show that $(\hom_{F_n, n})$ is \VW-hard if $\tw(F_n) \in \Omega(1)$
    as $(\hom_{F_n, n})$ is not even a parametrised family, see \cite{curticapean_full_2021}.
    Instead, \cref{thm:hom-parametrised-simplified} is proven by showing that $(\hom_{F_n, n}) \in \VP$ for $\tw(F_n) \in \Omega(1)$ implies the existence of \VFPT-circuits for some \VW-hard parametrised family of polynomials. 
    Despite this indirect argument,
    it turns out that the hypothesis $\VFPT \neq \VW$ is necessary for proving \cref{thm:hom-parametrised-simplified}.
    Furthermore, the following \cref{cor:symmetric-vs-non-symmetric-computation} shows that $\VFPT \neq \VW$ 
    is equivalent to the collapse of symmetric and non-symmetric computation of homomorphism polynomials, i.e.\ 
    if $(\hom_{F_n, n}) \in \VP$, then $(\hom_{F_n, n}) \in \symVP$.
    
    \begin{corollary}[restate=corSymNonSym,label=cor:symmetric-vs-non-symmetric-computation]
        It holds that $\VFPT \neq \VW$ if, and only if,
        for every $p$-family $(F_n)$ of bipartite multigraphs such that $\tw(F_n)$ is non-decreasing,
        \[
            (\hom_{F_n, n}) \in \VP 
            \iff \tw(F_n) \in O(1) 
            \iff (\hom_{F_n, n}) \in \symVP.
        \]
    \end{corollary}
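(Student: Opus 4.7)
The plan is to derive the corollary by combining Theorem~\ref{thm:hom-parametrised-simplified} with item~1 of Theorem~\ref{thm:symmetric-classes-single-hom} for the forward direction, and by a diagonalisation argument for the backward direction.

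For the forward direction, assume $\VFPT \neq \VW$ and fix a $p$-family $(F_n)$ of bipartite multigraphs with $\tw(F_n)$ non-decreasing. The equivalence $\tw(F_n) \in O(1) \iff (\hom_{F_n, n}) \in \symVP$ is the unconditional content of item~1 of Theorem~\ref{thm:symmetric-classes-single-hom}, while the equivalence $(\hom_{F_n, n}) \in \VP \iff \tw(F_n) \in O(1)$ is precisely Theorem~\ref{thm:hom-parametrised-simplified} under the assumption $\VFPT \neq \VW$. Concatenation yields the triple equivalence.

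For the backward direction, argue contrapositively: assume $\VFPT = \VW$ and exhibit a $p$-family $(F^\star_n)$ of bipartite multigraphs with non-decreasing, unbounded treewidth such that nevertheless $(\hom_{F^\star_n, n}) \in \VP$, thereby refuting the triple equivalence. Fix a $\VW$-complete parametrised family $(\hom_{F_k, n})_{n, k}$ of homomorphism polynomials for bipartite multigraphs with $\tw(F_k) \to \infty$ as $k \to \infty$; the canonical candidate is $F_k = K_{k,k}$, and such $\VW$\nobreakdash-completeness is precisely what underlies the proof of Theorem~\ref{thm:hom-parametrised-simplified}. Under $\VFPT = \VW$, this family is in $\VFPT$, so there exist a computable function $f \colon \N \to \N$ (which we may replace by its running maximum to make non-decreasing) and a polynomial $q$ such that $\hom_{F_k, n}$ admits circuits of size $f(k) \cdot q(n)$. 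Set $k(n) \coloneqq \max\{k \in \N : f(k) \leq n\}$; if $f$ is bounded then any non-decreasing unbounded $k(n)$ with $k(n) \leq n$ suffices, and otherwise $k(n) \to \infty$. The diagonal family $F^\star_n \coloneqq F_{k(n)}$ then admits circuits of size $f(k(n)) \cdot q(n) \leq n \cdot q(n) \in \poly(n)$, so $(\hom_{F^\star_n, n}) \in \VP$, while $\tw(F^\star_n) = k(n)$ is non-decreasing and unbounded, and $|V(F^\star_n)| + |E(F^\star_n)| \leq \poly(k(n)) \leq \poly(n)$, confirming that $(F^\star_n)$ is indeed a $p$-family of bipartite multigraphs.

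The main obstacle is to secure, within the framework of the paper, a $\VW$-complete parametrised family of homomorphism polynomials for bipartite multigraphs whose treewidth grows unboundedly in the parameter. This should be extractable from the proof of Theorem~\ref{thm:hom-parametrised-simplified}, where the hypothesis $\VFPT \neq \VW$ is invoked precisely through the $\VW$-hardness of such a family; once this is in hand, the remainder of the backward direction is the standard diagonalisation sketched above, with the only care needed being to ensure that $k(n)$ is well-defined, non-decreasing, and growing slowly enough that the patterns remain a $p$-family.
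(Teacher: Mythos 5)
Your forward direction matches the paper exactly: combine \cref{thm:symmetric-classes-single-hom} for the $\symVP$ equivalence, \cref{lem:uncoloured-containment} for $\tw \in O(1) \Rightarrow \VP$, and \cref{thm:hom-parametrised-simplified} for the other direction under $\VFPT \neq \VW$. Your backward direction also has the right shape — a diagonalisation over the parametrised grid-like family — and the paper's version does essentially the same thing with $G_{g(n)\times g(n)}$ in place of your $F^\star_n$.

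The one genuine misstep is in the ingredient you flag as your ``main obstacle.'' You want to ``secure a $\VW$-complete parametrised family $(\hom_{F_k,n})$'' and suggest extracting it from the proof of \cref{thm:hom-parametrised-simplified}. That extraction would fail, because the $\VW$-hardness in that proof (via \cref{lem:grid-parametrised,lem:grid-not-in-vp}) concerns the \emph{colourful} polynomials $(\colhom_{G_{k\times k},n})$, not the uncoloured ones; the paper never claims $\VW$-hardness, let alone completeness, for any uncoloured family $(\hom_{F_k,n})$, and it is unclear that it holds. More importantly, you do not need completeness at all: to transfer from $\VW$ to $\VFPT$ under $\VFPT = \VW$, mere \emph{membership} $(\hom_{F_k,n}) \in \VW$ suffices, and this is exactly what \cref{lem:containment-vw1} proves directly (by a weft-$1$ construction, not by reduction), for any $p$-family of simple bipartite patterns — so it applies to your $K_{k,k}$ as well as to the paper's grids. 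Replacing ``$\VW$-complete, to be extracted from \cref{thm:hom-parametrised-simplified}'' with ``$\VW$-member, by \cref{lem:containment-vw1}'' closes the gap and makes your diagonalisation go through; the rest of your handling of $k(n)$ (taking the running maximum of $f$, treating the bounded case separately) is fine and equivalent to the paper's normalisation $k \leq f(k)$.
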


    Whereas the hypothesis  $\mathsf{FPT} \neq \mathsf{\#W}[1]$
    is fundamental to the study of the parametrised complexity of counting problems \cite{cygan_parameterized_2015},
    the hypothesis $\VFPT \neq \VW$ has received little attention \cite{curticapean_full_2021,bhattacharjee_exponential_2024} since it was introduced in \citeyear{blaser_parameterized_2019}.
    We believe that our
    \cref{thm:hom-parametrised-simplified}, 
    representing the algebraic analogue of the seminal characterisation of fixed-parameter tractable homomorphism counts by \textcite{dalmau_complexity_2004},
    will contribute to the significance of $\VFPT \neq \VW$.
    
    Finally, 
    we generalise \cref{thm:hom-parametrised-simplified,thm:uncoloured-hom-complexity}
    to linear combinations of homomorphism polynomials.
    As in \cref{thm:symmetric-lincomb-sublinear},
    this requires additional growth assumptions.
    In the following \cref{thm:uncoloured-lincomb-hom-complexity,thm:uncoloured-lincomb-hom-complexity-parametrised},
    a \emph{$p$-family} of polynomials $(p_n)$ is one whose degree is bounded by a polynomial in $n$. 
    
    \begin{theorem}[restate=thmLincombHomComplexity,label=thm:uncoloured-lincomb-hom-complexity]
        For $\epsilon > 0$,
        let $(p_n) \coloneqq (\sum_F \alpha_{F, n}\hom_{F, n})$ 
        be a $p$-family of linear combinations of homomorphism polynomials of bipartite multigraphs graphs without isolated vertices
        of volume at most $n^{1-\epsilon}$
        and polynomial \emph{dimension} $\dim(n) \coloneqq \lvert \{F \mid \alpha_{F, n} \neq 0 \}\rvert$.
        Then the following hold:
        \begin{enumerate}
            \item If $\max \tw(p_n) \geq n^\epsilon$ for all $n \geq 2$, then $(p_n)$ is
            \VNP-complete.
            \item If $\max \tw(p_n) \in O(1)$ and $\max \pw(p_n) \geq \epsilon \log(n)$ for all $n \geq 2$,
            then $(p_n)$ is \VP-complete,
            \item If $\max \pw(p_n) \in O(1)$ and $\max \td(p_n) \geq \epsilon \log(n)$ for all $n \geq 2$,
            then $(p_n)$ is \VBP-complete,
        \end{enumerate}
         In all cases, hardness holds under constant-depth $c$\nobreakdash-reductions over any field of characteristic zero. 
    \end{theorem}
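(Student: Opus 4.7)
The plan is to establish membership and hardness separately in each of the three regimes, with hardness reducing to the single-pattern case handled by \cref{thm:uncoloured-hom-complexity}.

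For membership, we treat the three cases in parallel. In case~1, each $\hom_{F,n}$ lies in $\VNP$, and since $(p_n)$ has polynomial dimension and polynomial degree, summing the individual $\VNP$-circuits scaled by $\alpha_{F,n}$ places $(p_n)$ in $\VNP$. In case~2, the bound $\max\tw(p_n)\in O(1)$ lets us compute each $\hom_{F,n}$ with $\alpha_{F,n}\neq 0$ by a standard dynamic program along a tree decomposition of $F$ of constant width; this gives a circuit of size at most $|V(F)|\cdot n^{O(1)}\le n^{1-\epsilon}\cdot n^{O(1)}$, which is polynomial. Summing polynomially many such circuits, scaled by $\alpha_{F,n}$, yields $(p_n)\in\VP$. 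Case~3 is entirely analogous, using path decompositions of constant width in place of tree decompositions and yielding $(p_n)\in\VBP$.

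For hardness, in each of the three cases I would pick, for every $n$, a pattern $F^{\ast}_n$ among those with $\alpha_{F^{\ast}_n,n}\neq 0$ whose width (treewidth, pathwidth, or treedepth, respectively) attains the hypothesised lower bound. The family $(F^{\ast}_n)$ is itself a $p$-family of bipartite multigraphs of volume $n^{1-\epsilon}$, so \cref{thm:uncoloured-hom-complexity} implies that $(\hom_{F^{\ast}_n,n})$ is $\VNP$-, $\VP$-, or $\VBP$-hard under constant-depth $c$\nobreakdash-reductions. It therefore suffices to reduce $(\hom_{F^{\ast}_n,n})$ to $(p_n)$. The idea is an algebraic, non-uniform analogue of the graph-motif extraction of \cite{curticapean_homomorphisms_2017,curticapean_complexity_2014}: given a target $(n,n)$-vertex bipartite graph $G$, construct polynomially many $(n,n)$-vertex graphs $G_1,\dots,G_t$ by attaching small gadgets to an embedded copy of $G$. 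For each $j$, the value $p_n(G_j)=\sum_F\alpha_{F,n}\hom(F,G_j)$ depends linearly on the vector $(\hom(F,G))_{\alpha_{F,n}\neq 0}$ through a coefficient matrix that records how each pattern $F$ decomposes across the gadget boundary (via its quotients/spasm). Because the dimension is polynomial, polynomially many well-chosen gadgets yield an invertible linear system, from which $\hom(F^{\ast}_n,G)$ can be recovered by a small circuit implementing the inverse map. Composing this reduction with the hardness from \cref{thm:uncoloured-hom-complexity} gives the desired completeness.

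The main obstacle is executing this interpolation within the tight budget of $n$ host vertices. In the uniform graph-motif setting of \cite{curticapean_homomorphisms_2017}, host graphs can be taken arbitrarily large, so the scale of gadgets versus the original graph is immaterial; here the host must have exactly $n$ vertices, of which up to $n^{1-\epsilon}$ are consumed by the embedded copy of $G$ (so that the homomorphisms from each $F$ into $G$ are faithfully reflected inside $G_j$), leaving only about $n^\epsilon$ vertices for gadgetry. The sublinear-volume hypothesis is precisely what makes this budget feasible: gadgets of size $o(n^\epsilon)$ suffice to distinguish patterns of volume $n^{1-\epsilon}$. To ensure invertibility, one chooses the gadget family so that the resulting coefficient matrix is Vandermonde- or M\"obius-like with respect to the quotient structure of the patterns occurring in $p_n$, in the spirit of the M\"obius inversion used for \cref{thm:symmetric-lincomb-sublinear}. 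Handling the matrix inversion by constant-depth $c$\nobreakdash-reductions, and arranging for the gadgets to simultaneously fit and separate the polynomially many patterns, will be the technical heart of the argument.
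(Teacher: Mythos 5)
Your membership argument matches the paper's. The hardness plan, however, has a genuine gap at the extraction step. You assert that, for hosts $G_j$ obtained by attaching gadgets to an embedded copy of $G$, the value $p_n(G_j)$ depends linearly on the vector $(\hom(F,G))_{\alpha_{F,n}\neq 0}$. That is not true: already for a disjoint-union host $G\uplus H_j$ one has $\hom(F,G\uplus H_j)=\sum_{F_1\uplus F_2=F}\hom(F_1,G)\hom(F_2,H_j)$, so $\hom(F,G_j)$ involves $\hom(F',G)$ for \emph{all} spanning subgraphs $F'$ of $F$, not just the patterns in the support of $p_n$; a vertex-gluing gadget is worse still. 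The paper's extraction step (\cref{lem:uncoloured-simple-lovasz-interpolation}) avoids this by evaluating the oracle at \emph{categorical products} $G\times\boldsymbol{x}_j$: the multiplicativity $\hom_{F,nN}(G\times\boldsymbol{x}_j)=\hom_{F,n}(G)\cdot\hom_{F,N}(\boldsymbol{x}_j)$ turns these evaluations into a bona fide linear system in the unknowns $(\hom_{F_i,n}(G))_i$ with coefficient matrix $(\hom_{F_i,N}(\boldsymbol{x}_j))_{i,j}$. Invertibility of that matrix comes not from a Möbius- or Vandermonde-type gadget design but from linear independence of the $\hom_{F_i,N}$ for pairwise non-isomorphic bipartite multigraphs without isolated vertices on at most $N$ vertices (\cref{lem:bipartite-multigraph-linear-independent} together with \cref{fact:interpolation}); the sublinear-volume hypothesis is precisely what makes this linear-independence lemma applicable.

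Your vertex budget is also reversed. To separate non-isomorphic patterns of volume up to $n^{1-\epsilon}$ via homomorphism counts, the test factors $\boldsymbol{x}_j$ must themselves have $\Theta(n^{1-\epsilon})$ vertices, not $o(n^\epsilon)$. Accordingly, the paper takes the target $G$ to be the \emph{small} factor (with $n^\epsilon$ vertices) and $\boldsymbol{x}_j$ the large one; the product $G\times\boldsymbol{x}_j$ then has $n$ vertices, so the oracle $p_n$ can be queried on it. This reduces $(\hom_{F^\ast_n,\,n^\epsilon})$ to $(p_n)$, and \cref{thm:uncoloured-hom-complexity} is then applied to the re-indexed family $(F^\ast_{n^{1/\epsilon}})$, whose widths are large enough to trigger the respective hardness cases.
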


    \begin{theorem}[restate=thmLincombHomComplexityParametrised,label=thm:uncoloured-lincomb-hom-complexity-parametrised]
        For $\epsilon > 0$, 
        let $(p_n)$ be a $p$-family of linear combinations of homomorphism polynomials of bipartite multigraphs without isolated vertices of volume at most $n^{1-\epsilon}$, polynomial dimension, and non-decreasing $\max \tw(p_n)$.
        Unless $\VFPT = \VW$,
        it holds that 
        \(
            (p_n) \in \VP\) if, and only if, \( \max \tw(p_n) \in O(1).
        \)
    \end{theorem}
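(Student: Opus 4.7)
The plan is to reduce this statement to the single-pattern version \cref{thm:hom-parametrised-simplified} by separating out a carefully chosen homomorphism polynomial from the given linear combination. For the easy direction, suppose $\max\tw(p_n) \leq c$ for some constant $c$. Then every pattern $F$ in the support of $p_n$ satisfies $\tw(F) \leq c$, and standard dynamic programming along a width-$c$ tree decomposition of $F$ (essentially the construction that underlies the easy direction of \cref{thm:sym-vbp-orbit}) yields a circuit for $\hom_{F,n}$ of size at most $\poly(n, |V(F)|) \leq \poly(n)$, since the volume is sublinear. Summing the $\poly(n)$ many nonzero terms, each weighted by its scalar coefficient, gives a polynomial-size circuit for $p_n$.

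For the hard direction, assume $\VFPT \neq \VW$ and $\max\tw(p_n) \notin O(1)$; since this parameter is non-decreasing, in fact $\max\tw(p_n) \to \infty$. For each $n$, fix a pattern $F^*_n$ in the support of $p_n$ with $\tw(F^*_n) = \max\tw(p_n)$, so that $\tw(F^*_n)$ is non-decreasing and unbounded. The goal is to show that $(p_n) \in \VP$ would force $(\hom_{F^*_n, n}) \in \VP$, whence \cref{thm:hom-parametrised-simplified} (applied to the family $(F^*_n)$) contradicts $\VFPT \neq \VW$.

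The extraction of $(\hom_{F^*_n, n})$ from $(p_n)$ should reuse the separation machinery developed for \cref{thm:uncoloured-lincomb-hom-complexity}. Evaluating $p_n$ on polynomially many structurally chosen auxiliary bipartite graphs produces a linear system whose unknowns are the values $\hom(F, \cdot)$ for patterns $F$ in the support of $p_n$. Because the dimension is polynomial, this system has polynomial size, and its invertibility can be arranged via Möbius-style inversion on the quotient lattice of patterns (in the spirit of \cite[Theorem~8.1]{DPS2025}) together with sufficiently rich test gadgets. The sublinear volume $n^{1-\epsilon}$ is what guarantees that the required gadgets fit within $(n,n)$-vertex bipartite graphs, leaving enough room both for a faithful copy of the pattern and for distinguishing evaluations. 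Once the system is inverted, recovery of the term $\alpha_{F^*_n, n}\hom_{F^*_n, n}$ yields $\hom_{F^*_n, n}$ up to a nonzero scalar, hence a $\VP$-circuit for $(\hom_{F^*_n, n})$.

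The main obstacle is to carry out this interpolation not just numerically but as a genuine $\VP$-reduction producing polynomial-size circuits, and to do so while preserving the non-decreasing treewidth hypothesis required by \cref{thm:hom-parametrised-simplified}. This may force one to replace $(F^*_n)$ by a suitably padded subsequence (padding with isolated edges does not affect treewidth but tunes the pattern size), or to recast the argument at the level of parametrised families indexed by the pattern itself, in the style of \cite{blaser_parameterized_2019,curticapean_full_2021}. With those adjustments in place, \cref{thm:hom-parametrised-simplified} immediately delivers the collapse $\VFPT = \VW$ and thereby the contrapositive of the claim.
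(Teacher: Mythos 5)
Your high-level plan — prove the easy direction by summing polynomial-size circuits, and prove the hard direction by extracting a single homomorphism polynomial $\hom_{F^*_n,\cdot}$ from the linear combination and then applying \cref{thm:hom-parametrised-simplified} — matches the paper's strategy. The gap lies in the extraction mechanism you sketch, which the paper resolves differently and more concretely.

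You propose to obtain $\hom_{F^*_n,\cdot}$ via a ``M\"obius-style inversion on the quotient lattice of patterns'' together with ``structurally chosen auxiliary graphs'' and ``test gadgets''. M\"obius inversion on the quotient lattice relates $\hom$-vectors to $\emb$- or $\sub$-vectors of a fixed pattern; it does not by itself invert an \emph{arbitrary} linear combination $\sum_F \alpha_{F,n}\hom_{F,n}$ to isolate one of its terms. What the paper actually does, in \cref{lem:uncoloured-simple-lovasz-interpolation}, is a three-ingredient interpolation: linear independence of homomorphism polynomials of distinct bounded-volume patterns (\cref{lem:bipartite-multigraph-linear-independent}), a non-constructive existence result for evaluation points making the evaluation matrix invertible (\cref{fact:interpolation}), and the multiplicativity of coloured homomorphism polynomials over categorical products (\cref{lem:product-colourful}). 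The tensor-product step is essential: it is what lets one turn the hardwired evaluation points $\boldsymbol{x}_j$ into a depth-constant circuit that maps a general input $G$ to $G\times\boldsymbol{x}_j$ and thereby evaluates $p_{nN}$ at inputs encoding both $G$ and the interpolation node. You also flag, as ``the main obstacle'', that the interpolation must produce polynomial-size circuits rather than just a numerical argument; but in Valiant's non-uniform model this is not an obstacle at all — the coefficients $\beta_j$ and points $\boldsymbol{x}_j$ can simply be hardwired, and this is exactly the advantage the paper exploits (it is explicitly contrasted with the uniform setting of \cite{curticapean_count_2024}).

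A second issue is the scaling. Because the tensor-product trick blows up the vertex count by the volume $N \le n^{1-\epsilon}$, the extraction gives access to $\hom_{F^*_n, n^{\epsilon}}$ and not to $\hom_{F^*_n, n}$. The paper therefore passes to the re-indexed family $(F_{n^{1/\epsilon}})_{n\in\mathbb{N}}$, checks that its treewidth is unbounded and non-decreasing, and only then invokes \cref{thm:hom-parametrised-simplified}. You gesture at this with ``padding with isolated edges'' or ``replacing $(F^*_n)$ by a subsequence'', but neither of those is quite the right fix; the needed move is the re-indexing by $n\mapsto n^{1/\epsilon}$, and it does require checking that the non-decreasing hypothesis survives, which it does because $\max\tw(p_n)$ is assumed non-decreasing. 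Filling in the extraction lemma and the rescaling would turn your sketch into the paper's proof.
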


    \section{Outlook}

    We have introduced the symmetric algebraic complexity classes \symVF, \symVBP, and \symVP as a new perspective on Valiant's programme, and established the unconditional separation $\symVF \subsetneq \symVBP \subsetneq \symVP$ via characterisations in terms of homomorphism polynomials. Motivated by this, we also proved new, and very general, (conditional) lower bounds for homomorphism polynomials in the classical non-symmetric circuit model, with surprising consequences for the relationship between $\VP$ and $\symVP$. See \cref{tab:overview} for a detailed overview. Further natural questions arising from our work are the following.
    
    \paragraph*{Symmetric depth reductions.} 
    In algebraic complexity theory, natural circuit restrictions such as bounded depth and formulas are extensively studied, in part due to a remarkable depth-reduction phenomenon not known to hold in the Boolean setting. Specifically, any degree-$d$ polynomial computable by a circuit of size $s$ can also be computed by a homogeneous depth-4 circuit or a (possibly non-homogeneous) depth-3 circuit of size $s^{O(\sqrt{d})}$. As a result, proving $n^{\omega(\sqrt{d})}$ lower bounds against such constant-depth circuits would be sufficient to separate $\VP$ from $\VNP$ and resolve Valiant’s conjecture \cite{Sap2021}.
    A natural question is if such a depth reduction result holds for symmetric circuits as well. 
    Our work indicates that this is unlikely in the general sense. The only kind of depth reduction that may be possible is one where the support depth (\cref{def:supportDepth}) is not decreased. Namely, \cref{lem:supportDepthImpliesTreedepth} implies that the support depth of a symmetric circuit controls the treedepth of the pattern graphs whose homomorphism polynomials it computes. Thus, \cref{thm:symmetric-lincomb-sublinear,thm:symmetric-classes-single-hom} represent obstacles to symmetric depth reduction.

    \paragraph*{Linear combinations of linear volume.}
    Our results show that a central parameter for understanding the complexity of linear combinations $(\sum_F \alpha_{F, n} \hom_{F, n})$ of homomorphism polynomials is their volume $\vol(n) \coloneqq \max \{ |V(F)| : \alpha_{F, n} \neq 0\}$.
    As summarised in \cref{tab:overview},
    linear combinations of sublinear volume behave essentially like single homomorphism polynomials and their symmetric and non-symmetric algebraic complexity is described by our results.

    In general, by \cite[Lemma~3.1]{DPS2025},
    every matrix-symmetric polynomial $p \in \mathbb{K}[\mathcal{X}_n]$ can be written as a linear combination of homomorphism polynomials of volume $\leq n$
    and such linear combinations are unique by \cref{lem:bipartite-multigraph-linear-independent}.
    However, as shown in \cref{ex:not-linearly-independent},
    it may be possible to represent a polynomial as a linear combination of homomorphism polynomials of smaller width at the expense of increasing the volume to $> n$.
    This suggests that the realm of linear combinations of at least linear volume is of a fundamentally different nature.
    Understanding their complexity is an intriguing open problem. Specifically, we ask:
    \begin{itemize}
     \item Under which assumptions is the symmetric or non-symmetric algebraic complexity of a linear combination of homomorphism polynomials of linear volume or superpolynomial dimension dictated by the width of the pattern graphs? More concretely, what is the complexity of subgraph polynomials or patterns of linear size such as the permanent, see \cite[Conjecture~8.7]{DPS2025}?
    \item Which complexity-theoretic hypothesis does $\VFPT \neq \VW$ need to be replaced by in order for \cref{cor:symmetric-vs-non-symmetric-computation} to hold for pathwidth/\VBP/\symVBP or treedepth/\VF/\symVF instead of treewidth/\VP/\symVP, see \cite{chen_fine_2015}?
    \item Are there linear combinations $(p_n)$ of homomorphism polynomials of linear volume and polynomial dimension such that $\max \tw(p_n) \in \Omega(1)$, but $(p_n) \in \VP$, even if $\VFPT \neq \VW$? If not, then $\symVP$ and $\VP$ would coincide on all matrix-symmetric polynomials, not just on single homomorphism polynomials as shown in \cref{cor:symmetric-vs-non-symmetric-computation}. 
    \end{itemize}

    \begin{table}\renewcommand{\arraystretch}{1.3}
        \centering
        \begin{tabular}{p{4.3cm}p{5.4cm}p{5.4cm}}\toprule
             & Single homomorphism polynomials $(\hom_{F_n, n})$ & Linear combinations of homomorphism polynomials $(p_n)$ \\ \midrule
             \textbf{Symmetric complexity} & generic \newline\small \cref{thm:symmetric-classes-single-hom} & volume $o(n)$ \newline \cref{thm:symmetric-lincomb-sublinear}  \\
             \textbf{Algebraic complexity} \newline assuming $\VFPT \neq \VW$ &
             $p$-family of patterns $(F_n)$ with $\tw(F_n)$ non-decreasing \newline\small\cref{thm:hom-parametrised-simplified} &
             $p$-family of polynomials~$(p_n)$ with volume at most~$ n^{1-\epsilon}$, polynomial dimension, and non-decreasing $\max \tw(p_n)$ \newline\small\cref{thm:uncoloured-lincomb-hom-complexity-parametrised} \\
             \textbf{Algebraic complexity} \newline\small assuming $\VP \neq \VNP$ & 
             $p$-family of patterns $(F_n)$ with $\tw(F_n) \geq n^\epsilon$ or $\tw(F_n) \in O(1)$ \newline\small\cref{thm:uncoloured-hom-complexity} &
             $p$-family of polynomials~$(p_n)$ with volume at most~$ n^{1-\epsilon}$, polynomial dimension, and $\max \tw(p_n) \geq n^\epsilon$ or $\max \tw(p_n) \in O(1)$ \newline\small\cref{thm:uncoloured-lincomb-hom-complexity} \\
             \bottomrule
        \end{tabular}
        \caption{\itshape Overview of the assumptions on matrix-symmetric polynomials represented as linear combinations of homomorphism polynomials subject to which we precisely characterise membership in \VP and \symVP. From left to right and from top to bottom, the assumptions become more restrictive.}
        \label{tab:overview}
    \end{table}

    \newpage
\section{Extended preliminaries}
    \label{sec:ext-prelims}

    If not stated otherwise, all polynomials and circuits are over some field $\mathbb{K}$ of characteristic zero.
    All logarithms are base-$2$.
    Write $\mathbb{N} = \{0,1,\dots \}$ and $[k] \coloneqq \{1, \dots, k\}$ for $k \in \mathbb{N}$ with $[0] = \emptyset$.

    \subsection{Algebraic complexity classes}

    We give an overview of the complexity classes introduced by \textcite{valiant_completeness_1979}, see also \cite{burgisser_completeness_2000,blaser_parameterized_2019}.
    Throughout, let $\mathbb{K}$ denote a field.
    Let $X = \{ x_1, x_2, \dots \}$ be a set of variables.
    A function $f \colon \mathbb{N} \to \mathbb{N}$ is \emph{$p$-bounded} if there exists a polynomial $q$ such that $f(n) \leq q(n)$ for all $n \in \mathbb{N}$.

    \begin{definition}
        A sequence of polynomials $(f_n) \in \mathbb{K}[X]$ is a \emph{$p$-family} if
        \begin{enumerate}
            \item the number of variables in $f_n$ is $p$-bounded, i.e.\ there exists a $p$-bounded function $q \colon \mathbb{N} \to \mathbb{N}$ such that $f_n \in \mathbb{K}[x_1, \dots, x_{q(n)}]$ for all $n \in \mathbb{N}$ and
            \item the degree $\deg(f_n)$ is $p$-bounded.
        \end{enumerate}
    \end{definition}
    
    For a polynomial $f$, write $\size(f)$ for the size of the smallest algebraic circuit computing~$f$.

    \begin{definition}
        The class \VP consists of all $p$-families $(f_n)$ such that $\size(f_n)$ is $p$-bounded.
    \end{definition}

    \begin{definition}
        The class $\VNP$ consists of all $p$-families $(f_n)$ such that there exists a $p$-family $(g_n)$ in $\VP$ and a $p$-bounded function $q$ for which
        \[
            f_n(x_1, \dots, x_{q(n)}) \;=\; \sum_{e \in \{0,1\}^{q(n)}} g_n(x_1, \dots, x_{q(n)}, e_1, \dots, e_{q(n)}).
        \]
        That is, $\VNP$ consists of polynomial families that can be expressed as exponential sums over polynomially bounded $\VP$-computable functions.
    \end{definition}

    \begin{definition}
        The class $\VF$ consists of all $p$-families $(f_n)$ that can be computed by algebraic formulas of $p$-bounded size.
        A formula is a circuit where the underlying graph is a tree (i.e.\ reuse of subcomputations is not allowed).
    \end{definition}

    \begin{definition}            
        The class $\VBP$ consists of all $p$-families $(f_n)$ that can be computed by skew algebraic circuits of $p$-bounded size. 
        An algebraic circuit is \emph{skew} if, for every multiplication gate, at most one child is an internal (non-input) gate.
    \end{definition}

    Skew circuits are computationally equivalent to \emph{algebraic branching programs} (ABPs), a well studied computational model in algebraic complexity theory. 
    An ABP is a layered directed acyclic graph with designated source and sink nodes, where each edge connects consecutive layers and is labelled by an affine linear form over the input variables. 
    The polynomial computed by the ABP is given by summing over all the path and taking the product of edge labels along each path. 
    For a detailed account of this equivalence and related results, see the survey by Mahajan~\cite{mahajan_algebraic_2013}. In $\symVBP$, we choose to work with skew circuits rather than ABPs because circuits have a more natural notion of symmetry.

    We will show hardness for the aforementioned complexity classes under $p$-projections or constant-depth $c$\nobreakdash-reductions,
    the former being more restrictive than the latter.
    
    \begin{definition}
    \begin{enumerate}
        \item Let $f, g \in \mathbb{K}[X]$.
        The polynomial $f$ is a \emph{projection} of $g$, in symbols $f \leq_p g$, if there is a substitution $r \colon X \to X \cup \mathbb{K}$ such that $f = r(g)$.
        \item Let $(f_n)$ and $(g_n)$ be $p$-families. The family $(f_n)$ is a \emph{$p$-projection} of $(g_n)$, in symbols $(f_n) \leq (g_n)$ if there exists a $p$-bounded function $q \colon \bbN \to \bbN$ such that $f_n \leq_p g_{q(n)}$ for all $n \in \mathbb{N}$.
    \end{enumerate}
    \end{definition}

    In order to define constant-depth $c$\nobreakdash-reduction, consider algebraic circuits with oracle gates.
    For a polynomial $g \in \mathbb{K}[x_1, \dots, x_\ell]$,
    an oracle gate for $g$ has $\ell$ inputs $t_1, \dots, t_\ell$ and computes the value $g(t_1, \dots, t_\ell)$.
    For a polynomial $f \in \mathbb{K}[X]$, $d \in \mathbb{N}$ and a set $G \subseteq \mathbb{K}[X]$ of polynomials,
    write $\size_G(f)$ for the size of the smallest circuit of $f$ with oracle gates for $g \in G$
    and $\size^d_G(f) \in \mathbb{N} \cup \{\infty\}$ for the size of the smallest depth-$d$ circuit of $f$ with oracle gates for $g \in G$.
    If $G$ is a singleton $\{g\}$,
    write $g$ instead of $G$.

    \begin{definition}
        Let $(f_n)$ and $(g_n)$ be $p$-families.
        The family  $(f_n)$ \emph{constant-depth $c$\nobreakdash-reduces} to $(g_n)$
        if there exists a constant $d \in \mathbb{N}$ and 
        a $p$-bounded function $q \colon \mathbb{N} \to \mathbb{N}$
        such that $\size^d_{g_{q(n)}}(f_n)$ is $p$-bounded.
    \end{definition}

    \subsection{Parametrised algebraic complexity classes \VW and \VFPT}

    We recall the definitions of the parametrised Valiant's classes \VW and \VFPT, as introduced by \textcite{blaser_parameterized_2019}.
    They are analogues of the parametrised complexity classes $\mathsf{W}[1]$ and $\mathsf{FPT}$, see \cite{cygan_parameterized_2015}.

    \begin{definition}\label{def:parametrised-p-family}
        A \emph{parametrised $p$-family} is a family $(f_{n,k})$ of polynomials such that
        \begin{enumerate}
            \item there exists a $p$-bounded function $q \colon \mathbb{N} \to \mathbb{N}$ such that $f_{n,k} \in \mathbb{K}[x_1, \dots, x_{q(n)}]$ for all $n,k \in \mathbb{N}$,
            \item there exists a $p$-bounded function $q \colon \mathbb{N} \to \mathbb{N}$ such that $\deg(f_{n,k}) \leq q(n+k)$ for all $n,k \in \mathbb{N}$.
        \end{enumerate}
    \end{definition}

    A function $p \colon \mathbb{N} \times \mathbb{N} \to \mathbb{N}$ is \emph{fpt-bounded} if there exists a $p$-bounded function $q \colon \mathbb{N} \to \mathbb{N}$ and an arbitrary\footnote{In the definition of the non-algebraic class \FPT, this function $g$ is assumed to be computable, see \cite{cygan_parameterized_2015}. In the context of Valiant's non-uniform complexity classes, this assumption is dropped.} function $g \colon \mathbb{N} \to \mathbb{N}$
    such that $p(n,k) \leq g(k) \cdot q(n)$ for all $n,k \in \mathbb{N}$.

    \begin{definition}
        The class \VFPT comprises all parametrised $p$-families $(f_{n,k})$
        for which $\size(f_{n,k})$ is fpt-bounded.
    \end{definition}

    \begin{definition}\label{def:fpt-reductions}
        \begin{enumerate}
            \item A parametrised $p$-family $(f_{n,k})$ \emph{fpt-projects} to a parametrised $p$-family $(g_{n,k})$,
        in symbols $(f_{n,k}) \leq^{fpt}_p (g_{n,k})$,
        if there are functions $r,s,t \colon \mathbb{N} \to \mathbb{N}$
        such that $r$ is $p$-bounded and $f_{n,k}$ is a projection of $g_{r(n)s(k), k'}$ for some $k' \leq t(k)$ and all $n,k \in \mathbb{N}$.
            \item A parametrised $p$-family $(f_{n,k})$ with $f_{n,k} \in \mathbb{K}[x_1, \dots, x_{p(n)}]$ is an \emph{fpt-substitution} of another parametrised $p$-family $(g_{n,k})$ with $g_{n,k} \in \mathbb{K}[x_1, \dots, x_{q(n)}]$,
            in symbols $(f_{n,k}) \leq^{fpt}_s (g_{n,k})$,
            if there are functions $r,s,t \colon \mathbb{N} \to \mathbb{N}$
            such that $r$ is $p$-bounded and there exist polynomials $h_1, \dots, h_{q(r(n)s(k))} \in \mathbb{K}[x_1, \dots, x_{p(n)}]$
            such that
            \[
                f_{n,k} = g_{r(n)s(k),k'}(h_1,\dots, h_{q(r(n)s(k))})
            \]
            for some $k' \leq t(k)$ and $\deg(h_i)$ and $\size(h_i)$ are fpt-bounded as functions of $n,k$, for all $1 \leq i \leq q(r(n)s(k))$.
            \item A parametrised $p$-family $(f_{n,k})$ \emph{fpt-$c$\nobreakdash-reduces} to a parametrised $p$-family $(g_{n,k})$,
        in symbols $(f_{n,k}) \leq^{fpt}_c (g_{n,k})$,
        if there is a $p$-bounded function $q \colon \mathbb{N} \to \mathbb{N}$ and functions $s,t \colon \mathbb{N} \to \mathbb{N}$ such that
        $\size_{G_{q(n)s(k), t(k)}}(f_{n,k})$ is fpt-bounded 
        where $G_{n,k} \coloneqq \{ g_{i,j} \mid i \in [n], j \in [k] \}$.
        \end{enumerate}
    \end{definition}

    The \emph{weft} of an algebraic circuit $C$ is the maximum number of gates of fan-in greater than $2$ on any path from a leaf to the root.
    For $s, k \in \mathbb{N}$,
    write $\bitvectors{s}{k} \subseteq \{0,1\}^s$ for the set of all length-$s$ vectors containing exactly $k$ ones.

    \begin{definition}\label{def:vw-t}
        Let $t \geq 0$.
        The class $\mathsf{VW}[t]$ comprises the parametrised $p$-families $(f_{n,k})$
        for which there exists a $p$-family $(g_n)$ of polynomials $g_n \in \mathbb{K}[x_1, \dots, x_{q(n)}, y_1, \dots, y_{q(n)}]$ for some $p$-bounded function $q \colon \mathbb{N} \to \mathbb{N}$
        such that $g_n$ is computed by a constant-depth unbounded-fan-in circuit of weft $\leq t$ and polynomial size, and
        \[
            (f_{n,k}) \leq^{fpt}_s \left( \sum_{\boldsymbol{e} \in \bitvectors{q(n)}{k}} g_n(x_1, \dots, x_{q(n)}, e_1, \dots, e_{q(n)}) \right).
        \]
    \end{definition}
    In \cite{blaser_parameterized_2019}, it was observed that $\VFPT = \mathsf{VW}[0]$.
    We will mostly be  concerned with \VW.

    \subsection{Matrix-symmetric polynomials and circuits}

   In the extended abstract of the paper, we considered for simplicity only $\Sym_n \times \Sym_n$-symmetric polynomials over the variables $\Xx_n$. In general, most our results on symmetric computation hold also for $\Sym_n \times \Sym_m$-symmetric polynomials over the variables $\Xx_{n,m} = \{ x_{i,j} \mid i \in [n], j \in [m] \}$, for $n \neq m$. From now on, we adopt this more general set-up.
   We now present the omitted proof of the rigidification lemma from \cref{sec:preliminaries-first-12-pages}. 

    \rigidify*
\begin{proof}
We construct $C'$ by merging equivalent gates in $C$. It may be necessary to repeat the following procedure more than once to accomplish rigidity. 
Formally, we define $C'$ and a surjective map $\delta \colon V(C) \to V(C')$ inductively from the input gates of $C$ to the root. The map $\delta$ keeps track of which gates of $C$ have been merged into which gates of $C'$ and just simplifies the presentation. Let $\Stab_{\Aut(C)}(\Xx)$ denote the subgroup of $\Aut(C)$ that fixes every input gate of $C$.

For every variable and field element, there is a unique input gate with that label in $C$, so input gates never violate rigidity.
Thus, we let the input gates of $C'$ be the same as in $C$, and define $\delta$ as the identity map on them. 
Now assume by induction that we have constructed $C'$ and $\delta$ up to layer $d$. We describe the construction on layer $d+1$. Let $V_{d+1} \subseteq V(C)$ be the set of gates in layer $d+1$. For each $\Stab_{\Aut(C)}(\Xx)$-orbit $O \subseteq V_{d+1}$, we introduce a new gate $g_O$ in layer $d+1$ of $C'$, and we let $\delta(g) \coloneqq g_O$ for every $g \in O$. The operation type of $g_O$ is the same as that of each gate in $O$.

To define the connections between layer $d+1$ and layer $d$ in $C'$,   
we first note:
\begin{claim}
Let $g,g' \in V(C)$ be in the same $\Stab_{\Aut(C)}(\Xx)$-orbit. Then there is a bijection $\gamma \colon gE(C) \to g'E(C)$ such that for each $h \in gE(C)$, $h$ and $\gamma(h)$ are in the same $\Stab_{\Aut(C)}(\Xx)$-orbit.
\end{claim}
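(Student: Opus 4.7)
The plan is direct: take any $\sigma \in \Stab_{\Aut(C)}(\Xx)$ witnessing that $g$ and $g'$ are in the same orbit, and show that its restriction to the children of $g$ is the desired bijection $\gamma$. Concretely, by the definition of orbit there exists $\sigma \in \Stab_{\Aut(C)}(\Xx) \leq \Aut(C)$ with $\sigma(g) = g'$. Because $\sigma$ is a circuit automorphism, by the definition of $\Aut(C)$ it preserves edges with their multiplicities and directions. Hence, for every in-edge into $g$ from some child $h$, the image under $\sigma$ is an edge from $\sigma(h)$ into $\sigma(g) = g'$, so $\sigma(h) \in g'E(C)$; the same argument applied to $\sigma^{-1}$ shows surjectivity. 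Setting $\gamma \coloneqq \sigma|_{gE(C)}$ therefore gives a bijection $\gamma \colon gE(C) \to g'E(C)$.

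For the orbit condition, observe that the very same element $\sigma$ lies in $\Stab_{\Aut(C)}(\Xx)$ and satisfies $\sigma(h) = \gamma(h)$ for every $h \in gE(C)$. So $h$ and $\gamma(h)$ are in a common $\Stab_{\Aut(C)}(\Xx)$-orbit, witnessed by $\sigma$ itself. This is exactly what is required.

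I expect no serious obstacle; the only subtlety is the presence of multi-edges, which is resolved by interpreting $gE(C)$ as the multiset of in-neighbours (equivalently, the set of in-edges) and noting that automorphisms in $\Aut(C)$ are required to preserve the number of edges between any ordered pair of gates, so the restriction of $\sigma$ is a well-defined bijection on this multiset.
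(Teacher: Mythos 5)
Your proof is correct and takes essentially the same approach as the paper: both observe that any $\sigma \in \Stab_{\Aut(C)}(\Xx)$ with $\sigma(g) = g'$ restricts to the desired bijection on children, with $\sigma$ itself witnessing the orbit condition. Your version spells out the bijectivity and the handling of multiedges, where the paper states it more tersely; the substance is identical.
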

\begin{claimproof}
Since $g,g'$ are in the same orbit, there exists $\pi \in \Stab_{\Aut(C)}(\Xx)$ such that $\pi(g) = g'$, and the action of $\pi$ on $gE(C)$ defines a bijection $\gamma \colon gE(C) \to g'E(C)$ with the claimed property.
\end{claimproof}
By the claim, for any two gates $g,g' \in O$, it holds that $\{ \delta(h) \mid h \in gE(C) \} = \{ \delta(h) \mid h \in g'E(C)  \}$. Therefore we can pick an arbitrary $g \in O$ and define the set of children of $g_O$ in $C'$ as 
\[
g_OE(C') \coloneqq \{ \delta(h) \mid h \in gE(C) \}.
\]
For each child $\delta(h)$ of $g_O$ in $C'$, we let the multiplicity of the edge
between $g_O$ and $\delta(h)$ be defined as follows. Let $m(g,h)$ denote the multiplicity of the edge between $g$ and $h$ in $C$.
Then in $C'$, the multiplicity of the edge $(g_O, \delta(h))$ is
\[
\sum_{h' \in \delta^{-1}(\delta(h)) \cap gE} m(g,h').
\]
This finishes the construction of $C'$. Note that by construction, two gates $g_1,g_2 \in V(C)$ are in the same $\Stab_{\Aut(C)}(\Xx)$-orbit if and only if $\delta(g_1) = \delta(g_2)$.

\begin{claim}
$C'$ computes the same polynomial as $C$.
\end{claim}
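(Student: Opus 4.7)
The plan is to prove the claim by induction on the distance of gates from the inputs, establishing the stronger invariant that for every $g \in V(C)$, the polynomial computed at $g$ in $C$ equals the polynomial computed at $\delta(g)$ in $C'$. Writing $p_C(g)$ and $p_{C'}(g')$ for these polynomials, the base case is immediate because input gates of $C'$ coincide with those of $C$ and $\delta$ is the identity on them.

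Before turning to the inductive step I would isolate the following key observation: any two gates in the same $\Stab_{\Aut(C)}(\Xx)$-orbit compute the same polynomial in $C$. Indeed, if $\pi \in \Stab_{\Aut(C)}(\Xx)$ maps $g_1$ to $g_2$, then $\pi$ restricts to an isomorphism between the subcircuits rooted at $g_1$ and at $g_2$ that preserves gate labels and edge multiplicities and fixes every variable and field constant; hence $p_C(g_1) = p_C(g_2)$. This makes $p_{C'}(\delta(h))$ well-defined and equal to $p_C(h)$ for any representative $h$ via the inductive hypothesis.

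For the inductive step, fix $g$ in layer $d+1$ with orbit $O$ and set $g_O = \delta(g)$; these two gates carry the same operation label by construction. I would then group the children of $g$ in $C$ according to their image under $\delta$ and invoke the inductive hypothesis. In the addition case this yields
\[
p_C(g) \;=\; \sum_{h \in gE(C)} m(g,h)\, p_C(h) \;=\; \sum_{h' \in g_O E(C')} \Bigl(\sum_{h \in \delta^{-1}(h') \cap gE(C)} m(g,h)\Bigr) p_{C'}(h'),
\]
and the inner sum is exactly the multiplicity assigned to the edge $(g_O, h')$ in the construction, so this equals $p_{C'}(g_O)$. The multiplication case is handled identically with sums replaced by products and multiplicities appearing as exponents. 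Applying the invariant at the output gate concludes the proof.

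The main obstacle is conceptual rather than computational: one has to recognise that orbit-merging is compatible with both types of gate operation, which is precisely what the multiplicity formula in the construction is designed to ensure. I note that a single pass of the procedure already suffices for this semantic claim --- the repetition mentioned in the construction is needed only to eliminate further symmetries and achieve genuine rigidity of $C'$, not for polynomial equivalence.
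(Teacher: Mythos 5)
Your proposal is correct and follows essentially the same route as the paper's own proof: a layer-by-layer induction on the statement that $\delta(g)$ and $g$ compute the same polynomial, using the fact that gates in the same $\Stab_{\Aut(C)}(\Xx)$-orbit compute the same polynomial. The only presentational difference worth noting is in how the final step is justified: you group the children of the arbitrary gate $g$ by their $\delta$-images and assert that the resulting coefficient $\sum_{h \in \delta^{-1}(h') \cap gE(C)} m(g,h)$ equals the multiplicity assigned in the construction, which implicitly uses that this sum does not depend on which $g \in O$ is chosen; the paper instead routes the computation explicitly through the automorphism $\pi^*$ carrying $g$ to the representative $g^*$ used in the construction, so it never needs to invoke well-definedness. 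Both versions ultimately rest on the same fact, namely the preceding claim in the construction that an automorphism induces a multiplicity-preserving bijection $\gamma \colon gE(C) \to g^*E(C)$ matching up orbits of children, so neither route has a gap.
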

\begin{claimproof}
We show by induction that for every gate $g \in V(C)$, $\delta(g)$ computes the same polynomial as $g$. For the input gates this is clear. Now consider the inductive step for layer $d+1$. Let $g \in V(C)$ be a gate on layer $d+1$, w.l.o.g.\ assume it is a $+$-gate, and let $h_1, \dots, h_m$ be its children in $C$. Then it computes 
\[
g = \sum_{i \in [m]} m(g,h_i) \cdot h_i.
\] 
Now the children of $\delta(g)$ in $C'$ are defined as the $\delta$-images of the children of some gate $g^*$ in the same orbit of $g$ that was used in the construction. Hence, there exists a $\pi^* \in \Aut(C)$ that fixes every input gate, maps $g$ to $g^*$ and the children of $g$ to the children of $g^*$ (preserving edge multiplicities). It is generally true for any $h \in V(C)$ that $\pi(h)$ and $h$ compute the same polynomial whenever $\pi \in \Aut(C)$ is such that it fixes all input gates. So we can write 
\[
g = \sum_{i \in [m]} m(g,h_i) \cdot \pi^*(h_i) = \sum_{i \in [m]} m(g,h_i) \cdot \pi^*(h_i) = \sum_{i \in [m]} m(g,h_i) \cdot \delta(\pi^*(h_i)),
\]
where the last equality holds by induction hypothesis. In the construction of $C'$, the edge multiplicities between $\delta(g^*) = \delta(g)$ and its children $\{\delta(\pi^*(h_i)) \mid i \in [m]\}$ are chosen such that $\delta(g)$ indeed computes the above sum. The argument works exactly the same if $g$ is a multiplication gate. This finishes the inductive step.
\end{claimproof}

\begin{claim}
Every $\pi \in \Gamma$ extends to a circuit automorphism $\sigma \in \Aut(C')$.
\end{claim}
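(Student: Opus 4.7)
The plan is to use the $\Gamma$-symmetry of $C$ to pick an extension $\tau \in \Aut(C)$ of $\pi$, and to define the desired automorphism $\sigma$ of $C'$ by pushing $\tau$ through the quotient map $\delta$: set $\sigma(\delta(g)) \coloneqq \delta(\tau(g))$ for each $g \in V(C)$. Since $\delta$ is surjective onto $V(C')$, this prescription determines $\sigma$ on all of $V(C')$, provided the formula is well-defined.

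First, I would verify well-definedness, which is the conceptual core of the argument. By construction of $C'$, two gates $g_1, g_2 \in V(C)$ satisfy $\delta(g_1) = \delta(g_2)$ if, and only if, they lie in a common $\Stab_{\Aut(C)}(\Xx)$-orbit. So given $\rho \in \Stab_{\Aut(C)}(\Xx)$ with $g_2 = \rho(g_1)$, it suffices to show that the conjugate $\tau \rho \tau^{-1}$ lies in $\Stab_{\Aut(C)}(\Xx)$ as well, because then $\tau(g_2) = (\tau \rho \tau^{-1})(\tau(g_1))$ lies in the same orbit as $\tau(g_1)$. The conjugate is obviously in $\Aut(C)$; it also fixes every input gate, because $\tau$ merely permutes the set of input gates (as it extends $\pi$), so for any input gate $x$ one has $\tau^{-1}(x) \in \Xx$, which is then fixed by $\rho$ and sent back to $x$ by $\tau$. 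Put differently, $\tau$ normalises the subgroup $\Stab_{\Aut(C)}(\Xx)$.

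Next I would verify that the well-defined map $\sigma$ is in $\Aut(C')$, i.e.\ preserves labels and edges with multiplicities. Label preservation on internal gates is inherited from $\tau$ together with the fact that every gate in a $\Stab_{\Aut(C)}(\Xx)$-orbit carries the same label. On input gates $\delta$ is the identity, so $\sigma$ acts on them as $\tau$ does, which in turn acts as $\pi$ on the labels by choice of $\tau$. For the edge structure, fix a representative $g$ of some $g_O \in V(C')$ and a child $\delta(h)$ of $g_O$; by construction, the multiplicity of $(g_O, \delta(h))$ in $C'$ equals $\sum_{h' \in \delta^{-1}(\delta(h)) \cap gE(C)} m(g, h')$. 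Since $\tau$ is an automorphism of $C$ that preserves edge multiplicities and bijects $gE(C)$ onto $\tau(g)E(C)$, reindexing each summand by $h' \mapsto \tau(h')$ yields exactly the defining sum for the multiplicity of $(\sigma(g_O), \sigma(\delta(h)))$ in $C'$, where we use the already established fact that $\delta(\tau(h')) = \delta(\tau(h))$ if, and only if, $\delta(h') = \delta(h)$.

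The main obstacle is this edge-multiplicity bookkeeping, which is subtle precisely because the construction of $C'$ collapses parallel edges arising from merging children within a common $\Stab_{\Aut(C)}(\Xx)$-orbit. The essential technical input is the normalisation property $\tau \Stab_{\Aut(C)}(\Xx) \tau^{-1} = \Stab_{\Aut(C)}(\Xx)$; once this is in hand, both the well-definedness of $\sigma$ and the preservation of edge multiplicities follow from the same orbit-tracking calculation.
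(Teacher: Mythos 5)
Your proof is correct and takes essentially the same route as the paper: pick $\tau \in \Aut(C)$ extending $\pi$ and push it through the quotient map $\delta$, using the fact that $\tau$ normalises $\Stab_{\Aut(C)}(\Xx)$ (equivalently, that $\tau$ sends orbits to orbits) for well-definedness. You spell out the edge-multiplicity verification that the paper compresses into ``by construction of $C'$,'' but the argument is the same.
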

\begin{claimproof}
Let $\pi \in \Gamma$. Since $C$ is $\Gamma$-symmetric, there is a $\sigma' \in \Aut(C)$ that $\pi$ extends to. The $\sigma'$-image of every $\Stab_{\Aut(C)}(\Xx)$-orbit $O \subseteq V(C)$ is again a $\Stab_{\Aut(C)}(\Xx)$-orbit. Thus, we can define a bijection $\sigma \colon V(C') \to V(C')$ by setting $\sigma(g_O) \coloneqq g_{\sigma'(O)}$ for every $\Stab_{\Aut(C)}(\Xx)$-orbit $O \subseteq V(C)$. By construction of $C'$ and because $\sigma' \in \Aut(C)$, $\sigma$ is an automorphism of $C'$. 
\end{claimproof}
The claim shows that $C'$ is $\Gamma$-symmetric (but possibly not rigid).

\begin{claim}
If $C$ is a symmetric formula, then $C'$ is a symmetric formula, possibly with multiedges.
\end{claim}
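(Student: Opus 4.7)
The plan is to verify that the $\delta$-merging operation preserves the tree-with-possibly-multiedges structure of the internal subgraph. Since $C$ is a symmetric formula, its internal subgraph is a tree without multiedges, meaning every non-root internal gate has exactly one outgoing internal edge, of multiplicity one. I would show that in $C'$, every non-root internal gate still has a unique internal out-neighbour (possibly via an edge of multiplicity greater than one), and that $C'$ is acyclic with a unique root; together these give that the internal subgraph is a tree with multiedges.

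The crucial observation is that $\Stab_{\Aut(C)}(\Xx)$-orbits respect the parent structure of the formula. Let $g_O$ be a non-root internal gate in $C'$, and take $g_1, g_2 \in O$ with $\pi(g_1) = g_2$ for some $\pi \in \Stab_{\Aut(C)}(\Xx)$. In $C$, each of $g_1, g_2$ has a unique internal successor $p_1, p_2$, and because $\pi$ preserves edges and their directions, $\pi(p_1) = p_2$, so $p_1$ and $p_2$ lie in a common orbit $O'$. Hence every representative of $O$ in $C$ feeds into a gate of the same orbit $O'$, meaning that $g_O$ has $p_{O'}$ as its unique internal out-neighbour in $C'$. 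The multiplicity of the edge $g_O \to p_{O'}$ equals the number of children of any fixed $p \in O'$ that lie in $O$, which can exceed one: this is precisely where multiedges arise. Uniqueness of the root of $C'$ follows similarly, since the root of $C$ is the unique internal gate of out-degree zero in the internal subgraph, a property preserved by all of $\Aut(C)$; hence the root orbit is a singleton.

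Acyclicity of the internal subgraph of $C'$ is inherited from $C$: a directed cycle in $C'$ would lift, via $\delta$, to a non-trivial closed directed walk in $C$, contradicting acyclicity. Combining the three ingredients --- unique internal successors, singleton root orbit, and acyclicity --- yields that the internal subgraph of $C'$ is a tree, possibly with multiedges. The main subtlety I would want to justify carefully is the emergence of multiedges themselves: two distinct children of a single gate in $C$ may fall into the same orbit and thus collapse in $C'$, at which point the multiplicity summation in the construction genuinely inflates the edge multiplicity. This is unavoidable, and it is the reason the claim only guarantees a formula with multiedges rather than one in the strict sense.
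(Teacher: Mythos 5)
Your proof is correct and rests on the same key observation as the paper's: because $C$ is a formula, every non‑root internal gate has a unique parent, and $\Stab_{\Aut(C)}(\Xx)$‑orbits respect the parent relation, so merging an orbit $O$ cannot create a gate with two internal parents in $C'$. The paper organizes this as an induction over the layers of the construction (showing that the set $P_h$ of potential parents of $\delta(h)$ is contained in a single orbit), while you argue it directly from a single orbit $O$ and its unique parent orbit $O'$; these are the same idea packaged slightly differently. One small remark: your acyclicity argument (``a cycle in $C'$ lifts to a closed walk in $C$'') is a bit loose as stated, since a $\delta$‑preimage of a closed walk need not itself close up — the cleaner reason is that the paper's layer‑by‑layer construction preserves the layer of each gate, so edges in $C'$ still only go from lower to higher layers. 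This point, together with root uniqueness, is treated as implicit in the paper rather than re‑proved inside this claim.
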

\begin{claimproof}
Consider the inductive step for layer $d+1$ in the construction of $C'$. We may assume by induction that for every non-input gate $g \in V(C)$ in layers $\leq d-1$, the gate $\delta(g)$ has a unique parent in $C'$. Let $h \in V(C)$ be a gate on layer $d$. Let
\[
P_h \coloneqq \{ g \in V(C) \mid (g,h') \in E(C) \text{ for some } h' \in \delta^{-1}(\delta(h))  \}
\]
The gates in $P_h$ are the potential parents of $\delta(h)$ in $C'$. But, in the construction step of layer $d+1$, the $\delta$-image of all gates in $P_h$ will be the same. Indeed, by construction of $C'$, all gates in $\delta^{-1}(\delta(h))$ are in the same $\Stab_{\Aut(C)}(\Xx)$-orbit in $C$. Since $C$ is a formula, each of these gates has a unique parent, so all gates in $P_h$ are also in the same $\Stab_{\Aut(C)}(\Xx)$-orbit, say, $O$. Then $\delta(g) = g_O$ for each $g \in P_h$. So, $g_O$ is the unique parent of $\delta(h)$ in $C'$. This finishes the inductive step and shows that $C'$ is a tree on its internal gates. 
\end{claimproof}

\begin{claim}
If $C$ is a symmetric skew circuit, then so is $C'$.
\end{claim}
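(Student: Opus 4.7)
The plan is to show that skewness is inherited by $C'$ because the rigidification construction never touches input gates and never changes the type of any gate. First, I would observe that by definition every $\pi \in \Stab_{\Aut(C)}(\calX)$ fixes each input gate of $C$, so each input gate sits in its own singleton orbit and $\delta$ restricts to the identity on the input gates. Second, every element of $\Aut(C)$ preserves gate labels, so each $\Stab_{\Aut(C)}(\calX)$-orbit consists of gates of a single type; in particular, if $h$ is a non-input gate then $\delta(h) = g_O$ is again a non-input gate, and if $h$ is an input gate then so is $\delta(h) = h$.

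With these two observations in hand, the skew property follows in one step. Let $g_O$ be any multiplication gate in $C'$, corresponding to an orbit $O$ of multiplication gates in $C$ (multiplication-ness is preserved within an orbit by label preservation). By construction $g_O E(C') = \{\delta(h) \mid h \in gE(C)\}$ for an arbitrary representative $g \in O$. Since $C$ is skew, at most one child $h^* \in gE(C)$ is a non-input gate, and every other child $h$ satisfies $\delta(h) = h$ and is therefore still an input gate of $C'$. Hence $g_O$ has at most one non-input child, namely $\delta(h^*)$, and so $C'$ is skew.

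I do not anticipate any real obstacle here. The only way this argument could fail would be if $\Stab_{\Aut(C)}(\calX)$-orbits could mix input and non-input gates, or identify two distinct input gates, but both are ruled out by the definition of $\Stab_{\Aut(C)}(\calX)$ together with the uniqueness of input-gate labels. Note also that edge multiplicities introduced in the construction are irrelevant for skewness, which only tracks whether each child is an input gate, so the accumulation of multiplicities at a merged child causes no issue.
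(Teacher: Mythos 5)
Your proof is correct, and it is slightly more direct than the paper's. The paper argues by contradiction: it supposes $g_O$ has two distinct non-input children, infers (since $C$ is skew) that these must be $\delta$-images of the unique non-input children of two \emph{different} representatives $g_1, g_2 \in O$, and then shows that any automorphism in $\Stab_{\Aut(C)}(\Xx)$ carrying $g_1$ to $g_2$ must carry $h_1$ to $h_2$, so $\delta(h_1) = \delta(h_2)$ after all. Your version skips the detour through multiple representatives: since $g_O E(C')$ is by construction the $\delta$-image of the children of a \emph{single} representative $g$, and since $\delta$ fixes input gates and sends non-input gates to non-input gates, the one non-input child of $g$ is the only possible source of a non-input child of $g_O$. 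Both arguments rest on the same two facts (the $\Stab_{\Aut(C)}(\Xx)$-orbits respect the input/non-input distinction, and the children of $g_O$ come from a single orbit representative); yours just invokes the single-representative definition of $g_O E(C')$ directly rather than rederiving its consequence via contradiction. Your closing remark that multiplicities are irrelevant to skewness is a useful sanity check that the paper leaves implicit.
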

\begin{claimproof}
We have to show that every multiplication gate in $C'$ has at most one child that is a non-input gate of $C'$. Let $g_O \in V(C')$ be a multiplication gate and assume for a contradiction that it has two distinct children $h,h'$ in $C'$ that are not input gates. 
Then, because $C$ is skew, there must be two gates $g_1, g_2 \in O$ in $C$ such that their unique non-input children are not identified by $\delta$. In other words, let $h_1, h_2$ be the unique non-input children of $g_1, g_2$, respectively, in $C$. Then $\delta(h_1) \neq \delta(h_2)$. But this is impossible: Since $g_1, g_2$ are in the same 
$\Stab_{\Aut(C)}(\Xx)$-orbit, there is a $\pi \in \Stab_{\Aut(C)}(\Xx)$ that maps $g_1$ to $g_2$. This $\pi$ must also map $h_1$ to $h_2$, so $h_1, h_2$ are part of the same $\Stab_{\Aut(C)}(\Xx)$-orbit. Hence, by construction of $C'$ and $\delta$, we have $\delta(h_1) = \delta(h_2)$. This contradicts the assumption that $\delta(h_1) \neq \delta(h_2)$.
\end{claimproof}

In case that $C'$ is rigid, we are done, and by the above claims, $C'$ has the properties as stated in the lemma. If $C'$ is not rigid, we repeat the described construction, starting with $C'$. In this way, we eventually obtain a rigid circuit because the construction strictly reduces the number of gates as long as there is one non-singleton orbit. Hence it must terminate with a rigid circuit.
\end{proof}

   \subsection{Treewidth, pathwidth, and treedepth}

    Let $T$ be a graph and $F$ a multigraph.
    A \emph{$T$-decomposition} for $F$ is a map $\beta \colon V(T) \to 2^{V(F)}$ such that
    \begin{enumerate}
        \item $V(F) = \bigcup_{t \in V(T)} \beta(t)$,
        \item for all $uv \in E(F)$, there exists $t \in V(T)$ such that $u,v \in \beta(t)$, and
        \item for all $v \in V(F)$, the subgraph of $T$ induced by the $t \in V(T)$ with $v \in \beta(v)$ is connected.
    \end{enumerate}
    The \emph{width} of a $T$-decomposition $(T, \beta)$ is $\max_{t \in V(T)} |\beta(t)| - 1$.
    The \emph{treewidth} $\tw(F)$ of a multigraph $F$ is the minimum width of a $T$-decomposition for $F$ where $T$ is a tree.
    The \emph{pathwidth} $\pw(F)$ of a multigraph $F$ is the minimum width of a $P$-decomposition for $F$ where $P$ is a path.  

    The \emph{treedepth} $\td(F)$ of a multigraph $F$, as introduced by \textcite{nesetril_tree-depth_2006}, is defined as the minimum height of an elimination tree for $F$.
    A \emph{elimination tree} for $F$ is a tree with vertex set $V(F)$ such that, whenever $uv \in E(F)$, then $u,v$ are in an ancestor-descendent relation in~$T$.

    The following relationship holds between treewidth, pathwidth, and treedepth for every multigraph $F$, by \cite[Section~6.4]{nesetril_sparsity_2012},
    \begin{align}
        \tw(F) \leq \pw(F) \leq \td(F) - 1.\label{eq:tw-td-pw}
    \end{align}

    \section{Characterisations of symmetric complexity classes}
    \label{sec:characterisations}

    In this section, we give the complete proofs of~\cref{thm:characterisationVF} and \cref{thm:sym-vbp-orbit}, which characterise matrix-symmetric formulas and skew circuits of polynomially bounded orbit sizes.
    
We recall some definitions and notations from~\cite{DPS2025} that are essential for the proofs. 
    We begin by introducing the following set, consisting of finite linear combinations of homomorphism polynomials associated with bipartite multigraphs:
    \begin{equation}
    \label{eq:lrcomb-hompoly}
        \mathfrak{G}_{n,m} \;\coloneqq\;
        \setdef{\sum \alpha_i \hom_{F_i, n, m}}{\text{$F_i$ bipartite multigraphs, } \alpha_i \in \K}
        \;\subseteq\; \K[\calX_{n,m}].
    \end{equation}
    In~\cite[Lemma 3.1]{DPS2025}, it was shown that the set $\mathfrak{G}_{n,m}$ is equal to the set of matrix-symmetric polynomials.

    \paragraph*{Labelled bipartite graphs and homomorphism polynomial maps.} For any $\ell, r \in \N$, an \emph{$(\ell, r)$-labelled bipartite graph} is a tuple $\boldsymbol{F} = (F, \veca, \vecb)$, where $F$ is a bipartite multigraph with vertex partition $A \uplus B$, and $\veca \in A^{\ell}$ and $\vecb \in B^{r}$ are sequences of labels on the vertices on the left and right sides, respectively.
    Write $\calG(\ell, r)$ for the set of all $(\ell, r)$-labelled bipartite graphs.
    For $n, m \in \N$, we define the \emph{homomorphism polynomial map} $\boldsymbol{F}_{n,m}\colon [n]^{\ell} \times [m]^{r} \to \K[\calX_{n,m}]$ as
    \[
        \boldsymbol{F}_{n,m}(\vecv, \vecw) \; \coloneqq\; \sum_{\substack{h : A \uplus B \to [n] \uplus [m] \\ h(\veca) = \vecv \\ h(\vecb) = \vecw}} \prod_{(a,b) \in E(F)} x_{h(a), h(b)} \quad \in\quad \bbQ[\mathcal{X}_{n,m}].
    \]
    Note that when $\ell = r = 0$, then this reduces to $\hom_{F, n, m}$, see \cref{eq:homPoly}. We define
    \begin{equation*}
        \mathfrak{G}_{n,m}(\ell, r) \;\coloneqq\; \setdef{\sum \alpha_i \boldsymbol{F}^i_{n,m}}{\alpha_i \in \K, \boldsymbol{F}^i \in \calG(\ell,r)},
    \end{equation*}
    the set of finite linear combinations of homomorphism polynomial maps.

    \paragraph*{Labelled graphs of bounded treewidth.}
    We say $\boldsymbol{F} = (F, \veca, \vecb)$ has \emph{treewidth} less than $k$ if there exists a tree decomposition $(T, \beta)$ of $F$ of width less than $k$, such that there is a node $s \in V(T)$ whose bag $\beta(s)$ contains all the vertices appearing in $\veca$ and $\vecb$. Write $\calT^k(\ell, r) \subseteq \calG(\ell, r)$ for set of all $(\ell, r)$-labelled bipartite graphs of treewidth less than $k$. Analogous to~\cref{eq:lrcomb-hompoly}, we define the set of finite linear combination of homomorphism polynomial maps of $(\ell,r)$-labelled bipartite graphs of treewidth at most $k$:
    \begin{equation}
    \label{eq:lrcomb-hompoly-twk}
        \mathfrak{T}^k_{n,m}(\ell, r) \;\coloneqq\; \setdef{\sum \alpha_i \boldsymbol{F}^i_{n,m}}{\alpha_i \in \K, \boldsymbol{F}^i \in \calT^k(\ell,r)}  \subseteq \mathfrak{G}_{n,m}(\ell, r).
    \end{equation}
    For labels $\vecv \in [n]^\ell$ and $\vecw \in [m]^r$, $\mathfrak{T}^k_{n,m}(\vecv, \vecw) \coloneqq \setdef{\phi(\vecv, \vecw)}{ \phi \in \mathfrak{T}^k_{n,m}(\ell,r)} \subseteq \K[\calX_{n,m}]$. Using the notations, we are now ready to formally state the main result of \cite{DPS2025}.

\begin{theorem}[\protect{\cite[Theorem 1.1]{DPS2025}}]
    \label{thm:symVP-char}
        A family of matrix-symmetric polynomials $(p_{n,m})_{n,m \in \bbN}$ is in $\symVP$ if, and only if, there exists a $k \in \N$ such that $(p_{n,m}) \in \mathfrak{T}^k_{n,m}(0,0)$.
    \end{theorem}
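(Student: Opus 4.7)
The plan is to prove both directions of the equivalence. For the forward direction, suppose $(p_{n,m}) \in \symVP$, so each $p_{n,m}$ is computed by a matrix-symmetric circuit $C_{n,m}$ of polynomial orbit size. By~\cref{lem:rigidification} I may assume $C_{n,m}$ is rigid, and by~\cref{lem:constantSupportOfGates} there is a constant $k$ with $\maxsup(C_{n,m}) \leq k$ for all $n,m$. I would then prove by induction on the layers of $C_{n,m}$ the stronger claim that every gate $g$ whose support is represented as a tuple $(\vecv, \vecw) \in [n]^\ell \times [m]^r$ computes a polynomial of the form $\phi_g(\vecv, \vecw)$ for some $\phi_g \in \mathfrak{T}^k_{n,m}(\ell, r)$. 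Applying this at the root, whose support is empty, yields $p_{n,m} \in \mathfrak{T}^k_{n,m}(0, 0)$.

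The base case handles variable gates $x_{i,j}$ (single edge with both endpoints labelled) and constant gates (empty support). For sum gates I would partition the children into $\Stab(g)$-orbits; the contribution of each orbit is obtained by symmetrising a single labelled pattern along the vertices whose labels lie in $\sup(h) \setminus \sup(g)$, which does not affect treewidth because it only removes labels from already-present vertices. The main obstacle will be the product gate: the children $h_1, h_2$ may have supports strictly containing $\sup(g)$, so by rigidity the polynomial at $g$ is the averaged product of $h_1 \cdot h_2$ over the $\Stab(g)$-orbit of the pair $(h_1, h_2)$. Combinatorially this corresponds to gluing the two pattern graphs along the shared vertices labelled by $\sup(g)$ and then projecting away the labels indexed by $(\sup(h_1) \cup \sup(h_2)) \setminus \sup(g)$. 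The critical step is to verify that the glued pattern has treewidth at most $k$, which I would establish by merging optimal tree decompositions of the two component patterns through a common bag containing the shared labels, whose size is at most $|\sup(g)| \leq k$.

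For the backward direction, suppose $(p_{n,m}) \in \mathfrak{T}^k_{n,m}(0,0)$, so each $p_{n,m}$ is a finite linear combination $\sum_i \alpha_i \hom_{F_i, n, m}$ with $\tw(F_i) < k$. For each $F_i$ I would construct a matrix-symmetric circuit of polynomial orbit size by the standard dynamic programming along a nice tree decomposition $(T_i, \beta_i)$ of width less than $k$: at each bag $\beta_i(t)$ of size at most $k$, introduce gates indexed by maps $\beta_i(t) \to [n] \uplus [m]$ and combine them according to the semantics of introduce-, forget-, and join-nodes. Since bags have size at most $k$, each tree node contributes at most $(n+m)^k$ gates and the construction depends only on the combinatorial structure of $F_i$, so the resulting circuit is $\Sym_n \times \Sym_m$-invariant with orbits of size at most $(n+m)^k$. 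Combining the finitely many resulting circuits through a top-level addition gate, weighted by the coefficients $\alpha_i$, produces a matrix-symmetric circuit for $p_{n,m}$ of polynomial orbit size.
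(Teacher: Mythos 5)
This statement is cited verbatim from \cite[Theorem~1.1]{DPS2025}; the paper never proves it here, only recalls machinery from that reference when extending it to \symVF and \symVBP. So you are reconstructing the proof of a cited result. Your high-level skeleton (rigidify via \cref{lem:rigidification}, pass to bounded support via \cref{lem:constantSupportOfGates}, induct on circuit structure to show each gate $g$ computes an element of $\mathfrak{T}^{ck}_{n,m}(\vec{\sup}_L(g),\vec{\sup}_R(g))$, then dynamic programming along a tree decomposition for the converse) is the right architecture and matches \cite{DPS2025}. The converse direction you sketch is essentially correct.

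The forward direction, however, has a genuine gap at the product gates. You frame the gate as having two children $h_1,h_2$ and describe the step as gluing the two patterns along the shared $\sup(g)$-labelled vertices and merging their tree decompositions through a common bag of size $\le k$. This is the \emph{gluing} operation (\cref{lem:gluing}), which corresponds to pointwise multiplication of two polynomial maps — it is not the hard case. The hard case, which you gloss over, is that a rigid symmetric circuit of polynomial orbit size necessarily has product gates of unbounded fan-in: if a child $h$ satisfies $\sup(h)\not\subseteq\sup(g)$, then its $\StabP(\sup(g))$-orbit among the children has size roughly $n^{|\sup(h)\setminus\sup(g)|}$, and the gate multiplies \emph{all} of them. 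That product, written as a map, is the \emph{restricted product} $\Pi_{i,J}\phi = \prod_{u}\phi(\vecv[i/u],\vecw)$, an $n$-fold product over one label. Showing that $\Pi_{i,J}$ preserves membership in $\mathfrak{T}^{ck}_{n,m}$ is the technically heaviest part of the \cite{DPS2025} proof (their Section~5.2.7; see \cref{lem:restrictedProd-Pathwidth} of this paper for the analogue) and requires expanding the $n$-fold product into a linear combination over set partitions via a Möbius-inversion-style argument before any tree-decomposition bound can be applied. Your binary gluing argument does not address this; the ``glued'' pattern of $n$ copies has linear size and $n$ labels, so it is not a single bounded-arity labelled graph and cannot be handled by merging two decompositions at a common bag. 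Also, a minor confusion: the gate computes the \emph{product} of its children's polynomials, not an ``averaged product''; averaging would be a sum gate.

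Your claim that the invariant is $\mathfrak{T}^k_{n,m}(\ell,r)$ with the same $k$ as the support bound is also off by a constant factor (the cited proof establishes $\mathfrak{T}^{2k}_{n,m}$), but since the theorem only asserts existence of some $k$, that is cosmetic. The real missing content is the restricted-product argument.
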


    \paragraph*{Labelled graphs of bounded treedepth.}
    We now define analogues of the set in~\cref{eq:lrcomb-hompoly-twk} tailored to~\cref{thm:characterisationVF}. 
    In~\cite{fluck_going_2024}, the \emph{depth} of a rooted tree decomposition $T$ is defined as
    \[
        \depth(T) \;\coloneqq\; \max_{v \in V(T)} \left\lvert \bigcup_{t \preceq v} \beta(t) \right\rvert,
    \]
    where $t \preceq v$ if and only if $t$ is on the unique path from the root of $T$ to $v$.
    For $k, q \in \bbN$, let $\Tt^{k,q}$ denote the class of bipartite graphs that admit a rooted tree decomposition of width less than $k$ and depth at most $q$. Similarly, let $\Tt^{k,q}(\ell, r)$ denote the class of $(\ell, r)$-labelled bipartite graphs for which there exists a rooted tree decomposition of width less than $k$ and depth at most $q$, such that all label vertices appear together in the root bag.
    
    \begin{lemma}[\protect{\cite[Theorem 14 and Corollary 15]{fluck_going_2024}}]
    \label{lem:treedepth}
    Let $k,q \in \bbN$. Every graph in $\Tt^{k,q}$ has treedepth at most $q$ and treewidth at most $k-1$.
    \end{lemma}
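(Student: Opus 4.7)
The treewidth bound is immediate from the definition of $\Tt^{k,q}$: any tree decomposition of width strictly less than $k$ certifies $\tw(F) \leq k-1$. The nontrivial part is the treedepth bound, for which the plan is to transform a given rooted tree decomposition $(T, \beta)$ of $F$ of depth at most $q$ into an elimination tree $S$ of $F$ of height at most $q$.

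For each $v \in V(F)$, let $t_v \in V(T)$ denote the topmost node whose bag contains $v$; this is well-defined because the set of such nodes is a connected subtree of $T$. Set $\beta'(t) \coloneqq \{v \in V(F) : t_v = t\}$, so that the $\beta'(t)$ partition $V(F)$. A quick unwinding shows $\bigcup_{s \preceq t} \beta'(s) = \bigcup_{s \preceq t} \beta(s)$, which has cardinality at most $q$ by the depth assumption. Now fix an arbitrary linear order on each non-empty $\beta'(t)$ and define the parent of $v \in \beta'(t)$ in $S$ to be its predecessor in that order; if $v$ is first in its order, take instead the last element of $\beta'(s)$, where $s$ is the deepest strict ancestor of $t$ with $\beta'(s) \neq \emptyset$, and if no such $s$ exists, declare $v$ the root of $S$. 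By construction, every $S$-ancestor of any $v \in \beta'(t)$ lies in $\bigcup_{s \preceq t} \beta'(s)$, so every root-to-leaf path of $S$ contains at most $q$ vertices, giving $\td(F) \leq q$.

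It remains to verify that $S$ is indeed an elimination tree of $F$. Given $uv \in E(F)$, the tree decomposition property supplies a bag $t$ containing both $u$ and $v$, which forces $t_u$ and $t_v$ to lie on the root-to-$t$ path in $T$ and hence to stand in ancestor-descendant relation there. If $t_u = t_v$, the linear order on $\beta'(t_u)$ places one of $u, v$ as an ancestor of the other in $S$; otherwise, say $t_u \prec t_v$, the construction ensures that all of $\beta'(t_u)$, and in particular $u$, appears strictly above $v$ on the unique root-to-$v$ path in $S$.

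I expect the main technical delicacy to be the bookkeeping when $T$ branches: several $T$-children of a common node $t$ contribute disjoint subtrees of $S$ hanging off the bottom of $\beta'(t)$, so $S$ may branch as well, and one must check that the ancestor-descendant argument still goes through for vertices placed in sibling subtrees. This falls out cleanly because the only $F$-edges between such subtrees are those with both endpoints witnessed in some common bag, which by the subtree-connectivity property must sit in $\beta(t)$ or above, so both endpoints are already on the shared ancestor path $\beta'(\cdot)$ in $S$.
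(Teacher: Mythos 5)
The paper does not prove this lemma; it cites it from Fluck, Seppelt, and Spitzer, so there is no in-paper argument to compare against. Your construction is a correct self-contained derivation: taking $t_v$ to be the topmost bag containing $v$ and chaining the classes $\beta'(t)$ along the tree is the standard way to turn a depth-bounded tree decomposition into a shallow elimination order, the identity $\bigcup_{s \preceq t} \beta'(s) = \bigcup_{s \preceq t} \beta(s)$ is correct, and the verification that every edge $uv$ is respected (via a shared bag forcing $t_u$ and $t_v$ to be comparable in $T$) goes through, including in the branching case you flag at the end.

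One small point to tighten: your tie-breaking rule declares $v$ to be ``the root of $S$'' when $t_v$ has no strict ancestor with non-empty $\beta'$-class. If the root bag of $T$ is empty and $T$ branches immediately, several vertices satisfy this and $S$ is a forest, not a tree. For connected $F$ this is harmless -- an empty root bag with a single child can be contracted away without changing depth, and an empty root bag with several children would force $F$ to be disconnected by the connectivity condition on bags -- so you may assume WLOG that the root bag is non-empty. For disconnected $F$ you produce an elimination forest of height at most $q$, which matches the usual definition of treedepth as a maximum over components; the paper's local phrasing asks for a single elimination tree on all of $V(F)$, under which the bound can be off by one, but that is a quirk of the paper's stated definition rather than a flaw in your reasoning.
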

    
    Define the corresponding set of linear combinations as follows:
    \begin{align}
    \label{eq:lrcomb-hompoly-tdk}
        \mathfrak{T}^{k,q}_{n,m}(\ell, r) &\coloneqq \setdef{ \sum \alpha_i\boldsymbol{F}^i_{n,m} }{ \alpha_i \in \bbQ, \boldsymbol{F}^i \in \Tt^{k,q}(\ell,r)} \subseteq \mathfrak{T}^k_{n,m}(\ell, r).
    \end{align}
As before, let $\mathfrak{T}^{k,q}_{n,m}(0,0) \eqqcolon \mathfrak{T}^{k,q}_{n,m}$ denote the set of linear combinations of homomorphism polynomials arising from bipartite graphs of treewidth less than $k$ and treedepth at most $q$. Then~\cref{thm:characterisationVF} proves that a family of polynomials $(p_{n,n})$ belongs to $\symVF$ if and only if there exists a constant $k \in \N$ such that $(p_{n,n}) \in \mathfrak{T}^{k,q}_{n,n}(0,0)$.

    \paragraph*{Labelled graphs of bounded pathwidth.}
    We say that $\boldsymbol{F} = (F, \veca, \vecb)$ has \emph{pathwidth} less than $k$ if there exists a path decomposition $(P, \beta)$ of $F$ such that there is a node $s \in V(T)$ of degree $\leq 1$ whose bag $\beta(s)$ contains all the vertices appearing in $\veca$ and $\vecb$.
    Write $\mathcal{P}^k(\ell, r)$ for the class of all $(\ell, r)$-labelled bipartite graphs of pathwidth less than $k$. Analogous to \cref{eq:lrcomb-hompoly-twk} we define linear combination of polynomial maps of $(\ell, r)$-labelled bipartite graphs of pathwidth at most $k$. Let $k, n, m \in \N$, and define
    \begin{equation}
    \label{eq:lrcomb-hompoly-pwk}
        \mathfrak{P}_{n,m}^k(\ell, r) \;\coloneqq\; \setdef{\sum \alpha_i \boldsymbol{F}_{n,m}^{i}}{ \alpha_i \in \K, \boldsymbol{F}^i \in \mathcal{P}^k(\ell, r)}  \subseteq \mathfrak{T}^k_{n,m}(\ell, r).
    \end{equation}
    Therefore, \cref{thm:sym-vbp-orbit} proves that, for all $n \in \N$, a family $(p_{n,n})$ belongs to $\symVF$ if and only if there exists a constant $k \in \N$ such that $(p_{n,n}) \in \mathfrak{P}^{k}_{n,n}(0,0)$.

    We recall the following example from \cite[Example 3.8]{DPS2025}, which illustrates that every constant is a polynomial in $\mathfrak{P}_{n,m}^k(0,0)$ for $k \geq 2$. Note that similar examples can be constructed for $\mathfrak{T}^{k,q}_{n,m}$.

    \begin{example}
    \label{ex:hompoly-constant}
        Let $\ell, r \in \N$ and consider the $(\ell, r)$-labelled edge-less bipartite graph $\boldsymbol{J} \coloneqq (J, \veca, \vecb)$. Since $J$ has no edges, it trivially belongs to the class $\mathcal{P}^k(\ell, r)$ for any $k \geq \ell + r$.
        For all $n, m \in \N$ and labelling $\vecv \in [n]^{\ell}$, $\vecw \in [m]^r$, we have:
        \[
            \boldsymbol{J}_{n,m}(\vecv, \vecw) \;=\; \sum_{\substack{h : A \uplus B \to [n] \uplus [m] \\ h(\veca) = \vecv \\ h(\vecb) = \vecw}} 1 \;=\; 1.
        \]
        That is, there is exactly one homomorphism from $J$ to $K_{n,m}$ respecting the labelling.
    \end{example}

    The following example illustrates that variables are polynomials in $\mathfrak{P}^k_{n,m}$.

    \begin{example}
    \label{ex:hompoly-var}
        Let $\ell,r \in \N$ and consider \(\boldsymbol{F}=(F,\veca,\vecb)\) be a \((\ell,r)\)-labelled bipartite graph consisting of a single edge.
        Since $F$ admits a path decomposition of width $\ell+r$, hence $\boldsymbol{F}\in \mathcal{P}^k(\ell,r)$ for $k = \ell +r$.
        For $n,m\in\N$, $ v \in[n]$ and $ w\in[m]$, let $\vecv= v \cdots v\in [n]^\ell$ and $\vecw= w \cdots w \in [m]^r$.
        Then it holds that $\boldsymbol{F}_{n,m}(\vecv, \vecw) = x_{v,w}$.
    \end{example}

    \subsection{Operations on homomorphism polynomial maps}

    For the proof of~\cref{thm:symVP-char}, closure properties such as addition and multiplication were proved in~\cite{DPS2025} for $\mathfrak{T}^k_{n,m}$. We require analogous properties to hold for $\mathfrak{T}^{k,q}_{n,m}$ and $\mathfrak{P}^k_{n,m}$. To that end, we revisit certain operations on labelled graphs and their associated homomorphism polynomial maps, as introduced in~\cite[Section 5.2]{DPS2025}. We begin by highlighting that the left and right labels of the labelled graphs can be interchanged. This simplifies the notation for the rest of this section where we only state the results with operation only on the left labels.

    \subsubsection{Disjoint union}

    For labelled graphs $\boldsymbol{F} = (F,\veca, \vecb) \in \calG(\ell, r)$ and $\boldsymbol{F'} = (F',\veca', \vecb') \in \calG(\ell', r')$, define their disjoint union $\boldsymbol{F} \otimes \boldsymbol{F'} \in \calG(\ell+\ell', r+r')$ as $(F \uplus F', \veca\veca', \vecb\vecb')$.
    Analogously, for homomorphism polynomial maps $\phi \in \mathfrak{G}(\ell, r)$ and $\phi' \in \mathfrak{G}(\ell', r')$, define $\phi \otimes \phi' \in \mathfrak{G}(\ell+\ell', r+r')$ by
    \begin{equation}\label{eq:tensor-disjoint-union}
        \left(\phi \otimes \phi'\right)(\vecv, \vecw) \;\coloneqq\;
        \phi(v_1, \dots, v_\ell, w_1, \dots, w_r)\cdot
        \phi'(v_{\ell+1}, \dots, v_{\ell+\ell'}, w_{r+1}, \dots, w_{r+r'}),
    \end{equation}
    for all $\vecv \in [n]^{\ell + \ell'}$ and $\vecw \in [m]^{r + r'}$.

    In~\cite[Lemma 5.10]{DPS2025}, it was observed that for a constant $k$, the set $\calT^k$ and the corresponding set $\mathfrak{T}^k$ are closed under disjoint unions. We note that the same closure property also holds for the bounded treedepth and bounded pathwidth classes.

    \begin{lemma}[Disjoint unions of labelled graphs]\label{lem:disjoint-union}
        Let $k, k', q, q', \ell, \ell', r, r' \in \mathbb{N}$.
        \begin{enumerate}
            \item If $\boldsymbol{F} \in \calT^k(\ell, r)$ and $\boldsymbol{F'} \in \calT^{k'}(\ell', r')$, then $\boldsymbol{F} \otimes \boldsymbol{F'} \in \calT^{k''}(\ell+\ell', r+r')$ where \[ k'' \coloneqq \max\{k, k', \ell+\ell'+r+r'\}. \]
    
            \item If $\boldsymbol{F} \in \mathcal{T}^{k,q}(\ell, r)$ and $\boldsymbol{F'} \in \mathcal{T}^{k', q'}(\ell', r')$, then $\boldsymbol{F} \otimes \boldsymbol{F'} \in \mathcal{T}^{k'', q''}(\ell+\ell', r+r')$ where
            \[
            k'' \coloneqq \max\{k, k', \ell+\ell'+r+r'\}, \quad
            q'' \coloneqq \max\{q + \ell' + r', q' + \ell + r \}.
            \]
    
            \item If $\boldsymbol{F} \in \mathcal{P}^k(\ell, r)$ and $\boldsymbol{F'} \in \mathcal{P}^{k'}(\ell', r')$, then $\boldsymbol{F} \otimes \boldsymbol{F'} \in \mathcal{P}^{k''}(\ell+\ell', r+r')$ for 
            \[
                 k'' \;\coloneqq\; \min\!\left\{\max\{k, k' + \ell + r\},\, \max\{k', k + \ell + r\}\right\}.
            \]
        \end{enumerate}
    \end{lemma}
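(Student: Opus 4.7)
The plan is to construct new decompositions of $F\uplus F'$ by gluing together the given decompositions of $F$ and $F'$ in a way that consolidates all labels of $\boldsymbol{F}\otimes\boldsymbol{F'}$ into a single bag. Let $(T,\beta)$ and $(T',\beta')$ denote the decompositions witnessing the membership of $\boldsymbol{F}$ and $\boldsymbol{F'}$ in their respective classes, and let $s\in V(T)$ and $s'\in V(T')$ be the distinguished nodes whose bags contain all label vertices of $\boldsymbol{F}$ and $\boldsymbol{F'}$, respectively.

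For Part 1 (treewidth), I would introduce a new node $u$ to the disjoint union $T\uplus T'$, add edges $us$ and $us'$, and set $\beta(u)\coloneqq\{\veca,\vecb,\veca',\vecb'\}$. The result is a tree since $T$ and $T'$ are trees joined at a new degree-$2$ node. Validity of the tree decomposition is immediate: every edge of $F\uplus F'$ lies either in $F$ or in $F'$ and is therefore covered by an original bag; and the connectivity condition is preserved because every vertex newly placed in $\beta(u)$ already appears in $\beta(s)$ or $\beta(s')$, so its bag-support remains a subtree. The width is bounded by $\max(k,k',\ell+\ell'+r+r')-1$, and the bag $\beta(u)$ by construction contains all labels of $\boldsymbol{F}\otimes\boldsymbol{F'}$, as required.

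For Part 2 (treedepth), I would use the same construction, taking $u$ as the new root and keeping $T$ and $T'$ as subtrees (with their original roots $s,s'$ as children of $u$). The width bound is unchanged. For the depth bound, observe that for any $v\in V(T)$, the union $\bigcup_{t\preceq v}\beta(t)$ in the new rooted decomposition equals $\beta(u)\cup\bigcup_{t\preceq v, t\in T}\beta(t)$; the labels of $\boldsymbol{F}$ are already contained in $\beta(s)$ (hence in the old union along any root-to-$v$ path), so $\beta(u)$ contributes only the $\ell'+r'$ fresh labels of $\boldsymbol{F'}$, giving depth at most $q+\ell'+r'$. The symmetric estimate $q'+\ell+r$ applies to $v\in V(T')$. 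At the root $u$ itself, $|\beta(u)|\leq\ell+\ell'+r+r'$, and this is bounded by $\max(q+\ell'+r', q'+\ell+r)$ using $\ell+r\leq q$ and $\ell'+r'\leq q'$ (which hold because the root bags of $T,T'$ already contain these labels).

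For Part 3 (pathwidth), the main obstacle is the linear shape of path decompositions: introducing a branching node as above is forbidden, so the two path decompositions must be concatenated. I would reverse $P'$ and identify its $s'$-end with the non-$s$ end of $P$ (using a new edge, not identifying bags), obtaining a path with endpoints $s$ and the far end of $P'$. To force $\beta(s)$ to contain every label of $\boldsymbol{F}\otimes\boldsymbol{F'}$, I would inflate every bag of the $P$-part with the labels $\{\veca',\vecb'\}$; by contiguity, these labels then appear in a connected sub-path running from $s$ all the way to $s'$ (where they already occurred in $\beta'(s')$), so the decomposition remains valid and $\beta(s)$ has degree $1$ and holds all labels. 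This yields width at most $\max(k+\ell'+r', k')-1$. By the symmetric construction in which one instead inflates the $P'$-part with the labels of $\boldsymbol{F}$, the width is at most $\max(k, k'+\ell+r)-1$. Taking the smaller of the two constructions gives the stated bound $k''$. The bookkeeping for contiguity is the only subtle point: one must check that every vertex originally appearing in $P$ (respectively $P'$) still occupies a contiguous segment after inflation and concatenation, which holds because the added labels are contiguous from $s$ up to their first occurrence in $P'$, and no vertex of $F$ occurs in $P'$ (by disjointness).
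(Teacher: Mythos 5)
Your proof matches the paper's argument in all three parts: a new joint bag at the root for the tree cases (Parts 1--2), and concatenation-with-inflation for the path case (Part 3). One small remark: both your derivation and the paper's own proof yield $k'' = \min\{\max\{k,\, k'+\ell+r\},\, \max\{k',\, k+\ell'+r'\}\}$ for Part~3, whereas the lemma as stated writes $k+\ell+r$ in the second $\max$ (apparently a typo for $k+\ell'+r'$, cf.\ the analogous $k''$ in the gluing lemma); you asserted your bound equals ``the stated bound $k''$'' without noticing this discrepancy, though your actual inequality is the correct one.
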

    \begin{proof}
        The first assertion was proven in \cite[Lemma 5.10]{DPS2025} by joining the tree decompositions of $\boldsymbol{F}$ and $\boldsymbol{F'}$ at a new root bag containing all labelled vertices in $\boldsymbol{F} \otimes \boldsymbol{F'}$.

        For the second claim, observe that the depth of this new rooted decomposition is $q'' \coloneqq \max\{q + \ell' + r', q' + \ell + r \}$.

        For the last claim, let $B_1, \dots, B_s$
        and $B'_1, \dots, B'_{s'}$ denote the path decompositions of $\boldsymbol{F}$ and $\boldsymbol{F'}$, respectively.
        Suppose wlog that $B_1$ and $B'_1$ contain only labelled vertices.
        Then $B_1 \cup B'_1, B_2 \cup B'_1, \dots, B_s \cup B'_1, B'_2, \dots, B'_{s'}$
        is a path decomposition of $\boldsymbol{F} \otimes \boldsymbol{F'}$ with bags of size at most $\max\{k+ \ell' + r', k'\}$.
        Interchanging the role of $\boldsymbol{F}$
        and $\boldsymbol{F'}$ yields the other decomposition.
    \end{proof}

    \cref{eq:tensor-disjoint-union,lem:disjoint-union} imply the following lemma by linearity.

    \begin{lemma}[Disjoint union for homomorphism polynomial maps]
    \label{lem:disjointUnionHomPoly}
    Let $k, k', q, q', \ell, \ell', r, r' \in \mathbb{N}$.
    \begin{enumerate}
        \item If $\phi \in \mathfrak{T}^k_{n,m}(\ell, r)$ and $\phi' \in \mathfrak{T}^{k'}_{n,m}(\ell', r')$, then $\phi \otimes \phi' \in \mathfrak{T}^{k''}_{n,m}(\ell+\ell', r+r')$.
    
        \item If $\phi \in \mathfrak{T}^{k,q}_{n,m}(\ell, r)$ and $\phi' \in \mathfrak{T}^{k', q'}_{n,m}(\ell', r')$, then $\phi \otimes \phi' \in \mathfrak{T}^{k'', q''}_{n,m}(\ell+\ell', r+r') $.
    
        \item If $\phi \in \mathfrak{P}^k_{n,m}(\ell, r)$ and $\phi' \in \mathfrak{P}^{k'}_{n,m}(\ell', r')$, then $\phi \otimes \phi' \in \mathfrak{P}^{k''}_{n,m}(\ell+\ell', r+r')$.
    \end{enumerate}
    Here, $k''$ and $q''$ are as in the respective cases of \cref{lem:disjoint-union}.
    \end{lemma}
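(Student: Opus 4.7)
The plan is to reduce each of the three assertions directly to the corresponding case of \cref{lem:disjoint-union} by exploiting bilinearity of the $\otimes$ operation on homomorphism polynomial maps. First I would expand $\phi = \sum_{i} \alpha_i \boldsymbol{F}^i_{n,m}$ and $\phi' = \sum_{j} \beta_j \boldsymbol{F'}^j_{n,m}$, where each labelled graph $\boldsymbol{F}^i$ lies in the appropriate class of labelled graphs -- $\calT^k(\ell,r)$, $\calT^{k,q}(\ell,r)$, or $\mathcal{P}^k(\ell,r)$, respectively -- as guaranteed by the definitions \cref{eq:lrcomb-hompoly-twk,eq:lrcomb-hompoly-tdk,eq:lrcomb-hompoly-pwk}, and similarly for $\boldsymbol{F'}^j$.

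The key auxiliary identity is that, for any individual $\boldsymbol{F} \in \calG(\ell,r)$ and $\boldsymbol{F'} \in \calG(\ell',r')$,
\[
\boldsymbol{F}_{n,m} \otimes \boldsymbol{F'}_{n,m} \;=\; (\boldsymbol{F} \otimes \boldsymbol{F'})_{n,m}
\]
as elements of $\mathfrak{G}_{n,m}(\ell+\ell', r+r')$. This is a direct unpacking of \cref{eq:tensor-disjoint-union} together with the definition of the homomorphism polynomial map: any homomorphism $h$ from $F \uplus F'$ into $K_{n,m}$ which sends the concatenated left labels $\veca\veca'$ to $v_1 \cdots v_{\ell+\ell'}$ and the concatenated right labels $\vecb\vecb'$ to $w_1 \cdots w_{r+r'}$ decomposes uniquely into a pair of label-respecting homomorphisms from $F$ and from $F'$ separately, and the monomial $\prod_{(a,b) \in E(F \uplus F')} x_{h(a), h(b)}$ factors along the partition $E(F \uplus F') = E(F) \uplus E(F')$.

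Given this identity, bilinearity of $\otimes$ over polynomial maps immediately yields
\[
\phi \otimes \phi' \;=\; \sum_{i,j} \alpha_i \beta_j \cdot (\boldsymbol{F}^i \otimes \boldsymbol{F'}^j)_{n,m}.
\]
Applying \cref{lem:disjoint-union} term-by-term, each $\boldsymbol{F}^i \otimes \boldsymbol{F'}^j$ lies in the appropriate class with width bound $k''$ (and, in the treedepth case, depth bound $q''$), so the entire sum lies in the ambient set of linear combinations, proving each of the three assertions. The only step requiring genuine care -- and the main obstacle if one exists -- is the factorisation identity above: one must match the concatenation order of labels in \cref{eq:tensor-disjoint-union} with that in the definition $\boldsymbol{F} \otimes \boldsymbol{F'} = (F \uplus F', \veca\veca', \vecb\vecb')$ so that the splitting of homomorphisms into pairs respects the label conventions on both sides.
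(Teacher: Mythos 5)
Your proposal is correct and matches the paper's argument, which simply notes that \cref{lem:disjointUnionHomPoly} follows from \cref{eq:tensor-disjoint-union} and \cref{lem:disjoint-union} by linearity. You have spelled out the bilinearity expansion and the graph-level identity $\boldsymbol{F}_{n,m} \otimes \boldsymbol{F'}_{n,m} = (\boldsymbol{F}\otimes\boldsymbol{F'})_{n,m}$ that the paper leaves implicit, but the underlying route is the same.
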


    \subsubsection{Gluing products and point-wise multiplication}
    The gluing operation for labelled graphs, denoted by $\boldsymbol{F} \odot \boldsymbol{F'} \in \calG(\ell, r)$, is defined as follows: take the disjoint union of the underlying graphs $F$ and $F'$, and identify the vertices carrying the left labels $\veca$ with the vertices labelled $\veca'$, and similarly for the vertices labelled $\vecb$ and $\vecb'$.

    For homomorphism polynomial maps $\phi, \phi' \in \mathfrak{G}_{n,m}(\ell, r)$, define their pointwise product by
    \[
        (\phi \odot \phi')(\vecv, \vecw) \;\coloneqq\; \phi(\vecv, \vecw) \cdot \phi'(\vecv, \vecw).
    \]
    
    In~\cite[Lemma 5.11]{DPS2025}, it was shown that the set of linear combinations of homomorphism polynomials arising from patterns of bounded treewidth is closed under the gluing operation. We observe that the same closure property holds for the classes defined by bounded treedepth and bounded pathwidth as well.

    \begin{lemma}[Gluing]
    \label{lem:gluing}
    Let $k, \ell, r \in \N$, and define
    \[
        k'' \;\coloneqq\; \min\!\left\{\max\{k, k' + \ell + r\},\, \max\{k', k + \ell' + r'\}\right\}.
    \]
    Then the following hold for the gluing operation $\boldsymbol{F} \odot \boldsymbol{F'} \in \calG(\ell, r)$ defined above:
    \begin{enumerate}
        \item If $\boldsymbol{F}, \boldsymbol{F'} \in \calT^k(\ell, r)$, then $\boldsymbol{F} \odot \boldsymbol{F'} \in \calT^k(\ell, r)$.
    
        \item If $\boldsymbol{F}, \boldsymbol{F'} \in \calT^{k,q}(\ell, r)$, then $\boldsymbol{F} \odot \boldsymbol{F'} \in \calT^{k,q}(\ell, r)$.
    
        \item If $\boldsymbol{F} \in \calP^k(\ell, r)$ and $\boldsymbol{F'} \in \calP^{k'}(\ell, r)$, then $\boldsymbol{F} \odot \boldsymbol{F'} \in \calP^{k''}(\ell, r)$.
    \end{enumerate}
    \end{lemma}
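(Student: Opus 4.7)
My plan is to handle the three parts sequentially, relying on \cite[Lemma~5.11]{DPS2025} for part~1 and adapting its construction for parts~2 and~3. The construction from \cite{DPS2025} attaches the tree decompositions of $\boldsymbol{F}$ and $\boldsymbol{F'}$ at their label-containing nodes $s, s'$: since labelled vertices are identified in the gluing, bag sizes are unchanged and the connectivity condition is preserved. Throughout, let $L$ denote the labelled vertex set, shared by $\boldsymbol{F}$ and $\boldsymbol{F'}$ after gluing.

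For part~2 (treedepth), I would first normalize both rooted tree decompositions so that each has root bag equal to $L$. This is achieved by prepending a new root above the old root $s$ (respectively $s'$) with bag exactly $L$; since $L \subseteq \beta(s)$ and $|L| \leq k$, the width bound $\leq k-1$ is preserved, and depth is unchanged because the new root contributes no new vertices to any ancestor chain. Then I would identify the two normalized roots to obtain a rooted tree decomposition of $F \odot F'$ with root bag $L$. For any node $v$ in the combined tree, the ancestor chain lies entirely within one of the two original subtrees, meeting only at the common root bag $L$, so $|\bigcup_{t \preceq v} \beta(t)|$ equals the corresponding depth value in the original decomposition. Hence $\depth \leq q$, and the width bound $\leq k-1$ is immediate, placing $\boldsymbol{F} \odot \boldsymbol{F'}$ in $\calT^{k,q}(\ell, r)$.

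For part~3 (pathwidth), path decompositions cannot be glued as flexibly as trees, so I plan to \emph{thread} $L$ through one of the two decompositions. Taking path decompositions $(B_i)_{i=1}^s$ and $(B'_i)_{i=1}^{s'}$ with $L \subseteq B_1$ and $L \subseteq B'_1$, I would prepend $L$ as a new initial bag of $\boldsymbol{F'}$'s decomposition so that $B'_1 = L$ (legitimate since $|L| \leq k'$ and $L \subseteq $ old $B'_1$), and form the concatenation
\[
B_1 \cup B'_1,\; B_2 \cup B'_1,\; \dots,\; B_s \cup B'_1,\; B'_2,\; B'_3,\; \dots,\; B'_{s'}.
\]
Since $B'_1 = L$ is contained in each of the first $s$ bags, labelled vertices appear contiguously starting from the left endpoint and continue into $B'_2, \dots$ as in $\boldsymbol{F'}$'s original decomposition; non-label vertices of $F$ (respectively $F'$) remain contiguous within the first (respectively last) part of the concatenation, and all edges are covered as in the original decompositions. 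Using $L \subseteq B_1$ gives $|B_1 \cup B'_1| = |B_1| \leq k$, while $|B_i \cup B'_1| \leq k + \ell + r$ for $i \geq 2$ and $|B'_j| \leq k'$ for $j \geq 2$, so bag sizes are bounded by $\max\{k + \ell + r, k'\}$, and the left endpoint contains $L$ as required. Exchanging the roles of $\boldsymbol{F}$ and $\boldsymbol{F'}$ yields the symmetric bound $\max\{k' + \ell + r, k\}$, and taking the minimum yields $k''$.

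The main obstacle sits in part~3: the endpoint condition for $\calP^{k''}(\ell, r)$ forces $L$ to be threaded through one of the two decompositions, inflating its bags by up to $\ell + r$; the $\min$ in $k''$ reflects the freedom to choose which side pays this cost. By contrast, parts~1 and~2 exploit the branching flexibility of tree decompositions and avoid any such inflation by identifying the label-containing root bags directly. Verifying the connectivity conditions case-by-case in part~3 is routine but the indexing requires some care.
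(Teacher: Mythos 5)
Your proof is correct and follows essentially the same approach as the paper: part~1 by citation, part~2 by gluing normalized rooted tree decompositions at a common root bag $L$ (the paper delegates this to the same argument as part~1 with a one-line remark, whereas you spell it out), and part~3 by threading $L$ through the concatenated path decompositions, taking the minimum over the two asymmetric choices. The minor bookkeeping differences (you normalize only one side and drop the standalone $B'_1$ from the concatenation) do not affect correctness, and your bound $\min\{\max\{k+\ell+r, k'\}, \max\{k'+\ell+r, k\}\}$ matches the lemma's $k''$ once one uses that gluing forces $\ell' = \ell$ and $r' = r$.
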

    \begin{proof}
        The first claim in proved in \cite[Lemma 5.11]{DPS2025}. 
        The same argument also proves the second claim.
        For the third claim, let $B_1, \dots, B_s$ and $B'_1, \dots, B'_r$ denote the bags of a path decomposition of $\boldsymbol{F}$ and $\boldsymbol{F}'$, respectively, such that $B_1$ and $B'_1$ are comprised precisely of the labelled vertices, respectively.
        Then both $B'_1 \cup B_1, B'_2 \cup B_1,  \dots, B'_r \cup B_1, B_1, B_2 \dots, B_s$
        and $B_1 \cup B'_1, B_2 \cup B'_1,  \dots, B_s \cup B'_1, B'_1,B'_2, \dots, B'_r$
        are path decompositions of $\boldsymbol{F} \odot \boldsymbol{F}'$.
        The width of the former is $\max\{k' + \ell + r, k\} -1$
        while the width of the latter is $\max\{k + \ell' + r', k'\} -1$, as desired.        
    \end{proof}

    For pathwidth we also prove the following lemma for repeated application of the gluing operation.

    \begin{lemma}
     \label{lem:repeat-gluing}
        Let $k, \ell, r \in \N$.
        If $\boldsymbol{F}_1, \dots, \boldsymbol{F}_s \in \mathcal{P}^{k}(\ell, r)$, then $\boldsymbol{F}_1 \odot \dots \odot \boldsymbol{F}_s \in \mathcal{P}^{k + \ell +r}(\ell, r)$.  
    \end{lemma}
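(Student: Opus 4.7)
The natural strategy is a direct concatenation of path decompositions, avoiding an iterated application of \cref{lem:gluing} (which would in principle also work, but requires bookkeeping of the nested $\min/\max$ expressions). For each $i \in [s]$, fix a path decomposition $B^i_1, B^i_2, \dots, B^i_{s_i}$ witnessing $\boldsymbol{F}_i \in \mathcal{P}^k(\ell, r)$, chosen so that the endpoint bag $B^i_1$ contains all labelled vertices of $\boldsymbol{F}_i$ (this is possible by definition of $\mathcal{P}^k(\ell, r)$). Let $L \subseteq V(\boldsymbol{F}_1 \odot \dots \odot \boldsymbol{F}_s)$ be the set of the at most $\ell + r$ vertices carrying the labels, which are identified across the $\boldsymbol{F}_i$'s by the gluing operation.

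I would then define the candidate path decomposition of $\boldsymbol{F}_1 \odot \dots \odot \boldsymbol{F}_s$ by concatenation,
\[
    B^1_1 \cup L,\; B^1_2 \cup L,\; \dots,\; B^1_{s_1} \cup L,\; B^2_1 \cup L,\; \dots,\; B^s_{s_s} \cup L,
\]
that is, by augmenting every bag of every component's decomposition with~$L$. Since $|B^i_j| \leq k$ and $|L| \leq \ell + r$, every new bag has size at most $k + \ell + r$, so the width is less than $k + \ell + r$. The first bag is $B^1_1 \cup L = B^1_1$ (because $B^1_1$ already contains all labelled vertices), so it serves as the terminal bag containing all the labels, as required by the definition of $\mathcal{P}^{k+\ell+r}(\ell, r)$.

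It remains to verify the three defining properties of a path decomposition. Vertex coverage and edge coverage are immediate: every vertex of $\boldsymbol{F}_1 \odot \dots \odot \boldsymbol{F}_s$ either lies in $L$ (hence in every augmented bag) or belongs to some unique $\boldsymbol{F}_i$ and is covered by the original bags of its decomposition; likewise, every edge originates in some $\boldsymbol{F}_i$ and is therefore still witnessed in the augmented $i$-th block. The only nontrivial point is the connectivity condition: for a non-label vertex $v$ of some $\boldsymbol{F}_i$, the bags containing $v$ form a contiguous subsegment of the $i$-th block (inherited from the original decomposition), while for a label vertex $v \in L$ the bags containing $v$ are all bags, which is a connected subpath of the concatenation. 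This is precisely the reason for adding $L$ to every bag: without it, the label vertices would appear only in the bags $B^i_1 \cup L$ for each $i$, violating connectivity across blocks.

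The main (minor) obstacle is ensuring this connectivity across the blocks; the augmentation by $L$ handles it at the price of increasing the width by at most $\ell + r$. There is no other difficulty, and the construction is essentially optimal in this setup since each $B^i_1$ was already permitted to contain up to $k$ vertices in addition to $L$.
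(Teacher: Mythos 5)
Your proof is correct and takes a genuinely different route from the paper's. The paper proves the statement by induction on $s$, repeatedly invoking \cref{lem:gluing} on the decomposition $\boldsymbol{F}_1 \odot \dots \odot \boldsymbol{F}_s = (\boldsymbol{F}_1 \odot \dots \odot \boldsymbol{F}_{s-1}) \odot \boldsymbol{F}_s$, and checking that the nested $\min/\max$ formula for $k''$ stabilises at $k + \ell + r$. You instead give a single explicit construction: concatenate the $s$ path decompositions end to end and augment every bag by $L$, the (at most $\ell + r$) glued label vertices, and verify the three decomposition axioms directly. The explicit construction is more transparent about \emph{why} the bound $k + \ell + r$ is tight (the extra $|L| \leq \ell + r$ in each bag precisely buys the connectivity of the label vertices across blocks), whereas the inductive route is shorter if one already has \cref{lem:gluing} in hand but leaves the width arithmetic somewhat opaque. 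Both reach the same bound $\mathcal{P}^{k+\ell+r}(\ell, r)$, and your argument also quietly sidesteps the fact that the statement of \cref{lem:gluing} involves $k'$, $\ell'$, $r'$, which in the pathwidth case must be instantiated consistently ($\ell' = \ell$, $r' = r$) before the formula for $k''$ is usable.
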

    \begin{proof}
        The proof is by applying \cref{lem:gluing} inductively.
        Consider
        \[
            \boldsymbol{F}_1 \odot \dots \odot \boldsymbol{F}_s
            = (\boldsymbol{F}_1 \odot \dots \odot \boldsymbol{F}_{s-1}) \odot \boldsymbol{F}_s.
        \]
        By hypothesis and assumption, the first factor is in $\mathcal{P}^{k + \ell +r}(\ell, r)$ and the second factor in $\mathcal{P}^{k}(\ell, r)$.
        By \cref{lem:gluing}, their product is in $\mathcal{P}^{k'}(\ell, r)$
        for
        \[
            k' = \min\!\left\{\max\{k + \ell + r, k + (\ell + r) \}, \max\{k, (k + \ell + r) + (\ell +r) \}\right\}
            = k + \ell + r,
        \]
        as desired.
    \end{proof}

    \begin{lemma}[Gluing for homomorphism polynomial maps]
    \label{lem:gluingHomPolyMaps}
    Let $k, \ell, \ell', r, r',n,m \in \N$, and define
    \[
        k'' \;\coloneqq\; \min\!\left\{\max\{k, k' + \ell + r\}, \max\{k', k + \ell' + r'\}\right\}.
    \]
    Then the following hold for the gluing operation $\phi \odot \phi'$ defined above:
    \begin{enumerate}
        \item If $\phi, \phi' \in \mathfrak{T}^k_{n,m}(\ell, r)$ then $\phi \odot \phi' \in \mathfrak{T}^{k}_{n,m}(\ell, r)$
    
        \item If $\phi, \phi' \in \mathfrak{T}^{k,q}_{n,m}(\ell, r)$ then
        $\phi \odot \phi' \in \mathfrak{T}^{k, q}_{n,m}(\ell, r)$
    
        \item If $\phi \in \mathfrak{P}^k_{n,m}(\ell, r)$ and $\phi' \in \mathfrak{P}^{k'}_{n,m}(\ell, r)$, then $\phi \odot \phi' \in \mathfrak{P}^{k''}_{n,m}(\ell, r)$.
    \end{enumerate}
    \end{lemma}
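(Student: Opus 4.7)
The plan is to derive this lemma from \cref{lem:gluing} by linearity, mirroring how \cref{lem:disjointUnionHomPoly} followed from \cref{lem:disjoint-union}. First I would write $\phi = \sum_i \alpha_i \boldsymbol{F}^i_{n,m}$ and $\phi' = \sum_j \beta_j \boldsymbol{G}^j_{n,m}$ as finite linear combinations of homomorphism polynomial maps of labelled graphs in the appropriate class ($\calT^k(\ell,r)$, $\calT^{k,q}(\ell,r)$, or $\calP^k(\ell,r)$, respectively, for $\boldsymbol{F}^i$, and analogously for $\boldsymbol{G}^j$).

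The key identity to establish is that for any labelled graphs $\boldsymbol{F}, \boldsymbol{G} \in \calG(\ell,r)$, the pointwise product of homomorphism polynomial maps equals the homomorphism polynomial map of the glued graph:
\[
    \boldsymbol{F}_{n,m} \odot \boldsymbol{G}_{n,m} \;=\; (\boldsymbol{F} \odot \boldsymbol{G})_{n,m}.
\]
This follows directly from the definitions: evaluated at $(\vecv, \vecw) \in [n]^\ell \times [m]^r$, the left-hand side is a product of two sums over homomorphisms pinning the labels of $\boldsymbol{F}$ and $\boldsymbol{G}$ to $(\vecv, \vecw)$; by distributivity this is a sum over pairs $(h_1, h_2)$ of such homomorphisms, which is precisely the set of homomorphisms from the gluing $\boldsymbol{F} \odot \boldsymbol{G}$ pinning the (identified) labels to $(\vecv, \vecw)$, and the edge-variable product factorises accordingly because $E(\boldsymbol{F} \odot \boldsymbol{G})$ is the disjoint union of $E(\boldsymbol{F})$ and $E(\boldsymbol{G})$.

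Using this identity and bilinearity of $\odot$, I can expand
\[
    \phi \odot \phi' \;=\; \sum_{i,j} \alpha_i \beta_j \, (\boldsymbol{F}^i \odot \boldsymbol{G}^j)_{n,m}.
\]
Applying \cref{lem:gluing} to each pair $(\boldsymbol{F}^i, \boldsymbol{G}^j)$ places every $\boldsymbol{F}^i \odot \boldsymbol{G}^j$ in the class $\calT^k(\ell,r)$, $\calT^{k,q}(\ell,r)$, or $\calP^{k''}(\ell,r)$ in parts (1), (2), (3), respectively. Hence $\phi \odot \phi'$ is itself a finite linear combination of homomorphism polynomial maps of labelled graphs in the required class, which gives the claimed membership in $\mathfrak{T}^{k}_{n,m}(\ell, r)$, $\mathfrak{T}^{k,q}_{n,m}(\ell, r)$, or $\mathfrak{P}^{k''}_{n,m}(\ell, r)$.

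There is no real obstacle here; the only step requiring care is the verification of the compatibility identity $\boldsymbol{F}_{n,m} \odot \boldsymbol{G}_{n,m} = (\boldsymbol{F} \odot \boldsymbol{G})_{n,m}$, since one must be careful that the sum-over-homomorphisms indexing on the right is truly in bijection with the product of the two indexing sets on the left once labels are pinned. Once that definitional check is in hand, the three parts of the lemma follow uniformly, just as in the analogous proof of \cref{lem:disjointUnionHomPoly}.
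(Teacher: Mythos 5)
Your proposal is correct and matches the approach the paper intends: the paper states \cref{lem:gluingHomPolyMaps} without a written proof, exactly because it is meant to follow from \cref{lem:gluing} by the same linearity-and-compatibility argument used to pass from \cref{lem:disjoint-union} to \cref{lem:disjointUnionHomPoly}, and that is precisely what you do. Your verification of the compatibility identity $\boldsymbol{F}_{n,m}\odot\boldsymbol{G}_{n,m}=(\boldsymbol{F}\odot\boldsymbol{G})_{n,m}$ is the only step needing care, and your bijection between pairs $(h_1,h_2)$ pinning the labels of $F$ and $G$ and homomorphisms from the glued graph pinning the identified labels is sound for multigraphs since the edge set of $\boldsymbol{F}\odot\boldsymbol{G}$ is the disjoint union of $E(F)$ and $E(G)$, so no edges are merged and the monomial factorisation holds.
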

    
    \subsubsection{Restricted sums}
    Let $\ell \geq 1$ and $\boldsymbol{F} = (F, \veca, \vecb) \in \calG(\ell,r)$ be a labelled bipartite graph. For any $i\in [\ell]$, define
    $$
        \Sigma_i \, \boldsymbol{F} \;\coloneqq\; \left(F, \veca[i/], \vecb\right) \;\in\; \calG(\ell-1, r)
    $$ 
    as the graph obtained by dropping the $i$-th left label. Analogously, for a homomorphism polynomial map $\phi$ define 
    $$
        \left(\Sigma_i\, \phi\right)(\vecv[i/], \vecw) \;\coloneqq\; \sum_{u \in [n]} \phi(\vecv[i/u], \vecw),
    $$ 
    for all $\vecv \in [n]^\ell$ and $\vecw \in [m]^r$. Here, $\vecv[i/u]$ is the tuple $\vecv$ with the $i$-th entry replaced by $u$.

    Since unlabelling does neither affect the underlying bipartite graphs nor their decompositions,
    the following identities hold, see \cite[Lemma~5.9]{DPS2025}.

    \begin{lemma}[Unlabelling]
    \label{lem:sum}
    Let $k, \ell, r \in \N$ with $\ell \geq 1$, and let $i \in [\ell]$. 
    \begin{enumerate}
        \item If $\boldsymbol{F} \in \calT^k(\ell, r)$, then $\Sigma_i \boldsymbol{F} \in \calT^k(\ell - 1, r)$.
    
        \item If $\boldsymbol{F} \in \calT^{k,q}(\ell, r)$, then $\Sigma_i \boldsymbol{F} \in \calT^{k,q}(\ell - 1, r)$.
    
        \item If $\boldsymbol{F} \in \calP^k(\ell, r)$, then $\Sigma_i \boldsymbol{F} \in \calP^k(\ell - 1, r)$.
    \end{enumerate}
    \end{lemma}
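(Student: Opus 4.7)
My plan is to observe that all three assertions have the same underlying reason: the operation $\Sigma_i$ does not alter the graph $F$ and only shrinks the multiset of labelled vertices, so any decomposition witnessing membership of $\boldsymbol{F}$ in the relevant class also witnesses membership of $\Sigma_i \boldsymbol{F} = (F, \veca[i/], \vecb)$ in the corresponding class.

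More concretely, for the first claim, I would take a tree decomposition $(T, \beta)$ of $F$ of width less than $k$ together with a node $s \in V(T)$ such that $\beta(s)$ contains every vertex appearing in $\veca$ or $\vecb$; then the set of vertices appearing in $\veca[i/]$ or $\vecb$ is a sub-multiset of that set, so the same $(T, \beta)$ and the same node $s$ witness $\Sigma_i \boldsymbol{F} \in \calT^k(\ell - 1, r)$. For the second claim, I would invoke the analogous argument with a rooted tree decomposition of width less than $k$ and depth at most $q$ whose root bag already contains all labelled vertices, noting that depth and width are properties of $(T, \beta)$ alone. For the third claim, I would take a path decomposition $(P, \beta)$ of $F$ of width less than $k$ with a degree-at-most-one node $s$ whose bag contains all label vertices, and again verify that the same decomposition works.

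I do not anticipate a genuine obstacle here; the lemma is essentially a definitional observation that the constraint "the designated bag contains all labelled vertices" is monotone under removing labels. The one thing to be pedantic about is the case where the vertex labelled in position $i$ does not appear anywhere else in $\veca$ or $\vecb$, but this causes no trouble: the designated bag may still contain that vertex, and neither the width nor the depth condition is affected.
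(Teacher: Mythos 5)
Your proposal is correct and coincides with the paper's treatment: the paper dismisses the lemma with exactly the observation you make, namely that unlabelling leaves the underlying graph and its decomposition untouched, so the same witnessing decomposition (tree, rooted tree, or path, with its designated bag) continues to work for $\Sigma_i \boldsymbol{F}$ because the set of labelled vertices only shrinks. Your attention to the degree-$\leq 1$ node in the pathwidth case and to the case where the $i$-th labelled vertex drops out entirely covers the only points worth checking.
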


    The identities from \cref{lem:sum} can be lifted to homomorphism polynomials.
    Moreover, they can be generalised to the following restricted sum operator.
    Let $i \in [\ell]$ and $J \subseteq [\ell] \setminus \{i\}$.
    For $\phi \in \mathfrak{G}_{n,m}(\ell, r)$, define
    \[
        \left(\Sigma_{i, J}\, \phi\right)(\vecv[i/], \vecw) \;\coloneqq\; \sum_{u \in [n] \setminus \setdef{v_j}{j \in J}} \phi(\vecv[i/u], \vecw)
    \]
    The following identities hold.
    
    \begin{lemma}[Restricted summation for homomorphism polynomial maps]\label{lem:restricted-sums}
    Let $k, \ell, r \in \N$ with $\ell \geq 1$, and let $i \in [\ell]$ and $J \subseteq [\ell] \setminus \{i\}$. 
    Suppose that $\ell + r\leq k$ and $k \leq q$.
    Let $n,m \in \mathbb{N}$.
    The following hold for the restricted sum operation defined above:
    \begin{enumerate}
        \item If $\phi \in \mathfrak{T}^k_{n,m}(\ell, r)$, then $\Sigma_{i, J} (\phi) \in \mathfrak{T}^k_{n,m}(\ell - 1, r)$.
    
        \item If $\phi \in \mathfrak{T}^{k,q}_{n,m}(\ell, r)$, then $\Sigma_{i, J}(\phi) \in \mathfrak{T}^{k,q}_{n,m}(\ell - 1, r)$.
    
        \item If $\phi \in \mathfrak{P}^k_{n,m}(\ell, r)$, then $\Sigma_{i, J}(\phi) \in \mathfrak{P}^k_{n,m}(\ell - 1, r)$.
    \end{enumerate}
    \end{lemma}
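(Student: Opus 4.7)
The plan is to reduce the claim, by linearity, to a single homomorphism polynomial map $\phi = \boldsymbol{F}_{n,m}$ for a labelled bipartite graph $\boldsymbol{F} \in \calG(\ell, r)$ in the respective class, and then to apply inclusion-exclusion over the forbidden values. Concretely, I would write
\[
\left(\Sigma_{i,J}\,\phi\right)(\vecv[i/], \vecw) \;=\; \sum_{u \in [n] \setminus \{v_j : j \in J\}} \phi(\vecv[i/u], \vecw) \;=\; \sum_{S \subseteq J} (-1)^{|S|} \sum_{\substack{u \in [n] \\ u = v_j \text{ for all } j \in S}} \phi(\vecv[i/u], \vecw).
\]
The $S = \emptyset$ summand equals $\Sigma_i(\phi)$, whose containment in the desired class follows immediately from \cref{lem:sum}. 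For every nonempty $S = \{j_1, \dots, j_s\}$, I would realise the inner sum as $(\boldsymbol{F}^S)_{n,m}(\vecv[i/], \vecw)$, where $\boldsymbol{F}^S \in \calG(\ell-1, r)$ is obtained from $\boldsymbol{F}$ by merging the left-labelled vertices $a_i, a_{j_1}, \dots, a_{j_s}$ into a single new vertex $w$, dropping the $i$-th position of the left-label tuple, and letting the positions $j \in S$ now designate $w$. A direct unfolding of the definition of the polynomial map shows that $(\boldsymbol{F}^S)_{n,m}(\vecv[i/], \vecw)$ vanishes unless $v_{j_1} = \dots = v_{j_s}$, in which case it equals $\phi(\vecv[i/v_{j_1}], \vecw)$, matching the inner sum exactly.

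The substantive task is then to certify that $\boldsymbol{F}^S$ inhabits the respective class, i.e.\ $\calT^k(\ell-1, r)$, $\calT^{k,q}(\ell-1, r)$, or $\calP^k(\ell-1, r)$. Starting from a witnessing decomposition $(T, \beta)$ of $\boldsymbol{F}$ with a distinguished bag $\beta(s)$ containing all labels, I would construct $(T, \beta')$ by substituting every occurrence of $a_i, a_{j_1}, \dots, a_{j_s}$ in the bags by $w$. No bag grows, so the width remains below $k$. For each vertex of $F^S$ other than $w$, its subtree in $(T, \beta')$ is the same as in $(T, \beta)$; for $w$ the relevant set of bags is the union, over $m \in \{i\} \cup S$, of the subtrees witnessing $a_m$, which is connected because each of these subtrees passes through $s$ by the labelling hypothesis. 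Coverage and edge-covering are immediate, since every edge of $F^S$ descends from an edge of $F$. Finally, $\beta'(s)$ contains $w$ together with the remaining left-labels and all of $\vecb$, so $\beta'(s)$ carries every label of $\boldsymbol{F}^S$.

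For treedepth, the rooted structure of $T$ is left intact and every set $\bigcup_{t \preceq v} \beta'(t)$ can only shrink under the identification, hence $\depth(T, \beta') \leq q$. For pathwidth, the linear shape of $T$ and the position of $s$ as an endpoint of degree at most one are preserved. The main obstacle, although not deep, is purely one of bookkeeping: one must check carefully that the substitution in the tuple $\veca[i/]$ consistently designates $w$ at the positions $j \in S$, that the polynomial map $(\boldsymbol{F}^S)_{n,m}$ evaluates as asserted, and that the modified decomposition continues to satisfy all three defining conditions simultaneously. The hypotheses $\ell + r \leq k$ and $k \leq q$ only serve to guarantee that the labelled-graph classes at hand are non-trivially populated, so that the entire inclusion-exclusion remains within the class.
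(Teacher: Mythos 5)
Your argument is correct, but it takes a genuinely different route from the paper's. The paper recalls from the proof of \cite[Lemma~5.12]{DPS2025} that $\Sigma_{i,J}\phi = \Sigma_i(\delta\odot\phi)$ for a filter $\delta$ built by gluing powers and linear combinations of the edge-less, fully-labelled graphs $\boldsymbol{D}^{i,j}$, and then applies its closure lemmas (\cref{lem:gluingHomPolyMaps} followed by \cref{lem:sum}); for pathwidth an extra ad hoc step is needed because the generic gluing bound in \cref{lem:gluingHomPolyMaps} is too coarse, and one observes instead that the constituents of $\delta$ have no unlabelled vertices. You expand the filter explicitly via inclusion-exclusion over $S\subseteq J$, realise each summand as a vertex-identified pattern $\boldsymbol{F}^S$ (morally an iterated $\boldsymbol{D}^{i,j}\odot\boldsymbol{F}$), and certify the width and depth bounds by substituting the merged vertex $w$ into the original decomposition directly. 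This is more self-contained and bypasses the closure lemmas; in particular it never uses the hypotheses $\ell + r\leq k$ and $k\leq q$, which the paper needs only to place $\delta$ in the relevant class, so your reading of those hypotheses as ensuring non-triviality is slightly off, though inconsequential since you never rely on them. Two minor points to make fully precise if you write this out: positions $m\notin S\cup\{i\}$ whose underlying vertex $a_m$ happens to coincide with one of the merged vertices of $F$ must also be redirected to $w$ in the new left-label tuple, not only the positions in $S$; and the identity $(\boldsymbol{F}^S)_{n,m}(\vecv[i/],\vecw) = \phi(\vecv[i/v_{j_1}],\vecw)$ when the $v_j$, $j\in S$, all coincide rests on a bijection between arbitrary maps $V(F)\to[n]\uplus[m]$ and maps $V(F^S)\to[n]\uplus[m]$ subject to the label constraints, which is the level at which the polynomial maps are defined.
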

    \begin{proof}
        The first assertion follows from \cite[Lemma~5.12]{DPS2025}.
        For the other two assertions, recall its proof.
        It is shown that $\Sigma_{i, J} \phi = \Sigma_i (\delta \otimes \phi)$
        for some $\delta \in \mathfrak{G}_{n,m}(\ell, r)$ that is obtained from some graph $\boldsymbol{D}^{i,j} \in \mathcal{G}(\ell, r)$ without unlabelled vertices by taking gluing powers and linear combinations.
        Here, $\boldsymbol{D}^{i,j}$ is the $(\ell, r)$-labelled edge-less $(\ell-1, r)$-vertex bipartite graph whose labels are placed such that the $i$-th and $j$-th left-label reside on the same vertex while all other labels reside on distinct vertices, see \cite[(6)]{DPS2025}.
        The homomorphism polynomial map $\delta$ is given by $p(\sum_{j \in J} \boldsymbol{D}^{i,j}_{n,m})$
        for some polynomial $p\in \mathbb{K}[x]$.

        For the treedepth case, by \cref{lem:gluingHomPolyMaps}, $\delta \otimes \phi \in \mathfrak{T}^{k,q}_{n,m}(\ell, r)$.
        By \cref{lem:sum}, $\Sigma_i (\delta \otimes \phi) \in \mathfrak{T}^{k,q}_{n,m}(\ell, r)$.

        For the pathwidth case,
        more care is required.
        Since all vertices in $\boldsymbol{D}^{i,j}$
        are labelled, the same holds true for any graph obtained from it by taking gluing products.
        Hence, every constituent of $\delta$ is a $(\ell,r)$-labelled bipartite graph without unlabelled vertices.
        Since all labelled vertices of any constituent of $\phi$ are in the same bag of some path decomposition,
        taking the gluing product of $\delta$ with $\phi$,
        does not increase the pathwidth.
        In symbols,  $\delta \otimes \phi \in \mathfrak{P}^{k}_{n,m}(\ell, r)$.
        By \cref{lem:sum}, $\Sigma_i (\delta \otimes \phi) \in \mathfrak{P}^{k}_{n,m}(\ell-1, r)$.
    \end{proof}

    \subsubsection{Restricted products}
    Let $i\in [\ell]$ and $J \subseteq [\ell] \setminus \{i\}$. For a $\phi \in \mathfrak{P}^k_{n,m}(\ell, r)$, define the restricted product on $\mathfrak{P}^k_{n,m}(\ell, r)$ as follows:
    \[
        \left(\Pi_{i,J}\phi\right) \; \coloneqq \; \prod_{u \in [n] \setminus \setdef{v_j}{j \in J}} \phi(\vecv[i/u], \vecw),
    \]
    for all $\vecv \in [n]^\ell$ and $\vecw \in [m]^r$. In \cite[Section 5.2.7]{DPS2025}, it was proved that $\mathfrak{T}^k$ is closed under restricted products.
In the following lemma, we prove the analogous result for the bounded treedepth $\mathfrak{T}^{k,q}_{n,m}$ and pathwidth class $\mathfrak{P}^k_{n,m}$. The proof closely follows the argument in~\cite[Section 5.2.7]{DPS2025}, and we encourage the reader to consult that section for background before proceeding.

    \begin{lemma}[Restricted products]
    \label{lem:restrictedProd-Pathwidth}
        Let $k,\ell, r, n,m,q \in \mathbb{N}$.
        Suppose that $\ell +r \leq k$.
        For $i \in [\ell]$ and $J \subseteq [\ell] \setminus \{i\}$, the following holds:
        \begin{enumerate}
            \item If $\phi \in \mathfrak{T}^{k,q}_{n,m}(\ell, r)$, then $\Pi_{i, J}(\phi) \in \mathfrak{T}^{k,q}_{n,m}(\ell - 1, r)$.
            \item If $\phi \in \mathfrak{P}^k_{n,m}(\ell, r)$, then $\Pi_{i, J}(\phi) \in \mathfrak{P}^{3 k}_{n,m}(\ell-1, r)$
        \end{enumerate}
    \end{lemma}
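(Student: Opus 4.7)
The plan is to adapt the construction used in the proof of the restricted-product closure of $\mathfrak{T}^k_{n,m}(\ell,r)$ from \cite[Section 5.2.7]{DPS2025}. Recall that in that proof, $\Pi_{i,J}\phi$ is written as a finite composition of the elementary operations $\otimes$, $\odot$, and $\Sigma_{i',J'}$ applied to $\phi$ itself and to a few auxiliary homomorphism polynomial maps arising from edge-less or otherwise low-complexity labeled bipartite graphs. Because these auxiliary objects have essentially trivial underlying graph structure, they belong to both $\mathfrak{T}^{k,q}_{n,m}(\ell,r)$ and $\mathfrak{P}^k_{n,m}(\ell,r)$ as soon as all their labels fit into a single bag, which is guaranteed by the hypothesis $\ell + r \leq k$.

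For the treedepth assertion, I will simply re-run the construction from \cite[Section 5.2.7]{DPS2025} and apply, at each step, the treedepth analogue of the corresponding closure lemma, namely \cref{lem:disjointUnionHomPoly}(2) for disjoint unions, \cref{lem:gluingHomPolyMaps}(2) for gluing products, and \cref{lem:restricted-sums}(2) for restricted sums. Each of these preserves $\mathfrak{T}^{k,q}$ under the standing assumption that the labels are absorbed by the depth budget, which follows from $\ell + r \leq k \leq q$. Composing them as prescribed by the construction yields $\Pi_{i,J}\phi \in \mathfrak{T}^{k,q}_{n,m}(\ell-1, r)$.

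For the pathwidth assertion, more delicate bookkeeping is needed because pathwidth is strictly less forgiving than treewidth. The construction involves three kinds of steps: (a) repeated gluing of $\phi$ with itself along its labeled boundary, which by \cref{lem:repeat-gluing} costs an additive $\ell + r \leq k$ in the pathwidth; (b) disjoint unions and gluings with auxiliary polynomial maps of pathwidth at most $\ell + r - 1 \leq k - 1$, controlled by \cref{lem:disjointUnionHomPoly}(3) and \cref{lem:gluingHomPolyMaps}(3), adding at most $k$ more to the width; and (c) restricted sums, which by \cref{lem:restricted-sums}(3) preserve pathwidth. Summing the three contributions yields the bound $3k$ claimed in the statement.

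The main obstacle is the pathwidth case: path decompositions require the labeled vertices to sit in a bag at an endpoint of the decomposition, so merging several pathwidth-$k$ objects along a common labeled interface cannot branch but must concatenate linearly. Verifying that the [DPS2025] construction is compatible with this rigidity, and in particular that each intermediate object still admits a path decomposition with all labels in an endpoint bag, is the technical crux of the argument and is precisely what forces the factor of $3$ in the final width bound.
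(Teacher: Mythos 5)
Your approach is essentially the same as the paper's: re-run the restricted-product construction from \cite[Section~5.2.7]{DPS2025}, substituting the treedepth or pathwidth analogue of each treewidth closure lemma; for treedepth this works unchanged since all three operations preserve $\mathfrak{T}^{k,q}$.

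For pathwidth, your total is right but your accounting of where the two additive $k$'s come from is misattributed. Your item~(b) charges the second $k$ to gluings with the auxiliary maps. In the actual proof these auxiliaries are free: $\boldsymbol{D}^{i,j}$ lies in $\mathcal{P}^{k}(\ell,r)$ with all labels in one bag, $\psi \in \mathfrak{P}^{2k}_{n,m}(\ell-1,r)$, and gluing either onto a factor already in $\mathfrak{P}^{3k}_{n,m}(\ell-1,r)$ stays in $\mathfrak{P}^{3k}$ by \cref{lem:gluing}, since $2k + (\ell-1) + r \leq 3k$. Both additive $k$'s actually come from the base case $J = \emptyset$: by \cite[(8)]{DPS2025} the constituents of $\Pi_i(\phi)$ have the form $\Pi_i^\pi(\boldsymbol{F}^1, \dots, \boldsymbol{F}^n) = \bigodot_{P \in [n]/\pi} \bigl( \Sigma_i \bigodot_{v \in P} \boldsymbol{F}^v \bigr)$, which contains two nested rounds of \emph{repeated gluing of $\phi$-constituents along the labelled boundary}, with the width-preserving restricted sum sandwiched in between. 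The inner $\bigodot_{v \in P}$ takes $\mathcal{P}^{k}$ to $\mathcal{P}^{2k}$ and the outer $\bigodot_{P}$ takes $\mathcal{P}^{2k}$ to $\mathcal{P}^{3k}$, both by \cref{lem:repeat-gluing}. The inductive step on $|J|$, which introduces $\boldsymbol{D}^{i,j}$ and $\psi$, then stays within $\mathfrak{P}^{3k}$ with no further growth. If you reattribute the second $k$ from the auxiliary gluings to the outer product $\bigodot_P$ and carry out the induction on $|J|$, your sketch closes.
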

    \begin{proof}
        We prove the second assertion, and the first follows analogously by replacing the bounded-pathwidth operations with those for bounded treedepth discussed earlier.
        
        As in \cite[Corollary 5.18]{DPS2025} the proof is by induction on $|J|$.
        First, consider the base case $J = \emptyset$, which corresponds to \cite[Theorem~5.13]{DPS2025}.
        By this theorem, $\Pi_{i}(\phi) \in \mathfrak{T}^k_{n,m}(\ell-1, r)$.
        The constituents of this linear combination are, by \cite[(8)]{DPS2025},
        of the form
        \[
            \Pi_i^\pi(\boldsymbol{F}_1, \dots, \boldsymbol{F}_n) \coloneqq \bigodot_{ P \in [n]/\pi} \left(  \Sigma_i \bigodot_{v \in P} \boldsymbol{F}^v \right)
        \]
        where $\pi$ is a partition of $[n]$ and $\boldsymbol{F}^1, \dots, \boldsymbol{F}^n \in \mathcal{P}^k(\ell, r)$ are some constituents of $\phi$.
        By \cref{lem:repeat-gluing}, since $\ell + r \leq k$,
        it holds that $\bigodot_{v \in P} \boldsymbol{F}^v \in \mathcal{P}^{2k}(\ell, r)$
        for every $P \subseteq [n]$.
        By \cref{lem:sum},
        $\Sigma_i \bigodot_{v \in P} \boldsymbol{F}^v \in \mathcal{P}^{2k}(\ell-1, r)$.
        Lastly, by \cref{lem:gluing}, $\Pi_i^\pi(\boldsymbol{F}_1, \dots, \boldsymbol{F}_n) \in \mathcal{P}^{3k}(\ell-1, r)$.
        By \cite[(8)]{DPS2025}, the pattern graphs appearing in the linear combination   $\Pi_i(\phi)$ are all of this form, so $\Pi_i(\phi) \in \mathfrak{P}^{3k}_{n,m}(\ell-1, r)$.
        This concludes the proof of the base case.

        For the inductive step, consider the last equation in the proof of \cite[Corollary~5.18]{DPS2025}:
        \begin{equation} \label{eq:pathwidth-gluing}
            \Pi_{i, J}(\phi) = \psi \odot  \Pi_{i, J'}(\phi) + (1-\psi) \odot \left( \Pi_{i, J'}(\phi + \boldsymbol{D}^{i,j}_{n,m} ) - \Pi_{i, J'}(\phi)\right).
        \end{equation}
        Here, $J'$ is some subset of $J$ such that $|J| = |J'| + 1$.
        Furthermore, $\boldsymbol{D}^{i,j}$ is the $(\ell, r)$-labelled edge-less $(\ell-1, r)$-vertex bipartite graph whose labels are placed such that the $i$-th and $j$-th left-label reside on the same vertex while all other labels reside on distinct vertices, see \cite[(6)]{DPS2025}.
        Clearly, $\boldsymbol{D}^{i,j} \in \mathcal{P}^{k}(\ell, r)$.
        Finally, $\psi \coloneqq p(\frac1n \sum_{j' \in J'} (\Sigma_i \boldsymbol{D}^{i,j'}_{n,m}))$ for some polynomial $p \in \bbQ[x]$.
        It holds that $\Sigma_i \boldsymbol{D}^{i,j'} \in \mathcal{P}^{k}(\ell -1, r)$.
        Hence, by \cref{lem:repeat-gluing}, $\psi \in \mathfrak{P}^{2k}_{n,m}(\ell-1, r)$.

        By the inductive hypothesis, each outermost gluing product in \cref{eq:pathwidth-gluing}
        has one factor in $\mathfrak{P}^{3k}_{n,m}(\ell-1, r)$
        and one factor in $\mathfrak{P}^{2k}_{n,m}(\ell-1, r)$.
        By \cref{lem:gluing},
        $\Pi_{i, J}(\phi) \in \mathfrak{P}^{k''}_{n,m}(\ell-1, r)$ for
        \[
            k'' = \min\{ \max\{3 k, 2k + \ell-1 + r \}, \max\{ 2k, 3 k + \ell-1 + r \}\}
            \leq 3k,
        \]
        as desired.
    \end{proof}

\subsection{Matrix-symmetric formulas \texorpdfstring{(\symVF)}{(Symmetric VF)}}
    \label{sec:symVF}

In this section, we present the complete proof of~\cref{thm:characterisationVF}. The two directions of the equivalence are established separately in~\cref{sec:hompoly-to-symVF} and~\cref{sec:symVF-to-hompoly}.

    \subsubsection{Bounded treedepth homomomorphism polynomials are in \texorpdfstring{$\symVF$}{Symmetric Formulas}}
    \label{sec:hompoly-to-symVF}

    We begin by proving that homomorphism polynomials of patterns of bounded treewidth can be computed by matrix-symmetric rigid formulas of polynomial orbit size.
    
    \begin{lemma}
    \label{lem:singleTreedepthHomCount}
        Let $F$ be a bipartite multigraph of treedepth $d$, and let $n,m \in \bbN$. Then $\hom_{F,n,m}$ can be computed by a $\Sym_n \times \Sym_m$-symmetric rigid formula of size at most $(|V(F)| \cdot |E(F)| \cdot (n+m))^d$, support size at most $d$, and depth at most $d$.
    \end{lemma}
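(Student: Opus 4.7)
The plan is to construct the formula by dynamic programming along an elimination tree of $F$, following the approach from \cite[Theorem~11]{KomarathPR23}. Fix an elimination tree $T$ of $F$ of height $d$, rooted at some vertex $r \in V(F)$. For each $v \in V(F)$, let $A_v$ denote the set of proper ancestors of $v$ in $T$, and let $T_v$ be the subtree of $T$ rooted at $v$. The defining property of an elimination tree is that every edge of $F$ incident to $V(T_v)$ has its other endpoint in $V(T_v) \cup A_v$, so the subtree at $v$ interacts with the rest of $F$ only through $A_v$.

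For each $v \in V(F)$ and each bipartition-respecting assignment $\tau \colon A_v \to [n] \uplus [m]$, I would introduce a gate $g_v(\tau)$ computing the partial homomorphism sum
\[
H_v(\tau) \;=\; \sum_{h} \prod_{\substack{ab \in E(F) \\ \{a,b\} \cap V(T_v) \neq \emptyset}} x_{h(a), h(b)},
\]
where $h$ ranges over bipartition-respecting extensions of $\tau$ to $V(T_v)$ and variables are indexed according to the bipartition. The standard recurrence
\[
H_v(\tau) \;=\; \sum_{c} E_v(\tau, c) \cdot \prod_{w \in \child_T(v)} H_w\bigl(\tau \cup \{v \mapsto c\}\bigr),
\]
where $c$ ranges over $[n]$ or $[m]$ depending on the side of $v$, and $E_v(\tau, c)$ is the product (with multiplicities) of edge-variables for edges from $v$ to its $F$-neighbours in $A_v$, translates directly into a circuit. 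Since each gate $g_w(\tau')$ is used by exactly one parent gate, the internal gates form a tree, and $g_r(\emptyset)$ computes $\hom_{F,n,m}$.

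It remains to verify the claimed properties. Symmetry: the action of $(\pi_1,\pi_2) \in \Sym_n \times \Sym_m$ extends to the automorphism sending $g_v(\tau) \mapsto g_v\bigl((\pi_1,\pi_2) \circ \tau\bigr)$, which respects the formula structure because the recurrence is uniform in $\tau$. Support: the stabiliser of $g_v(\tau)$ contains $\StabP(\Range(\tau))$, so $|\sup(g_v(\tau))| \leq |A_v| \leq d-1$, while intermediate subgates that additionally record the image $c$ of $v$ have support at most $d$. Depth and size: collapsing the sum over $c$, the product of child-gates, and the edge-variable products at each level of $T$ into single unbounded-fan-in gates, one obtains depth~$d$; counting at most $(n+m)^{|A_v|}$ gates of the form $g_v(\tau)$ per vertex $v$ and $O((n+m) \cdot |E(F)|)$ intermediate gates per such $g_v(\tau)$, the total number of gates and wires is bounded by $O(|V(F)| \cdot |E(F)| \cdot (n+m)^d)$, well within the stated bound. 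Rigidity is immediate because gates $g_v(\tau_1)$ and $g_v(\tau_2)$ with $\tau_1 \neq \tau_2$ have subformulas on disjoint sets of input variables, so any input-fixing automorphism must be the identity; as a fallback, one may apply the Rigidification Lemma (\cref{lem:rigidification}), which preserves depth and only decreases support.

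The main technical obstacle is the precise bipartite bookkeeping — tracking which vertices reside on which side of $F$'s bipartition, so whether their images range over $[n]$ or $[m]$ and how the edge variables $x_{i,j}$ are indexed — combined with arranging the arithmetic at each elimination-tree level so that exactly one unit of formula depth is consumed per level of $T$. Neither is conceptually difficult, but they require care, especially because the elimination tree does not automatically respect the bipartition of $F$, and edges may join a vertex in $A_v$ with $v$ regardless of side.
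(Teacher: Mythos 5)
Your proposal follows essentially the same strategy as the paper: dynamic programming along an elimination tree, with gates indexed by a vertex $v$ and an assignment $\tau$ to its ancestors, the same recurrence, symmetry established by the uniformity in $\tau$, and support bounded by $|A_v|$ plus one for the current image. Your size bound is in fact tighter than the lemma demands, which is fine.

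The one place where the argument has a real gap is the primary rigidity claim. You argue that gates $g_v(\tau_1)$ and $g_v(\tau_2)$ with $\tau_1 \neq \tau_2$ have subformulas on disjoint variables, and conclude "any input-fixing automorphism must be the identity". That conclusion does not follow: an input-fixing automorphism need not map $g_v(\tau_1)$ to some $g_v(\tau_2)$; it can map $g_v(\tau)$ to $g_{v'}(\tau)$ for a \emph{sibling} $v'$ of $v$ in the elimination tree. Concretely, take $F = a\,\text{--}\,r\,\text{--}\,c$ a path with $a,c$ on the same side, eliminated with root $r$ and children $a,c$. For any fixed image of $r$, the subformulas $C_{a,\tau}$ and $C_{c,\tau}$ are isomorphic and built from the same input gates, so swapping them is a non-trivial input-fixing automorphism. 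This is exactly the scenario the paper's explicit construction guards against by attaching a distinguishing factor $1^{\ell(w)}$ to each child $w$ of $v$, which breaks such swaps by hand. Your fallback of invoking the Rigidification Lemma (\cref{lem:rigidification}) does repair this and, as you note, preserves depth and can only shrink supports; the resulting object may have multiedges, which the paper accepts in such intermediate steps. So the proof is salvageable via the fallback, but the primary rigidity argument as stated is wrong and should not be presented as "immediate".
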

    \begin{proof}
    Let $T$ be an elimination tree of $F$. We construct a formula $C$ for $\hom_{F,n,m}$ by induction on $T$. 
    For every $u \in V(T)$, let $P(u) \subseteq V(T) = V(F)$ denote the vertices appearing in the unique path from the root of $T$ to $u$ (excluding $u$). 
    In the circuit $C$, we create, for each $u \in V(T)$ and each bipartition-preserving map $\gamma \colon P(u) \to [n] \uplus [m]$, a gate $g_{u,\gamma}$ that is the root of a subformula $C_{u,\gamma}$. This subformula is defined as follows. Let $E(u) \subseteq E(F)$ denote the set of all edges $uv$ in $F$, for a $v \in P(u)$. Assume that $u$ is in the left side of the bipartition of $F$; in the other case, replace $h \colon \{u\} \to [n]$ with $h \colon \{u\} \to [m]$, and $x_{h(u)\gamma(w)}$ with $x_{\gamma(w)h(u)}$ in the expression below.
    \[
    C_{u,\gamma} \coloneqq \sum_{h \colon \{u\} \to [n]} \prod_{uw \in E(u)} x_{h(u)\gamma(w)} \cdot \prod_{v \in uE(T)} C_{v,\gamma \cup h}.  
    \]
    To enforce rigidity of the circuit, we apply the following trick. Fix some bijection $\ell \colon uE(T) \to [|uE(T)|]$.
    In the actual circuit, we multiply each $C_{v,\gamma \cup h}$ in the product $\prod_{v \in uE(T)} C_{v,\gamma \cup h}$ with $1^{\ell(v)}$. This does not change the semantics but it ensures that no circuit automorphism can permute the subcircuits appearing in this product. This removes potential non-trivial circuit automorphisms that move internal gates while keeping all input gates fixed.
    
    Note that $C$ thus defined is indeed a formula because every subformula $C_{v,\gamma \cup h}$ has a unique parent in $C$. This is, for the parent $u$ of $v$ in $T$, the product gate in $C_{u,\gamma}$ for the particular choice of $h$ in the sum in $C_{u,\gamma}$. 
    \begin{claim}
    For every $u \in V(T)$ and every $\gamma \colon P(u) \to [n] \uplus [m]$,
    the subformula
    $C_{u,\gamma}$ is a rigid $\StabP(\gamma(P(u)))$-symmetric formula.
    \end{claim}
    \begin{claimproof}
    This is shown by induction on the structure of $T$. If $u$ is a leaf, then the claim is clearly true for $C_{u,\gamma}$. 
    Now suppose $u$ is a non-leaf node of $T$. By induction, each $C_{v,\gamma \cup h}$ appearing in $C_{u,\gamma}$ is $\StabP(\gamma(P(u)) \cup \{h(u)\})$-symmetric. Likewise, $\prod_{uw \in E(u)} x_{h(u)\gamma(w)}$ is $\StabP(\gamma(P(u)) \cup \{h(u)\})$-symmetric. Hence, the sum over all $h \colon \{u\} \to [n]$ is $\StabP(\gamma(P(u))$-symmetric because its summands are symmetric to each other. Also by induction, each $C_{v,\gamma \cup h}$ is rigid, and this is still true for $C_{u,\gamma}$: By our circuit construction, every non-trivial automorphism of $C_{u,\gamma}$ has to permute summands of the outer sum. But then it moves some input gate $x_{h(u)\gamma(w)}$ to another input gate $x_{h'(u)\gamma(w)}$ with $h'(u) \neq h(u)$. Therefore, $C_{u,\gamma}$ is rigid.
    \end{claimproof}
    The claim shows that $\gamma(P(u))$ is a support for the sum and product gates appearing in $g_{u,\gamma}$. Thus, the support size of $C$ is indeed at most $d$, since $|P(u)| \leq d$.
    For $u$ being the root of $T$, $P(u) = \emptyset$, so by the claim, the total circuit $C$ is $\Sym_n \times \Sym_n$-symmetric, as desired.
    It is clear by definition that it computes $\hom_{F,n,m}$, see also the proof of \cite[Theorem~11]{KomarathPR23}.
    
    It remains to estimate the size of $C$. By induction we can see that $|C_{u,\gamma}| \leq (|V(F)| \cdot |E(F)| \cdot (n+m))^c$, where $c$ is the height of the subtree of $T$ rooted at $u$: Indeed, if $u$ is a leaf, then $|C_{u,\gamma}| \leq |E(F)| \cdot (n+m)$. If $u$ is an internal node with a height $c$ subtree, then, using the induction hypothesis, 
    \begin{align*}
        |C_{u,\gamma}| &\leq (n+m) \cdot |V(F)| \cdot |E(F)| \cdot (|V(F)| \cdot |E(F)| \cdot (n+m))^{c-1} \\
        &= (|V(F)| \cdot |E(F)| \cdot (n+m))^{c}
    \end{align*}
    Here, the factor $|V(F)|$ is an upper bound on $|uE(T)|$, the number of subcircuits $C_{v,\gamma \cup h}$, and the factor $|E(F)|$ is an upper bound on the size of the product $\prod_{uw \in E(u)} x_{h(u)\gamma(w)}$.
    \end{proof}

\begin{corollary}
\label{cor:linCombTreedepthHomCount}
Fix $d \in \bbN$.
Let $(p_{n,m})_{n,m \in \bbN}$ be a family of polynomials such that each $p_{n,m}$ is a linear combination of homomorphism polynomials $\hom_{F_i,n,m}$, where each bipartite multigraph $F_i$ has treedepth at most $d$. 
Then each $p_{n,m}$ admits a $\Sym_n \times \Sym_m$ symmetric rigid circuit of orbit size at most $(n+m)^d$ and depth at most $d$.
If additionally, the number of patterns $F_i$ in each $p_{n,m}$, and the sizes $|F_i|$ are polynomial in $(n+m)$, then the circuit size is polynomial as well.
\end{corollary}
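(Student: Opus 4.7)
The plan is to invoke \cref{lem:singleTreedepthHomCount} on each individual homomorphism polynomial in the linear combination and glue the resulting formulas together at a common root. Write $p_{n,m} = \sum_i \alpha_i \hom_{F_i,n,m}$ with each $\td(F_i) \leq d$. For each $i$, \cref{lem:singleTreedepthHomCount} produces a $\Sym_n \times \Sym_m$-symmetric rigid formula $C_i$ computing $\hom_{F_i,n,m}$ with $\maxsup(C_i) \leq d$, depth at most $d$, and size polynomial in $n+m$ whenever $|F_i|$ is. I would assemble a circuit $C$ by taking the union of the $C_i$ (identifying input gates that carry the same label, so that input gates remain unique in $C$) and adding a root sum gate computing $\sum_i \alpha_i \cdot C_i$. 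Since the root of each $C_i$ is already a sum gate (by inspection of the recursive construction in \cref{lem:singleTreedepthHomCount}), the outer sum can be folded into the roots of the $C_i$ by absorbing each scalar $\alpha_i$ into the product summands, so that the total depth remains at most $d$.

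The resulting $C$ is $\Sym_n \times \Sym_m$-symmetric because each $C_i$ is, the symmetric actions on shared input gates agree, and the root gate is fixed as $p_{n,m}$ is matrix-symmetric. Every internal gate of $C$ lies in some $C_i$ and inherits its support of size at most $d$, while the new root has empty support, so $\maxsup(C) \leq d$. The orbit bound then follows from a direct count: for any gate $g$ with $|\sup(g)| \leq d$, one has $\StabP(\sup(g)) \leq \Stab(g)$, and hence the orbit of $g$ has size at most
\[
    |\Sym_n \times \Sym_m| / |\StabP(\sup(g))| \;\leq\; n^{|\sup(g) \cap [n]|} \cdot m^{|\sup(g) \cap [m]|} \;\leq\; (n+m)^d.
\]

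The main obstacle is rigidity: although each $C_i$ is rigid, the combined circuit $C$ may a priori admit non-trivial automorphisms that swap isomorphic subformulas $C_i \cong C_j$ for $i \neq j$. To handle this, I would apply the rigidification \cref{lem:rigidification} to $C$, which returns an equivalent $\Sym_n \times \Sym_m$-symmetric rigid circuit of no larger size; inspecting its proof shows that the construction proceeds layer by layer and only merges gates within a common orbit, so depth is preserved and supports can only shrink, keeping $\maxsup \leq d$ and the orbit bound intact. Finally, the size claim under the polynomial hypothesis is immediate: each $|C_i|$ is polynomial in $n+m$ (with $d$ constant) by \cref{lem:singleTreedepthHomCount}, and summing polynomially many $C_i$ and adding a single root gate keeps $|C|$ polynomial in $n+m$.
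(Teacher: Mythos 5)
Your proof is correct and follows the paper's construction in all essential respects: invoke \cref{lem:singleTreedepthHomCount} on each pattern, take a weighted sum, and ensure rigidity. The one genuine divergence is the rigidity step. The paper enforces rigidity by a local trick: multiply each $\hom_{F_i,n,m}$ by $1^{\ell(i)}$ for pairwise distinct $\ell(i)$, which directly breaks any circuit automorphism that would swap two subformulas $C_i$, $C_j$ and makes the combined circuit rigid on the nose. You instead apply the black-box rigidification of \cref{lem:rigidification} and then argue by ``inspection'' that depth is preserved and supports can only shrink. That argument is defensible (rigidification proceeds layer by layer, and merging gates in a common $\Stab_{\Aut(C)}(\Xx)$-orbit can only enlarge stabilisers, hence shrink supports), but note that the statement of \cref{lem:rigidification} itself only guarantees $\lVert C'\rVert \leq \lVert C\rVert$; the depth and support preservation are properties of its proof that you must re-establish, whereas the paper's $1^{\ell(i)}$ device avoids this extra verification entirely. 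Two smaller points in your favour: you explicitly fold the outer sum into the roots of the $C_i$ (absorbing the scalars $\alpha_i$ into the child product gates) to keep the depth at $d$ rather than $d+1$, a step the paper glosses over; and you give a direct count $n^{|\sup(g)\cap[n]|}\cdot m^{|\sup(g)\cap[m]|}\leq(n+m)^d$ for the orbit bound, whereas the paper cites \cref{lem:constantSupportOfGates}, which as stated gives only an asymptotic equivalence rather than this explicit estimate.
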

\begin{proof}
Apply \cref{lem:singleTreedepthHomCount} to each individual homomorphism polynomial in the linear combination $p_{n,m}$, yielding a rigid symmetric formula with support size and depth~$d$ for each $\hom_{F_i,n,m}$. Then the formula $C$ for $p_{n,m}$ is the weighted sum of these formulas. To enforce rigidity of $C$, we can simply multiply each $\hom_{F_i,n,m}$ with $1^{\ell(i)}$, for a different $\ell(i) \in \bbN$ for each $i$.
By \cref{lem:constantSupportOfGates}, each of the formulas for $\hom_{F_i,n,m}$ has orbit size at most $(n+m)^d$. By construction, all automorphisms of $C$ must stabilise the respective formulas for the $\hom_{F_i,n,m}$, so the orbit size of $C$ is not greater than the orbit size of the formulas for the $\hom_{F_i,n,m}$. The second part of the statement follows then directly from the circuit size bound in \cref{lem:singleTreedepthHomCount}.
\end{proof}

\subsubsection{\texorpdfstring{$\symVF$}{Symmetric VF} computes bounded-treedepth homomorphism polynomials}
    \label{sec:symVF-to-hompoly}

Let $(C_{n,m})_{n,m \in \bbN}$ be a family of matrix-symmetric formulas of orbit size at most $(n+m)^d$, where $d \in \bbN$ is some constant.
Our goal is to show that the polynomials computed by the $C_{n,m}$ are linear combinations of homomorphism polynomials for patterns of constant treedepth. 
By \cref{thm:symVP-char}, the treewidth of the patterns is constant. 
In order to also bound the treedepth, we introduce the notion of \emph{support depth} of symmetric circuits, and prove two results: Firstly, symmetric formulas of polynomial orbit size must have constant support depth. 
Secondly, constant support depth of the circuits also implies constant treedepth of the patterns whose homomorphism polynomials are computed by them. 

\paragraph{Bounding the support depth of symmetric formulas.}
Let $C$ be a $\Sym_n \times \Sym_m$-symmetric formula. Call a gate $g$ in $C$ \emph{large} if there exist at least $\min\{n,m\}/2$ distinct children of $g$ that are all in the same $\Stab(g)$-orbit. 
Every child of $g$ whose $\Stab(g)$-orbit has size at least $\min\{n,m\}/2$ is called a \emph{central} child of $g$.
\begin{lemma}
\label{lem:largeGatesInFormulas}
Let $C$ be a $\Sym_n \times \Sym_m$-symmetric formula. Assume that there is a path from the root to an input gate that passes through $d$ large gates, where it always proceeds with one of its central children. Then $\maxorb(C) \geq \frac{1}{2^{d}} \cdot (\min\{n,m\})^d$.
\end{lemma}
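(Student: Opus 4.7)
The plan is to exhibit a single gate whose $\Sym_n \times \Sym_m$\nobreakdash-orbit already has size at least $(\mu/2)^d$, where $\mu \coloneqq \min\{n,m\}$. Let $h_0, \dots, h_{d-1}$ denote the large gates along the given path in order from root to leaf, and let $h_i'$ be the central child of $h_i$ through which the path proceeds.

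I would build a descending chain of subgroups of $\Sym_n \times \Sym_m$ alternating between the large gates and their chosen central children. The formula structure is the key ingredient: because the internal gates of $C$ form a tree in which every internal non-root gate has a unique parent, whenever $h_i'$ is internal we obtain $\Stab(h_i') \leq \Stab(h_i)$, and since $h_{i+1}$ is itself internal (large gates necessarily have children) and is a descendant of $h_i'$, we moreover obtain $\Stab(h_{i+1}) \leq \Stab(h_i')$ for every $i < d-1$. Centrality of $h_i'$ among the children of $h_i$, combined with the orbit--stabiliser identity, forces $[\Stab(h_i) : \Stab(h_i) \cap \Stab(h_i')] \geq \mu/2$. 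Assembling these observations into the chain
\[
\Sym_n \times \Sym_m \;=\; \Stab(h_0) \;\geq\; \Stab(h_0') \;\geq\; \Stab(h_1) \;\geq\; \cdots \;\geq\; \Stab(h_{d-1}) \;\geq\; \Stab(h_{d-1}) \cap \Stab(h_{d-1}')
\]
and invoking multiplicativity of subgroup indices yields $[\Sym_n \times \Sym_m : \Stab(h_{d-1}) \cap \Stab(h_{d-1}')] \geq (\mu/2)^d$.

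To deduce the claimed lower bound on $\maxorb(C)$, I would read the inequality off at the appropriate gate. When $h_{d-1}'$ is itself internal --- which happens whenever the path visits gates strictly between $h_{d-1}$ and the terminal input --- the unique-parent property gives $\Stab(h_{d-1}) \cap \Stab(h_{d-1}') = \Stab(h_{d-1}')$, so the orbit of $h_{d-1}'$ already has size $\geq \mu^d/2^d$ and the lemma is immediate. The main obstacle is the edge case in which $h_{d-1}'$ is the terminal input gate of the path, because input gates may possess several parents and so $\Stab(h_{d-1}')$ need not sit inside $\Stab(h_{d-1})$; the chain then controls a pair-orbit rather than a single-gate orbit. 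I would resolve this case through the equivalent swap perspective underlying the chain: each of the $(\mu/2)^d$ independent choices of an alternative central child at the large gates $h_0, \dots, h_{d-1}$ is realised by an element of $\Sym_n \times \Sym_m$ via orbit--stabiliser, and because $C$ is a formula, the subcircuits rooted at distinct children of a common gate are vertex-disjoint on internal gates. Tracking a suitably chosen deep gate across these alternative paths then produces $(\mu/2)^d$ pairwise distinct copies lying in a single $\Sym_n \times \Sym_m$\nobreakdash-orbit, recovering the bound in the remaining case as well.
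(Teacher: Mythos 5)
Your chain-of-subgroups argument is a genuinely different route from the paper, which proceeds by induction on $d$: in the inductive step the paper exhibits $\geq \min\{n,m\}/2$ pairwise disjoint subsets, each of size $\geq (\min\{n,m\}/2)^d$, inside the $\Stab(g_1)$\nobreakdash-orbit of a single gate, with disjointness coming from the fact that subformulas rooted at distinct translates of the central child are vertex-disjoint. Your version instead multiplies subgroup indices $[\Stab(h_i) : \Stab(h_i) \cap \Stab(h_i')] \geq \min\{n,m\}/2$ along the path, which is cleaner and more global; when $h_{d-1}'$ is internal, the chain directly bounds a single-gate orbit and the argument is complete.

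The case where $h_{d-1}'$ is the terminal input gate is, however, not resolved, and this is a genuine gap. Your chain only controls $[\Sym_n \times \Sym_m : \Stab(h_{d-1}) \cap \Stab(h_{d-1}')]$, which is the orbit size of the ordered \emph{pair} $(h_{d-1}, h_{d-1}')$, not of any single gate. You then say you would "track a suitably chosen deep gate," but you do not name one, and the natural candidates fail: tracking $h_{d-1}$ only yields $(\min\{n,m\}/2)^{d-1}$ from the truncated chain (one factor short), while tracking $h_{d-1}'$ does not give $(\min\{n,m\}/2)^d$ pairwise distinct gates --- distinct cosets of $\Stab(h_{d-1}) \cap \Stab(h_{d-1}')$ yield distinct pairs, but their second coordinates (the images of the input gate) may coincide because input gates are shared between subformulas. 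The vertex-disjointness you invoke at the end of the proposal applies only to internal gates, i.e.\ exactly the property that fails here. This case needs a concrete construction rather than an appeal to the swap perspective; for what it is worth, the paper's own inductive proof also treats this edge case somewhat implicitly, since the gate it exhibits in the base case is a child of a large gate and may itself be an input gate.
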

\begin{proof}
We show the following claim by induction:
Let $P$ be a path starting in a large gate $g$ and ending in an input gate that passes through $d$ large gates in total, and always proceeds with a central child of a large gate. Then the subformula rooted at $g$ contains a gate whose $\Stab(g)$-orbit has size at least $\frac{1}{2^{d}} \cdot (\min\{n,m \})^d$.

For $d=1$, we do not need to make use of $P$. Let $g$ be the large gate that $P$ starts in. Then $g$ has $\geq \min\{n,m\}/2$ children that are in the same $\Stab(g)$-orbit, so there is a gate with $\Stab(g)$-orbit size at least $\frac{1}{2} \cdot \min\{n,m\}$ in the subformula rooted at $g$.

Now consider the case $d+1$ and let $g_1$ be the large gate that $P$ starts in. Let $g_2$ be the next large gate on the path. By the induction hypothesis, the subformula rooted at $g_2$ contains a gate, say, $g^*$, whose $\Stab(g_2)$-orbit has size at least $\frac{1}{2^d} \cdot (\min\{n,m \})^d$. Because $P$ proceeds from $g_1$ with a central child $h$, there are at least $n/2$ children $h'$ of $g_1$ such that the subformula rooted at $h'$ is $\Stab(g_1)$-symmetric to the one rooted at $h$. 
Consider an arbitrary such $h' = \pi_{h'}(h)$, for a $\pi_{h'} \in \Sym_n \times \Sym_m$. Then in the subcircuit rooted at the large gate $\pi_{h'}(g_2)$, there appears the gate $\pi_{h'}(g^*)$, and its $\Stab(\pi_{h'}(g_2))$-orbit has size at least $\frac{1}{2^d} \cdot (\min\{n,m \})^d$.
So, with each $h' = \pi_{h'}(h) \in \Orb_{\Stab(g_1)}(h)$, we can associate a different gate $\pi_{h'}(g^*)$ with large orbit in the subformula rooted at $h'$. Since all these gates are in the same $\Stab(g_1)$-orbit, that orbit has size at least $\frac{1}{2} \cdot \min\{n,m \} \cdot \frac{1}{2^d} \cdot (\min\{n,m \})^d = \frac{1}{2^{d+1}} \cdot (\min\{n,m \})^{d+1}$. This finishes the inductive step.
\end{proof}

The above lemma turns a lower bound on the number of large gates on a path into an orbit size lower bound for symmetric formulas. It remains to relate this to the supports of the gates.

In a rigid $\Sym_n \times \Sym_m$-symmetric circuit $C$, we call a gate $g$ \emph{support-changing} if $g$ has a child $h$ such that $\sup(h) \setminus \sup(g) \neq \emptyset$.
Every such child $h$ is then called a \emph{support-changing child} of $g$. We call a path from the root of $C$ to an input gate \emph{central} if, for every support-changing gate $g \in P$, the successor of $g$ in $P$ is a support-changing child of $g$.

\begin{definition}[Support depth]
\label{def:supportDepth}
Let $C$ be a $\Sym_n \times \Sym_m$-symmetric circuit.
Let $\Pp$ denote the set of all paths from the root of $C$ to an input gate.
The \emph{support depth} $\supDepth(P)$ of a path $P \in \Pp$ is the number of gates $g \in P$ such that $g$ is support-changing and $P$ contains a support-changing child of $g$.
The \emph{support depth} of a $\Sym_n \times \Sym_m$-symmetric circuit $C$ is 
\[
\supDepth(C) \coloneqq \max_{P \in \Pp} \supDepth(P). 
\]
\end{definition}

\begin{lemma}
\label{lem:supportChangingGatesAreLarge}
Let $C$ be a rigid $\Sym_n \times \Sym_m$-symmetric formula such that $\maxsup(C) \leq \min\{n,m\}/2$. 
Let $g \in V(C)$ be a support-changing gate. 
Then $g$ is large and every support-changing child of $g$ is central.
\end{lemma}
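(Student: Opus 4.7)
The plan is to show that the $\Stab(g)$-orbit of $h$ has size at least $\min\{n,m\}/2$, which simultaneously certifies both claims: $h$ is central by definition, and $g$ is large because every $\pi \in \Stab(g)$ permutes the children of $g$, so $\Orb_{\Stab(g)}(h)$ consists of children of $g$ and thus exhibits at least $\min\{n,m\}/2$ children of $g$ lying in a single orbit.

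I would begin by fixing $i \in \sup(h) \setminus \sup(g)$ and, without loss of generality, assume $i \in [n]$ (the case $i \in [m]$ is entirely symmetric and yields the bound $m/2$). Since $\StabP(\sup(g)) \leq \Stab(g)$ by definition of the minimal support, and $\StabP(\sup(g))$ contains the full symmetric group on $[n] \setminus \sup(g)_1$, where $\sup(g)_1 \coloneqq \sup(g) \cap [n]$, there are many automorphisms in $\Stab(g)$ that fix $\sup(g)$ pointwise but move $i$ arbitrarily within $[n] \setminus \sup(g)_1$. The essential fact I would invoke is the naturality of minimal supports under the group action: for every $\pi \in \Sym_n \times \Sym_m$, one has $\sup(\pi(h)) = \pi(\sup(h))$; this follows from the conjugation identity $\StabP(\pi(S)) = \pi \StabP(S) \pi^{-1}$ combined with the uniqueness of the minimal support.

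Next I would count distinct orbit elements through their supports. Writing $a \coloneqq |\sup(g)_1|$ and $k \coloneqq |(\sup(h) \setminus \sup(g)) \cap [n]| \geq 1$, as $\pi$ ranges over $\StabP(\sup(g))$ the intersection $\pi(\sup(h)) \cap \sup(g)$ stays equal to $\sup(h) \cap \sup(g)$, whereas $\pi((\sup(h) \setminus \sup(g)) \cap [n])$ can be made to coincide with any $k$-subset of $[n] \setminus \sup(g)_1$. Distinct such subsets produce distinct sets $\sup(\pi(h))$, and distinct minimal supports force distinct gates $\pi(h)$, whence
\[
    |\Orb_{\Stab(g)}(h)| \;\geq\; \binom{n-a}{k}.
\]
The hypothesis $a \leq \maxsup(C) \leq \min\{n,m\}/2 \leq n/2$ gives $n - a \geq n/2$, and for $1 \leq k \leq n - a - 1$ one has $\binom{n-a}{k} \geq n-a$, which delivers the desired bound $|\Orb_{\Stab(g)}(h)| \geq n/2 \geq \min\{n,m\}/2$.

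The main obstacle will be the degenerate case $k = n - a$, in which $\sup(g) \cap [n]$ and $\sup(h) \cap [n]$ partition $[n]$ and the binomial coefficient collapses to $1$. I expect this configuration to be excluded by the uniqueness clause underlying the definition of the minimal support, which insists on the existence of a support meeting each copy of $[n]$ and $[m]$ in fewer than half of its elements and hence forces $a + k < n$; the analogous argument with $[m]$ in place of $[n]$ handles the case $i \in [m]$. Once this boundary subtlety is resolved, both assertions of the lemma follow at once from the orbit-size bound, since $\Orb_{\Stab(g)}(h)$ is then a single $\Stab(g)$-orbit of size $\geq \min\{n,m\}/2$ contained in the set of children of $g$.
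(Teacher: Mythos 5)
Your proposal is correct and follows essentially the same route as the paper: fix an element $i \in \sup(h)\setminus\sup(g)$, count the orbit of $\sup(h)_L\setminus\sup(g)_L$ under $\StabP(\sup(g))\leq\Stab(g)$ as $\binom{n-a}{k}\geq n/2$, and observe that distinct supports force distinct children, so one $\Stab(g)$-orbit of children already has the required size, certifying both that $g$ is large and that $h$ is central. You additionally flag the degenerate case $k=n-a$ (where the binomial coefficient collapses to $1$) that the paper passes over silently; your resolution is correct, since the unique-minimal-support guarantee from \cite{DPS2025} requires the intersection with each copy of $[n]$ (resp.\ $[m]$) to have size strictly less than $n/2$ (resp.\ $m/2$), which yields $a<n/2$, $k<n/2$, and hence $k<n-a$.
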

\begin{proof}
Since $g$ is support-changing, let $h$ be one of its children with $\sup(h) \setminus \sup(g) \neq \emptyset$. 
Assume that $\sup_L(h) \setminus \sup_L(g) \neq \emptyset$. 
The case where $\sup_R(h) \setminus \sup_R(g) \neq \emptyset$ is analogous, where $n$ has to be replaced by $m$.
Since we are assuming that $|\sup(h)_L| \leq n/2$,
we have
\begin{align*}
    |\Orb_{\StabP(\sup(g))}(\sup(h)_L \setminus \sup(g)_L)| 
    &= \left\lvert \left\{ S \subseteq [n] \setminus \sup(g)_L : |S| = |\sup(h)_L \setminus \sup(g)_L|  \right\}\right\rvert  \\
    & \geq n/2.
\end{align*}
Every element of $\Orb_{\StabP(\sup(g))}(\sup(h)_L \setminus \sup(g)_L)$ corresponds to a different child of $g$, so $g$ has at least $n/2$ children that are all in the same $\StabP(\sup(g))$-orbit.
Since $\StabP(\sup(g)) \leq \Stab(g)$, these gates are in particular in the same $\Stab(g)$-orbit. Hence, $g$ fulfils our above definition of a large gate, and every support-changing child is central.  
\end{proof}

\begin{corollary}
\label{cor:supportDepthBounded}
Let $C$ be a $\Sym_n \times \Sym_m$-symmetric formula such that $\maxsup(C) \leq \min\{n,m\}/2$. Let $d \in \bbN$ be a constant.
Suppose that $\supDepth(C) \geq d$. Then $\maxorb(C) \geq \frac{1}{2^d} \cdot (\min\{n,m\})^d$.
\end{corollary}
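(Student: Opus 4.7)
The plan is to observe that the corollary follows by directly chaining the two preceding lemmas with the definition of support depth. Since $\supDepth(C) \geq d$, I would fix a path $P$ from the root to an input gate realising this support depth, i.e.\ a root-to-leaf path containing at least $d$ gates $g$ that are support-changing and whose successor in $P$ is a support-changing child of $g$. Call these witnesses $g_1, \ldots, g_d$ (in order along $P$), and let $h_i$ denote the successor of $g_i$ in $P$, which by choice is a support-changing child of $g_i$.

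Now I would apply \cref{lem:supportChangingGatesAreLarge} to each pair $(g_i, h_i)$: the hypothesis $\maxsup(C) \leq \min\{n,m\}/2$ is exactly what is assumed in the corollary, so the lemma yields that every $g_i$ is large and every $h_i$ is a central child of $g_i$. Hence $P$ is a path from the root to an input gate that passes through $d$ large gates $g_1, \ldots, g_d$, and at each of them it proceeds with one of its central children.

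This is precisely the hypothesis of \cref{lem:largeGatesInFormulas}, which immediately gives $\maxorb(C) \geq \frac{1}{2^d}(\min\{n,m\})^d$, as required. There is essentially no obstacle here beyond aligning the definitions: the notion of support-changing gate was tailored so that \cref{lem:supportChangingGatesAreLarge} converts support-changing witnesses on $P$ into large-gate witnesses with central successors, which is exactly the combinatorial structure that \cref{lem:largeGatesInFormulas} exploits in its induction on formula structure. No additional machinery (such as rigidification or treedepth bookkeeping) is needed at this step.
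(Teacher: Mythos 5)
Your proof is correct and matches the paper's own argument exactly: take a path $P$ witnessing $\supDepth(C)\geq d$, use \cref{lem:supportChangingGatesAreLarge} (under the hypothesis $\maxsup(C)\leq\min\{n,m\}/2$) to turn each support-changing gate with support-changing successor on $P$ into a large gate with central successor, and then invoke \cref{lem:largeGatesInFormulas}. This is precisely what the paper does; no differences in approach.
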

\begin{proof}
Let $P$ be a path witnessing that $\supDepth(C) \geq d$.
By \cref{lem:supportChangingGatesAreLarge}, $P$ passes through $d$ large gates where it proceeds with a central child. 
Hence $\maxorb(C) \geq \frac{1}{2^d} \cdot (\min\{n,m\})^d$ by \cref{lem:largeGatesInFormulas}.
\end{proof}

\paragraph*{Constant support depth implies constant treedepth of the patterns.}

Now we are going to prove that a symmetric formula $C_{n,m}$ with constant support size and constant support depth always computes a polynomial in $\mathfrak{T}^{k,q}_{n,m}$, for some constants $k$ and $q$. By \cref{lem:treedepth}, this means that all appearing pattern graphs have constant treedepth.

The result follows essentially from the proof of \cite[Theorem~1.1]{DPS2025}, when taking into account that in our setting here, not only the support size but also the support depth of the circuits is bounded. Concretely, in \cite[Lemma~5.21]{DPS2025} it is shown inductively that every gate $g$ in a rigid $\Sym_n \times \Sym_m$-symmetric circuit with support size at most $k$ computes a polynomial in $\mathfrak{T}^{2k}_{n,m}(\vec{\sup}_L(g), \vec{\sup}_R(g))$. Here, $\vec{\sup}_L(g), \vec{\sup}_R(g)$ denote the sets $\sup_L(g), \sup_R(g)$, written as ordered tuples such that the ordering matches the corresponding labels in the pattern graphs. 

We argue that if additionally, the support depth of the circuit is bounded by $d$, then each gate $g$ computes a polynomial in $\mathfrak{T}^{2k,d \cdot k}_{n,m}(\vec{\sup}_L(g), \vec{\sup}_R(g))$. The proof is by induction on the circuit structure, and the inductive step is encapsulated in the following lemma.
For a gate $g$, we denote by $\child(g)$ the set of children, i.e.\ gates that are inputs of $g$ in the circuit.

 \begin{lemma}
 \label{lem:inductiveStepTreedepthProof}
 Let $k\in \bbN$.
 Let $C$ be a rigid $\Sym_{n} \times \Sym_{m}$-symmetric circuit. 
 Let $g \in V(C)$ be a multiplication or summation gate with $|\sup(g)| \leq k$.
 Assume that for every $h \in \child(g)$, the polynomial $p_h$ is in $\mathfrak{T}_{n,m}^{2k,q(h)}(\vec{\sup}_L(h), \vec{\sup}_R(h))$ for some $q(h) \in \bbN$. 
 For each $h \in \child(g)$, let $\delta(h) \coloneqq |\sup(g) \setminus \sup(h)|$. Let $q \coloneqq \max_{h \in \child(g)} q(h)+\delta(h)$.
The polynomial computed at $g$ is in $\mathfrak{T}_{n,m}^{2k,q}(\vec{\sup}_L(g), \vec{\sup}_R(g))$. 
\end{lemma}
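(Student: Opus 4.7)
The plan is to mimic the inductive step of \cite[Lemma~5.21]{DPS2025}, which establishes the analogous treewidth statement, and to layer on top of it a depth bookkeeping argument. First, partition the children of $g$ into $\Stab(g)$-orbits, and choose a representative $h$ for each orbit $O_h$. Since $g$ is matrix-symmetric, its polynomial decomposes as a sum (if $\lambda(g) = +$) or a gluing product (if $\lambda(g) = \times$) of the orbit contributions $\sum_{h' \in O_h} p_{h'}$ or $\prod_{h' \in O_h} p_{h'}$. By rigidity, each orbit is naturally indexed by a coset of $\Stab(h)$ in $\Stab(g)$, so each orbit contribution is of the form $\Sigma_{i_1, J_1} \dots \Sigma_{i_t, J_t}(p_h')$ or $\Pi_{i_1, J_1} \dots \Pi_{i_t, J_t}(p_h')$, respectively, for some auxiliary $p_h'$ obtained from $p_h$ by label alignment.

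The label alignment from $\sup(h)$ to $\sup(g)$ splits into three parts. (i) Labels in $\sup(h) \cap \sup(g)$ are kept as they are. (ii) Labels in $\sup(h) \setminus \sup(g)$ are the ones that will be summed or multiplied out by the $\Sigma_{i, J}$ or $\Pi_{i, J}$ operators above; within $p_h'$ they remain as labels. (iii) Labels in $\sup(g) \setminus \sup(h)$, of which there are exactly $\delta(h)$, are introduced by tensoring $p_h$ with the all-labelled edge-less pattern graph of \cref{ex:hompoly-constant}, which has no edges but contributes $\delta(h)$ isolated labelled vertices to the rooted tree decomposition. Since these $\delta(h)$ vertices must be placed in the root bag of the combined decomposition so that they are available to be glued with other children at $g$, the depth grows by at most $\delta(h)$. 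Concretely, by the treedepth clause of \cref{lem:disjointUnionHomPoly}, this tensoring lands $p_h'$ in $\mathfrak{T}^{2k,\, q(h)+\delta(h)}_{n,m}(\vec{\sup}_L(g), \vec{\sup}_R(g))$.

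To finish, apply the treedepth clauses of \cref{lem:restricted-sums} and \cref{lem:restrictedProd-Pathwidth}: both the restricted sum and the restricted product preserve membership in $\mathfrak{T}^{2k,\,q'}_{n,m}$ without increasing $q'$, so each orbit contribution lies in $\mathfrak{T}^{2k,\, q(h)+\delta(h)}_{n,m}(\vec{\sup}_L(g), \vec{\sup}_R(g))$. Combining the orbit contributions for a summation gate is by linearity of $\mathfrak{T}^{k,q}_{n,m}(\ell,r)$ in its defining linear combination, which preserves width and depth (the depth of a sum is the maximum of the depths of the summands). Combining them for a multiplication gate is a gluing product on a common label set, for which the treedepth clause of \cref{lem:gluingHomPolyMaps} preserves both $2k$ and the maximum of the two depths. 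In either case, the result is in $\mathfrak{T}^{2k,\, q}_{n,m}(\vec{\sup}_L(g), \vec{\sup}_R(g))$ with $q = \max_{h \in \child(g)} (q(h) + \delta(h))$, as required.

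The main technical point to get right is step (iii): one must argue that the $\delta(h)$ new labels from $\sup(g) \setminus \sup(h)$ genuinely sit at the root of the tree decomposition of the combined pattern, rather than being buried somewhere in the interior, since only in the former case does the depth increase by exactly $\delta(h)$ and not more. This is ensured by the convention in the definition of $\calT^{k,q}(\ell,r)$ that all labelled vertices appear together in the root bag, so tensoring by a labels-only graph places the new labels precisely at the root and each subsequent gluing/unlabelling/restricted-product operation only operates strictly below this root.
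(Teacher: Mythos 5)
Your proof follows the same route as the paper's: both adapt \cite[Lemma 5.22]{DPS2025} by additionally tracking tree-decomposition depth, using that tensoring with $\delta(h)$ isolated labelled vertices raises the depth by exactly $\delta(h)$ (\cref{lem:disjointUnionHomPoly}) while restricted sums, restricted products, and gluing (\cref{lem:restricted-sums,lem:restrictedProd-Pathwidth,lem:gluingHomPolyMaps}) preserve it. One small correction: the orbit decomposition of $\child(g)$ should be taken under the pointwise stabiliser $\StabP(\sup(g))$ rather than $\Stab(g)$, since the restricted sum/product operators $\Sigma_{i,J}$ and $\Pi_{i,J}$ range over images in $[n]\setminus\sup_L(g)$ (resp.\ $[m]\setminus\sup_R(g)$) and thus require the acting group to fix $\sup(g)$ pointwise.
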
	

The proof of this lemma is exactly the same as the proof of 
\cite[Lemma 5.22]{DPS2025}. The only difference is that now, we also care about the depth of the tree decompositions, not just the width. The proof of \cite[Lemma 5.22]{DPS2025} relies on the fact that the gluing product preserves the treewidth, which by \cref{lem:gluingHomPolyMaps} is also true for the depth. 

Since the proof of Lemma 5.22 in \cite{DPS2025} spans several pages, we only give a summary here and point out where exactly the depth of the tree decompositions grows.
In the setting of \cref{lem:inductiveStepTreedepthProof}, let $g \in V(C)$ be a gate with $|\sup(g)| \leq k$.

The general case is that $g$ is support-changing, but it is instructive to consider first the easier case where it is not.
If $g$ is not support-changing, then for every $h \in \child(g)$, $\sup(h) \subseteq \sup(g)$. By the assumptions of the lemma, for each $h \in \child(g)$, $p_h$ is in $\mathfrak{T}_{n,m}^{2k,q(h)}(\vec{\sup}_L(h), \vec{\sup}_R(h))$. 

Because $\sup(h) \subseteq \sup(g)$, we can show that $p_h$ is also in 
$$\mathfrak{T}_{n,m}^{2k,q(h)+|\sup(g) \setminus \sup(h)|}(\vec{\sup}_L(g), \vec{\sup}_R(g)).$$ 
The reason is this: For every $\boldsymbol{F}_{n,m}(\vec{\sup}_L(h), \vec{\sup}_R(h))$
appearing in $p_h$, we can add $|\sup(g) \setminus \sup(h)|$ many labelled isolated vertices to $F$, yielding a new pattern graph $F'$. Formally, $F'$ is the result of the disjoint union of $F$ with these isolated vertices (see \cref{lem:disjoint-union}).
Because $|\sup(g)| \leq k$, the width of the resulting decomposition is still at most $2k$ but the depth increases by the number of isolated vertices we have added.
Adding more labels to the pattern graphs in this way does not change the polynomial $p_h$ because it corresponds to multiplying each $\boldsymbol{F}_{n,m}(\vec{\sup}_L(h), \vec{\sup}_R(h))$ with $1 = \boldsymbol{J}_{n,m}(\vec{\sup}_L(g) \setminus \vec{\sup}_L(h), \vec{\sup}_R(g) \setminus \vec{\sup}_R(h))$ from \cref{ex:hompoly-constant} (see also the proof of \cite[Lemma 5.21]{DPS2025}). So by \cref{lem:disjointUnionHomPoly}, $p_h \in \mathfrak{T}_{n,m}^{2k,q(h)+|\sup(g) \setminus \sup(h)|}(\vec{\sup}_L(g), \vec{\sup}_R(g))$.

Since we can do this for every $h \in \child(g)$, the polynomial computed at $g$ is the sum or the product over polynomials $p_h \in \mathfrak{T}_{n,m}^{2k,q(h)+\delta(h)}(\vec{\sup}_L(g), \vec{\sup}_R(g))$, where $\delta(h) = |\sup(g) \setminus \sup(h)|$, for all $h \in \child(g)$.
By definition, $\mathfrak{T}_{n,m}^{k,q} \subseteq \mathfrak{T}_{n,m}^{k,q'}$ for every $q' \geq q$. Hence,
\[
\{p_h \mid h \in \child(g)\} \subseteq \mathfrak{T}_{n,m}^{2k,q},
\]
for $q \coloneqq \max_{h \in \child(g)} q(h)+\delta(h)$.
Since $\mathfrak{T}_{n,m}^{2k,q}(\vec{\sup}_L(g), \vec{\sup}_R(g))$ is closed under sums by definition and closed under products by \cref{lem:gluingHomPolyMaps}, the polynomial computed by $g$ is then also in $\mathfrak{T}_{n,m}^{2k,q}(\vec{\sup}_L(g), \vec{\sup}_R(g))$.

It remains to deal with the case that $g$ is support-changing.
Then we group the children of $g$ into $\Stab_{\StabP(\sup(g))}$-orbits. This is possible because by definition of support, the group $\StabP(\sup(g)) \leq \Sym_n \times \Sym_m$ stabilises the gate $g$ and hence fixes $\child(g)$ setwise. Each orbit is dealt with separately. 
Let $h \in \child(g)$ and $O_h \subseteq \child(g)$ its $\StabP(\sup(g))$-orbit. 
Let $S(h) \coloneqq \bigcap_{h' \in O_h} \sup(h') \subseteq \sup(g)$.
We write $S_L(h) \coloneqq S(h) \cap \sup_L(g)$ and $S_R(h) \coloneqq S(h) \cap \sup_R(g)$.
For any two gates $h', h'' \in O_h$, there is a pair of permutations $(\pi, \sigma) \in \Sym_n \times \Sym_m$ such that $(\pi,\sigma)(h') = h''$. Hence also $\pi(\vec{\sup}_L(h')) = \vec{\sup}_L(h'')$ and $\sigma(\vec{\sup}_R(h')) = \vec{\sup}_R(h'')$.  
So, the supports of the gates in $O_h$ are symmetric to each other and agree on $S(h)$. Let $\ell \coloneqq |\sup_L(h)|$ and $r \coloneqq|\sup_R(h)|$. By symmetry, these support sizes are the same for all gates in $O_h$.
Suppose that $g$ is a product gate. Then, according to the proof of \cite[Lemma 5.21]{DPS2025}, what $g$ computes is
\begin{align}
\prod_{O_h} \prod_{h' \in O_h} p_{h'} &= \prod_{O_h} \prod_{h' \in O_h} \prod_{\boldsymbol{v} \in ([n] \setminus \sup_L(g))^{\ell - |S_L(h)|}} \prod_{\boldsymbol{w} \in ([m] \setminus \sup_R(g))^{r - |S_R(h)|}} \phi_{h'}(\vec{S}_L(h) \boldsymbol{v}, \vec{S}_R(h) \boldsymbol{w}). \label{eq:star}
\end{align}
Here, the product $\prod_{O_h}$ ranges over all $\StabP(\sup(g))$-orbits that $\child(g)$ is partitioned into, and $\phi_{h'}(\vec{S}_L(h) \boldsymbol{v}, \vec{S}_R(h) \boldsymbol{w})$ denotes the homomorphism polynomial map with instantiated images for the labels that is computed by $p_{h'}$. 

By assumption of the lemma we know that $\phi_{h'} \in \mathfrak{T}_{n,m}^{2k,q(h)}(\ell, r)$.
By the same argument as in the previous case, we can extend each $\phi_{h'}$ to a homomorphism polynomial map with at most $2k$ labels by adding $|\sup(g) \setminus \sup(h')|$ many isolated vertices to the pattern graphs. Then we can view each $p_{h'}$ as a polynomial in $\mathfrak{T}_{n,m}^{2k,q(h)+\delta(h)}(\vec{\sup}_L(g)\vec{\sup}_L(h'), \vec{\sup}_R(g)\vec{\sup}_R(h'))$.
The value $q$ as defined before is an upper bound for the depth, so we can also say $p_{h'} \in \mathfrak{T}_{n,m}^{2k,q}(\vec{\sup}_L(g)\vec{\sup}_L(h'), \vec{\sup}_R(g)\vec{\sup}_R(h'))$.

Now the first assertion of \cref{lem:restrictedProd-Pathwidth} implies that a product of the form as in \cref{eq:star} evaluates to a polynomial in $\mathfrak{T}_{n,m}^{2k,q}(\vec{\sup}_L(g), \vec{\sup}_R(g))$. That is, the big symmetric product has the effect of \enquote{forgetting} the labels outside of $\sup(g)$. 

If $g$ is a summation gate, the argument is similar, using \cref{lem:restricted-sums}. 
This finishes the proof overview of \cref{lem:inductiveStepTreedepthProof}. For details, we refer to the proof of Lemma 5.22 in \cite{DPS2025}, where $\mathfrak{T}_{n,m}^{k}$ just has to be replaced by $\mathfrak{T}_{n,m}^{k,q}$ everywhere.
By applying \cref{lem:inductiveStepTreedepthProof} inductively on the circuit structure, we obtain:
\begin{lemma}
\label{lem:supportDepthImpliesTreedepth}
Let $k, d \in \bbN$.
Let $C$ be a $\Sym_n \times \Sym_m$-symmetric circuit with $\maxsup(C) \leq k$ and $\supDepth(C) \leq d$. Then the polynomial computed by $C$ is in $\mathfrak{T}_{n,m}^{2k,d \cdot k}$.
\end{lemma}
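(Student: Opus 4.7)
The plan is to proceed by structural induction on $C$, applying \cref{lem:inductiveStepTreedepthProof} at each non-input gate and tracking how the depth parameter accumulates from the leaves up to the root. First, I would apply \cref{lem:rigidification} to assume without loss of generality that $C$ is rigid, since rigidification preserves the polynomial, and a quick inspection of its construction---which only merges gates in the same $\Stab_{\Aut(C)}(\calX)$-orbit---shows that neither the support size nor the support depth can increase.

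For each gate $g \in V(C)$, the invariant I maintain is that $p_g \in \mathfrak{T}^{2k,q(g)}_{n,m}(\vec{\sup}_L(g), \vec{\sup}_R(g))$ for some quantity $q(g)$, and the goal is $q(\text{root}) \leq d \cdot k$. The base case covers the input gates: a constant input lies in $\mathfrak{T}^{2k,0}_{n,m}(\emptyset,\emptyset)$ by \cref{ex:hompoly-constant}, while a variable input $x_{v,w}$ lies in $\mathfrak{T}^{2k,2}_{n,m}((v),(w))$ by \cref{ex:hompoly-var}; in both cases one has $q(g) \leq |\sup(g)|$. The inductive step is supplied by \cref{lem:inductiveStepTreedepthProof}, which yields $q(g) \leq \max_{h \in \child(g)}(q(h) + \delta(h))$ with $\delta(h) = |\sup(g) \setminus \sup(h)|$. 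Unfolding this recursion gives
\[
q(\text{root}) \;\leq\; \max_P \Bigl( q(g_t) + \sum_{i=0}^{t-1} \delta(g_{i+1}) \Bigr),
\]
where $P = g_0 g_1 \cdots g_t$ ranges over paths from the root $g_0$ to a leaf $g_t$.

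The main obstacle is to bound this maximum by $d \cdot k$. For this I would use the following indicator-sequence accounting trick. For each $u \in [n] \uplus [m]$ and path $P$, consider the sequence $I_i \coloneqq \mathbf{1}[u \in \sup(g_i)]$. Since $\sup(g_0) = \emptyset$, we have $I_0 = 0$, so the number of $1 \to 0$ transitions in $I$ equals the number of $0 \to 1$ transitions minus $I_t$. Summing over $u$ yields the identity
\[
|\sup(g_t)| + \sum_{i=0}^{t-1} |\sup(g_i) \setminus \sup(g_{i+1})| \;=\; \sum_{i=0}^{t-1} |\sup(g_{i+1}) \setminus \sup(g_i)|.
\]
To finish, I would observe that $\sup(g_{i+1}) \setminus \sup(g_i) \neq \emptyset$ is exactly the condition that $g_{i+1}$ is a support-changing child of $g_i$ (which in particular makes $g_i$ support-changing); by \cref{def:supportDepth} there are at most $\supDepth(P) \leq d$ such indices along $P$, and at each such position $|\sup(g_{i+1}) \setminus \sup(g_i)| \leq |\sup(g_{i+1})| \leq k$, so the right-hand side of the identity is at most $d \cdot k$. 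Combined with the base-case bound $q(g_t) \leq |\sup(g_t)|$, the identity then yields $q(\text{root}) \leq d \cdot k$, as required.
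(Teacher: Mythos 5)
Your proof is correct, and it takes a genuinely different route from the paper's. Both arguments drive the induction through \cref{lem:inductiveStepTreedepthProof} and exploit the same underlying amortisation, but they package it differently. The paper proves a strengthened local invariant $q(g) \leq d(g) \cdot k + |\sup(g)|$, where $d(g)$ is the support depth of the subcircuit rooted at $g$, via a direct induction with a two-case analysis on whether $g$ is support-changing; evaluating at the root (where $d(g) = \supDepth(C) \leq d$ and $\sup(g) = \emptyset$) gives $d \cdot k$. You instead unfold the recursion globally to a maximum over root-to-leaf paths and close it with a telescoping identity on indicator sequences: since $\sup(\text{root}) = \emptyset$, the total support mass lost along a path plus the terminal support size equals the mass gained, each gain event is precisely a support-changing transition contributing at most $k$ to the sum, and there are at most $\supDepth(P) \leq d$ such transitions. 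This makes the charging argument explicit rather than embedding it in the inductive hypothesis, which is arguably more transparent. One small remark: the hypotheses $\maxsup(C) \leq k$ and $\supDepth(C) \leq d$ already presuppose that $C$ is rigid (supports and support depth are only defined for rigid circuits), and rigidification is performed \emph{before} this lemma is invoked in the proof of \cref{thm:characterisationVF}; your opening rigidification step is therefore unnecessary, and the claim that rigidification ``cannot increase'' support size or depth is not really well-posed, since the non-rigid circuit has no supports to compare against. It suffices to read rigidity of $C$ as an implicit part of the hypotheses.
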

\begin{proof}
For every gate $g \in V(C)$, let $d(g)$ denote the support-depth of the subcircuit rooted at $g$. 
By induction on the circuit structure, we show for every $g \in V(C)$: The polynomial $p_g$ computed by $g$ is in $\mathfrak{T}_{n,m}^{2k,d(g) \cdot k+ \sup(g)}(\vec{\sup}_L(g), \vec{\sup}_R(g))$, where $d(g)$ is the support-depth of the subcircuit rooted at $g$. 
To begin with, for every input gate $g$, $p_g = x_{ij}$ for some $i \in [n], j \in [m]$. This is in $\mathfrak{T}_{n,m}^{1,2}(i, j)$, and we have $d(g) = 0, |\sup(g)| = 2, k \geq 2$, so the statement holds for $g$.

Now let $g \in V(C)$ be an internal gate and assume the statement holds for all its children. By \cref{lem:inductiveStepTreedepthProof}, $p_g \in \mathfrak{T}_{n,m}^{2k,q}(\vec{\sup}_L(g), \vec{\sup}_R(g))$, 
where $q = \max_{h \in \child(g)} q(h)+\delta(h)$ for $\delta(h) = |\sup(g) \setminus \sup(h)|$. 

By the inductive hypothesis, $q(h) \leq d(h) \cdot k + |\sup(h)|$. 
Suppose $g$ is not support-changing. Then $\sup(h) \subseteq \sup(g)$ for each $h \in \child(g)$, and we have $d(g) = \max_{h \in \child(g)} d(h)$. Therefore, $q \leq d(g) \cdot k + |\sup(g)|$. This finishes the inductive step in case that $g$ is not support-changing.

If $g$ is support-changing, then $d(g) \geq d(h)+1$ for every support-changing child $h$. Recall that $h$ is a support-changing child if $\sup(h) \setminus \sup(g) \neq \emptyset$.
We have for every support-changing $h \in \child(g)$:
\begin{align*}
d(g) \cdot k + |\sup(g)| &\geq d(h) \cdot k + k + |\sup(g)|\\
&\geq d(h) \cdot k + |\sup(h)|+|\sup(g)|\\
&\geq q(h)+\delta(h).
\end{align*}
For every $h \in \child(g)$ that is not support-changing, we have $d(g) \geq d(h)$ and $\sup(h) \subseteq \sup(g)$.
Then $q(h)+\delta(h) \leq d(g) \cdot k + |\sup(g)|$ by the same reasoning as in the case where $g$ is not support-changing.

In total, as desired, $d(g) \cdot k + |\sup(g)|$ is an upper bound for $q = \max_{h \in \child(g)} q(h)+\delta(h)$. This finishes the inductive step.
\end{proof}

    This completes the preparations for the proof of \cref{thm:characterisationVF}.
    \mainThmSymVF*
    \begin{proof}
The backward direction was proved in \cref{cor:linCombTreedepthHomCount}.
        For the other direction, let $(C_{n})_{n \in \bbN}$ be a family of $\Sym_n \times \Sym_n$-symmetric formulas of orbit size at most $O(n^d)$, where $d \in \bbN$ is some constant (note that the theorem is stated for the case $n = m$). 
        By \cref{lem:rigidification}, we may assume that these symmetric formulas are rigid (but possibly involve multiedges). The rigidity is needed to obtain a well-defined notion of support. By \cref{lem:constantSupportOfGates}, there exists a constant $k \in \bbN$ such that $\maxsup(C_{n}) \leq k$ for all $n \in \bbN$. 
        Then by \cref{cor:supportDepthBounded},
        $\supDepth(C_{n}) \leq d$. 
        Hence by \cref{lem:supportDepthImpliesTreedepth}, $C_{n}$ computes a polynomial $p \in \mathfrak{T}_{n,n}^{2k,d \cdot k}$. By \cref{lem:treedepth}, $p$ is a linear combination of homomorphism polynomials of patterns of treedepth at most $d \cdot k$, which is constant. 
    \end{proof}
    
    \cref{thm:characterisationVF} is the only one of our main results where it is essential for the proof that the theorem is stated for $\Sym_n \times \Sym_n$-symmetric polynomials rather than more general $\Sym_n \times \Sym_m$-symmetric ones. The result could also be adapted to the latter setting but then, the growth ratio of $n$ and $m$ would play a role: For example if $m$ grows very slowly in $n$, then the support depth of a formula with respect to $\sup_R$ may be much larger than the support depth with respect to $\sup_L$. To avoid these intricacies, we only consider the case $m = n$ in this theorem.

    \subsection{Matrix-symmetric skew circuits (\texorpdfstring{\symVBP}{Symmetric VBP})}

    We start with the upper bound, showing that any homomorphism polynomial of bounded pathwidth indeed admits a matrix-symmetric skew circuit of polynomial size. The circuit construction is analogous to the one in \cite[Lemma 5.23]{DPS2025} for treewidth.

    \begin{lemma}[Homomorphism polynomials in $\symVBP$]\label{lem:pathwidth-small-circuit}
        Let $k, n, m \in \mathbb{N}$.
        Let $F$ be a bipartite multigraph of pathwidth less than $k$.
        Then $\hom_{F, n,m}$ admits a matrix-symmetric skew circuit of size polynomial in $\lVert F \rVert$, $n$, $m$, and orbit size at most $(n+m)^k$.
    \end{lemma}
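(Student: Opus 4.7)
The plan is to mimic the proof of \cite[Lemma~5.23]{DPS2025} for tree decompositions, adapted to the path-decomposition setting in order to obtain a skew rather than general circuit. First, I would fix a nice path decomposition $B_1, \dots, B_s$ of $F$ of width less than $k$ whose length $s$ is linear in $\lVert F \rVert$, whose consecutive bags differ by one of the three standard operations (introduce-vertex, introduce-edge, or forget-vertex), and whose first and last bags are empty. For each $i \in [s]$ and each bipartition-respecting injection $\alpha \colon B_i \to [n] \uplus [m]$, I would create a gate $g_{i, \alpha}$ that computes the weighted homomorphism count extending $\alpha$ over the subgraph of $F$ induced by the vertices appearing in $B_1 \cup \dots \cup B_i$ together with all edges both of whose endpoints have already been introduced.

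The transitions can be encoded as follows. An introduce-vertex step at $i \to i+1$ merely identifies $g_{i+1, \alpha}$ with $g_{i, \alpha|_{B_i}}$, i.e.\ a rewire without new arithmetic. An introduce-edge step for an edge $uv$ sets $g_{i+1, \alpha} \coloneqq g_{i, \alpha} \cdot x_{\alpha(u), \alpha(v)}$ via a product gate one of whose operands is the unique input gate for the variable $x_{\alpha(u), \alpha(v)}$. A forget-vertex step for $v$ sets $g_{i+1, \alpha} \coloneqq \sum_a g_{i, \alpha \cup \{v \mapsto a\}}$, with $a$ ranging over the appropriate side of the bipartition. The output of the circuit is $g_{s, \emptyset}$, which equals $\hom_{F, n, m}$ by correctness of the standard path-decomposition DP. Because every product gate has exactly one operand that is not an input, the circuit is skew.

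Matrix-symmetry follows by defining the extension of $(\pi, \sigma) \in \Sym_n \times \Sym_m$ to the circuit as the permutation sending $g_{i, \alpha}$ to $g_{i, (\pi \uplus \sigma) \circ \alpha}$ and acting on variable input gates in the canonical way; inspection of the three transition rules above shows that this is indeed a circuit automorphism preserving labels and edge multiplicities. Since each $g_{i, \alpha}$ is parametrised by a bipartition-respecting injection from a set of size at most $k$, its $\Sym_n \times \Sym_m$-orbit has size at most $n^{|B_L|} m^{|B_R|} \leq (n + m)^k$. The number of bags is linear in $\lVert F \rVert$, the number of gates per bag is at most $(n+m)^k$, and the work at each forget step is $O(n+m)$, so the total circuit size is polynomial in $\lVert F \rVert$, $n$, and $m$ for fixed $k$.

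The only genuine subtlety is bookkeeping: one has to verify that the identifications in the introduce-vertex step together with the reuse of the global variable input gates keep the circuit skew and symmetric in the formal sense of \cref{def:skew-ckts,def:symmetric-ckts}, and that the action defined above on indices $\alpha$ really extends to an automorphism of the entire circuit. These checks parallel the treewidth case of \cite[Lemma~5.23]{DPS2025}; the substantive point is that a path decomposition has no branching, so the DP never needs a product gate whose two operands are both internal subcomputations, which is exactly what allows the construction to land in the skew model rather than merely in the general circuit model.
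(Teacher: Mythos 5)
Your construction follows the same blueprint as the paper's: a bottom-up dynamic program over a path decomposition, with one gate per (bag, assignment-of-vertex-images) pair, and the observation that a path decomposition has no branching, so every product gate can be arranged to have at most one internal operand --- which is exactly what places the construction in the skew model. Using a \emph{nice} path decomposition instead of an arbitrary rooted one is a cosmetic variant; the paper instead combines the forget and edge-introduce steps into a single inductive formula and applies \cref{lem:rigidification} at the end.

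There is, however, a genuine error in your indexing: you let $\alpha$ range over bipartition-respecting \emph{injections} $B_i \to [n] \uplus [m]$. But $\hom_{F,n,m}$ sums over all, not just injective, maps $h \colon V(F) \to [n] \uplus [m]$, so the partial assignments must be arbitrary bipartition-respecting maps. As stated, your forget-vertex step $g_{i+1,\alpha} = \sum_a g_{i, \alpha \cup \{v \mapsto a\}}$ is not even well-posed --- for $a$ in the image of $\alpha$ on the relevant side of the bipartition, the gate $g_{i, \alpha \cup \{v \mapsto a\}}$ does not exist. If one repairs well-posedness by restricting the sum to $a$ outside the image, the output gate then computes a hybrid counting only those $h$ that are injective on each bag of the chosen decomposition, which is neither the homomorphism nor the embedding polynomial (for instance, for $K_{2,2}$ some bag must contain two same-side vertices, so the injectivity constraint on it is non-vacuous). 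The remedy is simply to let $\alpha$ range over all bipartition-respecting maps, equivalently over tuples in $[n]^{|B_i \cap A|} \times [m]^{|B_i \cap B|}$ as the paper does. This leaves your skewness, symmetry, circuit-size, and orbit-size arguments intact, since the bound $n^{|B_L|} m^{|B_R|} \leq (n+m)^k$ uses only $|B_i| \leq k$ and does not rely on injectivity.
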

    \begin{proof}
        Let $T$ be a path decomposition of $F$ with root $r$. 
        For $s \in V(T)$, let $T_s$ denote the subtree of $T$ rooted at $s$. Let $A \uplus B = V(F)$ be the bipartition of $F$.
        
        We build the circuit by induction on $T$. For every $s \in V(T)$, let $F^s$ denote the subgraph of $F$ induced by the vertices in $\bigcup_{t \in V(T_s)} \beta(t)$.
        Let $s \in V(T)$, and let $\ell = |\beta(s) \cap A|$, $r = |\beta(s) \cap B|$.
        In our circuit construction, we ensure that for every $\boldsymbol{v} \in [n]^\ell, \boldsymbol{w} \in [m]^r$, there is a gate computing $\boldsymbol{F}^s_{n,m}(\boldsymbol{v},\boldsymbol{w})$.

        In the base case, let $s \in V(T)$ be the leaf of the path. 
        Then $\boldsymbol{F}^s_{n,m}(\boldsymbol{v},\boldsymbol{w})$ is a single monomial and can be represented by a single multiplication gate whose fan-in equals the number of edges in $F^s$.

        In the inductive case, let $s \in V(T)$ be an internal node of the path decomposition and $t \in V(T)$ its child.
        Fix $\boldsymbol{v} \in [n]^\ell$,
 		$\boldsymbol{w} \in [m]^r$, $\boldsymbol{a} \in A^\ell$,
 		$\boldsymbol{b} \in B^r$ such that
 		$\beta(s) = \{a_1, \dots, a_\ell, b_1, \dots, b_r\}$.
        Let $E^s \coloneqq E(F^s) \setminus E(F^t)$. 
        Let $\beta(t) = \{a'_1, \dots, a'_{\ell'}, b'_1, \dots, b'_{r'}\}$ be an enumeration of the labels in $\beta(t)$. 
        For simplicity, assume that the elements of $\beta(s) \cap \beta(t)$ appear at the same indices in the enumerations of $\beta(s), \beta(t)$. Let $I_A(s,t) \coloneqq \{ i \in [\ell] \mid a_i = a'_i \}$ and $I_B(s,t) \coloneqq \{ i \in [r] \mid b_i = b'_i \}$ be the indices of the elements of $\beta(s)$ that appear also in $\beta(t)$.
        Let 
        \[
         I(\boldsymbol{v}, \boldsymbol{w}) \coloneqq \{ (\boldsymbol{v'}, \boldsymbol{w'}) \in [n]^{\ell'} \times [m]^{r'} \mid v'_i = v_i, w'_j = w_j \text{ for all } i \in I_A(s,t), j \in I_B(s,t)   \}. 
        \]
        
        Then we can write:
        \begin{align*}
        \boldsymbol{F}^s_{n,m}(\boldsymbol{v},\boldsymbol{w}) = \left(\sum_{(\boldsymbol{v'}, \boldsymbol{w'}) \in  I(\boldsymbol{v}, \boldsymbol{w})} \boldsymbol{F}^t_{n,m}(\boldsymbol{v'},\boldsymbol{w'}) \right) \cdot \prod_{a_ib_j \in E^s} x_{v_iw_j} 
        \end{align*}
        Here, the sum corresponds to forgetting the labels in $\beta(t) \setminus \beta(s)$.
        By induction, we have a skew circuit for $\boldsymbol{F}^t_{n,m}(\boldsymbol{v'},\boldsymbol{w'})$, and clearly, the above expression for $\boldsymbol{F}^s_{n,m}(\boldsymbol{v},\boldsymbol{w})$ can again be realised using multiplication gates that are skew.

        We apply \cref{lem:rigidification} to the final circuit to make it rigid, which allows us to speak about the supports of gates.
        At any stage, the subcircuit computing $\boldsymbol{F}^s_{n,m}(\boldsymbol{v}, \boldsymbol{w})$ is by construction $\StabP_{\Sym_n \times \Sym_m}(\boldsymbol{v}, \boldsymbol{w})$-symmetric, so $\boldsymbol{v}\boldsymbol{w}$ is a support of the gate computing $\boldsymbol{F}^s_{n,m}(\boldsymbol{v}, \boldsymbol{w})$. Note that $|\boldsymbol{v}\boldsymbol{w}| \leq k$, so by \cref{lem:constantSupportOfGates}, the orbit size of the circuit is bounded by $(n+m)^k$.
 		The overall circuit is $\Sym_n \times \Sym_m$-symmetric,
 		since every $(\pi, \sigma) \in \Sym_n \times \Sym_m$ takes the subcircuit computing $\boldsymbol{F}^s_{n,m}(\boldsymbol{v}, \boldsymbol{w}))$ to the subcircuit computing $\boldsymbol{F}^s_{n,m}(\pi(\boldsymbol{v}), \sigma(\boldsymbol{w})))$.
        The size of the circuit is linear in $|V(T)|+|E(F)|$, so it is polynomial in $\lVert F \rVert$.
    \end{proof}

    For the lower bound, we use the following lemma, whose proof is deferred to the subsequent section. 
    \begin{lemma}[restate=SkewToHompoly, label=lem:skew-to-hompoly]
        Every polynomial computed by a rigid $\Sym_n \times \Sym_n$-symmetric skew circuit $C$ with $k \coloneqq \maxsup(C)$
        is in $\mathfrak{P}_{n,m}^{9k}$.
    \end{lemma}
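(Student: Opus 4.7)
The plan is to adapt the inductive argument of \cite[Lemma~5.21 and~5.22]{DPS2025} — which established the treewidth bound $2k$ for general symmetric circuits — by replacing every appeal to the treewidth operations with the pathwidth analogues developed in \cref{lem:gluingHomPolyMaps,lem:restricted-sums,lem:restrictedProd-Pathwidth}, and by exploiting the skew constraint to prevent any genuine branching. Concretely, I shall show by induction on the gate structure of $C$ that every gate $g$ computes a polynomial $p_g \in \mathfrak{P}^{9k}_{n,m}(\vec{\sup}_L(g), \vec{\sup}_R(g))$; taking $g$ to be the root, which has empty support, then yields the stated conclusion.

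Input gates compute single variables and lie in $\mathfrak{P}^{2}_{n,m}(1,1) \subseteq \mathfrak{P}^{9k}_{n,m}(1,1)$ by \cref{ex:hompoly-var}. Addition gates are immediate from closure under restricted sums (\cref{lem:restricted-sums}), exactly as in the treewidth case. The crucial case is a product gate $g$. Following the strategy of \cite[Lemma~5.21]{DPS2025}, I partition the children of $g$ into $\StabP(\sup(g))$-orbits. Skewness is decisive here: at most one child is a non-input gate, so at most one orbit is a singleton containing that distinguished non-input child $h$, while every other orbit consists of pairwise symmetric input (variable) gates.

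For each input-gate orbit, the product over its members is a symmetric product of variables indexed by the free components outside $\sup(g)$; using disjoint union with isolated labelled vertices (\cref{lem:disjoint-union}) and at most two successive applications of the restricted-product operator (\cref{lem:restrictedProd-Pathwidth}), this product can be realised inside $\mathfrak{P}^{9k}_{n,m}(\vec{\sup}_L(g), \vec{\sup}_R(g))$ — the factor $3^2 = 9$ arising from at most two nested restricted products, one per free index on the left and the right. Iterated gluing of all input-orbit contributions remains inside $\mathfrak{P}^{9k}_{n,m}(\vec{\sup}_L(g), \vec{\sup}_R(g))$ by \cref{lem:repeat-gluing}. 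For the non-input child $h$ (if present), the inductive hypothesis places $p_h \in \mathfrak{P}^{9k}_{n,m}(\vec{\sup}_L(h), \vec{\sup}_R(h))$, and adjoining isolated labels for the missing indices of $\sup(g)\setminus\sup(h)$ via disjoint union keeps the pathwidth at $9k$. A final application of \cref{lem:gluingHomPolyMaps} combines $p_h$ with the input-orbit factor inside $\mathfrak{P}^{9k}_{n,m}(\vec{\sup}_L(g), \vec{\sup}_R(g))$, since the common label set of size $\leq k$ lets the gluing absorb one factor into the other without further increase.

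The main obstacle is the support-changing product case, where $\sup(h)\ne \sup(g)$ and the free indices of $h$ must be quantified out through the outer product. In the treewidth argument of \cite{DPS2025} this is handled by a single restricted product of a compound factor, but here the factor-of-$3$ blow-up in \cref{lem:restrictedProd-Pathwidth} forbids any nesting of restricted products over the recursive factor $p_h$, as this would inflate the bound far beyond $9k$. Skewness is precisely what rules this out: since only one non-input orbit exists and it has size one, $p_h$ itself is never placed inside a $\Pi$-operator, and the factor-of-$3$ increases arise only from restricted products over simple edge polynomials of input variables. This is the structural reason why the pathwidth characterisation of \symVBP works where a literal port of the treewidth argument would fail.
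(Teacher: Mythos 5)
Your proposal follows essentially the same route as the paper's proof: induction on the circuit, base case via \cref{ex:hompoly-var}, addition gates via \cref{lem:restricted-sums}, and for product gates a decomposition of the children into $\StabP(\sup(g))$-orbits where skewness guarantees that the single non-input child sits alone and is never threaded through a restricted product (the paper factors this child off into a separate auxiliary gate $g_2$, which is just a presentational variant of your singleton-orbit view), with the factor $3^2=9$ arising from at most two nested applications of \cref{lem:restrictedProd-Pathwidth} (one per side of the bipartition) exactly as you say. The one place where you should be more careful is the closing gluing step: if one reads \cref{lem:gluingHomPolyMaps} literally, gluing two $\mathfrak{P}^{9k}_{n,m}(\ell,r)$ maps with $\ell+r\le k$ gives $\mathfrak{P}^{9k+\ell+r}_{n,m}$, not $\mathfrak{P}^{9k}_{n,m}$; the claim that "the common label set of size $\leq k$ lets the gluing absorb one factor into the other without further increase" is a hand-wave rather than a consequence of the stated lemma. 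The paper's own proof sketch is equally loose here, and the gap is benign because the input-orbit factor $\phi_{O_h}$ actually has pathwidth much smaller than $9k$ (its constituents are star-like with centre in $\sup(g)$), so the asymmetric form of the gluing bound does give $9k$ — but that observation is needed to make either account rigorous, and you should state it rather than appeal to the common label set.
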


    The characterisation for $\symVBP$ (\cref{thm:sym-vbp-orbit}) follows immediately from the following theorem.

\begin{theorem}[Characterising $\symVBP$]
        A family of $\Sym_n \times \Sym_m$-symmetric polynomials $(p_{n,m})$ is in \symVBP if, and only if, $(p_{n,m})$ can be written as linear combinations of homomorphism polynomials for patterns of bounded pathwidth.
    \end{theorem}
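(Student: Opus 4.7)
The plan is to obtain this characterisation as a direct consequence of the two bridging lemmas already in place, namely \cref{lem:pathwidth-small-circuit} for the construction direction and \cref{lem:skew-to-hompoly} for the structural direction, combined with the general tools from \cref{sec:preliminaries-first-12-pages}.

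For the backward direction, I would argue as follows. Suppose each $p_{n,m}$ can be written as $\sum_i \alpha_i \hom_{F_i,n,m}$ where the patterns $F_i$ have pathwidth less than some fixed constant $k$, and where the number of summands as well as the sizes $\lVert F_i \rVert$ grow polynomially in $n+m$ (this is the natural reading of \enquote{written as a linear combination}; cf.\ the analogous backward direction of \cref{thm:characterisationVF} via \cref{cor:linCombTreedepthHomCount}). Applying \cref{lem:pathwidth-small-circuit} individually to each $\hom_{F_i,n,m}$ yields matrix-symmetric skew circuits of size polynomial in $\lVert F_i \rVert$, $n$, $m$ and orbit size at most $(n+m)^k$. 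Combining these via a weighted summation gate (and assigning distinct padding multiplicities to the summands to destroy spurious automorphisms, as in the proof of \cref{cor:linCombTreedepthHomCount}) produces a single matrix-symmetric skew circuit for $p_{n,m}$ of polynomial size. Skewness is clearly preserved by an outer sum, and by \cref{lem:rigidification} we may rigidify without losing either property, yielding $(p_{n,m}) \in \symVBP$.

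For the forward direction, let $(C_{n,m})$ be a family of matrix-symmetric skew circuits computing $p_{n,m}$ with polynomially bounded orbit size. By \cref{lem:rigidification}, we may replace each $C_{n,m}$ by a rigid matrix-symmetric skew circuit of no larger size; this preserves skewness by the last assertion of that lemma, and the orbit size is not increased. By \cref{lem:constantSupportOfGates}, there exists a constant $k \in \N$ such that $\maxsup(C_{n,m}) \leq k$ for all $n,m$. Now invoking \cref{lem:skew-to-hompoly} gives $p_{n,m} \in \mathfrak{P}^{9k}_{n,m}$, which by definition \eqref{eq:lrcomb-hompoly-pwk} means precisely that $p_{n,m}$ is a linear combination of homomorphism polynomials whose patterns have pathwidth less than $9k$, a constant independent of $n$ and $m$.

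The only genuine work is \cref{lem:skew-to-hompoly}, which I am taking as given; everything else is an assembly of previously established modules. The one small point that deserves care is ensuring that rigidification does not spoil the skew structure, but this is exactly the content of the last claim of \cref{lem:rigidification}, so no additional argument is required at this stage.
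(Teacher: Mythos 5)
Your proof follows the same two-module strategy as the paper — \cref{lem:pathwidth-small-circuit} for the backward direction and \cref{lem:skew-to-hompoly} plus \cref{lem:rigidification} and \cref{lem:constantSupportOfGates} for the forward direction — and the forward direction is essentially identical to the paper's argument.

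The one issue is in the backward direction: you impose the extra hypothesis that the number of summands and the sizes $\lVert F_i \rVert$ grow polynomially in $n+m$, and then conclude a circuit of polynomial \emph{size}. This is not the theorem being proved, and the hypothesis is also not needed. Recall from \cref{def:symClasses} that $\symVBP$ is defined via polynomial \emph{orbit} size, not polynomial total size. \Cref{lem:pathwidth-small-circuit} already guarantees orbit size at most $(n+m)^k$ for each subcircuit independently of $\lVert F_i \rVert$, and after combining the subcircuits via a sum gate with the distinct-multiplicity padding trick and rigidifying, every $\Sym_n\times\Sym_m$-orbit of $C$ sits inside a single subcircuit (the sum, scalar, and padding gates are fixed, and rigidity of the subcircuits pins down the rest), so $\maxorb(C)\leq (n+m)^k$ holds regardless of how many summands there are. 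Your own citation, \cref{cor:linCombTreedepthHomCount}, separates exactly these two claims: the orbit-size bound holds unconditionally, while the additional polynomial-summand hypothesis is invoked only for the \emph{optional} conclusion about total size. Drop the extra hypothesis, argue about $\maxorb$ directly, and the backward direction is complete as stated.
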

    \begin{proof}
For the backward direction, use \cref{lem:pathwidth-small-circuit} on each homomorphism polynomial in the linear combination, and then rigidify it using \cref{lem:rigidification}.
        For the forward direction, let $C_{n,m}$ be a $\Sym_n \times \Sym_m$-symmetric skew circuit with orbit size at most $\Oo((n + m)^d)$, where $d \in \N$ is a fixed constant. By \cref{lem:rigidification}, we assume that $C_{n,m}$ is rigid, which gives us a well-defined notion of support for its gates. By \cref{lem:constantSupportOfGates}, there exists a constant $k \in \N$ such that $\maxsup(C_{n,m}) \leq k$. Then by \cref{lem:skew-to-hompoly}, $p_{n,m} \in \mathfrak{P}_{n,m}^{9k}$.
    \end{proof}

    \subsubsection{\texorpdfstring{$\symVBP$}{Symmetric VBP} computes bounded-pathwidth homomorphism polynomials}
    \label{sec:lin-comb-hom-polys}

    In this section we give the complete proof of~\cref{lem:skew-to-hompoly}, which proves the harder direction of \cref{thm:sym-vbp-orbit}. 
The main ingredient in the proof of~\cref{lem:skew-to-hompoly} is the following lemma, which shows that the polynomial computed at each gate of a matrix-symmetric skew circuit is a linear combination of homomorphism polynomials of bounded pathwidth.

    \begin{lemma}
    \label{lem:closure-hompoly-lrComb}
        Let $k,n,m \in \N$ and $C$ be a rigid $\Sym_n \times \Sym_m$-symmetric skew circuit. Consider a gate $g$ in $C$ which is either an addition or a multiplication gate such that $|\sup(g)| \leq k$. If for every child $h$ of $g$, the polynomial computed by $h$ is in $\mathfrak{P}_{n,m}^{9k}(\vec{\sup}_L(h), \vec{\sup}_R(h))$, then the polynomial computed at $g$ is in $\mathfrak{P}_{n,m}^{9k}(\vec{\sup}_L(g), \vec{\sup}_R(g))$. \end{lemma}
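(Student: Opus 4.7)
The argument follows the structure of the treewidth analogue \cite[Lemma~5.22]{DPS2025}, with every closure operation replaced by its pathwidth counterpart from \cref{lem:disjointUnionHomPoly,lem:gluingHomPolyMaps,lem:restricted-sums,lem:restrictedProd-Pathwidth}. The key simplification provided by the skew hypothesis is this: a multiplication gate in a skew circuit has at most one internal child $h^{*}$. Since $\StabP(\sup(g))$ permutes internal children among themselves, the $\StabP(\sup(g))$-orbit of $h^{*}$ must be the singleton $\{h^{*}\}$; consequently, by uniqueness of the minimum support, $\sup(h^{*}) \subseteq \sup(g)$. Thus among the $\StabP(\sup(g))$-orbits of $\child(g)$, at most one contains an internal gate, and that orbit is a singleton; every other orbit consists entirely of input gates.

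First I would handle the case where $g$ is an addition gate, which is insensitive to the skew condition and mirrors the treewidth version directly. Partition $\child(g)$ into $\StabP(\sup(g))$-orbits, pick a representative $h$ in each orbit $O$, and express $\sum_{h' \in O} p_{h'}$ as an iterated restricted sum of the polynomial map obtained from $p_{h}$ by extending its labels to $\vec{\sup}_L(g), \vec{\sup}_R(g)$ via disjoint union with isolated labelled vertices (\cref{lem:disjointUnionHomPoly}) and re-introducing labels for the positions being summed out. \cref{lem:restricted-sums} keeps the expression in $\mathfrak{P}^{9k}_{n,m}$, and summing over all orbits preserves membership by linearity.

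Next, the multiplication case. Again partition $\child(g)$ into $\StabP(\sup(g))$-orbits. By the skew observation, the singleton orbit $\{h^{*}\}$ contributes $p_{h^{*}}$, which lies in $\mathfrak{P}^{9k}_{n,m}(\vec{\sup}_L(g), \vec{\sup}_R(g))$ after label extension by \cref{lem:disjointUnionHomPoly}. Each remaining orbit $O$ consists of input gates $x_{u,w}$, and $\prod_{h' \in O} p_{h'}$ is a restricted product $\Pi_{i,J}\phi$ of a single-edge polynomial map $\phi$ which, after label extension, lies in $\mathfrak{P}^{k+1}_{n,m}$; by \cref{lem:restrictedProd-Pathwidth} its restricted product lies in $\mathfrak{P}^{3(k+1)}_{n,m}(\vec{\sup}_L(g), \vec{\sup}_R(g))$. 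Combining $p_{h^{*}}$ with each orbit factor by iterated pairwise gluing (\cref{lem:gluingHomPolyMaps}), each gluing yields a bound of
\[
\min\!\bigl\{\max\{9k,\, 3(k+1) + |\sup(g)|\},\, \max\{3(k+1),\, 9k + |\sup(g)|\}\bigr\} \;\leq\; 9k,
\]
so inductively the whole product lies in $\mathfrak{P}^{9k}_{n,m}(\vec{\sup}_L(g), \vec{\sup}_R(g))$.

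The main obstacle is precisely this balancing of constants: iterated gluing adds $|\sup(g)| \leq k$ to the pathwidth at every step, and \cref{lem:restrictedProd-Pathwidth} already inflates the bound by a factor of three. The constant $9k$ is chosen to absorb both contributions via the asymmetric minimum in \cref{lem:gluingHomPolyMaps}. The skew hypothesis is indispensable here: without it, several internal children of $g$ could appear, each bringing a factor of pathwidth up to $9k$, and iterated gluing would then give $\max\{9k,\, 9k + k\} = 10k$, breaking the induction. Since skewness guarantees at most one high-pathwidth factor, with all remaining factors of pathwidth at most $3(k+1)$, the asymmetric bound closes and $p_g \in \mathfrak{P}^{9k}_{n,m}(\vec{\sup}_L(g), \vec{\sup}_R(g))$.
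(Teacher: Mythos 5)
You follow the same route as the paper's proof sketch: partition $\child(g)$ into $\StabP(\sup(g))$-orbits, treat the internal child separately, express each orbit of input gates via restricted products, and glue everything back together at the end. Your explicit observation that skewness forces the internal child $h^{*}$ to form a singleton orbit under $\StabP(\sup(g))$ — and hence $\sup(h^{*})\subseteq\sup(g)$ by minimality of the support — is a genuine clarification: the paper's sketch implicitly assumes this when extending the labels of the internal child to $\sup(g)$ via isolated vertices, but does not actually derive it.

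There is, however, a concrete gap in your orbit-factor analysis. You model $\prod_{h'\in O}p_{h'}$ as a \emph{single} restricted product $\Pi_{i,J}\phi$, giving $\mathfrak{P}^{3(k+1)}$. But when the input gate $x_{u,w}$ has both $u\notin\sup_L(g)$ and $w\notin\sup_R(g)$, its $\StabP(\sup(g))$-orbit ranges over the full Cartesian product $\bigl([n]\setminus\sup_L(g)\bigr)\times\bigl([m]\setminus\sup_R(g)\bigr)$, so the orbit product is a nested double product requiring \emph{two} applications of \cref{lem:restrictedProd-Pathwidth}, one per side of the bipartition. This is exactly the source of the paper's factor $3^{2}\cdot k = 9k$; your $3(k+1)$ undercounts by a factor of three. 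With the corrected $\approx 9k$ bound for orbit factors your gluing calculation no longer closes: by \cref{lem:gluingHomPolyMaps}, gluing two polynomial maps both of pathwidth $\approx 9k$ with the same $|\sup(g)|$ shared labels yields pathwidth $\approx 9k+|\sup(g)|$, and the asymmetric minimum you invoke only helps when one factor is strictly cheaper than the other. Your closing remark that ``skewness guarantees at most one high-pathwidth factor, with all remaining factors of pathwidth at most $3(k+1)$'' is therefore mistaken: orbit factors of input gates whose support is entirely disjoint from $\sup(g)$ reach pathwidth on the order of $9k$ regardless of the skew hypothesis, and the balancing of constants in the final gluing step needs a different argument than the one you give.
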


    The proof of the lemma follows along the lines of \cite[Lemma 5.22]{DPS2025} and spans several pages. We briefly outline the main ideas here and refer the reader to \cite[Section 5.2]{DPS2025} for more details.

    \begin{proof}[Proof sketch]
        We sketch the proof for the case that $g$ is a product gate. The other case is only easier, where \cref{lem:restricted-sums} is used in place of \cref{lem:restrictedProd-Pathwidth}. 
        In a skew circuit, each multiplication gate has at most one child that is an internal (non-input) gate, while the remaining children are input gates. Let $p_g$ denote the polynomial computed at gate $g$. We write $p_g \coloneqq p_{g_1} \cdot p_{g_2}$ to express this computation as a product of two subcomputations, where $g_1$ refers to the internal child (if any), and $g_2$ denotes a new gate we may simply insert that computes the product of the input-gate children of $g$. We handle the product of inputs first.
 
        Let $G := \StabP(\sup(g)) \leq \Sym_n \times \Sym_m$.
        By the definition of support, every $\pi \in G$ extends to a circuit automorphism that stabilises $g$; hence it also fixes $\child(g_2)$ set-wise and partitions $\child(g_2)$ into orbits. 
        Let $\Omega$ be the set of these orbits, and for $h \in \child(g_2)$, let $O_h \in \Omega$ be the $G$-orbit of $h$.
        We can regroup the product at $g_2$ as
        \[
          p_{g_2} \;=\; \prod_{O_h \in \Omega} \; \prod_{h' \in O_h} \; p_{h'}.
        \]
        Since all $h' \in O_h$ are input gates, $\left|\sup_L(h')\right| = \left|\sup_R(h')\right| \leq 1$ for every $h' \in O_h$. In the following we only consider the case where each $h'$ is a variable input gate rather than a constant. Then $\left|\sup_L(h')\right| = \left|\sup_R(h')\right| = 1$.
        As argued for \cref{eq:star} and \cite[Lemma 5.21]{DPS2025} all the gates in an orbit compute the same instantiated homomorphism polynomial map with different labels. 
        For every $O_h \in \Omega$, let $\phi_{h} \in \mathfrak{P}^k_{n,m}(1,1)$ denote this homomorphism polynomial map.
        Let $S(h) \coloneqq \bigcap_{h' \in O_h} \sup(h') \subseteq \sup(g)$ and $S_L(h) \coloneqq S(h) \cap [n], S_R(h) \coloneqq S(h) \cap [m]$. Note that $|S_L(h)|, |S_R(h)| \leq 1$ because $|\sup_L(h)| = |\sup_R(h)| = 1$.
        Then, 
        \begin{equation}
        \label{eq:prodGate-symVBP}
            p_{g_2} \;=\; \prod_{O_h \in \Omega} \; \prod_{v \in ([n] \setminus \sup_L(g))^{1 - \left|S_L(h)\right|}} \; \prod_{w \in ([m] \setminus \sup_L(g))^{1 - \left|S_R(h)\right|}} \phi_h\left(\vec{S}_L(h) v, \vec{S}_R(h) w\right).
        \end{equation}

In the above expression, $\phi_h$ corresponds to a variable, i.e.\ $\phi_h(v,w) = x_{vw}$ for very $v \in [n], w \in [m]$ (see~\cref{ex:hompoly-var}). This is the homomorphism polynomial of an edge with labelled endpoints.    
        We may view this polynomial $x_{vw}$ also as a homomorphism polynomial $\phi'_h(\vec{\sup}_L(g) v, \vec{\sup}_R(g) w) \in \mathfrak{P}^{k}_{n,m}(\vec{\sup}_L(g) v, \vec{\sup}_R(g)w)$ by simply adding $|\sup(g) \setminus \sup(h)|$ many isolated labelled vertices to the labelled edge. This is the same argument as in the proof of \cref{lem:inductiveStepTreedepthProof} above; it uses the fact that the homomorphism polynomial of isolated labelled vertices is $1$, so taking the disjoint union of these vertices and any other pattern graph does not change the homomorphism polynomial (cf.\ \cref{ex:hompoly-constant} and \cref{lem:disjointUnionHomPoly}). 
        With this, we have:
        \[
            p_{g_2} \;=\; \prod_{O_h \in \Omega} \; \prod_{v \in ([n] \setminus \sup_L(g))^{1 - \left|S_L(h)\right|}} \; \prod_{w \in ([m] \setminus \sup_R(g))^{1 - \left|S_R(h)\right|}} \phi'_h\left(\vec{\sup}_L(g) v, \vec{\sup}_R(g) w\right).
        \]
        Note that one or both of the products $\prod_{v \in ([n] \setminus \sup_L(g))^{1 - \left|S_L(h)\right|}}$, $\prod_{w \in ([m] \setminus \sup_R(g))^{1 - \left|S_R(h)\right|}}$ may be empty (e.g.\ in case that $S_L(h) \subseteq \sup_L(g)$).
        Apply \cref{lem:restrictedProd-Pathwidth} to each of these products that are non-empty. With this, we conclude that the product of polynomials computed by gates within each single orbit $O_h$ satisfies
        \[
            \phi_{O_h}(\vec{\sup}_L(g), \vec{\sup}_R(g)) \coloneqq \prod_{h' \in O_h} p_{h'} \;\in\; \mathfrak{P}^{3^{2} \cdot k}_{n,m}(\vec{\sup}_L(g), \vec{\sup}_R(g)).
        \]  
        To complete the argument, we incorporate the internal child $g_1$ of the multiplication gate $g$. By the inductive hypothesis, we have
        $p_{g_1} \in \mathfrak{P}^{9 \cdot k}_{n,m}(\vec{\sup}_L(g_1), \vec{\sup}_R(g_1))$.
        As before, we extend the labelling of $p_{g_1}$ by appending $\left|\sup(g) \setminus \sup(g_1)\right| < k$ isolated labelled vertices. With \cref{lem:disjointUnionHomPoly}, this gives
        \[
            \phi_{g_1} \in \mathfrak{P}^{9 \cdot k}_{n,m}(\vec{\sup}_L(g), \vec{\sup}_R(g)).
        \]
        Therefore, the polynomial computed at gate $g$ can be expressed as
        \[
            p_g \;=\; p_{g_1} \cdot \prod_{O_h \in \Omega} \prod_{h' \in O_h} p_{h'} 
            \;=\; \phi_{g_1}(\vec{\sup}_L(g), \vec{\sup}_R(g)) \cdot 
            \prod_{O_h \in \Omega} \phi_{O_h}(\vec{\sup}_L(g), \vec{\sup}_R(g)).
        \]
        Applying the gluing operation using \cref{lem:gluingHomPolyMaps} yields
        \[
            p_g \in \mathfrak{P}^{9 \cdot k}_{n,m}(\vec{\sup}_L(g), \vec{\sup}_R(g)). \qedhere
        \]
    \end{proof}

    \SkewToHompoly*
    \begin{proof}
        We proceed by induction on the structure of $C_{n,m}$ and show that, for every gate $g$, the polynomial computed at $g$ is in $\mathfrak{P}_{n,m}^{9k}(\vec{\sup}_L(g), \vec{\sup}_R(g))$.
        For the base case, consider the leafs where every gate $g$ of $C_{n,m}$ is either labelled with a variable or a constant. 
        Any constant is a polynomial that belongs to $\mathfrak{P}_{n,m}^{0}(0,0)$ (see \cref{ex:hompoly-constant}).
        If $g$ is an input gate labelled by $x_{i,j}$, consider a graph $F$ consisting of a single edge. 
        Then $x_{i,j} = \boldsymbol{F}_{n,m}(i,j) \in \mathfrak{P}_{n,m}^{2}(i,j)$. Note that $2 \leq k = \maxsup(C)$. This establishes the base case. 

        For the inductive step, the gate $g$ is either a multiplication or an addition gate. In either case we use~\cref{lem:closure-hompoly-lrComb}.
        In particular, the polynomial computed at gate $g$ lies in $\mathfrak{P}_{n,m}^{9k}(\vec{\sup}_L(g), \vec{\sup}_R(g))$.
        Finally, since the minimal support of the output gate is empty, the polynomial computed by $C_{n,m}$ belongs to $\mathfrak{P}_{n,m}^{9k}(0,0) = \mathfrak{P}_{n,m}^{9k}$.
    \end{proof}

    \section{Unconditional separations of symmetric classes}

    This section contains the full details for the results stated in \cref{sec:extended-abstract-unconditional-separation}.
    First, we make \cref{def:w-counting-width} precise.

    A function $p$ on $\bigcup_{n,m \in \mathbb{N}} \bbQ^{n \times m}$ is a \emph{graph parameter} if it is matrix-symmetric, i.e.\ for $X \in \bbQ^{n,m}$ and $(\sigma, \pi) \in \Sym_n \times \Sym_m$, it holds that $p(X) = p((\sigma, \pi)(X))$.
    
    \begin{definition}
        Let $w$ be an $\mathbb{N}$-valued graph parameter.
        Let $n,m \in \mathbb{N}$.
        Let $p$ be a $\Sym_n \times \Sym_m$-invariant function on $\{0,1\}^{n \times m}$.
        The \emph{$w$-counting width} of $p$ is the least integer $k \in \mathbb{N}$ such that for all $G, H \in \{0,1\}^{n,m}$, if $G$ and $H$ are homomorphism indistinguishable over all bipartite multigraphs $F$ with $w(F) < k$,
        then $p(G) = p(H)$.
    \end{definition}

    The $w$-counting width is well-defined since, by \cite{lovasz_operations_1967}, for every $n,m \in \mathbb{N}$,
    there exist finitely many graphs $F_1, \dots, F_r$ such that two $(n,m)$-vertex simple graphs $G$ and $H$ are isomorphic if, and only if, they are homomorphism indistinguishable over $F_1, \dots, F_r$.
    Then $k \coloneqq \max\{ w(F_i)  + 1 \}$ is an upper bound on the $w$-counting width of any $\Sym_n \times \Sym_m$-invariant function on $\{0,1\}^{n \times m}$.

	\begin{remark}
    By \cite{dvorak_recognizing_2010,dell_lovasz_2018}, a graph parameter $p$ has $\tw$-counting width at most~$k$
    if, and only if, it has counting width \cite{dawar_definability_2017} at most~$k$, i.e.\
    any two $\mathsf{C}^k$-equivalent graphs $G$ and $H$ satisy $p(G) = p(H)$.
    Similarly, by \cite{grohe_counting_2020},
    $\td$-counting width at most $k$ corresponds to the quantifier-depth-$k$ fragment $\mathsf{C}_k$ of first-order logic with counting quantifiers.
    By \cite{montacute_pebble-relation_2024},
    $\pw$-counting width at most $k$ corresponds to the restricted-conjunction $k$-variable fragment $\curlywedge \mathsf{C}_k$.
    However, these logical characterisations are not required for what is proven here.
    \end{remark}

    We can now formally prove \cref{thm:symmetric-classes-single-hom}.

    \thmCWsinglehom*
    \begin{proof}
        The forward implications follow from \cref{cor:bounded-w-counting-width}.
        The first assertion is already implied by \cite[Theorem 7.1]{DPS2025}.
        We give a uniform proof for all three backward implications.
        To that end, let $w \in \{\tw, \pw, \td\}$.
        Let $(F_n)$ be a family of bipartite multigraphs such that $w(F_n)$ is unbounded.

        It must be shown that the $w$-counting width of $(\hom_{F_n,n})$ is unbounded.
        To that end, consider the following definition from \cite[34]{DPS2025}.
        For a class  $\mathcal{F}$ of bipartite graphs with fixed bipartition and $n,m \in \mathbb{N}$,
        \[
            \cl_{n}^{\textup{bip}}(\mathcal{F})
            \coloneqq
            \{F \mid \forall (n,n)\text{-vertex } G, H.\ G \equiv_{\mathcal{F}} H \implies \hom(F, G) = \hom(F, H) \}
        \]
        Here, $F$ ranges over bipartite simple graphs,
        $G$ and $H$ range over simple bipartite graphs, $\hom(F,G)$ denotes the number of bipartition-preserving homomorphisms from $F$ to $G$,
        and $\equiv_{\mathcal{F}}$ denotes homomorphism indistinguishability over $\mathcal{F}$.

        For $k \in \mathbb{N}$, let $\mathcal{F}_k \coloneqq \{F \text{ bipartite simple graphs} \mid w(F) \leq k\}$.
        It must be shown that $\cl_{n}^{\text{bip}}(\mathcal{F}_k) \subseteq \mathcal{F}_k$ for all $n$ sufficiently larger than $k$.
        This follows as in the proof of \cite[Theorem 7.1]{DPS2025}
        under the following assumptions, see \cite[Proviso 7.2]{DPS2025}:
        For all simple (not necessarily bipartite) graphs $F, F', F_1, F_2$,
        \begin{enumerate}
            \item if $F$ is a minor of $F'$, 
                  then $w(F) \leq w(F')$,
            \item $w(F_1 + F_2) \leq \max\{ w(F_1) , w(F_2) \}$, where $F_1 + F_2$ denotes the disjoint union of $F_1$ and $F_2$, and
            \item if there exists a weak oddomorphism $F \to F'$, see \cite{roberson_oddomorphisms_2022}, then $w(F') \leq w(F)$.
        \end{enumerate}
        The first two assumptions are well-known properties of $w \in \{\tw, \pw, \td\}$, see \cite{nesetril_sparsity_2012}.
        The final assumption is rather involved, see \cite{seppelt_homomorphism_2024} for more context.
        It was established for treewidth in \cite[Corollary~13]{neuen_homomorphism-distinguishing_2024},
        for pathwidth in \cite[Theorem~6.4.6]{seppelt_homomorphism_2024},
        and for treedepth in \cite[Theorem~3]{fluck_going_2024}.
    \end{proof}

    \thmSymmetricLincomb*
    \begin{proof}
        By \cite[Theorem~7.11]{DPS2025} and \cref{thm:symmetric-classes-single-hom} recalling that \cite[Proviso~7.2]{DPS2025} is satisfied as argued in the proof of \cref{thm:symmetric-classes-single-hom}.
    \end{proof}

    \section{Classical algebraic complexity of homomorphism polynomials and their linear combinations}
    \label{sec:valiant}

    In this section, 
    we explore the connections between our symmetric complexity classes \symVP, \symVBP, and \symVF on the one hand
    and Valiant's complexity classes \VNP, \VP, \VBP, and \VF on the other hand.
    Motivated by the characterisations of the polynomials in \symVP, \symVBP, \symVF
    as linear combinations of homomorphism polynomials of patterns of bounded treewidth, pathwidth, and treedepth, respectively, see \cref{thm:symVP-char,thm:sym-vbp-orbit,thm:characterisationVF},
    we study the classical algebraic complexity of homomorphism polynomials.
    The outline of this section is as follows:

    In \cref{ssec:colourful}, we introduce colourful homomorphism polynomials, recast hardness results from \cite{durand_homomorphism_2016,hrubes_hardness_2017} in this language,
        and prove \cref{lem:minor-projection} which relates colourful homomorphism polynomials for patterns $F$ and $F'$ when $F$ is a minor of $F'$.
        The main result is \cref{thm:colourful-hom-vnp-hard}, a colourful analogue of \cref{thm:uncoloured-hom-complexity}.
        
    \Cref{ssec:uncoloured} is concerned with reducing colourful homomorphism polynomials to uncoloured homomorphism polynomials as in \cref{eq:homPoly}.
        The main technical contribution is \cref{lem:single-hom-projection-minors} which reduces a colourful homomorphism polynomial $\colhom_{S, n}$ for a pattern $S$ to the homomorphism polynomial $\hom_{F, 2^{\Delta(S)} \cdot |V(S)| \cdot n}$ for a pattern $F$ containing $S$ as minor.
        The crux here is to make do with a blow-up of the number of variables by only $2^{\Delta(S)} \cdot |V(S)|$ where $\Delta(S)$ denotes the maximum degree in $S$.
        A direct translation of the main result of  \cite{curticapean_count_2024} to algebraic complexity, 
        however, would only yield a reduction to $\hom_{F, 2^{\Delta(S)} \cdot |V(F)| \cdot n}$.
        This blow-up cannot be afforded when proving any of the results of this section.
        We overcome this issue by 
        using ideas from \cite{seppelt_logical_2024}.
        Finally, we prove \cref{thm:uncoloured-hom-complexity}.
    \thmUncolouredHom*
        
    In \cref{ssec:linearcombinations},
    we devise \cref{lem:uncoloured-simple-lovasz-interpolation}
    which reduces a homomorphism polynomial for a pattern $F$ to a linear combination of homomorphism polynomials containing $F$ with non-zero coefficient.
    This lemma is based on an interpolation argument based on the linear independence of homomorphism polynomials of patterns of sublinear volume, see \cref{lem:bipartite-multigraph-linear-independent}.
    \Cref{lem:uncoloured-simple-lovasz-interpolation} yields a proof of \cref{thm:uncoloured-lincomb-hom-complexity}.

    \thmLincombHomComplexity*

    The remaining section \cref{ssec:parametrised} provides lower bounds for homomorphism polynomials under the assumption $\VFPT \neq \VW$.
    The proofs of the remaining \cref{thm:uncoloured-lincomb-hom-complexity-parametrised}, \cref{thm:hom-parametrised-simplified}, and \cref{cor:symmetric-vs-non-symmetric-computation}
    make use of most of the machinery developed in \cref{ssec:colourful,ssec:uncoloured,ssec:linearcombinations}.
    
    \thmUncolouredhomParametrised*
    \corSymNonSym*
    \thmLincombHomComplexityParametrised*

    \subsection{Colourful homomorphism polynomials}
    \label{ssec:colourful}

    We translate the machinery developed in \cite{curticapean_complexity_2014,curticapean_count_2024} to algebraic complexity.
    The first step is to introduce colourful homomorphism polynomials.
    In contrast to ordinary homomorphism polynomials, they keep track not only of image vertices of homomorphisms but also of their preimages.
    Colourful homomorphism polynomials were also considered in \cite[Definition~7]{KomarathPR23} in the context of monotone algebraic complexity under the term \emph{coloured isomorphism polynomial}, see also \cite{bhargav_monotone_2025}.
    
    \begin{definition}\label{def:colourful-hom-polynomial}
        For a bipartite multigraph $F$ with bipartition $A \uplus B = V(F)$ and a number $n \in \mathbb{N}$,
        consider the \emph{$n$-th colourful homomorphism polynomial}
        \[
            \colhom_{F, n} \coloneqq \sum_{h \colon V(F) \to [n]} \prod_{\substack{ab \in E(F) \\ a \in A , b \in B}} x_{(a, h(a)), (b, h(b))}
        \]
        whose variables are indexed by $(A \times [n]) \times (B \times [n])$.    
    \end{definition}
    
    The input to such a polynomial can be thought of as a bipartite edge-weighted $F$-coloured graph.
    That is, a bipartite graph $G$ with a homomorphism $c \colon G \to F$ and edge weights $E(G) \to \mathbb{K}$.
    Here, each \emph{colour class} $c^{-1}(v)$ for $v \in V(F)$ is of size~$n$.

    \begin{lemma}\label{lem:colourful-containment}
        For every $p$-family $(F_n)_{n \in \mathbb{N}}$ of bipartite multigraphs,
        \begin{enumerate}
            \item $(\colhom_{F_n, n})_{n \in \mathbb{N}}$ is in \VNP.
            \item if $\tw(F_n) \in O(1)$, then $(\colhom_{F_n, n})_{n \in \mathbb{N}}$ is in \VP,
            \item if $\pw(F_n) \in O(1)$, then $(\colhom_{F_n, n})_{n \in \mathbb{N}}$ is in \VBP,
            \item if $\td(F_n) \in O(1)$, then $(\colhom_{F_n, n})_{n \in \mathbb{N}}$ is in \VF.
        \end{enumerate}
    \end{lemma}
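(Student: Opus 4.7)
The plan is to prove the four assertions separately. For the \VNP membership (item 1), I will match the definition of \VNP directly by writing $\colhom_{F_n,n}$ as an exponential sum of a \VP-computable polynomial. I introduce $|V(F_n)| \cdot n$ auxiliary $\{0,1\}$-variables $\{e_{v,i}\}_{v \in V(F_n), i \in [n]}$, intended as one-hot encodings of a map $h \colon V(F_n) \to [n]$, together with a \VP-computable selector $\chi(e) = \prod_{v} \sum_{i \in [n]} e_{v,i} \prod_{j \neq i}(1 - e_{v,j})$ that evaluates to $1$ on Boolean inputs exactly when each block $e_v = (e_{v,1},\dots,e_{v,n})$ is one-hot. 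Then
\[
    \colhom_{F_n,n}(x) \;=\; \sum_{e \in \{0,1\}^{|V(F_n)| \cdot n}} \chi(e) \cdot \prod_{ab \in E(F_n)} \left( \sum_{i,j \in [n]} e_{a,i}\, e_{b,j}\, x_{(a,i),(b,j)} \right),
\]
and the inner polynomial is manifestly \VP-computable.

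For items 2--4, my approach is dynamic programming along a suitable decomposition of $F_n$, in close analogy with the matrix-symmetric constructions of \cref{lem:pathwidth-small-circuit} and \cref{lem:singleTreedepthHomCount}. Given a rooted tree decomposition $(T, \beta)$ of $F_n$ of width $k$, I will create, for every node $t \in V(T)$ and every assignment $\alpha \colon \beta(t) \to [n]$, a gate computing the partial colourful homomorphism polynomial associated with the subgraph of $F_n$ induced by $\bigcup_{s \preceq t} \beta(s)$ with the vertices in $\beta(t)$ fixed by $\alpha$. Each edge of $F_n$ is multiplied in at the unique deepest bag that contains both endpoints, and the recursion sums over the assignments of the vertices forgotten between a node and its parent. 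Since every bag has at most $k+1 = O(1)$ vertices, there are at most $n^{k+1}$ such gates per node, yielding a circuit of polynomial size. This proves item 2.

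For item 3, specialising to a path decomposition eliminates branching in the recursion: every internal node has exactly one child, so each multiplication gate has at most one internal child (the accumulated partial sum) and its remaining children are input variables. Thus the resulting circuit is skew, witnessing $(\colhom_{F_n,n}) \in \VBP$. For item 4, I will use an elimination tree of depth $d = O(1)$ and invoke the formula construction from the proof of \cref{lem:singleTreedepthHomCount}, merely reinterpreting each input variable $x_{h(u),\gamma(w)}$ as $x_{(u,h(u)),(w,\gamma(w))}$. As that construction produces a tree-shaped circuit in which the subformulas rooted at different elimination-tree children are vertex-disjoint and each has a unique parent, we obtain $(\colhom_{F_n,n}) \in \VF$.

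The only mild obstacle is bookkeeping: ensuring every edge of $F_n$ is multiplied in exactly once and that the claimed size bounds hold for each of \VP, \VBP, \VF. This is entirely analogous to the uncoloured case already addressed in the paper; the sole novelty of the colourful setting is that the variable to be multiplied at an edge $ab$ is $x_{(a,\alpha(a)),(b,\alpha(b))}$ rather than $x_{\alpha(a),\alpha(b)}$, so that indexing ranges over $V(F_n) \times [n]$ instead of $[n]$, which affects neither the structural properties of the circuit nor the polynomial size bounds.
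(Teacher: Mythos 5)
Your proof is correct, and it largely matches the paper's approach for items 2--4 while taking a genuinely different route for item 1. For the \VNP membership, the paper simply invokes Valiant's criterion (\cite[Proposition~2.20]{burgisser_completeness_2000}): the coefficient of any monomial of $\colhom_{F_n,n}$ is a count of polynomially-sized certificates and hence computable in $\sPbyPoly$, which is all Valiant's criterion needs. You instead directly exhibit the exponential-sum form $\sum_{e\in\{0,1\}^{q(n)}} g_n(\boldsymbol{x},\boldsymbol{e})$ with a one-hot selector $\chi(e)$, and your construction is correct: $\chi$ vanishes on every Boolean $e$ that is not a one-hot block encoding, and on valid encodings of $h$ the inner product collapses to $\prod_{ab\in E(F_n)} x_{(a,h(a)),(b,h(b))}$; the resulting $g_n$ has polynomial size and $p$-bounded degree since $(F_n)$ is a $p$-family. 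This buys a self-contained elementary argument at the cost of a few lines, whereas the paper's citation is shorter but opaque. For items 2--4, your dynamic programming along tree/path/elimination-tree decompositions is precisely what the paper does by ``inspecting'' \cite[Theorem~5.3]{DPS2025}, \cref{lem:pathwidth-small-circuit}, and \cref{lem:singleTreedepthHomCount}; your observation that the only change is re-indexing $x_{h(a),h(b)}$ as $x_{(a,h(a)),(b,h(b))}$, which affects neither the circuit shape (tree / skew / formula) nor the polynomial size bound, is exactly the key point the paper leaves implicit.
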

    \begin{proof}
        The first claim follows from Valiant's criterion \cite[Proposition~2.20]{burgisser_completeness_2000}.
        The remaining claims follow by inspecting the proofs of \cite[Theorem~5.3]{DPS2025} and
        \cref{lem:pathwidth-small-circuit,lem:singleTreedepthHomCount}, see also \cite[Theorem~11]{KomarathPR23}.
    \end{proof}

    \subsubsection{Hard families of colourful homomorphism polynomials}

    By rephrasing results of \textcite{hrubes_hardness_2017,durand_homomorphism_2016},
    we obtain the following families of colourful homomorphism polynomials which are complete for \VNP, \VP, and \VBP, respectively.
    For $n \in \mathbb{N}$, let $G_{n \times n}$ denote the $n$-by-$n$ grid.
    
    \begin{lemma}\label{lem:colourful-grid-vnp-complete}
        The family $(\colhom_{G_{n \times n}, 4n^2})_{n \in \mathbb{N}}$ is $\mathsf{VNP}$-complete under $p$-projections over any field.
    \end{lemma}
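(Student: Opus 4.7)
The plan is to split the argument into membership and hardness, with hardness coming from a short $p$-projection chain starting at a known \VNP-complete family.

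For membership, note that $(G_{n\times n})_{n\in\bbN}$ is a $p$-family of bipartite multigraphs, with $\lVert G_{n\times n}\rVert = O(n^2)$. Applying \cref{lem:colourful-containment}(1) immediately yields $(\colhom_{G_{n\times n},4n^2}) \in \VNP$.

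For \VNP-hardness, I invoke one of the completeness results of \textcite{hrubes_hardness_2017} or \textcite{durand_homomorphism_2016}, which provide a \VNP-complete uncoloured homomorphism polynomial on grid patterns. Concretely, their constructions produce a $p$-family $(\hom_{G_{n\times n},N_n})$ that is \VNP-complete under $p$-projections for some $N_n = \Theta(n^2)$. The choice $N_n = 4n^2$ is compatible with their construction since $|V(G_{n\times n})| = n^2$ and the constant $4$ affords some slack; if the cited reference states the result for a different constant, one pads the colour set by adding isolated colour classes, which corresponds to substituting $0$ for the extra variables and is a valid $p$-projection.

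The key step is the observation that $\hom_{F,N}$ is a $p$-projection of $\colhom_{F,N}$ for every bipartite multigraph $F$ and $N\in\bbN$. Writing $y_{i,j}$ for $(i,j)\in[N]\times[N]$ for the variables of $\hom_{F,N}$ and using the substitution $x_{(a,i),(b,j)}\mapsto y_{i,j}$ (independent of $a\in A$, $b\in B$), direct computation gives
\[
    \colhom_{F,N}\bigl(y_{h(a),h(b)}\bigr)
    \;=\; \sum_{h\colon V(F)\to [N]} \; \prod_{\substack{ab\in E(F)\\ a\in A,\,b\in B}} y_{h(a),h(b)}
    \;=\; \hom_{F,N}.
\]
Since this substitution is a genuine $p$-projection (it identifies variables rather than introducing field constants depending on the input size nontrivially), composition with the hardness of $(\hom_{G_{n\times n},4n^2})$ yields the desired \VNP-hardness of $(\colhom_{G_{n\times n},4n^2})$.

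The main obstacle is bookkeeping: ensuring that the cited hardness result is (or can easily be) formulated for exactly $N_n = 4n^2$ colour classes, and that the underlying bipartition of $G_{n\times n}$ used in \cref{def:colourful-hom-polynomial} matches the one used in the cited work. Both issues are routine — the padding argument handles mismatches in the colour count, and the standard bipartition of the grid is canonical up to swapping the two sides, which only relabels variables and therefore preserves $p$-projectivity.
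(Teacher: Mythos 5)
There is a genuine gap in the hardness argument: you invoke ``a \VNP-complete uncoloured homomorphism polynomial on grid patterns'' from \cite{hrubes_hardness_2017} or \cite{durand_homomorphism_2016}, but no such result exists in those works. Hrube\v{s}--Yehudayoff prove \VNP-completeness of the \emph{clique-subgraph} polynomial $\clique_n = \sum_{A\subseteq[2n],\,|A|=n}\prod_{i<j\in A} y_{i,j}$, which is not a homomorphism polynomial at all; and Durand et al.\ prove \VNP-completeness of homomorphism polynomials for \emph{cliques}, not grids. In fact, the \VNP-hardness of \emph{uncoloured} grid homomorphism polynomials $(\hom_{G_{n\times n}, \cdot})$ is one of the new results of this paper (it falls under \cref{thm:uncoloured-hom-complexity}(1)), and proving it requires the full colourful-to-uncoloured reduction machinery of \cref{ssec:uncoloured} (CFI-graph-based interpolation, \cref{lem:single-hom-projection-minors}), which in turn relies on the very lemma you are trying to prove here. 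Your route is therefore circular with respect to the paper's dependency structure.

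Your observation that $\hom_{F,N}\leq_p \colhom_{F,N}$ by the substitution $x_{(a,i),(b,j)}\mapsto y_{i,j}$ is correct, but in the absence of a pre-existing \VNP-hardness result for uncoloured grid homomorphism polynomials it doesn't yield the lemma. The actual proof proceeds in the opposite direction: it constructs an explicit $G_{n\times n}$-coloured edge-weighted graph gadget $G$ (following the Curticapean--Marx construction from counting complexity) such that $\colhom_{G_{n\times n},4n^2}(G)=\clique_n$. The essential feature that makes the colourful setting the right starting point is that the colours let you enforce which pattern vertex lands in which gadget region, so grid rows and columns can propagate the clique membership information; this structure is what the reduction exploits, and it is lost if you discard the colours by identifying variables as you propose. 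Membership via \cref{lem:colourful-containment} is fine.
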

    \begin{proof}
        Containment in $\VNP$ follows from \cref{lem:colourful-containment}.
        We follow the reduction in \cite[Theorem~3.2]{curticapean_complexity_2014_arxiv} of the parametrised clique counting problem to the parametrised problem of counting colourful grids.
        By \cite[Theorem~5.5]{hrubes_hardness_2017},
        the family
        \begin{equation} \label{eq:hrubes}
            \clique_n \coloneqq \sum_{\substack{A \subseteq [2n] \\ |A| = n}} \prod_{\substack{i,j \in A \\ i < j}} y_{i,j}
        \end{equation}
        is $\mathsf{VNP}$\nobreakdash-complete under $p$-projections over any field.
        We project the polynomial $\textup{clique}_n$ from the polynomial $\colhom_{G_{n \times n}, 4n^2}$. 
        Let $y_{u,v}$ for $1 \leq u < v \leq 2n$ denote the variables of $\textup{clique}_n$. 

        A $p$-projection is given by defining a bipartite edge-weighted $G_{n \times n}$-coloured graph $G$.
        To that end, let $V(G_{n \times n}) = [n] \times [n]$.
        The vertices of $G$ are given as follows:
        \begin{itemize}
            \item For each $i \in [n]$ and $v \in [2n]$, the colour class of $(i,i)$ contains the vertex $(i,i, v,v)$.
            \item For each $i, j \in [n]$, 
            $i \neq j$, and every $u, v \in [2n]$ such that $u \neq v$, the colour class $(i,j)$ contains the vertex $(i, j, u, v)$.
        \end{itemize}
        The edge weights are given as follows:
        \begin{itemize}
            \item For every $i \in [n]$, $i \leq j \leq n-1$, $u \in [2n]$, and $1 \leq v < v' \leq 2n$,
            the edge $(i,j,u,v)(i,j+1,u,v')$ has weight $y_{v,v'}$.
            \item For every $j \in [n]$, $1 \leq i \leq j-1$, $v \in [2n]$, and $1 \leq u < u' \leq 2n$, the edge $(i,j,u,v)(i+1,j,u',v)$ has weight $1$.\item All other edges have weight zero.
        \end{itemize}
        With this assignment, the polynomial $\colhom_{G_{n \times n}, 4n^2}$ evaluates as follows:
        \begin{align*}
            \colhom_{G_{n \times n}, 4n^2}(G)
            &= \sum_{h \colon [n] \times [n] \to [2n] \times [2n]} \prod_{uv \in E(F)} x_{(u, h(u)), (v, h(v))}
        \end{align*}
        Let $A \subseteq [2n]$ with $|A| = n$ and write $A = \{v_1, \dots, v_n \}$ where $v_1 < v_2 < \dots < v_n$.
        Up to renaming variables,
        the monomial $\prod_{i,j \in A, i < j} y_{i,j}$ in $\clique_n$
        is equal to the monomial of $\colhom_{G_{n \times n}, 4n^2}(G)$
        for $h \colon [n] \times [n] \to [2n] \times [2n]$ given by $(i,j) \mapsto (v_i, v_j)$.
        
        Conversely, it must be argued that only such monomials survive the evaluation of the polynomial $\colhom_{G_{n \times n}, 4n^2}$ at $G$.
        Each map $h \colon [n] \times [n] \to [2n] \times [2n]$ amounts to picking out a vertex $(i,i,v,v)$ for the diagonal vertices $(i,i) \in V(F)$. Let $v_i \in [2n]$ denote this vertex $v$.
        For an off-diagonal vertex $(i,j) \in V(F)$, $i \neq j$,
        the map $h$ selects a vertex $(i,j, u, w)$.
        By definition of the edges in $G$, it follows that $u = v_i$ and $w = v_j$.
        Furthermore, $v_1 < v_2 < \dots < v_n$.
        Lastly note that the weights are as desired.
    \end{proof}

    For $n \in \mathbb{N}$, write $B_n$ for the largest complete (perfect) binary tree with at most $n$ leaves.

    \begin{lemma}\label{lem:colourful-binary-tree-vp-complete}
        The family $(\colhom_{B_n, n^6})_{n \in \mathbb{N}}$ is \VP-complete under constant-depth $c$\nobreakdash-reductions over any field of characteristic zero.
    \end{lemma}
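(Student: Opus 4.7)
The plan has two parts. Containment in \VP\ is immediate: $B_n$ is a tree, so $\tw(B_n) = 1 \in O(1)$, and the second assertion of \cref{lem:colourful-containment} places $(\colhom_{B_n, n^6})$ in \VP.

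For \VP-hardness, I would invoke \textcite{durand_homomorphism_2016}, who proved that an analogous family of (uncoloured) homomorphism polynomials of complete binary trees is \VP-complete under constant-depth $c$-reductions. The key observation enabling the transfer of their hardness to the colourful setting is that $\hom_{B_n, m}$ is a projection of $\colhom_{B_n, m}$ via the substitution $x_{(a,i),(b,j)} \mapsto y_{i,j}$, which forgets the edge endpoints $(a, b) \in E(B_n)$. Indeed,
\[
  \colhom_{B_n, m}\bigl(x_{(a,i),(b,j)} = y_{i,j}\bigr)
  \;=\; \sum_{h \colon V(B_n) \to [m]} \prod_{ab \in E(B_n)} y_{h(a), h(b)}
  \;=\; \hom_{B_n, m}.
\]
Composing this projection with the reduction of Durand et al.\ yields the desired hardness for the family $(\colhom_{B_n, n^6})$.

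If the specific exponent $6$ on $n$ does not match the blow-up in~\cite{durand_homomorphism_2016}, I would alternatively mimic the template of \cref{lem:colourful-grid-vnp-complete}, directly encoding an instance of a canonical \VP-complete balanced-formula polynomial into a bipartite edge-weighted $B_n$-coloured graph on $|V(B_n)| \cdot n^6$ vertices whose colourful homomorphism polynomial specialises to the target value. The leaves of $B_n$ would carry input variables of the formula, internal vertices would simulate addition and multiplication gates, and the colour classes of size $n^6$ would provide the room needed to encode gate labels together with bookkeeping that suppresses spurious monomials; the logarithmic depth of $B_n$ aligns naturally with the depth-$O(\log n)$ balanced formulas underlying Valiant's canonical \VP-complete polynomials.

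The main subtlety I expect is ensuring the overall reduction qualifies as a constant-depth $c$-reduction rather than merely a $p$-projection: one must verify that the substitutions constructed in Durand et al.'s argument, together with the identification $x_{(a,i),(b,j)} \mapsto y_{i,j}$ above, are realised by circuits of constant depth with oracle gates for the target. Since every substitution I envisage is of the form variable-to-variable or variable-to-constant, it lies at depth zero, so the depth discipline should be maintained without difficulty.
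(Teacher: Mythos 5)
Your containment argument via \cref{lem:colourful-containment} matches the paper exactly. The hardness argument, however, has a genuine gap.

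The premise that \textcite{durand_homomorphism_2016} proved \emph{uncoloured} homomorphism polynomials of binary trees to be \VP-complete is incorrect. The polynomial shown hard in \cite[Theorems~3.8 and~4.5]{durand_homomorphism_2016} (as the paper spells out) is
\[
p_m = \sum_{h \colon V(B_m) \to [m^6]} \prod_{v \in R(B_m)} X_{h(v)} \prod_{uv \in E(B_m)} Y_{h(u),h(v)},
\]
where $R(B_m)$ is the set of right children, and the variables $X_i$ are vertex weights attached specifically to right children. This is \emph{not} $\hom_{B_m, m^6}$. Your projection $x_{(a,i),(b,j)} \mapsto y_{i,j}$ correctly sends $\colhom_{B_n, m}$ to $\hom_{B_n, m}$, but $\hom_{B_n, m}$ is not the polynomial Durand et al.\ established as hard, so this yields nothing. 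Indeed, establishing the hardness of the unadorned $\hom_{B_n, m}$ is exactly what the paper later achieves in \cref{thm:uncoloured-hom-complexity}, and its proof routes back through \cref{lem:colourful-binary-tree-vp-complete} (via \cref{thm:colourful-hom-vnp-hard} and \cref{lem:single-hom-projection-minors}), so your proposal would be circular if you tried to appeal to that.

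The reason the colourful polynomial is the right target here is precisely because the colour structure lets you distinguish, for a variable $x_{(i,u),(j,v)}$, whether $j$ is the left or right child of $i$ in $B_m$, and therefore whether to graft in the vertex weight $X_v$. The paper's proof does exactly this: it substitutes $x_{(i,u),(j,v)} := X_v Y_{u,v}$ when $j$ is a right child of $i$, $x_{(i,u),(j,v)} := Y_{u,v}$ when it is a left child, and $0$ otherwise, turning $\colhom_{B_m, m^6}$ into $p_m$ by a constant-depth $c$-reduction. Your fallback plan of "directly encoding" a \VP-complete polynomial into a $B_n$-coloured graph is in the right spirit, but it skips the step of identifying the concrete hard family and verifying the encoding kills spurious monomials; the paper's actual reduction from Durand et al.'s vertex-weighted polynomial is both simpler and already tailored to $B_m$.
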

    \begin{proof}
        Containment in \VP follows from \cref{lem:colourful-containment}.
        By \cite[Theorems 3.8 and 4.5]{durand_homomorphism_2016}, the family $(p_m)$ of polynomials
        \[
            p_m \coloneqq \sum_{h \colon V(B_m) \to [m^6]} \prod_{v \in R(T_m)} X_{h(v)} \prod_{uv \in E(F)} Y_{h(u),h(v)}
        \]
        is \VP-complete under constant-depth $c$\nobreakdash-reductions:
        Here, the variables $X_i$ are indexed by $i \in [m^6]$, the variables $Y_{ij}$ are indexed by unordered pairs $ij \in \binom{[m^6]}{2}$.
        The set $R(B_m) \subseteq V(B_m)$ denotes the set of all non-root vertices of $B_m$  that are the right children of their respective parents.
        We show that $p_m$ is computed by a constant-depth $m^{O(1)}$-size circuit with oracle gates to $\colhom_{B_m, m^6}$.
        To that end, define a $B_m$-coloured edge-weighted graph $G = (x_{(i, u), (j,v)})_{i,j \in V(B_m), u,v \in [m^6]}$.
        \begin{itemize}
            \item If $j$ is the right child of $i$, let $x_{(i, u), (j,v)} \coloneqq X_v Y_{u, v}$.
            \item If $j$ is the left child of $i$, let $x_{(i, u), (j,v)} \coloneqq Y_{u, v}$.
            \item Otherwise, let $x_{(i, u), (j,v)} \coloneqq 0$.
        \end{itemize}
        With this assignment, it follows that $\colhom_{B_m, m^6}$ evaluates to $p_m$.
    \end{proof}
    
    For $n\in \mathbb{N}$, let $P_n$ denote the path on $n$ vertices.

    \begin{lemma}\label{lem:colourful-path-vbp-complete}
        The family $(\colhom_{P_n, n^2})_{n \in \mathbb{N}}$ is \VBP-complete under constant-depth $c$\nobreakdash-reductions over any field of characteristic zero.
    \end{lemma}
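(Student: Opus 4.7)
\begin{proof-sketch}
The approach mirrors the proofs of \cref{lem:colourful-grid-vnp-complete,lem:colourful-binary-tree-vp-complete}. Containment in \VBP is immediate from \cref{lem:colourful-containment}, since $\pw(P_n) = 1$.

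For hardness, the plan is to reduce from a path-analogue of the hard family in \cite{durand_homomorphism_2016}, namely a polynomial of the form
\[
    p_m \coloneqq \sum_{h \colon V(P_m) \to [m^2]} \prod_{v \in R(P_m)} X_{h(v)} \prod_{uv \in E(P_m)} Y_{h(u), h(v)},
\]
where $R(P_m) \subseteq V(P_m)$ is an appropriate distinguished subset. Given such a \VBP-complete family, I would construct a $P_m$-coloured edge-weighted graph $G$ by orienting $P_m$ from left to right and assigning to the edge between $(u,i)$ and $(v,j)$, for $uv \in E(P_m)$ with $v$ the right-neighbour of $u$, the weight $X_j Y_{i,j}$ if $v \in R(P_m)$ and the weight $Y_{i,j}$ otherwise; non-path edges receive weight $0$. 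A direct expansion then confirms $\colhom_{P_m, m^2}(G) = p_m$, giving the desired constant-depth $c$\nobreakdash-reduction through a single oracle gate.

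The main obstacle I anticipate is locating the precise \VBP-hardness statement for path homomorphism polynomials under constant-depth $c$\nobreakdash-reductions in \cite{durand_homomorphism_2016}. If such a direct statement is unavailable, my back-up plan is to exploit the classical equivalence between \VBP and algebraic branching programs \cite{mahajan_algebraic_2013}: expanding the definition shows that $\colhom_{P_m, m^2}$ equals the sum of all entries of the iterated product of $m - 1$ matrices of size $m^2 \times m^2$ with pairwise distinct variable entries, and every polynomial computable by a size-$s$ ABP is a $p$-projection of such an iterated matrix product on $s \times s$ matrices. Since $p$-projections are a special case of constant-depth $c$\nobreakdash-reductions, this alternate route likewise completes the proof.
\end{proof-sketch}
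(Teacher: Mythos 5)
Your containment argument is exactly the paper's. For hardness, your primary route is essentially the paper's route with one factual difference: the \VBP-complete family from \textcite[Theorem~5.1]{durand_homomorphism_2016} is not of the form $\sum_h \prod_{v \in R(P_m)} X_{h(v)} \prod Y_{h(u),h(v)}$ but rather $\sum_{h : [m+1] \to [m^2]} X_{h(1)} X_{h(m+1)} \prod_{i \in [m]} Y_{h(i)h(i+1)}$, with $X$-weights only on the two endpoints and with the $Y_{ij}$ indexed by unordered pairs so that $h(i) \neq h(i+1)$ is enforced implicitly. The gadget you sketch adapts without difficulty, but one must slightly enlarge the path (the paper uses $P_{m+2}$) to absorb the endpoint weights and the inequality constraints, and then pad; so the details would need revisiting even once the source theorem is located.

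Your back-up plan, however, is a genuinely different and arguably cleaner argument, and it is correct. Since the edges of $P_m$ alternate between the two sides of the bipartition and each edge $v_i v_{i+1}$ carries the colours $(v_i, v_{i+1})$, the variables of $\colhom_{P_m, m^2}$ associated with distinct edges are disjoint. Hence $\colhom_{P_m, m^2}$ is exactly $\mathbf{1}^\top M_1 \cdots M_{m-1} \mathbf{1}$ where $M_i \in \mathbb{K}[\calX]^{m^2 \times m^2}$ has entry $(j,k)$ equal to $x_{(v_i, j),(v_{i+1}, k)}$, and these matrices have pairwise distinct variable entries. Any layered ABP of size $s$ is a product of at most $s$ matrices of dimension at most $s \times s$, and its output is recoverable from the sum of all entries after zeroing out a row and a column; so it is a $p$-projection of $\colhom_{P_{s+1},(s+1)^2}$. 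This avoids any reliance on \cite{durand_homomorphism_2016} entirely, at the cost of reproving a fragment of what that paper establishes. The paper's choice to cite \cite{durand_homomorphism_2016} buys consistency with the companion proofs of \cref{lem:colourful-grid-vnp-complete,lem:colourful-binary-tree-vp-complete}, whereas your IMM route buys self-containment.
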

    \begin{proof}
        Containment in \VBP follows from \cref{lem:colourful-containment}.
        By \cite[Theorem 5.1]{durand_homomorphism_2016},
        the family of polynomials
        \begin{equation}\label{eq:durand-vbp}
            \sum_{h \colon [m+1] \to [m^2]} X_{h(1)} X_{h(m+1)} \prod_{i \in [m]} Y_{h(i)h(i+1)}
        \end{equation}
        is \VBP-complete under constant-depth $c$\nobreakdash-reductions.
        There is a variable $X_i$ for every $i \in [m^2]$ and $Y_{ij}$ for every unordered pair $ij \in \binom{[m^2]}{2}$.
        Thus, implicitly, $h(i) \neq h(i+1)$ for all $i \in [m]$.
        This family of polynomials projects to $(\colhom_{P_n, n^2})_{n \in \mathbb{N}}$.
        Define a $P_{m+2}$-coloured edge-weighted graph $G = (x_{(i, u), (i+1,v)})_{i \in [n-1], u,v \in [m^2]}$
        by setting
        \begin{itemize}
            \item $x_{(1,u),(2,v)}$ to zero if $u \neq v$ and to $X_u$, otherwise.
            \item $x_{(i,u),(i+1,v)}$, for $1 < i < m-1$, to $Y_{uv}$ if $u \neq v$ and to zero otherwise.
            \item $x_{(m+1, u), (m+2, v)}$ to zero if $u \neq v$ and to $X_u$, otherwise.
        \end{itemize}
        With this assignment, it follows that $\colhom_{P_{m+2}, m^2}$ evaluates precisely to the polynomial in \cref{eq:durand-vbp}.
        By padding, this further reduces to $\colhom_{P_{m+2}, (m+2)^2}$.
    \end{proof}

    \subsubsection{Projecting colourful homomorphism polynomials along minors}

    The following \cref{lem:minor-projection}
    relates the colourful homomorphism polynomials of patterns $F$ and $F'$ when $F$ is a minor of $F'$.
    It is the algebraic analogue of \cite[Lemma~3.1]{curticapean_complexity_2014_arxiv}.

    \begin{lemma}\label{lem:minor-projection}
        Let $F$ and $F'$ be bipartite simple graphs such that $F$ is a minor of $F'$.
        For $n \in \mathbb{N}$,
        $\colhom_{F, n}$ is a projection of $\colhom_{F', n}$ over any field.
    \end{lemma}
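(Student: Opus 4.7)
The proof will proceed by constructing an explicit substitution $r$ that transforms $\colhom_{F', n}$ into $\colhom_{F, n}$. First, I would fix a minor embedding of $F$ into $F'$: pairwise disjoint connected branch sets $B_v \subseteq V(F')$ for $v \in V(F)$, a spanning tree $T_v$ of $F'[B_v]$ for each $v$, and a single witness edge $e_{uv} \in E(F')$ between $B_u$ and $B_v$ for each $uv \in E(F)$. Without loss of generality, the branch sets cover $V(F')$: any uncovered vertex adjacent to some branch set can be absorbed into it (preserving connectivity), and any remaining isolated components can be pinned to a fixed colour by further constant substitutions along their edges.

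I would then define $r$ on each variable $x_{(a, i), (b, j)}$ of $\colhom_{F', n}$ (with $a \in A'$, $b \in B'$) case by case. If the edge $ab$ lies in $T_v$ for some $v$, set $r(x_{(a, i), (b, j)})$ to $1$ when $i = j$ and to $0$ otherwise; thus any surviving colouring must be constant on each branch set. If $ab$ is a chosen witness edge $e_{uv}$, substitute the corresponding variable of $\colhom_{F, n}$, reading off the colours in the order dictated by the bipartition of $F$: whichever of $u, v$ contains $a$ receives colour $i$ in its slot, and the one containing $b$ receives $j$, with the pair ordered so that the $A$-side of $F$ comes first. All remaining edges of $F'$---non-tree intra-branch edges and non-witness inter-branch edges---are sent to the constant $1$.

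Next, I would verify $r(\colhom_{F', n}) = \colhom_{F, n}$. In a summand of $\colhom_{F', n}$ indexed by $h \colon V(F') \to [n]$, the contraction-edge factors multiply to $0$ unless $h$ is constant on each $B_v$, because $T_v$ spans the branch set and forces equality along a connected tree. Since the branch sets cover $V(F')$, such $h$ are in bijection with colourings $c \colon V(F) \to [n]$ via $c(v) \coloneqq h(B_v)$; the $1$-substitutions contribute nothing, and the remaining witness-edge factors collapse to $\prod_{uv \in E(F)} y_{(u, c(u)), (v, c(v))}$. Summing over $c$ reproduces $\colhom_{F, n}$ exactly.

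The hard part will be the bipartition bookkeeping in the witness-edge substitution: a branch set $B_v$ may contain vertices from both sides of the bipartition of $F'$, while $v$ itself lies on a fixed side of the bipartition of $F$, so the witness substitution has to be oriented according to the ordering of $F$ rather than of $F'$. Together with the absorption of stray vertices of $F'$, this is the only non-routine aspect; the rest is the standard spanning-tree-forces-constant-colour-on-branch-set calculation familiar from minor reductions for homomorphism counts.
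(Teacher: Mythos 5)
Your proof is correct and takes essentially the same approach as the paper: both force a homomorphism constant on each branch set by zeroing out variables between different copy indices, route the $F$-variable through one designated witness edge per $F$-edge, and neutralise everything else with weight~$1$. Your version is stated slightly more carefully than the paper's (you pick a witness edge per edge of $F$ rather than a fixed witness vertex per branch set, and you explicitly address the orientation mismatch between the bipartitions of $F$ and $F'$), but the underlying argument is identical.
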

    \begin{proof}
        We follow the proof of \cite[Lemma~3.1]{curticapean_complexity_2014_arxiv}.
        Let $V(F) = \{1,\dots, k\}$.
        Let $B_0, B_1, \dots, B_k \subseteq V(F')$ denote branch sets for $F$ in $F'$.
        That is, each induced subgraph $F'[B_i]$ for $i \in [k]$ is connected
        and,
        for every $ij \in E(F)$, there is an edge in $F'$ from some vertex in $B_i$ to some vertex in $B_j$.
        Pick $w_i \in B_i$ for $i \in [k]$.
        We seek to construct an $F'$-coloured edge-weighted graph $G'$ on at most $n$ vertices per colour class such that $\colhom_{F, n} = \colhom_{F', n}(G')$.
        Write $y_{(i, u), (j, v)}$ where $i,j \in V(F)$ and $u,v \in [n]$ for the variables of $\colhom_{F, n}$.
        The graph $G'$ is constructed as follows:
        \begin{enumerate}
            \item For every $i \in V(F)$ and $v \in [n]$, it contains a copy of the $F'$-coloured graph $F'[B_i]$, denoted by $L_{i,v}$.
            \item For $ij \in E(F)$, $u ,v \in [n]$, 
            insert edges all from $L_{i,u}$ to $L_{j,v}$ with weight $1$, 
            except for the edge from the copy of $w_i$ in $L_{i,u}$ to the copy of $w_j$ in $L_{j,v}$ which carries weight  $y_{(i,u), (j, v)}$.
            \item For $ij \not\in E(F)$, $u,v \in [n]$, 
            insert all edges between $L_{i,u}$ and $L_{j,v}$ of weight $1$.
            \item Add a copy of $F'[B_0]$ to $G'$ and connect it to all other vertices with weight $1$.
        \end{enumerate}
        Note that the resulting graph is $F'$-coloured with at most $n$ vertices per colour class.

        A map $h \colon V(F) \to [n]$
        induces a map $h' \colon V(F') \to [n]$ by mapping $i'$ to $h(j)$ where $i' \in B_j$.
        The corresponding monomial in $\colhom_{F', n}(G')$ is $\prod_{ij \in E(F)} y_{(i,h(i)), (j, h(j))}$,
        as appearing in $\colhom_{F, n}$.

        Conversely, 
        let $h' \colon V(F') \to [n]$ be the map indexing a monomial of $\colhom_{F', n}(G')$.
        We show that $h'$ must be induced by some $h \colon V(F) \to [n]$ as above.
        Let $r,s \in B_i$ for some $i \in [k]$.
        If $h'(r) \neq h'(s)$,
        then the monomial indexed by $h'$ is zero since $F'[B_i]$ is connected and the gadgets $L_{i, h'(r)}$ and $L_{i, h'(s)}$ are disjoint.
    \end{proof}

    Equipped with \cref{lem:minor-projection},
    we can prove \cref{thm:colourful-hom-vnp-hard}, which describes the complexity of colourful homomorphism polynomials.
    To that end, we recall \cref{eq:tw-td-pw} and the following relationship between treewidth, pathwidth, and treedepth.
    For every $n$-vertex graph $F$, by \cite[Corollary~6.1]{nesetril_sparsity_2012},
    \begin{equation}
        \td(F) \leq \left(\tw(F) +1 \right) \log(n).\label{eq:treedepth-treewidth}
    \end{equation}
    Crucially, \cref{eq:treedepth-treewidth,eq:tw-td-pw} imply that any $p$-family $(F_n)$ of bounded treewidth satisfies $\pw(F_n) \in O(\log n)$.
    Similarly, if $\pw(F_n) \in O(1)$, then $\td(F_n) \in O(\log n)$ by \cref{eq:tw-td-pw}.
    For this reason, we need the following very recent graph minor theorems in order to show \VP- and \VBP-hardness for families of colourful homomorphism polynomials with $\pw(F_n) \in \Theta(\log n)$ and with $\pw(F_n) \in O(1)$ and $\td(F_n) \in \Theta(\log n)$, respectively.
    The bounds guaranteed by the graph minor theorems from \cite{kawarabayashi_polynomial_2021} for treedepth and from \cite{groenland_approximating_2023} for pathwidth are too crude for this purpose.

    \begin{theorem}[\cite{chekuri_polynomial_2016,groenland_approximating_2023,hatzel_tight_2024}]\label{thm:effective-minor}
        There exists a universal constant $\tau > 0$ such that, for every graph $F$ and $k \in \mathbb{N}$,
        \begin{enumerate}
            \item if $\tw(F) \geq k^\tau$, then $F$ contains a $k \times k$ grid as a minor,
            \item if $\pw(F) \geq \tau (\tw(F) + 1) k$, then $F$ contains a complete binary tree of height $k$ as a minor,
            \item if $\td(F) \geq \tau (\pw(F) +1 ) k$, then $F$ contains a $2^k$-vertex path as subgraph.
        \end{enumerate}
    \end{theorem}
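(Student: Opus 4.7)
The plan is to derive each of the three assertions as a direct black-box application of a known graph-structural theorem from the cited literature, and to then take $\tau$ to be the maximum of the three constants produced.

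First, for assertion~(1), I would invoke the polynomial grid-minor theorem of Chekuri and Chuzhoy: there exists a universal constant $c_1$ such that every graph $F$ with $\tw(F) \geq k^{c_1}$ contains the $k \times k$ grid as a minor. Any $\tau \geq c_1$ then suffices for this part. For assertion~(2), I would apply the approximate pathwidth-versus-binary-tree-minor theorem of Groenland et al., which yields a constant $c_2$ with the property that $\pw(F) \geq c_2(\tw(F)+1)k$ forces $F$ to contain the complete binary tree of height $k$ as a minor.

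For assertion~(3), I would cite the tight excluded-path theorem of Hatzel et al.: there exists a constant $c_3$ such that $\td(F) \geq c_3(\pw(F)+1)k$ implies the presence of a $2^k$-vertex path as a subgraph of $F$. This is the recent sharpening of the classical Ne\v{s}et\v{r}il--Ossona de Mendez inequality $\td(F) \leq O((\pw(F)+1)\log|V(F)|)$, in which the polylogarithmic factor in $|V(F)|$ is replaced by one depending only on the length of the longest path in $F$; this sharpening is precisely what permits the exponential $2^k$ to appear under a hypothesis that is only linear in $k$ on the treedepth side.

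Setting $\tau := \max\{c_1, c_2, c_3\}$ then completes the proof. The main---indeed only---subtlety is bookkeeping: one must confirm that each cited theorem is stated in exactly the quantitative form needed here. For assertions~(1) and~(2) this is essentially routine matching of constants, but for~(3) one must take care to invoke the tight form from Hatzel et al., rather than the easier estimate $\td(F) \leq O((\pw(F)+1)\log|V(F)|)$, which would not suffice for the exponential-in-$k$ path length required later to separate \VBP\ from the classes below it.
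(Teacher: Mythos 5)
Your proposal matches the paper's approach: this theorem is stated in the paper without a proof of its own, as a direct consolidation of the three cited results --- the Chekuri--Chuzhoy polynomial grid-minor theorem for (1), the pathwidth-versus-treewidth-versus-binary-tree-minor bound for (2), and the tight treedepth-versus-pathwidth-versus-longest-path bound for (3) --- with $\tau$ taken as a single constant dominating all three. Your observation that the tight (longest-path-dependent rather than $\log|V(F)|$-dependent) form of (3) is essential is precisely the point the paper makes in the preceding paragraph when it dismisses the cruder bounds from earlier work as insufficient.
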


    The following \cref{thm:colourful-hom-vnp-hard} is our main result on the algebraic complexity of colourful homomorphism polynomials.

    \begin{theorem}\label{thm:colourful-hom-vnp-hard}
        Let $\epsilon > 0$ and $(F_n)$ be a $p$-family of bipartite simple graphs.
        \begin{enumerate}
            \item if $\tw(F_n) \geq n^\epsilon$ for all $ n \geq 2$,
            then $(\colhom_{F_n, n})$ is \VNP-complete under $p$-projections over any field.
            \item if $\tw(F_n) \in O(1)$ and $\pw(F_n) \geq \epsilon \log(n)$ for all $ n \geq 2$,
            then $(\colhom_{F_n, n})$ is \VP-complete under constant-depth $c$\nobreakdash-reductions over any field of characteristic zero.
            \item if $\pw(F_n) \in O(1)$ and $\td(F_n) \geq \epsilon \log(n)$ for all $ n \geq 2$,
            then $(\colhom_{F_n, n})$ is \VBP-complete under constant-depth $c$\nobreakdash-reductions over any field of characteristic zero.
        \end{enumerate}
    \end{theorem}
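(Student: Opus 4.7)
The plan is to combine (i) the containment statements of \cref{lem:colourful-containment}, (ii) the completeness of the three canonical families $(\colhom_{G_{m\times m},4m^2})$, $(\colhom_{B_m,m^6})$, and $(\colhom_{P_m,m^2})$ established in \cref{lem:colourful-grid-vnp-complete,lem:colourful-binary-tree-vp-complete,lem:colourful-path-vbp-complete}, and (iii) the minor-projection mechanism of \cref{lem:minor-projection}, fed with the excluded-minor guarantees of \cref{thm:effective-minor}. For each part I will produce, for every parameter $m$, an index $n(m)$ polynomially bounded in $m$ such that the relevant hard pattern is a minor of $F_{n(m)}$. \Cref{lem:minor-projection} then supplies a projection of the corresponding hard polynomial into $\colhom_{F_{n(m)},n(m)}$, after a padding step that zeroes out variables whose colour indices exceed the prescribed target parameter.

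For part~(1), containment follows directly from \cref{lem:colourful-containment}. For hardness I set $n(m)\coloneqq\max\{4m^2,\lceil m^{\tau/\epsilon}\rceil\}$, so that $\tw(F_{n(m)})\geq n(m)^\epsilon\geq m^\tau$, whence \cref{thm:effective-minor}(1) produces $G_{m\times m}$ as a minor of $F_{n(m)}$. Applying \cref{lem:minor-projection} followed by padding realises $\colhom_{G_{m\times m},4m^2}$ as a $p$\nobreakdash-projection of $\colhom_{F_{n(m)},n(m)}$, and composing with \cref{lem:colourful-grid-vnp-complete} yields \VNP-hardness.

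Parts~(2) and~(3) proceed analogously. In part~(2), the hypothesis $\tw(F_n)\in O(1)$ provides a constant $C$ with $\tw(F_n)+1\leq C$, so choosing $n(m)\coloneqq\max\{m^6,\lceil m^{\tau C/\epsilon}\rceil\}$ gives $\pw(F_{n(m)})\geq\epsilon\log n(m)\geq \tau C\lceil\log m\rceil$; \cref{thm:effective-minor}(2) then delivers a complete binary tree of height $\lceil\log m\rceil$, hence $B_m$, as a minor of $F_{n(m)}$, and I reduce from $(\colhom_{B_m,m^6})$. Part~(3) uses the boundedness of $\pw(F_n)$ in exactly the same way to apply \cref{thm:effective-minor}(3), extracting a $2^{\lceil\log m\rceil}$\nobreakdash-vertex path as a subgraph of $F_{n(m)}$, and hence $P_m$ as a minor; I then reduce from $(\colhom_{P_m,m^2})$. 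In each case $n(m)$ is polynomial in $m$, and the projections produced by \cref{lem:minor-projection} are in particular constant-depth $c$\nobreakdash-reductions, matching the reduction types claimed in~(2) and~(3).

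The main obstacle I anticipate is bookkeeping rather than conceptual: the hard families have prescribed colour-class sizes $4m^2$, $m^6$, and $m^2$, while the minor-extraction step may force $n(m)$ to exceed these, so I must verify that $\colhom_{F,N}$ is a projection of $\colhom_{F,n}$ whenever $N\leq n$. This is achieved by setting every variable $x_{(a,u),(b,v)}$ with $u>N$ or $v>N$ to zero, which is a projection and thus compatible with all three reduction notions. Once this padding observation is in place, the three regimes collapse into essentially the same three-line argument, differing only in which width parameter and which clause of \cref{thm:effective-minor} is invoked.
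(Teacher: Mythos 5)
Your proof is correct and follows essentially the same strategy as the paper's: membership via \cref{lem:colourful-containment}, extraction of the relevant hard pattern ($G_{m\times m}$, $B_m$, or $P_m$) as a minor of $F_{n(m)}$ via \cref{thm:effective-minor}, and then a projection via \cref{lem:minor-projection} composed with the completeness of the canonical families from \cref{lem:colourful-grid-vnp-complete,lem:colourful-binary-tree-vp-complete,lem:colourful-path-vbp-complete}. You are in fact slightly more explicit than the paper's write-up in recording the padding step that reduces $\colhom_{F,N}$ to $\colhom_{F,n}$ for $N\leq n$ by zeroing out variables indexed beyond $[N]$ — this step is needed because the colour-class size $n(m)$ typically overshoots the prescribed sizes $4m^2$, $m^6$, $m^2$, and the paper applies it only implicitly.
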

    \begin{proof}
        In all cases, containment follows from \cref{lem:colourful-containment}.
        Invoking \cref{lem:colourful-grid-vnp-complete,lem:colourful-path-vbp-complete,lem:colourful-binary-tree-vp-complete,lem:minor-projection},
        the proof is by projecting the colourful homomorphism polynomials of grids, complete binary trees, and paths from the $\colhom_{F_n, n}$. Let $\tau > 0$ denote the universal constant from \cref{thm:effective-minor}.
        \begin{enumerate}
            \item Suppose that $\tw(F_n) \geq n^\epsilon$ for all $ n \geq 2$.
                  Let $p$ be the polynomial $n^{\max\{\lceil \tau/\epsilon \rceil, 3 \}}$.
                  Then $\tw(F_{p(n)}) \geq p(n)^\epsilon \geq n^\tau$
                  and $p(n) \geq 4n^2$.
                  By \cref{thm:effective-minor}, $F_{p(n)}$ contains the $n \times n$ grid as a minor.
                  Hence, $\colhom_{F_{p(n)}, p(n)}$ projects to $\colhom_{G_{n\times n}, 4n^2}$ via \cref{lem:minor-projection}.
                  The latter family is \VNP-complete by \cref{lem:colourful-grid-vnp-complete}.
            \item Suppose that $\tw(F_n) \in O(1)$ and $\pw(F_n) \geq \epsilon \log(n)$ for all $ n \geq 2$.
                  Write $t \coloneqq \max\{ \tw(F_n)\} + 1$.
                  Let $p$ be the polynomial $n^{\max\{\lceil \tau t/\epsilon \rceil, 6 \}}$.
                  Then $\pw(F_{p(n)}) \geq \epsilon \log(p(n)) \geq t \tau \log(n)$.
                  Hence, by \cref{thm:effective-minor}, $F_{p(n)}$ contains the complete binary tree $B_n$ as a minor.
                  Furthermore, $p(n) \geq n^6$.
                  Hence, $\colhom_{F_{p(n)}, p(n)}$ projects to $\colhom_{B_{n}, n^6}$ via \cref{lem:minor-projection}.
                  The latter family is \VP-complete by \cref{lem:colourful-binary-tree-vp-complete}.
            \item Suppose that $\pw(F_n) \in O(1)$ and $\td(F_n) \geq \epsilon \log(n)$ for all $ n \geq 2$.
                  Write $t \coloneqq \max\{ \pw(F_n)\} + 1$.
                  Let $p$ be the polynomial $n^{\max\{\lceil \tau t/\epsilon \rceil, 2 \}}$.
                  Then $\tw(F_{p(n)}) \geq \epsilon \log(p(n)) \geq t \tau \log(n)$.
                  Hence, by \cref{thm:effective-minor}, $F_{p(n)}$ contains the $n$-vertex path as a minor.
                  Furthermore, $p(n) \geq n^2$.
                  Hence, $\colhom_{F_{p(n)}, p(n)}$ projects to $\colhom_{P_{n}, n^2}$ via \cref{lem:minor-projection}.
                  The latter family is \VBP-complete by \cref{lem:colourful-path-vbp-complete}.\qedhere
        \end{enumerate}
    \end{proof}

    As a side note, we observe that the statement of \cref{thm:colourful-hom-vnp-hard} also holds for multigraphs.
    In this case, completeness is under constant-depth $c$\nobreakdash-reductions, even for \VNP.

    \begin{lemma}
        Let $(F_n)$ be a $p$-family of bipartite multigraphs
        and $(F'_n)$ the family of their underlying bipartite simple graphs.
        Then $(\colhom_{F'_n, n})$ constant-depth $c$\nobreakdash-reduces to $(\colhom_{F_n, n})$ over any field.
    \end{lemma}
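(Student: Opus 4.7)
My plan is to establish the reduction via polynomial interpolation, reducing the multiplicity of each multi-edge in $F_n$ to one.

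The building block handles a single multi-edge. For an edge $e = ab$ of multiplicity $m_e \geq 2$ in a multigraph $F$, let $F^{(-e)}$ denote $F$ with $e$'s multiplicity decremented by one. Substituting $x_{(a,i),(b,j)} = y_{(a,i),(b,j)} + t$ only for the variables corresponding to $e$ yields
\[
\colhom_{F, n}(y + t \mathbf{1}_e) = \sum_h (y_{e,h} + t)^{m_e} \prod_{e' \neq e} y_{e',h}^{m_{e'}},
\]
a univariate polynomial in $t$ of degree $m_e$ whose coefficient of $t^1$ equals $m_e \cdot \colhom_{F^{(-e)}, n}(y)$. Vandermonde interpolation at $t = 0, 1, \dots, m_e$ realizes this coefficient as a constant-depth linear combination of $m_e + 1$ oracle queries, valid in any field of characteristic zero since $m_e \neq 0$.

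To obtain the full reduction, I plan to apply this construction simultaneously to all multi-edges by substituting $x_{(a,i),(b,j)} = y_{(a,i),(b,j)} + t_e$ for each edge $e = ab$, with $t_e$ a scalar parameter per edge. Then $\colhom_{F_n, n}(y + t) = \sum_h \prod_e (y_{e,h} + t_e)^{m_e}$, and a direct computation shows that the coefficient of $\prod_e t_e^{m_e - 1}$ in this multivariate polynomial equals $\prod_e m_e \cdot \colhom_{F'_n, n}(y)$. Since this scalar factor is nonzero, extracting the coefficient completes the reduction.

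The main obstacle is implementing the multivariate coefficient extraction in constant depth with polynomial size, as a naive multivariate interpolation requires $\prod_e (m_e + 1)$ oracle evaluations, which is super-polynomial when $F_n$ has many multi-edges. To resolve this, I would exploit the factorized sum-of-products structure of $\colhom_{F_n, n}(y + t)$: for each fixed $h$, the contribution factors as a product of univariate polynomials $(y_{e,h} + t_e)^{m_e}$, and the target coefficient decomposes per $h$ as $\prod_e m_e y_{e, h}$, suggesting per-edge extraction combined linearly through the outer sum over $h$. A plausible technical device is a Kronecker-type substitution $t_e = \tau^{D_e}$ with polynomially bounded $D_e$'s chosen so that the exponent $\sum_e D_e(m_e - 1)$ is uniquely attained by $\vec{\alpha} = \vec{m} - \vec{1}$ among all $\vec{\alpha}$ with $0 \leq \alpha_e \leq m_e$, thereby reducing the multivariate extraction to a single univariate polynomial interpolation in $\tau$ of polynomial degree; alternatively, an iterative reduction scheme that keeps the composition of linear combinations of oracle queries polynomial-size via telescoping of shift parameters may succeed. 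Once this extraction is realized, the reduction is a constant-depth, polynomially-sized $c$-reduction.
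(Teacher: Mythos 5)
Your central identity — that the coefficient of $\prod_e t_e^{m_e-1}$ in $\colhom_{F_n,n}(y+t)$ equals $\prod_e m_e \cdot \colhom_{F'_n,n}(y)$ — is correct, but the coefficient-extraction step is a genuine gap, and neither of your proposed repairs closes it. For the Kronecker substitution $t_e = \tau^{D_e}$, uniqueness of the target exponent requires $\sum_e D_e\beta_e \neq 0$ for every nonzero tuple $(\beta_e)$ with $\beta_e \in \{-1,0,\dots,m_e-1\}$; already when every $m_e = 2$ this forces $(D_e)$ to have all distinct subset sums, which requires $\max_e D_e$ to grow exponentially in the number of multi-edges, so the resulting univariate degree in $\tau$ is super-polynomial. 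The iterative scheme fares no better: decrementing one edge's multiplicity per round multiplies the number of oracle queries by $m_e+1$ each time, again yielding $\prod_e(m_e+1)$ total, and moreover the composed circuit has depth proportional to the number of multi-edges, violating the constant-depth requirement. So as written your reduction is not realised in constant depth and polynomial size.

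The paper's own argument is of an entirely different — and far simpler — character: it writes $\colhom_{F_n,n}$ as $\colhom_{F'_n,n}$ evaluated with the powers $x_{(u,a),(v,b)}^{\nu(uv)}$ substituted for the variables, giving a depth-$2$ circuit with one oracle gate. Note, however, that this identity realises $\colhom_{F_n,n}$ from a $\colhom_{F'_n,n}$-oracle, i.e.\ the reverse of the direction the lemma states; your interpolation route targets the stated direction (the one actually needed to transfer hardness from simple to multigraph patterns), but neither your argument nor the power-substitution establishes it. Closing this would appear to need a genuinely different idea than degree-by-degree interpolation, or a reexamination of the lemma's orientation.
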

    \begin{proof}
        Fix $F \coloneqq F_n$ and $F' \coloneqq F'_n$ for some $n \in \mathbb{N}$.
        Let $A \uplus B = V(F) = V(F')$ denote their bipartition.
        For an edge $uv \in E(F)$, write $\nu(uv) \in \mathbb{N}$ for its multiplicity in $F$. 
        Then
        \begin{align*}
            \colhom_{F_{n}, n}
            &= \sum_{h \colon V(F) \to [n]} \prod_{uv \in E(F)} x_{(u, h(u)), (v, h(v))} \\
            &= \sum_{h \colon V(F) \to [n]} \prod_{uv \in E(F')} \left(x_{(u, h(u)), (v, h(v))}\right)^{\nu(uv)}.
        \end{align*}
        The right hand-side is $\colhom_{F'_{n}, n}$ where $x_{(u,a), (v, b)}^{\nu(uv)}$ has been substituted for the variable $x_{(u,a), (v, b)}$  for $u,v \in V(F)$ and $a,b \in [n]$.
        Since $(F_n)$ is a $p$-family, all multiplicities $\nu(uv)$ are at most polynomial in $n$.
        Therefore, this substitution can be performed as a depth-$2$ $c$\nobreakdash-reduction.
    \end{proof}

    \subsection{Uncoloured homomorphism polynomials}
    \label{ssec:uncoloured}

    In this section, we prove \cref{thm:uncoloured-hom-complexity}.
    To that end, we develop an algebraic reduction from colourful to uncoloured homomorphism polynomials following ideas in \textcite{curticapean_count_2024}.
    The main reduction is provided by \cref{lem:single-hom-projection-minors}.
    As an intermediate step, we introduce coloured homomorphism polynomials.

    \begin{definition}
        Let $F$ be a bipartite multigraph with bipartition $V(F) = A \uplus B$ and $C$ a finite set.
        Let $c \colon V(F) \to V(C)$ be a map.
        For $n \in \mathbb{N}$,
        define the \emph{$n$-th coloured homomorphism polynomial}
        \[
        \colhom_{F, c, n} = \sum_{h \colon V(F) \to [n]} 
        \prod_{\substack{ab  \in E(F) \\ a \in A, b \in B}} x_{(c(a), h(a)), (c(b), h(b))}
        \]
        in variables indexed by $(C \times [n]) \times (C \times [n])$.
    \end{definition}

    Note that the colourful homomorphism polynomial $\colhom_{F, n}$ of $F$ as defined in \cref{def:colourful-hom-polynomial} coincides with $\colhom_{F, \id, n}$.
    The following \cref{lem:uncolour} shows how colourful homomorphism polynomials relate to uncoloured homomorphism polynomials.

    \begin{lemma}\label{lem:uncolour}
        Let $C$ denote a finite set of colours and $n \in \mathbb{N}$.
        Let $F$ be a bipartite multigraph.
        After identifying the variable indices $[|C| \cdot n]^2$ with $(C \times [n])^2$ on the left hand-side,
            it holds that
        \[ \hom_{F, |C| \cdot n} = \sum_{c \colon V(F) \to C} \colhom_{F, c, n}. \]
    \end{lemma}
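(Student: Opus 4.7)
\begin{proof-sketch}
The proof is essentially bookkeeping: after the identification $[|C|\cdot n] \cong C \times [n]$, a map $h \colon V(F) \to [|C|\cdot n]$ is the same data as a pair $(c, h')$ with $c \colon V(F) \to C$ and $h' \colon V(F) \to [n]$, where $c$ records the first coordinate and $h'$ the second coordinate of $h$.

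First I would fix, once and for all, the bijection $[|C| \cdot n] \to C \times [n]$ implicit in the statement, and use it to rewrite the definition of $\hom_{F, |C|\cdot n}$ from \cref{eq:homPoly} with indexing set $C \times [n]$ rather than $[|C|\cdot n]$. Thus
\[
\hom_{F, |C|\cdot n} \;=\; \sum_{h \colon V(F) \to C\times [n]} \ \prod_{\substack{ab \in E(F) \\ a \in A,\, b \in B}} x_{h(a), h(b)}.
\]
Next I would observe that every map $h \colon V(F) \to C \times [n]$ decomposes uniquely as $h(v) = (c(v), h'(v))$ for a pair of maps $c \colon V(F) \to C$ and $h' \colon V(F) \to [n]$, and that this decomposition is a bijection between the set of maps $V(F) \to C \times [n]$ and the set of such pairs.

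Splitting the single sum over $h$ into a double sum over $c$ and $h'$ via this bijection, and substituting $h(a) = (c(a), h'(a))$, $h(b) = (c(b), h'(b))$ into the monomial, yields
\[
\hom_{F, |C|\cdot n} \;=\; \sum_{c \colon V(F) \to C}\ \sum_{h' \colon V(F) \to [n]}\ \prod_{\substack{ab \in E(F) \\ a \in A,\, b \in B}} x_{(c(a), h'(a)),\, (c(b), h'(b))}.
\]
The inner sum is exactly $\colhom_{F, c, n}$ by definition, completing the proof. No obstacle arises; the only point worth a sentence is that we are treating $F$'s bipartition $A \uplus B$ consistently on both sides, which is immediate since both definitions use the same enumeration convention for edges $ab \in E(F)$.
\end{proof-sketch}
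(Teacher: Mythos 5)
Your proof is correct and takes exactly the same approach as the paper: rewrite the sum over $h \colon V(F) \to [|C|\cdot n]$ as a sum over $h \colon V(F) \to C \times [n]$, decompose each such $h$ into a colouring $c$ and a map $h'$, and recognise the inner sum as $\colhom_{F,c,n}$. The paper's proof is precisely this three-line rearrangement of summations.
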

    \begin{proof}
        The proof is by unpacking the summation.
        \begin{align*}
            \hom_{F, |C| \cdot n} 
            &= \sum_{h \colon V(F) \to C \times [n]} \prod_{ab \in E(F)} x_{h(a), h(b)}\\
            &= \sum_{c \colon V(F) \to C} 
              \sum_{h \colon V(F) \to [n]}
              \prod_{ab \in E(F)} x_{(c(a),h(a)), (c(b),h(b))} \\
            &= \sum_{c \colon V(F) \to C} 
              \colhom_{F, c, n}. \qedhere
        \end{align*}
    \end{proof}
    To make the lemma more explicit,
    for $G \in \mathbb{K}^{(C \times [n]) \times (C \times [n])}$,
            write $G^\circ$ for the corresponding matrix in $G \in \mathbb{K}^{(|C| \cdot n) \times (|C| \cdot n)}$.
    In the language of \cite{curticapean_count_2024}, 
    $G$ is a $C$-coloured graph on $n$~vertices per colour classes while $G^\circ$ is an uncoloured graph with $|C| \cdot n$ vertices.
    \Cref{lem:uncolour} asserts that $\hom_{F, |C| \cdot n}(G^\circ) = \sum_{c \colon V(F) \to C} \colhom_{F, c, n}(G)$.

    In order to facilitate interpolations, we prove the following \cref{lem:product-colourful}, which describes how coloured homomorphism polynomials evaluate on product graphs.
    It corresponds to a well-known identity on homomorphism counts into categorical products of graphs \cite[(5.30)]{lovasz_large_2012}.
    Let $C$ be a finite set of colours and $n,m \in \mathbb{N}$.
    For $C$-coloured edge-weighted bipartite graphs $G = (x_{(c,i),(c',i')})_{c,c' \in C, i,i' \in [n]}$
    and $H = (y_{(c,j),(c',j')})_{c,c' \in C, j,j' \in [m]}$,
    define $G \times H$ via $z_{(c, i,j), (c', i',j')} \coloneqq x_{(c,i),(c',i')} \cdot y_{(c,j),(c',j')}$
    for $c,c' \in C$, $i,i' \in [n]$, and $j,j' \in [m]$.
    This is a $C$-coloured edge-weighted bipartite graph with $n\cdot m$ vertices per colour class.

    \begin{lemma}\label{lem:product-colourful}
        Let $C$ be a finite set of colours.
        Let $F$ be a bipartite multigraph and $c \colon V(F) \to C$.
        For $n,m \in \mathbb{N}$ and $G$, $H$, and $G \times H$ as above, 
        \[
            \colhom_{F, c, nm}(G\times H) = \colhom_{F, c, n}(G)\colhom_{F, c, m}(H).
        \]
    \end{lemma}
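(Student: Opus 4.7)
The plan is to unpack the definition of $\colhom_{F, c, nm}(G \times H)$ and factor it into the claimed product. This is the algebraic analogue of the classical identity $\hom(F, G \times H) = \hom(F, G) \cdot \hom(F, H)$ for the categorical product of graphs, and the proof is essentially a bookkeeping exercise combined with the distributivity of multiplication over addition.

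First, I would fix the canonical bijection $[n] \times [m] \to [nm]$ implicit in the definition of $G \times H$. Under this bijection, every map $h \colon V(F) \to [nm]$ corresponds uniquely to a pair $(h_1, h_2)$ with $h_1 \colon V(F) \to [n]$ and $h_2 \colon V(F) \to [m]$, where $h(v) = (h_1(v), h_2(v))$. Hence the summation over $h$ factors as a double summation over such pairs. Next, for each edge $ab \in E(F)$, the definition of the product graph gives
\[
    z_{(c(a), h(a)),(c(b), h(b))}
    \;=\; x_{(c(a), h_1(a)), (c(b), h_1(b))} \cdot y_{(c(a), h_2(a)), (c(b), h_2(b))}.
\]
Taking the product over all edges of $F$, this splits the monomial indexed by $h$ into the product of the corresponding $G$- and $H$-monomials indexed by $h_1$ and $h_2$ respectively.

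Finally, I would interchange the product and the double sum, using that the $G$- and $H$-factors depend only on $h_1$ and $h_2$ respectively, to obtain
\[
    \colhom_{F, c, nm}(G \times H)
    \;=\; \left(\sum_{h_1 \colon V(F) \to [n]} \prod_{ab \in E(F)} x_{(c(a), h_1(a)),(c(b), h_1(b))}\right) \cdot \left(\sum_{h_2 \colon V(F) \to [m]} \prod_{ab \in E(F)} y_{(c(a), h_2(a)),(c(b), h_2(b))}\right),
\]
which is precisely $\colhom_{F, c, n}(G) \cdot \colhom_{F, c, m}(H)$. There is no real obstacle here; the only point requiring a small amount of care is that the bipartition of $F$ is respected, but since $c$ and the edge-index conventions on $G$, $H$, and $G \times H$ all treat the two sides identically, this is automatic from the definitions.
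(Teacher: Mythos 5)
Your proof is correct and follows essentially the same route as the paper's: identify $[nm]$ with $[n]\times[m]$, split each $h$ into a pair $(h_1,h_2)$, factor the edge-weight $z$ into $x\cdot y$, and pull apart the double sum. Nothing further to add.
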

    \begin{proof}
        The proof is by identifying $[nm]$ with $[n] \times [m]$ and rearranging the summation.
        \begin{align*}
            &\colhom_{F, c, nm}(G\times H)  \\
            &= \sum_{h \colon V(F) \to [n] \times [m]} \prod_{uv \in E(F)} z_{(c(u), h(u)), (c(v), h(v))} \\
            &= \sum_{h_1 \colon V(F) \to [n]} \sum_{h_2 \colon V(F) \to [m]} \prod_{uv \in E(F)} z_{(c(u), h_1(u), h_2(u)), (c(v), h_1(v), h_2(v))} \\
            &= \sum_{h_1 \colon V(F) \to [n]} \sum_{h_2 \colon V(F) \to [m]} \prod_{uv \in E(F)} x_{(c(u), h_1(u)), (c(v), h_1(v))} \cdot  y_{(c(u), h_2(u)), (c(v), h_2(v))} \\
            &= \colhom_{F, c, n}(G)\colhom_{F, c, m}(H).\qedhere
        \end{align*}
    \end{proof}

    The following \cref{lem:colhom-interpolation} 
    uses a standard interpolation argument to restrict a linear combination of homomorphism polynomials to patterns with a certain number of edges.
    The key insight is that $\colhom_{F, n}$ is a homogenous polynomial of degree $|E(F)|$.
    
    \begin{lemma}\label{lem:colhom-interpolation}
        Let $C$ be a finite set of colours and $N,r \in \mathbb{N}$.
        For $i\in [r]$, 
        let $F_i$ be bipartite multigraphs with $\lVert F_i \rVert \leq N$ and $c_i \colon V(F_i) \to C$.
        Let $\alpha_i$ be coefficients.
        For $n,k\in \mathbb{N}$, write $p \coloneqq \sum_{i \in [r]} \alpha_i \colhom_{F_i, c_i, n}$
        and $p_k \coloneqq \sum_{|E(F_i)| = k} \alpha_i \colhom_{F_i, c_i, n}$.
        Over any infinite field, 
        there exists a depth-$3$ circuit for $p_k$ of size polynomial in $N + n$ using oracle gates for $p$.
    \end{lemma}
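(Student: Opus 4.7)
\begin{proof-sketch}
The plan is a standard Lagrange interpolation over the homogenisation parameter, exploiting the fact that each coloured homomorphism polynomial is homogeneous of degree equal to the edge count of its pattern. I shall first observe that $\colhom_{F_i, c_i, n}$ is homogeneous of degree $|E(F_i)|$ in the variables $x_{(c,a),(c',b)}$, since each monomial in the definition is a product over $E(F_i)$. Hence, for a formal scalar $t$ and an input assignment $G = (x_{(c,a),(c',b)})$, writing $t \cdot G$ for the scaled assignment $(t \cdot x_{(c,a),(c',b)})$, we have $\colhom_{F_i, c_i, n}(t\cdot G) = t^{|E(F_i)|} \colhom_{F_i, c_i, n}(G)$, and therefore
\[
    p(t \cdot G) \;=\; \sum_{j = 0}^{N} t^{j} \,p_j(G),
\]
is a univariate polynomial in $t$ of degree at most $N$, using that $|E(F_i)| \leq \lVert F_i\rVert \leq N$.

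Next, since the field is infinite, pick $N+1$ distinct elements $t_0, \dots, t_N$ and let $\beta_0, \dots, \beta_N$ denote the corresponding Lagrange coefficients extracting the coefficient of $t^k$; that is, field constants with
\[
    p_k(G) \;=\; \sum_{j = 0}^{N} \beta_j \, p(t_j \cdot G),
\]
obtained by inverting the Vandermonde matrix at $(t_0, \dots, t_N)$. These constants depend only on $N$, $k$, and the choice of the $t_j$, and in particular are independent of $G$.

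Now I build the depth-$3$ circuit computing $p_k$ from oracle gates for $p$. In the bottom layer, for every variable $x_{(c,a),(c',b)}$ and every $j \in \{0, \dots, N\}$, place a $\times$-gate of fan-in two computing $t_j \cdot x_{(c,a),(c',b)}$. In the middle layer, for each $j$, place an oracle gate for $p$ whose inputs are the scaled variables $t_j \cdot x_{(c,a),(c',b)}$; this gate outputs $p(t_j \cdot G)$. In the top layer, place a single $+$-gate combining the oracle outputs with weights $\beta_j$ to produce $p_k(G)$. The total number of non-oracle gates is $O(N \cdot (|C| \cdot n)^2)$ plus one summation gate, so polynomial in $N+n$ (with $|C|$ absorbed into $N$, since only the $N$ image colours of the $c_i$ matter, and variables on unused colour indices can be set to $0$ without changing $p$).

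There is essentially no obstacle beyond bookkeeping: the key observation is the homogeneity of $\colhom_{F_i, c_i, n}$ in degree $|E(F_i)|$, after which the standard interpolation trick applies. The only point requiring mild care is to guarantee depth $3$ rather than $4$ by folding the scalar multiplications into a single layer of $\times$-gates feeding directly into the oracle gates, and by combining the oracle outputs with a single weighted sum.
\end{proof-sketch}
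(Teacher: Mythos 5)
Your proof is correct and follows essentially the same approach as the paper: both exploit the homogeneity $\colhom_{F,c,n}(tG) = t^{|E(F)|}\colhom_{F,c,n}(G)$ and interpolate the degree-$k$ component of $p(tG)$ over an infinite field. The paper's proof is terser; you add useful explicit detail (Vandermonde inversion, the exact gate layout achieving depth $3$, and the observation that the circuit does not depend on $G$, plus a remark on why $|C|$ does not inflate the size bound beyond $\poly(N+n)$), but nothing substantively differs.
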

    \begin{proof}
        Let $t$ be a fresh variable.
        For a $C$-coloured edge-weighted graph $G$, write $tG$ for the $C$-coloured edge-weighted graph obtained by multiplying each edge weight by $t$.
        For any $F$ and $c \colon V(F) \to C$,
        \begin{equation*}
            \colhom_{F, c, n}(tG) = \sum_{h \colon V(F) \to [n]} \prod_{uv \in E(F)} \left( t \cdot x_{(c(u), h(u)), (c(v),h(v))} \right)
            = t^{|E(F)|} \colhom_{F, c, n}(G).
        \end{equation*}
        Thus, by evaluating $p(tG) = \sum t^{|E(F_i)|} \alpha_i  \colhom_{F_i, c_i, n}(G) = \sum_k p_k(G) t^{|E(F_i)|}$
        on at most $N$ points, one may interpolate $p_k(G)$.
        In other words, there exist field elements $t_0, \dots, t_N$ and coefficients $\beta_0, \dots, \beta_N$
        such that $p_k(G) = \sum_{i=0}^N \beta_i p(t_i G)$.
        This circuit is of size polynomial in $N+n$ and of depth $3$.
    \end{proof}

    We now tie together the previous observations to prove the main reduction lemmas~\ref{lem:single-hom-projection} and~\ref{lem:single-hom-projection-minors}.
    The first lemma allows to extract the colourful homomorphism polynomial for a simple graph $S$ 
    from the homomorphism polynomial for a multigraph $F \supseteq S$ containing $S$ as a subgraph.
    The proof is somewhat similar to the proof of \cite[Theorem~9(b)]{curticapean_count_2023_arxiv}
    but uses arguments specific to polynomials and thereby yields a more efficient reduction.
    Concretely, a direct translation of \cite[Theorem~9(b)]{curticapean_count_2023_arxiv} to algebraic complexity yields a reduction from $\colhom_{S, n}$ to $\hom_{F, 2^{d-1} \cdot |V(F)| \cdot n}$ under the assumptions of \cref{lem:single-hom-projection}.
    The dependency on $|V(F)|$ in the latter polynomial is however too costly to prove any of the results from \cref{sec:main-contributions-classical-complexity}.

    \begin{lemma}\label{lem:single-hom-projection}
        Let $F$ be a bipartite multigraph
        and $S \subseteq F$ a connected simple subgraph of $F$ of maximum degree at most $d$.
        Let $n \in \mathbb{N}$.
        Over any field of characteristic zero,
        there exists a constant-depth circuit for $\colhom_{S, n}$ of size polynomial in $\lVert F \rVert + n$ 
        using oracle gates for $\hom_{F, 2^{d-1} \cdot |V(S)| \cdot n}$.
    \end{lemma}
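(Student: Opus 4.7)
The plan is to evaluate the oracle $\hom_{F,N}$ at a carefully chosen edge-weighted graph (in fact at several, in order to enable an interpolation), and then extract $\colhom_{S,n}$ by a standard interpolation procedure of constant depth.  The starting point is \cref{lem:uncolour} applied with a colour set $C = \{0,1\}^{d-1} \times V(S)$, which gives
\[
\hom_{F,\,2^{d-1}|V(S)|n} \;=\; \sum_{c\colon V(F)\to C} \colhom_{F,c,n}.
\]
The first coordinate of $C$ will be used to record, for each vertex $v\in V(F)$, a \emph{profile} encoding how the (at most $d$) edges of $S$ incident to the vertex $c_2(v)\in V(S)$ are realised by the edges of $F$ at $v$. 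Because $S$ is connected, fix a BFS spanning tree $T\subseteq S$ with root some $s_1\in V(S)$: this reserves one canonical incident $S$-edge per non-root vertex (the parent edge), so the remaining $\leq d-1$ incident $S$-edges can be encoded by a bit-vector in $\{0,1\}^{d-1}$.  This is exactly the reason we can get away with a factor $2^{d-1}$ and not $2^d$ (a tighter version of the blow-up in~\cite{curticapean_count_2024}, in the spirit of~\cite{seppelt_logical_2024}).

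Next, I would construct a substitution of the variables of $\hom_{F,N}$ that depends on the variables $y_{(u,i),(v,j)}$ of $\colhom_{S,n}$ together with a small collection of auxiliary scalars $\vec z$ (one per pattern edge of $S$, together with a few global ones). The substitution is designed so that, for each fixed colouring $c$, the polynomial $\colhom_{F,c,n}$ factorises into edge-local terms: an edge of $F$ whose endpoints have colours $c(u),c(v)\in V(S)$ forming an $S$-edge contributes the variable $y_{(c_2(u),h(u)),(c_2(v),h(v))}$, while every other edge of $F$ (edges outside $S$, or edges in $S$ whose profile bits do not agree) contributes a scalar depending only on the profiles at its endpoints and on $\vec z$. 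Since $\Delta(S)\leq d$, each local factor can be written as an explicit size-$2^{O(d)}$ polynomial in $\vec z$; the profile bits, chosen as above, let us ensure that exactly one colouring--namely the one in which the restriction of $c_2$ to $V(S)\subseteq V(F)$ is the identity and all profiles are ``correct''--contributes the monomial carrying all $y$-variables, with all other colourings contributing strictly lower-degree monomials in~$y$.

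Finally, applying a Lagrange-style multivariate interpolation to the auxiliary variables $\vec z$ --- or, equivalently, the edge-degree selector of \cref{lem:colhom-interpolation} --- isolates the single coefficient of $\prod_{e\in E(S)} z_e^{1}$ (say). Because the number of $\vec z$-variables and their degrees are both $O(|V(S)|+d) = O(\lVert F\rVert)$, this interpolation can be performed by a depth-$3$ circuit of size polynomial in $\lVert F\rVert + n$, and the oracle queries it makes to $\hom_{F,N}$ are polynomial in the same quantity. Composed with the substitution (a depth-$2$ layer of polynomial size), the overall reduction is of constant depth and polynomial size, as required.

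The main obstacle will be the construction of the substitution in step two and the verification that the interpolation in step three extracts exactly $\colhom_{S,n}$.  The critical inequality to maintain is that the ``blow-up per vertex'' of $F$ stays at $2^{d-1}$ rather than $2^{\deg_F(\cdot)}$ or $2^{|E(F)|}$: a naive translation of the combinatorial reduction of~\cite{curticapean_count_2024} would give a factor of $|V(F)|$ rather than $|V(S)|$ in $N$, which would be ruinous for the downstream applications (in particular for the proof of \cref{thm:uncoloured-hom-complexity}, where $|V(F)|$ may be comparable to $n$). This is handled by routing all non-$S$ edges of $F$ through the $\vec z$-interpolation instead of through colour classes, so that the number of colour classes depends only on $|V(S)|$ and $d$.
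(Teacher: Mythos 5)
Your proposal aims at the same numerical blow-up $2^{d-1}|V(S)|\cdot n$ as the paper, and the high-level strategy (colour with a set of size polynomial in $|V(S)|$ rather than $|V(F)|$, then extract $\colhom_{S,n}$ by interpolation) is in the right spirit. However, the technical route you sketch is genuinely different from the paper's, and as written it has gaps that I do not see how to close.

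The paper's proof colours $F$ with only $|V(S)|$ colours, performs the substitution $x\mapsto 1+x$ to expand $\colhom_{F,c,n}$ into a sum over \emph{all} subgraphs $F'\subseteq F$, interpolates on the total degree (via \cref{lem:colhom-interpolation}) to keep only $F'$ with exactly $|E(S)|$ edges, and then uses the CFI gadgets $S_0,S_1$ of $S$ together with the oddomorphism theorem \cite[Theorem~3.13]{roberson_oddomorphisms_2022}. The heart of the argument is \cref{claim:cfi-colourful-s}: among $S$-coloured graphs $H$ with no isolated vertices and $|E(S)|$ edges, $\hom(H,S_0)\neq\hom(H,S_1)$ holds if and only if $H\cong S$ as $S$-coloured graphs. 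Taking the difference $q_{2^{d-1}n}(G\times S_0)-q_{2^{d-1}n}(G\times S_1)$ therefore kills every spurious subgraph--colouring pair in one stroke, with a provably non-zero surviving coefficient. The $2^{d-1}$ is exactly the size of the CFI colour classes (parity-constrained length-$\deg(v)$ bit strings).

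Your proposal instead builds the factor $2^{d-1}$ directly into the colour set as ``profile bits'' and then claims that, after a Lagrange interpolation on auxiliary scalars $\vec z$, the unique surviving contribution is $\colhom_{S,n}$. Several points are not substantiated. (i) The assertion that ``exactly one colouring'' contributes the top-degree monomial in $y$ is false in general: any colouring that routes some other $|E(S)|$-edge subgraph of $F$, or a permuted copy of $S$, through the $y$-variables yields a monomial of the same $y$-degree, and your interpolation on $\vec z$ does not distinguish it. The paper handles precisely this difficulty with the oddomorphism theorem, which you do not invoke and would effectively have to reprove. (ii) The BFS-tree encoding has an off-by-one issue: the root of $T$ has no parent edge and may have degree $d$, so it needs $d$, not $d-1$, bits. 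The CFI colour classes avoid this because the parity constraint always drops one bit, for every vertex. (iii) The substitution itself is never defined; the variable $x_{(c,i),(c',j)}$ does not know which $F$-edge it instantiates, and it is not clear how profile agreement at both endpoints can be enforced consistently across an edge without re-encoding parity constraints---at which point you are back to CFI. In summary, the proposal identifies the right target bound but does not constitute a proof; the crucial isolation step is missing and, at the level of detail given, appears to require the very machinery the paper uses.
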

    \begin{proof}
        We first turn $\hom_{F, n}$ into a linear combination of $S$-coloured homomorphism polynomials via \cref{lem:uncolour}.
        That is,
        \begin{equation}\label{eq:color-f}
            \hom_{F, |V(S)| \cdot n} = \sum_{c \colon V(F) \to V(S)} \colhom_{F, c, n}.
        \end{equation}
        Let $x_{(a,i),(b,j)}$ for $a,b \in V(S)$ and $i,j \in [n]$ denote the variables of this polynomial.
        Next we perform the substitution $x_{(a,i),(b,j)} \coloneqq 1 + y_{(a,i),(b,j)}$.
        Under this substitution,
        \begin{align}
            \colhom_{F, c, n}(\boldsymbol{1} + \boldsymbol{x}) &= \sum_{h \colon V(F) \to [n]} \prod_{uv \in E(F)} \left(1 + x_{(c(u),h(u)),(c(v),h(v))} \right) \notag \\
            &= \sum_{h \colon V(F) \to [n]} \sum_{E \subseteq E(F)} \prod_{uv \in E} x_{(c(u),h(u)),(c(v),h(v))} \notag \\
            &= \sum_{F' \subseteq F} \colhom_{F', c, n}(\boldsymbol{x}).\label{eq:subgraphs-inclusion-exclusion}
        \end{align}
        where the final sum ranges over all subgraphs $F'$ of $F$ with vertex set $V(F') = V(F)$.
        By applying \cref{lem:colhom-interpolation} to \cref{eq:color-f,eq:subgraphs-inclusion-exclusion},
        we obtain a circuit for
        \begin{equation}\label{eq:only-s-many-edges-survived}
            q_n \coloneqq \sum_{c \colon V(F) \to V(S)} \sum_{\substack{F' \subseteq F \\ \lvert E(F') \rvert = \lvert E(S) \rvert }}\colhom_{F', c, n}.
        \end{equation}

        Let $S_0$ and $S_1$ denote the even and odd \textsmaller{CFI} graphs of $S$, viewed as $S$-coloured graphs as considered in \cite{roberson_oddomorphisms_2022,curticapean_count_2024}.
        Note that $S_0$ and $S_1$ have at most $2^{d-1}$ vertices per colour class.
        \begin{claim}\label{claim:cfi-colourful-s}
            Let $S$ be a connected simple graph.
            Let $H$ be an $S$-coloured multigraph without isolated vertices on $|E(S)|$ edges.
            When counting colour-respecting homomorphisms, $\hom(H, S_0) \neq \hom(H, S_1)$
            if, and only if, $H \cong S$ as $S$-coloured graphs.
        \end{claim}
        \begin{claimproof}
            Write $h \colon H \to S$ for the $S$-colouring of $H$.
            By \cite[Theorem~3.13]{roberson_oddomorphisms_2022},
            it holds that $\hom(H, S_0) \neq \hom(H, S_1)$ if, and only if, $h$ is a weak oddomorphism.
            Here, when counting $\hom(H, S_0)$, edge multiplicities in $H$ are immaterial since $S_0$ and $S_1$ are simple.
            The backwards implication is immediate from \cite[Lemma~3.14]{roberson_oddomorphisms_2022}.
            For the forward direction, if $h$ is a weak oddomorphism, then each $h^{-1}(s)$ for $s \in V(S)$ contains a vertex $v$ such that $\deg_{H}(v) \geq \deg_S(s)$.
            More precisely, there is one neighbour in every $h^{-1}(t)$ for $t \in N_S(s)$.
            Since $H$ has the same number of edges as $S$, this means that $H \cong S$ as $S$-coloured graphs.
         \end{claimproof}

        In order to apply \cref{claim:cfi-colourful-s}, isolated vertices must be handled.
        This is straightforward since
        \begin{equation}\label{eq:isolated-vertices}
            \colhom_{F + K_1, c, n} = n \cdot \colhom_{F, c|_{V(F)}, n}.
        \end{equation}
         By \cref{claim:cfi-colourful-s,lem:product-colourful}, for $q_n$ as in \cref{eq:only-s-many-edges-survived},
         \begin{align}
             & q_{2^{d-1} \cdot n}(G \times S_0) - q_{2^{d-1} \cdot n}(G \times S_1) \notag \\
             &= \sum_{c \colon V(F) \to V(S)} \sum_{\substack{F' \subseteq F \\ \lvert E(F') \rvert = \lvert E(S) \rvert }} \colhom_{F', c, n}(G) \cdot \left(\hom(F', S_0) - \hom(F', S_1) \right) \notag \\
             &= \sum_{c \colon V(F) \to V(S)} \sum_{\substack{F' \subseteq F \\ \lvert E(F') \rvert = \lvert E(S) \rvert }} \alpha_{F'} \colhom_{F'', c|_{V(F'')}), n}(G) \cdot \left(\hom(F'', S_0) - \hom(F'', S_1) \right) \label{eq:isolated-vertices-intermediate}\\
             &= \sum_{\substack{c \colon V(F) \to V(S) \\ F' \subseteq F \\ (F'', c|_{V(F'')}) \cong S}}  \beta_{F'} \colhom_{S, \id, n}(G) \notag \\
             &= \colhom_{S, \id, n}(G) \sum_{\substack{c \colon V(F) \to V(S) \\ F' \subseteq F \\ (F'', c|_{V(F'')}) \cong S}}  \beta_{F'}.\label{eq:cfi-interpolation}
         \end{align}
         In  \cref{eq:isolated-vertices-intermediate},
         $F''$ denotes the graph obtained from $F'$ by deleting all isolated vertices.
         The positive coefficient $\alpha_{F'}$ depends on $F'$, $S$, and $S$ and can be computed via \cref{eq:isolated-vertices}.
         Analogously, $\beta_{F'} \coloneqq \alpha_{F'}\cdot \left(\hom(F'', S_0) - \hom(F'', S_1) \right) > 0$ for $(F'', c|_{V(F'')}) \cong (S, \id)$.
         Hence, \cref{eq:cfi-interpolation} yields a circuit for $\colhom_{S, \id, n}(G)$.
         To summarise,
         we obtain a circuit for $\colhom_{S, \id, n}$
         of size polynomial $\lVert F \rVert + n$ and constant-depth
         using oracle gates to $\hom_{F, 2^{d-1} \cdot |V(S)| \cdot n}$.
    \end{proof}

    Next we strengthen \cref{lem:single-hom-projection} by proving an analogue for minors rather than subgraphs.
    Analogous to \cref{eq:subgraphs-inclusion-exclusion},
    we construct a circuit for a linear combination of homomorphism polynomials of minors by a suitable substitution.

    Key is the following algebraic and bipartite analogue of \cite[Theorem~10]{seppelt_logical_2024}.
    For an $(n,n)$-vertex bipartite graph $G = (x_{i,j})_{i,j \in [n]}$,
    write $G^-$ for the $(2n,2n)$-vertex graph $G^- = (y_{(s,i),(t,j)})_{i,j \in [n], s,t \in \{A, B\}}$ given by
    \[
        y_{(s,i),(t,j)} \coloneqq \begin{cases}
            1, & \text{if } s = t \text{ and }  i = j,\\
            0, & \text{if } s = t \text{ and }  i \neq j,\\
            x_{i,j}, & \text{if } s = A \text{ and } t = B, \\
            x_{j,i}, & \text{if } s = B \text{ and } t = A. \\
        \end{cases}
    \]
    The purpose of this construction is to obtain the following expression which comprises homomorphisms counts from minors of $F$.
    
    \begin{lemma}\label{lem:quotient}
        Let $F$ be a bipartite multigraph.
        For $n \in \mathbb{N}$ and $G$, $G^-$ as above,
        \[
            \hom_{F, 2n}(G^-) = \sum_{s \colon V(F) \to \{A, B\}} \hom_{F \oslash s, n}(G).
        \]
    \end{lemma}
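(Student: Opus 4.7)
The plan is to decompose the summation $\hom_{F, 2n}(G^-) = \sum_{h \colon V(F) \to \{A,B\} \times [n]} \prod_{ab \in E(F)} y_{h(a), h(b)}$ by writing every map $h$ uniquely as a pair $(s, h')$ with $s \colon V(F) \to \{A, B\}$ and $h' \colon V(F) \to [n]$, so that $h(v) = (s(v), h'(v))$. First I would partition the edges of $F$ for a fixed $s$ into \emph{monochromatic} edges $ab$ with $s(a) = s(b)$ and \emph{bichromatic} edges with $s(a) \neq s(b)$, and use the defining table of $G^-$ to read off the contribution of each: monochromatic edges contribute $\mathbb{1}[h'(a) = h'(b)]$, while a bichromatic edge with $s(a) = A, s(b) = B$ (resp.\ $s(a) = B, s(b) = A$) contributes the variable $x_{h'(a), h'(b)}$ (resp.\ $x_{h'(b), h'(a)}$) of $G$.

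Next, for a fixed $s$, I would define $F \oslash s$ as the bipartite multigraph obtained by contracting all monochromatic edges of $F$. Let $\sim_s$ be the equivalence relation on $V(F)$ generated by pairs $\{a,b\}$ with $ab \in E(F)$ and $s(a) = s(b)$; the vertices of $F \oslash s$ are the equivalence classes, and its edges are the images of the bichromatic edges of $F$. The central structural observation is that $s$ is constant on each $\sim_s$-class (any two $\sim_s$-adjacent vertices share their $s$-value and this extends transitively along any $\sim_s$-path), so each class has a well-defined label in $\{A, B\}$. Declaring the $A$-classes to be the left side and the $B$-classes to be the right side gives a bipartition of $F \oslash s$, since every bichromatic edge connects an $A$-class to a $B$-class.

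Once $F \oslash s$ is in place, for a fixed $s$ the inner sum $\sum_{h'} \prod_{ab} y_{(s(a), h'(a)), (s(b), h'(b))}$ is supported on those $h'$ that are constant on each $\sim_s$-class, and such $h'$ are in bijection with maps $\bar h \colon V(F \oslash s) \to [n]$. Under this bijection, each bichromatic edge of $F$ contributes exactly the variable $x_{\bar h([\text{$A$-endpoint}]), \bar h([\text{$B$-endpoint}])}$ required by $\hom_{F \oslash s, n}$ under the bipartition above; the two bichromatic cases unify because the definition of $G^-$ swaps the indices of $x$ in the $s(a) = B, s(b) = A$ case. Summing over $h'$ therefore yields $\hom_{F \oslash s, n}(G)$, and summing over $s$ completes the identity.

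The main obstacle is purely notational rather than conceptual: one must carefully verify that the bipartition inherited by $F \oslash s$ (based on the common $s$-value on each class) is exactly the one needed so that the orientations of the variables $x_{\cdot, \cdot}$ produced by the bichromatic edges match the left/right convention used in the definition of $\hom_{F \oslash s, n}$. With the equivalence-class argument above and a careful case split on $(s(a), s(b)) \in \{(A,B), (B,A)\}$, this alignment is forced by the definition of $G^-$.
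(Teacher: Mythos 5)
Your proposal is correct and takes essentially the same approach as the paper: decompose each $h\colon V(F)\to[2n]$ into a pair $(s,h')$, apply the case split in the definition of $G^-$ (Kronecker delta on monochromatic edges, $x$-variables with possible index swap on bichromatic ones), and then observe that the delta factors force $h'$ to be constant on $\sim_s$-classes so that the inner sum collapses to $\hom_{F\oslash s,n}(G)$. The paper presents this as a four-line chain of equalities; your version just spells out the same steps in prose, including the bijection between constant-on-classes maps $h'$ and maps $\bar h\colon V(F\oslash s)\to[n]$ and the check that the $s$-induced bipartition aligns the $x$-orientations.
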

    Here, $F \oslash s$ is defined as follows.
        For $s \in V(F) \to \{A, B\}$,
        define $L \coloneqq \{uv \in E(F) \mid s(u) = s(v) \}$.
        Define an equivalence relation $\sim_L$ on $V(F)$ by declaring $u \sim_L v$ if $u$ and $v$ are in the same connected component of the subgraph of $F$ with vertex set $V(F)$ and edge set $L$.
        The edges of the multigraph $F \oslash s$ are the edges induced by $E(F) \setminus L$.
        The bipartition is given by~$s$.
        Indeed, if $u \sim_L v$, then $s(u) = s(v)$.
        Furthermore, if $u \not\sim_L v$ are adjacent, then there exists $uv \in E(F) \setminus L$ and hence $s(u) \neq s(v)$.
    \begin{proof}[Proof of \cref{lem:quotient}]
        It holds that
        \begin{align*}
            \hom_{F, 2n}(G^-)
            &= \sum_{h \colon V(F) \to [n]} \sum_{s \colon V(F) \to \{A, B\}} \prod_{uv \in E(F)} y_{(s(u),h(u)), (s(v),h(v))} \\
            &= \sum_{h \colon V(F) \to [n]} \sum_{s \colon V(F) \to \{A, B\}} \prod_{\substack{uv \in E(F) \\ s(u) = s(v)}} \delta_{h(u), h(v)} \prod_{\substack{uv \in E(F) \\ s(u) = A \\ s(v) = B}} x_{h(u), h(v)}  \prod_{\substack{uv \in E(F) \\ s(u) = B \\ s(v) = A}} x_{h(v), h(u)} \\
            &= \sum_{s \colon V(F) \to \{A, B\}} \sum_{h \colon V(F \oslash s) \to [n]} \prod_{uv \in E(F \oslash s)}  x_{h(u),h(v)} \\
            &= \sum_{s \colon V(F) \to \{A, B\}} \hom_{F \oslash s, n}. \qedhere
        \end{align*}
    \end{proof}

    \Cref{lem:quotient} allows to prove the following refinement of \cref{lem:single-hom-projection}.

    \begin{lemma}\label{lem:single-hom-projection-minors}
        Let $F$ be a bipartite multigraph
        and $S$ be a connected simple bipartite minor of $F$ of maximum degree at most $d$.
        Let $n \in \mathbb{N}$.
        Over any field of characteristic zero,
        there exists a constant-depth circuit for $\colhom_{S, n}$ of size polynomial in $\lVert F \rVert + n$ 
        using oracle gates for $\hom_{F, 2^{d} \cdot |V(S)| \cdot n}$.
    \end{lemma}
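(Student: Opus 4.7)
The plan is to combine \cref{lem:quotient} (which turns a single evaluation of $\hom_F$ at a structured input into a linear combination of $\hom_{F \oslash s}$) with the subgraph-extraction pipeline of \cref{lem:single-hom-projection}. First, I would feed the input variables through the $G^-$ construction used in \cref{lem:quotient}: starting from a $2^{d-1}|V(S)|n$-by-$2^{d-1}|V(S)|n$ variable matrix encoding a graph $G$, the associated $G^-$ has $2 \cdot 2^{d-1}|V(S)|n = 2^{d}|V(S)|n$ vertices per side, so a single oracle call to $\hom_{F, 2^{d}|V(S)|n}$ computes
\[
    \hom_{F, 2^{d}|V(S)|n}(G^-) \;=\; \sum_{s \colon V(F) \to \{A,B\}} \hom_{F \oslash s, \, 2^{d-1}|V(S)|n}(G).
\]
In other words, a single $\hom_F$-query buys us the linear combination of all contraction-quotients $F \oslash s$, each evaluated at an input of exactly the size required by \cref{lem:single-hom-projection}.

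Next, I would apply the remaining three steps of the proof of \cref{lem:single-hom-projection} verbatim to the right-hand side, treating the sum over $s$ as an outer summation that the entire pipeline is linear in. Concretely: \cref{lem:uncolour} rewrites each $\hom_{F \oslash s, 2^{d-1}|V(S)|n}(G)$ as $\sum_{c \colon V(F \oslash s) \to V(S)} \colhom_{F \oslash s, c, 2^{d-1}n}(G)$; the substitution $x = 1 + y$ then expands each $\colhom_{F \oslash s, c, \cdot}$ into $\sum_{H \subseteq F \oslash s} \colhom_{H, c, \cdot}$ over spanning subgraphs $H$; \cref{lem:colhom-interpolation} restricts this to those $H$ with $|E(H)| = |E(S)|$; and finally the difference of the evaluations at $G \times S_0$ and $G \times S_1$ annihilates, by \cref{claim:cfi-colourful-s}, every term whose non-isolated part $(H'', c|_{V(H'')})$ is not isomorphic to $(S, \id)$ as $S$-coloured graphs. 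The surviving contribution is a non-negative multiple of $\colhom_{S, \id, n}(G) = \colhom_{S, n}(G)$, with each surviving coefficient $\beta_H$ strictly positive by the same argument as in \cref{lem:single-hom-projection}.

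The main obstacle is establishing that this surviving coefficient is non-zero, and this is precisely where the \emph{minor} hypothesis enters (as opposed to the subgraph hypothesis in \cref{lem:single-hom-projection}). Fix a minor model of $S$ in $F$ with connected branch sets $\{B_i\}_{i \in V(S)}$, let $\sigma \colon V(S) \to \{A,B\}$ denote the bipartition of $S$, and define $s^* \colon V(F) \to \{A,B\}$ by $s^*(v) \coloneqq \sigma(i)$ whenever $v \in B_i$, extending arbitrarily on $V(F) \setminus \bigcup_i B_i$. Because $s^*$ is constant on each $B_i$ and $B_i$ is connected in $F$, all of $B_i$ lies in a single connected component of $(V(F), L)$, so each branch set contracts to a single vertex in $F \oslash s^*$; taking $H$ to be the image in $F \oslash s^*$ of the branch-set edges representing $E(S)$ and letting $c$ send each contracted branch-set vertex to the corresponding vertex of $S$ produces a triple $(s^*, c, H)$ with $(H'', c|_{V(H'')}) \cong (S, \id)$. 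Hence at least one positive term survives, and dividing by the resulting nonzero constant yields the claimed constant-depth circuit of size polynomial in $\lVert F \rVert + n$. Additional merging of branch sets or extraneous edges of $F$ only contribute further strictly positive terms to the coefficient, so no cancellation can occur.
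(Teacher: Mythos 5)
Your plan reverses the order of the quotient and subgraph steps relative to the paper, and this change introduces a genuine gap. The paper first applies the inclusion–exclusion substitution $\boldsymbol{x} \mapsto \boldsymbol{1}+\boldsymbol{x}$ to $\hom_{F,\cdot}$, obtaining $\sum_{F'\subseteq F}\hom_{F',\cdot}$, and only then applies \cref{lem:quotient} to each $\hom_{F',\cdot}$, getting $\sum_{F'\subseteq F}\sum_s \hom_{F'\oslash s,\cdot}$. This means the quotient is taken of an arbitrary spanning subgraph $F'$, and the paper can pick $F'$ to consist precisely of the within-branch-set edges and one edge per $ij\in E(S)$; with that $F'$ and the obvious $s$, the set of contracted edges is exactly the union of branch-set edges, so $F'\oslash s$ equals $S$ plus isolated vertices. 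You instead feed $(\boldsymbol{1}+\boldsymbol{y})^-$ into a single oracle call, which yields $\sum_s \sum_{H\subseteq F\oslash s}$; here the quotient is always applied to the \emph{full} $F$, and the subgraph step merely deletes edges of $F\oslash s$ afterwards, which cannot undo an unwanted contraction.

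The concrete failure is in the sentence "so each branch set contracts to a single vertex in $F\oslash s^*$". That statement is true, but it does not rule out two distinct branch sets $B_i, B_j$ (or a branch set and some vertices of $V(F)\setminus\bigcup_i B_i$) being merged together. With $s^*$ constant and equal to $\sigma(i)$ on each $B_i$, any edge of $F$ joining $B_i$ and $B_j$ with $\sigma(i)=\sigma(j)$, or any monochromatic path through $V(F)\setminus\bigcup_i B_i$, lands in $L$ and forces a merge. For instance, with $F$ the $6$-vertex path $a\text{-}b\text{-}c\text{-}d\text{-}e\text{-}f$ plus the chord $be$, and $S=K_{1,2}$ via $B_0=\{c\}$, $B_1=\{a,b\}$, $B_2=\{d,e\}$, the chord $be$ is monochromatic under $s^*$ and collapses $B_1$ and $B_2$ into a single vertex of $F\oslash s^*$, so the colouring $c$ you propose is not well defined and $(H'',c|_{V(H'')})\not\cong(S,\id)$. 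Your closing remark that "additional merging ... only contribute[s] further strictly positive terms" misidentifies the problem: the risk is not cancellation (the CFI coefficients are indeed nonnegative) but that the single designed term you exhibit may contribute nothing, leaving the surviving coefficient potentially zero. To close the gap you would have to argue that \emph{some} $s$ makes $S$ a subgraph of $F\oslash s$ — a non-trivial claim you never establish — whereas the paper's ordering makes the witnessing triple $(F',s,c)$ completely explicit and sidesteps the issue.
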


    \begin{proof}
        We first argue that every simple minor $S$ of $F$ is of the form $F' \oslash s$ for some subgraph $F' \subseteq F$ with $V(F') = V(F)$ and some map $s \colon V(F) \to \{A, B\}$.
        Let $B_0, B_1, \dots, B_k$ be branch sets for $S$ in $F$, i.e.\ $[k]$ is identified with $S$,
        $B_0 \uplus \dots \uplus B_k$ is a partition of $V(F)$,
        every $F[B_i]$ for $i \in [k]$ is connected,
        for $ij \in E(S)$, there exist an edge $uv \in E(F)$ with $u \in B_i$ and $v \in B_j$.
        
        Write $t \colon [k] \to \{A, B\}$ for the prescribed bipartition of $S$.
        Let $F'$ denote the subgraph of $F$ with $V(F') = V(F)$ containing all edges in between vertices in $B_i$ for $i \in [k]$,
        and a unique edge between $B_i$ and $B_j$ for $ij \in E(S)$, and no other edges.
        Let $s \colon V(F') \to \{A, B\}$ denote the map induced by $t$ taking arbitrary values on $B_0$.
        With this choice of $F'$ and $s$, the bipartite simple graph $F' \oslash s$ differs from $S$ only in potential isolated vertices.

        The remainder of the proof follows the structure of the proof of \cref{lem:single-hom-projection}.
        Analogous to \cref{eq:subgraphs-inclusion-exclusion},
        it holds that
        \begin{equation*}
            \hom_{F, n}(\boldsymbol{1} + \boldsymbol{x}) = \sum_{F' \subseteq F} \hom_{F', n}(\boldsymbol{x}) 
        \end{equation*}
        where the sum ranges over all subgraphs $F'$ of $F$ with vertex set $V(F') = V(F)$.
        By \cref{lem:quotient}, this allows to construct a polynomial-size constant-depth circuit for
        \begin{equation}\label{eq:subgraphs-and-quotients}
            q_n \coloneqq \sum_{F' \subseteq F} \sum_{s \colon V(F) \to \{A, B\}} \hom_{F' \oslash s, n}
         \end{equation}
         using oracle gates for $\hom_{F, 2n}$.
         Next, we apply \cref{lem:uncolour} to \cref{eq:subgraphs-and-quotients} obtaining a circuit 
         for
         \begin{align*}
             q_{|V(S)| \cdot n} 
             &= \sum_{F' \subseteq F} \sum_{s \colon V(F) \to \{A, B\}} \hom_{F' \oslash s, |V(S)| \cdot n} \\
             &= \sum_{F' \subseteq F} \sum_{s \colon V(F) \to \{A, B\}} \sum_{c \colon V(F' \oslash s) \to V(S)} \colhom_{F' \oslash s, c, n}.
         \end{align*}
         Finally, by \cref{lem:colhom-interpolation,claim:cfi-colourful-s}, as in the proof of \cref{lem:single-hom-projection},
         we obtain a circuit computing $\colhom_{S, n}$ of polynomial size and constant depth requiring oracle gates for $\hom_{F, 2^d \cdot |V(S)| \cdot n}$ where $d$ is the maximum degree of any vertex in $S$.
    \end{proof}

    Before proving \cref{thm:uncoloured-hom-complexity}, we note that the analogue of \cref{lem:colourful-containment} for uncoloured hom-polynomials is also true.

    \begin{lemma}\label{lem:uncoloured-containment}
        For every $p$-family $(F_n)_{n \in \mathbb{N}}$ of bipartite multigraphs,
        \begin{enumerate}
            \item $(\hom_{F_n, n})_{n \in \mathbb{N}}$ is in \VNP.
            \item if $\tw(F_n) \in O(1)$, then $(\hom_{F_n, n})_{n \in \mathbb{N}}$ is in \VP,
            \item if $\pw(F_n) \in O(1)$, then $(\hom_{F_n, n})_{n \in \mathbb{N}}$ is in \VBP,
            \item if $\td(F_n) \in O(1)$, then $(\hom_{F_n, n})_{n \in \mathbb{N}}$ is in \VF.
        \end{enumerate}
    \end{lemma}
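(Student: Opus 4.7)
The plan is to observe that each uncoloured homomorphism polynomial $\hom_{F_n,n}$ is simply a $p$-projection of its colourful counterpart $\colhom_{F_n,n}$, and then lift Lemma~\ref{lem:colourful-containment} through the closure of \VNP, \VP, \VBP, \VF under $p$-projections. First I would check that $(\hom_{F_n,n})$ is indeed a $p$-family: the variable set $\calX_n$ has size $n^2$, which is $p$-bounded, and $\deg(\hom_{F_n,n}) = |E(F_n)|$, which is $p$-bounded because $(F_n)$ is a $p$-family of multigraphs.

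Next I would instantiate the projection. Comparing the defining sums,
\[
\hom_{F,n} = \sum_{h\colon V(F)\to[n]} \prod_{ab\in E(F)} x_{h(a),h(b)}, \qquad \colhom_{F,n} = \sum_{h\colon V(F)\to[n]} \prod_{ab\in E(F)} x_{(a,h(a)),(b,h(b))},
\]
the substitution $x_{(a,i),(b,j)} \mapsto x_{i,j}$ (which forgets the colour coordinate but keeps the image coordinates) sends $\colhom_{F_n,n}$ to $\hom_{F_n,n}$. This is a variable-to-variable substitution, hence a projection, and since $(F_n)$ is a $p$-family it uniformly gives a $p$-projection $(\hom_{F_n,n}) \le_p (\colhom_{F_n,n})$. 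Applying the standard closure of \VNP, \VP, \VBP, and \VF under $p$-projections, each of the four containments in Lemma~\ref{lem:colourful-containment} transfers to the uncoloured setting.

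Alternatively (and essentially equivalently), I would point out that the explicit circuit constructions already developed in this paper compute the uncoloured polynomials directly: Lemma~\ref{lem:singleTreedepthHomCount} gives formulas of size polynomial in $\lVert F_n \rVert + n$ for $\hom_{F_n,n}$ whenever $\td(F_n) = O(1)$, yielding part~(4); Lemma~\ref{lem:pathwidth-small-circuit} gives skew circuits of polynomial size whenever $\pw(F_n) = O(1)$, yielding part~(3); and the construction behind \cite[Theorem~5.3]{DPS2025} (or \cite[Theorem~11]{KomarathPR23}) gives a general polynomial-size circuit whenever $\tw(F_n) = O(1)$, yielding part~(2). For part~(1), I would invoke Valiant's criterion: the coefficient of each monomial in $\hom_{F_n,n}$ is a nonnegative integer counting edge-preserving maps $V(F_n)\to[n]$ for a particular multiset of edge labels, which is computable in polynomial time in $n$ given $F_n$, so $(\hom_{F_n,n}) \in \VNP$.

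There is no real obstacle here; the only point requiring a second of attention is that a genuine projection (not a more general reduction) suffices, so the closure properties of \VF and \VBP under $p$-projections kick in without needing any depth-preservation argument. Thus the lemma is essentially a direct corollary of Lemma~\ref{lem:colourful-containment}.
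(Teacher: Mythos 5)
Your proof is correct, and you actually give two arguments: a projection route, and an "alternative" that coincides exactly with the paper's proof. The paper simply cites \cite[Proposition~3.3.1]{saurabh_analysis_2016} with Valiant's criterion for part~(1), \cite[Theorem~5.3]{DPS2025} for part~(2), \cref{lem:pathwidth-small-circuit} for part~(3), and \cref{lem:singleTreedepthHomCount} for part~(4) — which is precisely your second paragraph.

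Your primary route, passing through $\colhom_{F_n,n}$ via the many-to-one variable substitution $x_{(a,i),(b,j)} \mapsto x_{i,j}$, is a genuinely different and slightly more modular argument: it reduces the uncoloured lemma to \cref{lem:colourful-containment} plus the closure of \VNP, \VP, \VBP, \VF under $p$-projections. The substitution is a bona fide $p$-projection (the target variable set has size $|A_n||B_n|n^2$, polynomial because $(F_n)$ is a $p$-family, and pure variable relabelling preserves formula and skew structure), so every step goes through. The trade-off is that \cref{lem:colourful-containment} itself already appeals to the same circuit constructions (\cref{lem:singleTreedepthHomCount}, \cref{lem:pathwidth-small-circuit}, and \cite[Theorem~5.3]{DPS2025}) underneath, so the paper's direct citation avoids the detour; on the other hand, your route makes explicit the general principle that the uncoloured polynomial is always a projection of the colourful one, which is a reusable fact. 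Both are valid.
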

    \begin{proof}
        The first assertion follows from
        \cite[Proposition~3.3.1]{saurabh_analysis_2016} by Valiant's criterion \cite{valiant_completeness_1979},
        the second from \cite[Theorem~5.3]{DPS2025},
        the third from \cref{lem:pathwidth-small-circuit},
        and the fourth  from \cref{lem:singleTreedepthHomCount}.
    \end{proof}

    This concludes the preparations for the proof of \cref{thm:uncoloured-hom-complexity}.

    \thmUncolouredHom*
    
    \begin{proof}
        Containment follows from \cref{lem:uncoloured-containment}.
        As in the proof of \cref{thm:colourful-hom-vnp-hard},
        there exists a polynomial $p$ such that, depending on the statement to be proven,
        for every sufficiently large $n \in \mathbb{N}$,
        \begin{enumerate}
            \item $F_{p(n)}$ contains the grid $G_{n \times n}$ as a minor,
            \item $F_{p(n)}$ contains the complete binary tree $B_{n}$ as a minor,
            \item $F_{p(n)}$ contains the path $P_n$ as a minor.
        \end{enumerate}
        Write $S_n \in \{ G_{n \times n}, B_n, P_n\}$ for the respective minor.
        Note that in each case the maximum degree of any vertex in $S_n$ is at most $4$ and the number of vertices in $S_n$ is at most $n^2$.
        Furthermore, $p$ may be chosen such that
        \[
            2^{4} \cdot  n \cdot |V(S_{n})| \leq 16 n^3 \leq p(n)
        \]
        for all sufficiently large $n \in \mathbb{N}$.
        By \cref{lem:single-hom-projection-minors},
        $(\colhom_{S_n, n})_{n \in \mathbb{N}}$ constant-depth $c$\nobreakdash-reduces to $(\hom_{F_n, n})_{n \in \mathbb{N}}$.
        By \cref{thm:colourful-hom-vnp-hard}, $(\colhom_{S_n, n})_{n \in \mathbb{N}}$
        is \VNP-, \VP, or \VBP-hard, respectively, depending on the initial assumption.
    \end{proof}

    \subsection{Linear combinations of homomorphism polynomials}
    \label{ssec:linearcombinations}
    
    We finally turn to linear combinations of (uncoloured) homomorphism polynomials. 
    Here, we employ an interpolation argument based on the following \cref{lem:bipartite-multigraph-linear-independent}.

    In \cref{ex:not-linearly-independent}, we observed that homomorphism polynomials of non-isomorphic patterns are not necessarily linearly independent.
    Nonetheless, linear independence holds if the patterns are not larger than the host size and the patterns do not have isolated vertices.
    The following lemma follows from \cite[Proposition~5.44 and Exercise~5.47]{lovasz_large_2012}. For completeness, we include a proof.
    
    The assumption that the patterns do not have isolated vertices does not constitute a loss of generality since $\hom_{F + K_1, n, m} = n \cdot \hom_{F, n,m }$ if the isolated vertex $K_1$ is added to the left part of the bipartition.
    In other words, isolated vertices can be handled in our non-uniform setting by rescaling the coefficients.

    \begin{lemma}\label{lem:bipartite-multigraph-linear-independent}
        Let $\mathbb{K}$ denote a field of characteristic zero.
        Let $n,m \in \mathbb{N}$.
        Let $F_1, \dots, F_r$ be bipartite multigraphs on at most $(n,m)$ without isolated vertices.
        If the  $F_1, \dots, F_r$ are pairwise non-isomorphic,\footnote{An isomorphism of bipartite multigraphs preserves the fixed bipartitions and the edge multiplicities.}
        then the polynomials $\hom_{F_1, n,m}, \dots, \hom_{F_r, n,m} \in \bbQ[\mathcal{X}_{n,m}]$
        are linearly independent.
    \end{lemma}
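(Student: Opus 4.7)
I would proceed by extracting, for each $F_i$, a specific witness monomial in $\hom_{F_i, n,m}$ whose coefficients across the family $(\hom_{F_j, n,m})_j$ form a triangular system once the $F_i$ are ordered by non-increasing $|V(F_i)|$. The starting structural observation is that any bipartition-preserving map $h \colon V(F) \to [n] \uplus [m]$ contributes to $\hom_{F, n, m}$ the monomial
\[
    \prod_{uv \in E(F)} x_{h(u), h(v)} \;=\; \prod_{(i,j)} x_{i,j}^{c(i,j)},
\]
where $c(i,j)$ counts, with multiplicity, the edges of $F$ sent to $(i,j)$. This monomial depends only on the pushforward bipartite multigraph $h_*(F)$ on $h(A_F) \uplus h(B_F) \subseteq [n] \uplus [m]$. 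Since $F$ has no isolated vertices, this monomial uses exactly $|h(A_F)|$ distinct left indices and $|h(B_F)|$ distinct right indices; both attain their maxima $|A_F|, |B_F|$ if and only if $h$ is injective on each side of the bipartition.

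For each $F_i$, fix an injection $h_i \colon V(F_i) \hookrightarrow [n] \uplus [m]$ respecting the bipartition; this is possible since $|A_i| \le n$ and $|B_i| \le m$. Let $\mathbf{m}_i$ be the monomial produced by $h_i$. I would then establish the key coefficient claim: the coefficient of $\mathbf{m}_i$ in $\hom_{F_j, n, m}$ vanishes whenever $|V(F_j)| < |V(F_i)|$, and when $|V(F_j)| = |V(F_i)|$ it equals $|\!\Aut(F_i)| > 0$ if $F_j \cong F_i$ as bipartite multigraphs and $0$ otherwise. The vanishing follows since any pushforward $h_*(F_j)$ occupies at most $|V(F_j)|$ distinct vertices in its support, whereas $\mathbf{m}_i$ uses $|V(F_i)|$ distinct indices. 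In the equality case, $h$ must be a bijection of $V(F_j)$ onto the image of $h_i$, and $h_*(F_j) = h_i{}_*(F_i)$ forces $h_i^{-1} \circ h$ to be an isomorphism $F_j \to F_i$; the number of such $h$ is $|\!\Aut(F_i)|$ when $F_j \cong F_i$ and zero otherwise, thanks to the bipartition-preserving convention fixed in the footnote of the lemma.

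Given this coefficient table, the conclusion follows by strong induction on $|V(F_i)|$ decreasing. Supposing $\sum_i \alpha_i \hom_{F_i, n, m} = 0$ and assuming inductively that $\alpha_j = 0$ for all $j$ with $|V(F_j)| > |V(F_i)|$, reading off the coefficient of $\mathbf{m}_i$ yields exactly $\alpha_i \cdot |\!\Aut(F_i)|$: the strictly larger $F_j$ are killed by induction, the strictly smaller $F_j$ contribute zero by the vanishing part of the claim, and among $F_j$ with $|V(F_j)| = |V(F_i)|$ only $j = i$ survives by pairwise non-isomorphism. Hence $\alpha_i = 0$ over any field of characteristic zero. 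The main obstacle is the precise verification of the equality case of the coefficient claim: one must check that bijections producing the monomial $\mathbf{m}_i$ are exactly bipartite multigraph isomorphisms, carefully accounting for edge multiplicities (parallel edges between the same pair of endpoints are indistinguishable in the monomial, but the exponent records their multiplicity) and ensuring the counting of automorphisms uses the multigraph-isomorphism convention consistent with the statement.
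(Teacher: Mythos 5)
Your proof is correct. It differs from the paper's argument mainly in packaging. The paper passes through embedding polynomials $\emb_{F,n,m}$ (sums over injective vertex maps), notes that distinct isomorphism types produce embedding polynomials with disjoint monomial supports — hence linearly independent — and then invokes the partition-quotient identity $\hom_{F,n,m} = \sum_{\pi,\sigma}\emb_{F/(\pi,\sigma),n,m}$ to reduce to that base, leaving the triangularity step (ordering by $|V(F)|$) implicit and delegating the heavy lifting to Lovász. Your route instead extracts a single witness monomial $\mathbf{m}_i$ arising from one injective $h_i$ — in effect, a monomial of $\emb_{F_i,n,m}$ — and computes its coefficient directly in each $\hom_{F_j,n,m}$, recovering the same triangular structure inline and exhibiting the nonzero diagonal entry $\lvert\Aut(F_i)\rvert$ explicitly. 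Both proofs rest on the same underlying fact: a non-injective $h$ uses fewer distinct indices than $|V(F_j)|$, so only equal-size patterns can share a witness monomial, and among those only isomorphic ones do, with the bipartition-preserving constraint forcing $|A_j| = |A_i|$ and $|B_j| = |B_i|$ automatically. What your version buys is self-containment (no quotient/M\"obius apparatus at all) and a visible point of use for characteristic zero (namely $\lvert\Aut(F_i)\rvert \neq 0$ in $\mathbb{K}$); the paper's version is terser because it taps a standard tool from graph limit theory.
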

    \begin{proof}
Consider the embedding polynomials from \cite[Appendix~A]{DPS2025}.
        For a bipartite multigraph $F$ with bipartition $A \uplus B$, let
        \[
            \emb_{F, n, m} \coloneqq  \sum_{h \colon A \uplus B \hookrightarrow [n] \uplus [m]} \prod_{ab \in E(F)} x_{h(a), h(b)}.
        \]
        Since the isomorphism type of $F$ can be read off any of the monomials of $\emb_{F, n, m}$,
        it holds that $\emb_{F, n, m} = \emb_{F', n, m} \iff F \cong F'$ for all  bipartite multigraphs $F$ and $F'$ on at most $(n,m)$ vertices without isolated vertices.
        Hence, the polynomials $\emb_{F_1, n,m}, \dots, \emb_{F_r, n,m} \in \bbQ[\mathcal{X}_{n,m}]$
        are linearly independent.
        By \cite[Appendix~A]{DPS2025},
        \begin{equation}\label{eq:hom-to-emb}
            \hom_{F, n, m} = \sum_{\pi \in \Pi(A)} \sum_{\sigma \in \Pi(B)} \emb_{F/(\pi, \sigma), n, m} 
        \end{equation}
        where $\pi$ and $\sigma$ range over partitions of $A$ and $B$, respectively,
        and $F/(\pi, \sigma)$ is the bipartite quotient of $F$ according to $(\pi, \sigma)$.
        Suppose that the graph $F_i$ for $i \in [r]$ has vertex set $A_i \uplus B_i$.
        Let $\alpha_1, \dots, \alpha_r \in \bbQ$ be coefficients such that 
        \begin{align*}
            0 & = \sum_{i=1}^r \alpha_i \hom_{F_i, n,m} = \sum_{i=1}^r \sum_{\pi \in \Pi(A_i)} \sum_{\sigma \in \Pi(B_i)} \alpha_i \emb_{F_i/(\pi, \sigma), n, m}.
        \end{align*}
        By the previous observation,
        the $\emb_{F_i / (\pi, \sigma), n, m}$ are linearly independent.
        Thus, $\alpha_1 = \dots = \alpha_r = 0$, as desired.
    \end{proof}

    In order to turn \cref{lem:bipartite-multigraph-linear-independent} into an interpolation argument,
    we require the following standard fact.

    \begin{fact}\label{fact:interpolation}
        Let $\mathbb{K}$ be a field of characteristic zero.
        For $n,d \in \mathbb{N}$,
        let $f_1, \dots, f_n \colon \mathbb{K}^d \to \mathbb{K}$ be  functions.
        If the $f_1, \dots, f_n$ are linearly independent, then there exist points $\boldsymbol{x}_1, \dots, \boldsymbol{x}_n \in \mathbb{K}^d$ such that the matrix $(f_i(\boldsymbol{x}_j))_{i,j \in [n]}$ is invertible. 
    \end{fact}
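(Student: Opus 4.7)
The plan is to prove Fact \ref{fact:interpolation} by induction on $n$, using cofactor expansion to reduce invertibility of the $n \times n$ matrix to finding a point where a particular nonzero linear combination of the $f_i$ does not vanish. Since the statement is a classical result about linearly independent functions on an arbitrary set (in our case $\mathbb{K}^d$), no field-specific machinery beyond basic linear algebra should be required; the characteristic zero assumption is not strictly necessary.

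For the base case $n = 1$, the single function $f_1$ is linearly independent, hence nonzero as a function, so there exists $\boldsymbol{x}_1 \in \mathbb{K}^d$ with $f_1(\boldsymbol{x}_1) \neq 0$, making the $1 \times 1$ matrix invertible. For the inductive step, assume the claim holds for any $n-1$ linearly independent functions. Since $f_1, \dots, f_{n-1}$ inherit linear independence from $f_1, \dots, f_n$, the inductive hypothesis provides points $\boldsymbol{x}_1, \dots, \boldsymbol{x}_{n-1} \in \mathbb{K}^d$ such that the matrix $A \coloneqq (f_i(\boldsymbol{x}_j))_{i,j \in [n-1]}$ is invertible.

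Next, I would define the function $D \colon \mathbb{K}^d \to \mathbb{K}$ by
\[
    D(\boldsymbol{x}) \;\coloneqq\; \det \begin{pmatrix} f_1(\boldsymbol{x}_1) & \cdots & f_1(\boldsymbol{x}_{n-1}) & f_1(\boldsymbol{x}) \\ \vdots & & \vdots & \vdots \\ f_n(\boldsymbol{x}_1) & \cdots & f_n(\boldsymbol{x}_{n-1}) & f_n(\boldsymbol{x}) \end{pmatrix}.
\]
Expanding the determinant along the last column yields $D(\boldsymbol{x}) = \sum_{i=1}^n c_i f_i(\boldsymbol{x})$, where each $c_i \in \mathbb{K}$ is (up to sign) the determinant of the $(n-1) \times (n-1)$ submatrix obtained by deleting the $i$-th row and last column; crucially, the $c_i$ do not depend on $\boldsymbol{x}$. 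Moreover, $c_n = \det(A) \neq 0$ by the inductive hypothesis. Hence $(c_1, \dots, c_n)$ is a nonzero tuple of coefficients, so by linear independence of $f_1, \dots, f_n$, the function $D$ is not identically zero. Therefore, there exists $\boldsymbol{x}_n \in \mathbb{K}^d$ with $D(\boldsymbol{x}_n) \neq 0$, which is precisely the statement that the full $n \times n$ matrix $(f_i(\boldsymbol{x}_j))_{i,j \in [n]}$ is invertible, completing the induction.

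There is no real obstacle here; the only point that requires attention is to observe that the cofactors $c_i$ arising from the last-column expansion are \emph{constants} (functions of the already-fixed $\boldsymbol{x}_1, \dots, \boldsymbol{x}_{n-1}$ only), so that $D$ is a genuine $\mathbb{K}$-linear combination of the $f_i$ rather than something more complicated. Given this, the argument reduces to the slogan: linear independence promotes a nonvanishing coefficient vector to a nonvanishing value, witnessed by an evaluation point.
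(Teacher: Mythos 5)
Your proof is correct, and the reasoning is complete: the cofactors of the last column depend only on $\boldsymbol{x}_1,\dots,\boldsymbol{x}_{n-1}$, the leading cofactor $c_n = \det A$ is nonzero by the induction hypothesis, so $D = \sum_i c_i f_i$ is a nontrivial linear combination which cannot vanish identically by linear independence. This delivers the new evaluation point $\boldsymbol{x}_n$ and closes the induction. Your observation that characteristic zero plays no role is also accurate.

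The paper takes a different, non-inductive route. It considers the image subspace $U \coloneqq \operatorname{span}\{(f_1(\boldsymbol{x}),\dots,f_n(\boldsymbol{x})) : \boldsymbol{x}\in\mathbb{K}^d\} \subseteq \mathbb{K}^n$ and sets up a dichotomy: if $U = \mathbb{K}^n$, then some $n$ of the spanning evaluation vectors already form a basis, giving an invertible matrix in one shot; if $U \subsetneq \mathbb{K}^n$, a nonzero vector $\boldsymbol{c} \in U^\perp$ yields the identity $\sum_i c_i f_i(\boldsymbol{x}) = 0$ for all $\boldsymbol{x}$, contradicting linear independence. The two arguments are of comparable length and both are elementary linear algebra over an arbitrary field; yours is incremental and slightly more constructive (you exhibit the points $\boldsymbol{x}_j$ one at a time via a bordered-determinant function that must be nonzero somewhere), whereas the paper's is a one-step dichotomy that makes the underlying duality between ``points failing to span'' and ``a linear dependence among the $f_i$'' explicit. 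Either form is a perfectly good proof of this classical fact.
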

    \begin{proof}
        We give a self-contained proof inspired by \cite{grossmann_existence_2020}.
        Consider the subspace
        \[
            U \coloneqq \operatorname{span}\{ (f_1(\boldsymbol{x}), \dots, f_n(\boldsymbol{x}) ) \mid \boldsymbol{x} \in \mathbb{K}^d \} \subseteq \mathbb{K}^n.
        \]
        If $U = \mathbb{K}^n$, then there exist evaluation points $\boldsymbol{x}_1, \dots, \boldsymbol{x}_n$ such that their corresponding vectors in $U$ span $\mathbb{K}^n$. For these vectors, the matrix $(f_i(\boldsymbol{x}_j))_{i,j \in [n]}$ is invertible.
        Otherwise, there exists a non-zero vector $\boldsymbol{c} \in U^\bot$ in the orthogonal complement of $U$.
        For every $\boldsymbol{x} \in \mathbb{K}^d$,
        \[
            c_1 f_1(\boldsymbol{x}) + \dots + c_n f_n(\boldsymbol{x}) = 0
        \]
        contradicting that the functions $f_1, \dots, f_n$ are linearly independent.
    \end{proof}

   This concludes the preparations for the proof of \cref{lem:uncoloured-simple-lovasz-interpolation},
   which allows to reduce a homomorphism polynomial for a single pattern to a linear combination of homomorphism polynomials of sublinear volume containing it.
   Here, we make use of the non-uniformity of algebraic computation to obtain interpolation coefficients. 
   In the setting of \cite{curticapean_count_2024},
   this could not be done within the given computational boundaries.
    
    \begin{lemma}\label{lem:uncoloured-simple-lovasz-interpolation}
        Let $n,r,N \in \mathbb{N}$.
        Let $F_i$ for $i \in [r]$ be a collection of pairwise non-isomorphic bipartite multigraphs without isolated vertices on at most $N$ vertices.
        Let $\alpha_i$ for $i \in [r]$ be non-zero coefficients.
        Over every field of characteristic zero,
        for every $\ell \in [r]$,
        there exists a constant-depth circuit of size polynomial in $n + N + r$ 
        with oracle gates for $\sum_{i \in [r]} \alpha_i \hom_{F_i, n \cdot N}$ 
        computing the polynomial $\hom_{F_\ell, n}$.
    \end{lemma}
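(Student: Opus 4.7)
The plan is to leverage the multiplicativity of homomorphism polynomials under the categorical product of bipartite graphs, combined with a non-uniform interpolation argument based on \cref{lem:bipartite-multigraph-linear-independent,fact:interpolation}. For an $(n,n)$-bipartite graph $G = (x_{i,i'})$ and an $(N, N)$-bipartite graph $H = (y_{k,k'})$, define their categorical product $G \times H$ as the $(nN, nN)$-bipartite graph whose edge weights are $z_{(i,k),(i',k')} \coloneqq x_{i,i'} \cdot y_{k,k'}$. A direct expansion of the sum over $h \colon V(F_i) \to [n] \times [N]$ into a sum over pairs $(h_1, h_2)$ with $h_1 \colon V(F_i) \to [n]$ and $h_2 \colon V(F_i) \to [N]$ yields the identity
\[
    \hom_{F_i, n \cdot N}(G \times H) \;=\; \hom_{F_i, n}(G) \cdot \hom(F_i, H).
\]

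First, I would apply \cref{lem:bipartite-multigraph-linear-independent} at host size $(N, N)$ to conclude that the polynomials $\hom_{F_1, N, N}, \dots, \hom_{F_r, N, N} \in \bbQ[\calX_{N, N}]$ are linearly independent, using that the $F_i$ are pairwise non-isomorphic and have at most $N$ vertices without isolated vertices. Then \cref{fact:interpolation} supplies $(N,N)$-bipartite graphs $H_1, \dots, H_r \in \mathbb{K}^{N \times N}$ such that the $r \times r$ matrix $M \coloneqq (\hom(F_i, H_j))_{i,j \in [r]}$ is invertible. The graphs $H_j$ and the entries of $M^{-1}$ are hard-coded scalar constants of the circuit, which is legitimate in the non-uniform model.

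The circuit computing $\hom_{F_\ell, n}$ on input $G$ proceeds as follows. For each $j \in [r]$, it forms the $(nN, nN)$-vertex graph $G \times H_j$ by a single layer of multiplications $x_{i,i'} \cdot (H_j)_{k,k'}$ (at most $(nN)^2$ gates), queries the oracle on $G \times H_j$ to obtain
\[
    v_j \;=\; \sum_{i=1}^r \alpha_i \hom_{F_i, n \cdot N}(G \times H_j) \;=\; \sum_{i=1}^r \alpha_i \hom(F_i, H_j) \cdot \hom_{F_i, n}(G),
\]
and then solves the linear system $v = M^{T} u$ with $u_i \coloneqq \alpha_i \hom_{F_i, n}(G)$, outputting $\hom_{F_\ell, n}(G) = (M^{-T} v)_\ell / \alpha_\ell$ as a hard-coded linear combination of the $v_j$. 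The resulting circuit has constant depth (substitution, oracle call, linear combination) and total size polynomial in $n + N + r$.

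The only genuine subtlety is verifying the multiplicativity identity while respecting the fixed bipartitions: one must ensure that $[n \cdot N]$ on each side of $G \times H$ is identified with $[n] \times [N]$ compatibly with the left/right split of $F_i$. Beyond this bookkeeping, the non-vanishing of $\alpha_\ell$ and the invertibility of $M$ make the interpolation step immediate.
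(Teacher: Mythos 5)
Your proposal is correct and takes essentially the same route as the paper: linear independence of $\hom_{F_1,N},\dots,\hom_{F_r,N}$ via \cref{lem:bipartite-multigraph-linear-independent}, an invertible evaluation matrix via \cref{fact:interpolation}, the multiplicativity $\hom_{F_i, nN}(G \times H_j) = \hom_{F_i,n}(G)\,\hom_{F_i,N}(H_j)$ (the paper cites \cref{lem:product-colourful} for this), and a hard-coded linear combination of oracle evaluations on the products $G \times H_j$ to isolate $\alpha_\ell \hom_{F_\ell,n}(G)$. The only cosmetic difference is that you invert $M$ and solve for all $u_i$, whereas the paper simply picks the single coefficient vector $\boldsymbol{\beta}$ with $M\boldsymbol{\beta} = e_\ell$; and note that the $H_j$ from \cref{fact:interpolation} are arbitrary edge-weighted matrices, so $\hom(F_i,H_j)$ should be read as the polynomial evaluation $\hom_{F_i,N}(H_j)$ rather than a literal homomorphism count.
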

    \begin{proof}
        By \cref{lem:bipartite-multigraph-linear-independent},
        the polynomials $\hom_{F_1, N}, \dots, \hom_{F_r, N}$ are linearly independent.
        In particular, they are linear independent as functions $\mathbb{K}^{N \times N} \to \mathbb{K}$.
        By \cref{fact:interpolation},
        there exist points $\boldsymbol{x}_1, \dots, \boldsymbol{x}_r \in \mathbb{K}^{N \times N}$
        such that the matrix $M \coloneqq (\hom_{F_i, N}(\boldsymbol{x}_j))_{i,j} \in \mathbb{K}^{r \times r}$ is invertible.
        Hence, there exist coefficient vector $\boldsymbol{\beta} \in \mathbb{K}^r$ such that
        $M \boldsymbol{\beta} = e_\ell$ where $e_\ell \in \mathbb{K}^r$ denotes the $\ell$-th standard basis vector.
        In other words,
        \begin{equation}\label{eq:delta}
            \sum_{j \in [r]} \beta_j \hom_{F_i, N}(\boldsymbol{x}_j) = \delta_{i ,\ell}
        \end{equation}
        for all $i \in [r]$.
        Let $G$ be an arbitrary edge-weighted $(n,n)$-vertex bipartite graph.
        Then, by \cref{lem:product-colourful},
        \begin{align*}
            \sum_{j \in [r]} \beta_j \sum_{i \in [r]} \alpha_i \hom_{F_i, n \cdot N}(G \times \boldsymbol{x}_j)
            &= \sum_{j \in [r]} \beta_j \sum_{i \in [r]} \alpha_i \hom_{F_i, n}(G) \hom_{F_i, N}(\boldsymbol{x}_j) \\
            &= \sum_{i \in [r]} \alpha_i \hom_{F_i, n}(G) \sum_{j \in [r]} \beta_j  \hom_{F_i, N}(\boldsymbol{x}_j) \\
            &\overset{\mathclap{\eqref{eq:delta}}}{=} \sum_{i \in [r]} \alpha_i \hom_{F_i, n}(G) \cdot \delta_{i ,\ell}\\
            &= \alpha_\ell \hom_{F_\ell, n}(G).
        \end{align*}
        Since $\alpha_\ell \neq 0$, we may divide by $\alpha_\ell$ and obtain a circuit for $\hom_{F_\ell, n}(G)$.
    \end{proof}

    We prove \cref{thm:uncoloured-lincomb-hom-complexity} on the complexity of linear combinations of homomorphism polynomials of bipartite multigraphs.
    The proof employs \cref{lem:uncoloured-simple-lovasz-interpolation} to reduce to \cref{thm:uncoloured-hom-complexity}.

    \thmLincombHomComplexity*
    \begin{proof}
        Containment follows from \cref{lem:uncoloured-containment} and observing that \VNP, \VP, and \VBP are closed under taking linear combinations of polynomially many terms.
        Intuitively, the polynomials $(\hom_{F_n, n})_{n \in \mathbb{N}}$ constant-depth $c$\nobreakdash-reduce to $(\sum \alpha_{F, n}\hom_{F, n})_{n \in \mathbb{N}}$ via \cref{lem:uncoloured-simple-lovasz-interpolation}.
        Hardness then follows from \cref{thm:uncoloured-hom-complexity}.

        For $n \in \mathbb{N}$, write $F_n$ for the large-width graph appearing with non-zero coefficient $\alpha_{F_n, n}$ in the linear combination.
        Since $|V(F_n)| \cdot n^{\epsilon} \leq n^{1 - \epsilon} \cdot n^{\epsilon}  = n$,
        $(\hom_{F_n, n^{\epsilon}})_{n \in \mathbb{N}}$
        constant-depth $c$\nobreakdash-reduces to
        $(p_n)$ via \cref{lem:uncoloured-simple-lovasz-interpolation}.
        Here, the size of the oracle circuit is polynomial in $\vol(n) + \dim(n) + n$, which is assumed to be polynomial in~$n$.
        Finally, observe that
        \begin{enumerate}
            \item $\tw(F_{n^{1/\epsilon}}) \geq (n^{1/\epsilon})^\epsilon = n$,
            \item $\pw(F_{n^{1/\epsilon}}) \geq \frac{\epsilon}{\epsilon} \log(n) = \log(n)$,
            \item $\td(F_{n^{1/\epsilon}}) \geq \log(n)$.
        \end{enumerate}
        Hence, by \cref{thm:uncoloured-hom-complexity}, $(\hom_{F_{n^{1/\epsilon}}, n})_{n \in \mathbb{N}}$
        is 
        \VNP-, \VP-, or \VBP-complete, respectively.
    \end{proof}

    \subsection{Hardness from parametrised Valiant's classes}
    \label{ssec:parametrised}

    We finally prove \cref{thm:hom-parametrised-simplified,thm:uncoloured-lincomb-hom-complexity-parametrised} and \cref{cor:symmetric-vs-non-symmetric-computation}.
    As in \cref{lem:colourful-grid-vnp-complete}, the starting point is a reduction from clique polynomials, which were shown to be \VW-complete by \textcite{blaser_parameterized_2019}, to colourful homomorphism polynomials for grids.
    Note that $(\colhom_{G_{k \times k}, n^2})$ is not a parametrised $p$-family as defined in \cref{def:parametrised-p-family} since the number of variables is $k^2 n^2$, which is not $p$-bounded in~$n$.
    Hence, $(\colhom_{G_{k \times k}, n^2})$ does not lie in \VW.

    \begin{lemma}\label{lem:grid-parametrised}
        The family $(\colhom_{G_{k \times k}, n})$
        is \VW-hard under fpt-$c$\nobreakdash-reductions over any field.
    \end{lemma}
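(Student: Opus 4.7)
The plan is to adapt the gadget construction from the proof of \cref{lem:colourful-grid-vnp-complete} to the parametrised setting, starting from a \VW-complete clique family instead of the \VNP-complete one. By \cite{blaser_parameterized_2019}, the parametrised clique polynomial
\[
\clique_{n,k} \;\coloneqq\; \sum_{A \in \binom{[n]}{k}} \;\prod_{\substack{i,j \in A \\ i < j}} y_{ij}
\]
is \VW-complete under fpt-reductions. I would produce an fpt-projection from $(\clique_{n,k})$ to $(\colhom_{G_{k\times k}, n^2})$, which in turn fpt-$c$-reduces to $(\colhom_{G_{k\times k}, n})$ by the polynomial reparametrisation $n \mapsto n^2$ admissible under \cref{def:fpt-reductions}.

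Concretely, imitating the reduction from $\clique_n$ to $\colhom_{G_{n\times n}, 4n^2}$, I would construct a $G_{k\times k}$-coloured edge-weighted bipartite graph $G$ respecting the bipartition of the grid (vertices $(i,j)$ with $i+j$ even on one side, those with $i+j$ odd on the other). The diagonal colour class at $(i,i) \in V(G_{k\times k})$ receives vertices $(i,i,v,v)$ for $v \in [n]$, and each off-diagonal class $(i,j)$ with $i \neq j$ receives vertices $(i,j,u,v)$ for $u \neq v$ in $[n]$. The horizontal edge $(i,j,u,v)$–$(i,j+1,u,v')$ carries weight $y_{vv'}$ when $v < v'$ and weight $0$ otherwise; the vertical edge $(i,j,u,v)$–$(i+1,j,u',v)$ carries weight $1$ when $u < u'$ and weight $0$ otherwise; all other edges carry weight~$0$. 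The largest colour class contains at most $n^2$ vertices, so the evaluation lives inside $\colhom_{G_{k\times k}, n^2}$.

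The verification that $\colhom_{G_{k\times k}, n^2}(G) = \clique_{n,k}$ runs entirely parallel to the proof of \cref{lem:colourful-grid-vnp-complete}: the diagonal vertices select a tuple $v_1, \dots, v_k \in [n]$ for the candidate clique, the vertical edges of weight~$1$ force $v_i < v_{i+1}$ through the injectivity-style constraint on the second coordinate, and the horizontal edges contribute the product $\prod_{i<j} y_{v_i v_j}$. All other homomorphisms into $G$ send at least one edge to weight~$0$ and hence vanish. Thus each monomial of $\colhom_{G_{k\times k}, n^2}(G)$ corresponds, after renaming, to a unique $k$-element subset $A = \{v_1 < \dots < v_k\} \subseteq [n]$, exactly matching $\clique_{n,k}$.

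Finally, composing the fpt-projection above with the substitution $n \leftarrow n^2$ yields an fpt-$c$-reduction from $(\clique_{n,k})$ to $(\colhom_{G_{k\times k}, n})$; transitivity together with \VW-completeness of $(\clique_{n,k})$ then delivers \VW-hardness. The main obstacle I anticipate is bookkeeping rather than conceptual: one must verify that the gadget respects the bipartition of $G_{k\times k}$ (so that $\colhom$ is well defined in the bipartite sense of \cref{def:colourful-hom-polynomial}), check that the degree $|E(G_{k\times k})| = 2k(k-1)$ and the variable count are fpt-bounded in $(n,k)$ as required for a parametrised $p$-family, and confirm that the polynomial size blow-up $n^2$ of the target index fits within the $q(n)s(k)$-bound in \cref{def:fpt-reductions}. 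All of these are routine once the gadget is in place.
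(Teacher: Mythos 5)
Your proposal misquotes the \VW-complete polynomial from \cite{blaser_parameterized_2019}. The parametrised clique polynomial that Bläser and Engels establish as \VW-complete (and which the paper records verbatim) is
\[
\clique_{n,k} \;\coloneqq\; \sum_{\substack{C \subseteq [n] \\ \lvert C \rvert = k}} \;\prod_{\substack{i,j \in C \\ i < j}}\, y_{i,j} \;\prod_{i \in C} x_i,
\]
which carries an extra factor $\prod_{i \in C} x_i$ over the \VNP-complete $\clique_n$ of \cref{eq:hrubes}. Your version drops this factor. You therefore need to either (i) prove separately that your simplified polynomial is also \VW-hard --- which does not follow directly from \cite{blaser_parameterized_2019} and which you do not attempt --- or (ii) adapt the gadget to absorb the $x_i$ factors. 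Your gadget, a verbatim transplant of the one from \cref{lem:colourful-grid-vnp-complete}, produces exactly $\sum_A \prod_{i<j} y_{v_i v_j}$ and has no place to attach the $x_i$'s, so the reduction as written starts from the wrong polynomial.

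The paper handles this by enlarging the pattern: it reduces $\clique_{n,k}$ to $\colhom_{G_{k \times (k+1)}, n^2}$ (a $k \times (k+1)$ grid with one extra column $j=0$), placing weight $x_u$ on the edges between columns $0$ and $1$ so that each diagonal choice $v_i = u$ picks up the factor $x_u$, then uses \cref{lem:minor-projection} to pass to $G_{(k+1) \times (k+1)}$, and finally an fpt-reparametrisation to $G_{k \times k}$. Your proof omits this entire chain because the polynomial you start from does not need it. Once you correct the clique polynomial, the missing column and the subsequent minor/rescaling steps are the concrete additions you must supply.
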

    \begin{proof}
        By the main result of \cite{blaser_parameterized_2019}, 
        the family $(\clique_{n,k})$ is \VW-complete under fpt-$c$\nobreakdash-reductions over any field where
        \[
            \clique_{n,k} \coloneqq \sum_{\substack{C \subseteq [n] \\ \lvert C \rvert = k}} \prod_{\substack{i,j \in C \\ i < j}} y_{i,j} \prod_{i \in C} x_i.
        \]
        Note that this polynomial differs from the \VNP-complete polynomial $(\clique_n)$ from \cref{eq:hrubes} by the parameter $k$ and the product $\prod_{i \in C} x_i$.
        The proof of \cref{lem:colourful-grid-vnp-complete} can be adjusted accordingly as follows.
        Let $G_{k \times (k+1)}$ denote the grid with vertex set $[k] \times \allowbreak \{0,\dots, k\}$.
        We define a $G$-coloured edge-weighted graph such that $\colhom_{G_{k \times (k+1)}, n^2}$ evaluated at $G$ equals $\clique_{n, k}$.
        The vertices of $G$ are given as follows:
        \begin{itemize}
            \item For each $i \in [k]$ and $v \in [n]$, the colour class of $(i,i)$ contains the vertex $(i,i, v,v)$.
            \item For each $i, j \in  \{0,\dots, k\}$, 
            $i \neq j$, and every $u, v \in [n]$ such that $u \neq v$, the colour class $(i,j)$ contains the vertex $(i, j, u, v)$.
        \end{itemize}
        The edge weights are given as follows:
        \begin{itemize}
            \item For every $i \in \{0,\dots, k\}$, $u, v,v' \in [n]$, the edge $(i,0,u,v)(i,1,u, v')$ as weight $x_u$.
            \item For every $i \in [k]$, $i \leq j \leq n-1$, $u \in [n]$, and $1 \leq v < v' \leq n$,
            the edge $(i,j,u,v)(i,j+1,u,v')$ has weight $y_{v,v'}$.
            \item For every $j \in [k]$, $1 \leq i \leq j-1$, $v \in [n]$, and $1 \leq u < u' \leq n$, the edge $(i,j,u,v)(i+1,j,u',v)$ has weight $1$.\item All other edges have weight zero.
        \end{itemize}
        By the analysis in \cref{lem:colourful-grid-vnp-complete},
        this shows that $\clique_{n, k}$ is a projection of $\colhom_{G_{k \times (k+1)}, n^2}$.
        By \cref{lem:minor-projection},
        $\colhom_{G_{k \times (k+1)}, n^2}$ is a projection of $\colhom_{G_{(k+1) \times (k+1)}, n^2}$.
        Finally, $(\colhom_{G_{(k+1) \times (k+1)}, n^2})_{n,k}$
        fpt-projects to $(\colhom_{G_{(k) \times (k)}, n})_{n,k}$ by rescaling the parameters, see \cref{def:fpt-reductions}.
    \end{proof}

    We do not show \VW-hardness for any homomorphism polynomials.
    As in \cite{curticapean_full_2021},
    the following weaker statement is sufficient to rule out polynomial-size algebraic circuits for homomorphism polynomials for patterns of unbounded treewidth.

    \begin{lemma}\label{lem:grid-not-in-vp}
        Let $f \colon \mathbb{N} \to \mathbb{N}$ be a function such that $f^{-1}(\{ n \})$ is finite for every $n \in \mathbb{N}$.
        If $(\colhom_{G_{f(n) \times f(n)}, n}) \in \VP$,
        then $\VFPT = \VW$.
    \end{lemma}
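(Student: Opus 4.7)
\begin{proof-sketch}
The plan is to derive $\VFPT = \VW$ by showing, under the hypothesis, that the parametrised family $(\colhom_{G_{k \times k}, n})_{n, k}$ lies in $\VFPT$. Since \cref{lem:grid-parametrised} establishes that this family is $\VW$-hard under fpt-$c$\nobreakdash-reductions, and $\VFPT$ is closed under such reductions, membership in $\VFPT$ will yield $\VW \subseteq \VFPT$, hence $\VFPT = \VW$.

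First I would set up a ``lookup'' function $g \colon \mathbb{N} \to \mathbb{N}$ that tells us, for each $k$, how large a host size suffices to guarantee that $G_{k \times k}$ appears as a minor of $G_{f(m) \times f(m)}$. The assumption that $f^{-1}(\{j\})$ is finite for every $j \in \mathbb{N}$ implies that $\{m \mid f(m) \leq K\}$ is a finite union of finite sets, hence finite, for every $K$. So $f(m) \to \infty$, and $g(k) \coloneqq \min\{m_0 \mid f(m) \geq k \text{ for all } m \geq m_0\}$ is well-defined; no computability of $g$ is required in the non-uniform setting.

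Given parameters $(n, k)$, I would set $m \coloneqq \max(g(k), n)$, so that simultaneously $f(m) \geq k$ and $m \geq n$. The hypothesis $(\colhom_{G_{f(n') \times f(n')}, n'}) \in \VP$ then yields a circuit of size polynomial in $m$ for $\colhom_{G_{f(m) \times f(m)}, m}$. Since $f(m) \geq k$, the grid $G_{k \times k}$ embeds as a bipartite minor of $G_{f(m) \times f(m)}$ (e.g.\ as a top-left subgrid), so \cref{lem:minor-projection} turns this into a circuit of the same size for $\colhom_{G_{k \times k}, m}$. Finally, substituting $0$ for every variable $x_{(u, i), (v, j)}$ with $i > n$ or $j > n$ annihilates exactly those monomials of $\colhom_{G_{k \times k}, m}$ whose underlying homomorphism $h \colon V(G_{k \times k}) \to [m]$ touches $[m] \setminus [n]$---by connectedness of $G_{k \times k}$, surviving monomials are precisely those with $h(V(G_{k \times k})) \subseteq [n]$---leaving $\colhom_{G_{k \times k}, n}$. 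The resulting circuit has size polynomial in $m = \max(g(k), n)$, which is fpt-bounded as a function of $(n, k)$, so $(\colhom_{G_{k \times k}, n}) \in \VFPT$.

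The main conceptual obstacle is the asymmetry between the hypothesis, where the pattern size $f(n)$ and the host size $n$ are coupled, and the target family, where the parameter $k$ is independent of the host size $n$. Inflating the host to $m = \max(g(k), n)$ decouples the two, but it is essential that both ingredients---the minor projection recovering $\colhom_{G_{k \times k}, m}$ and the zero-padding descending from host size $m$ to $n$---are genuine projections, so that their composition yields a circuit with no size blow-up beyond $\poly(m) \leq h(k) \cdot \poly(n)$. Both steps are plain substitutions into the inputs of the given circuit, so composing them is routine.
\end{proof-sketch}
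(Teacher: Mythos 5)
Your proof is correct, and it takes a noticeably different route from the paper's. The paper proves $(\colhom_{G_{k \times k}, n}) \in \VFPT$ by a case split on the pair $(n,k)$: when $f(n) \leq k$, so that $n$ is bounded by $F(k) \coloneqq \max f^{-1}([k])$, it simply writes out the defining sum of $\colhom_{G_{k\times k},n}$ as a circuit of size $n^{k^2}k^2 \leq F(k)^{k^2}k^2$, a function of $k$ alone; when $f(n) > k$, it projects from $\colhom_{G_{f(n)\times f(n)},n}$ via \cref{lem:minor-projection} at host size $n$ directly. You instead eliminate the case split by inflating the host to $m \coloneqq \max(g(k),n)$, applying the hypothesis once at host size $m$, projecting along the minor $G_{k\times k} \preceq G_{f(m)\times f(m)}$, and then zero-padding the variables indexed by $[m]\setminus[n]$ to descend to host size $n$. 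Both routes establish the same fpt bound; yours buys uniformity (one argument instead of two cases) at the cost of needing the extra, though routine, observation that the zero-padding is a projection recovering $\colhom_{G_{k\times k},n}$ from $\colhom_{G_{k\times k},m}$ (which works because the grid $G_{k\times k}$ has no isolated vertices for $k \geq 2$, and $k=1$ is trivial since $\colhom_{G_{1\times 1},n}=n$), whereas the paper's case $f(n) \leq k$ avoids any appeal to the hypothesis at all and is arguably more elementary.
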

    \begin{proof}
        We show that the assumption implies that $(\colhom_{G_{k \times k}, n}) \in \VFPT$,
        thus rendering \VFPT and \VW equal since \VFPT is closed under fpt-$c$-reductions by \cite[Lemma~4.17]{blaser_parameterized_2019}.
        
        Let $F(k) \coloneqq \max\{n \in \mathbb{N} \mid f(n) \leq k \} = \max(f^{-1}([k]))$, which is well-defined by assumption.
        Let $k,n \in \mathbb{N}$.
        Distinguish two cases:
        \begin{enumerate}
            \item If $f(n) \leq k$, then the definition of $\colhom_{G_{k \times k}, n}$ in \cref{def:colourful-hom-polynomial} yields a circuit of size polynomial in $n^{k^2} k^2 \leq F(k)^{k^2} k^2 $, 
            which is fpt-bounded in $k,n$.
            \item If $f(n) > k$, then $G_{k \times k}$ is a subgraph of $G_{f(n) \times f(n)}$.
            Hence, by \cref{lem:minor-projection},
            $\colhom_{G_{k \times k}, n}$ is a projection of $\colhom_{G_{f(n) \times f(n)}, n}$.
            In particular, $\colhom_{G_{k \times k}, n}$ admits a circuit of size polynomial in $n$.
        \end{enumerate}
        In both cases, $(\colhom_{G_{k \times k}, n}) \in \VFPT$.
        This implies by \cref{lem:grid-parametrised}  that $\VFPT = \VW$.
    \end{proof}

    \begin{observation}\label{obs:nondecreasing}
        Let $f \colon \mathbb{N} \to \mathbb{N}$ be unbounded and non-decreasing.
        Then $f^{-1}(\{ n \})$ is finite for every $n \in \mathbb{N}$.
    \end{observation}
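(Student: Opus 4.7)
The plan is to argue by contraposition on the unboundedness hypothesis. Suppose for contradiction that there exists some $n_0 \in \mathbb{N}$ such that $f^{-1}(\{n_0\})$ is infinite. I will use this together with monotonicity to derive a uniform bound on $f$, contradicting that $f$ is unbounded.

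Concretely, I would first invoke the unboundedness of $f$ to find, for any given $n \in \mathbb{N}$, some index $N = N(n) \in \mathbb{N}$ with $f(N) > n$. Then I would apply the non-decreasing hypothesis: for every $m \geq N$, we have $f(m) \geq f(N) > n$, so in particular $f(m) \neq n$. This shows that $f^{-1}(\{n\}) \subseteq \{0, 1, \dots, N-1\}$, which is a finite set, as desired.

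The argument is entirely routine; there is no real obstacle beyond unpacking the definitions of \emph{unbounded} and \emph{non-decreasing}. The only minor subtlety worth noting is to present the argument directly (choose $N$ with $f(N) > n$) rather than by contradiction, which makes the finite upper bound $N$ on $f^{-1}(\{n\})$ completely explicit.
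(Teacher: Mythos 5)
Your proof is correct and follows essentially the same argument as the paper: use unboundedness to find $N$ with $f(N) > n$, then monotonicity to conclude $f^{-1}(\{n\}) \subseteq \{0,\dots,N-1\}$. The opening mention of contraposition is a harmless false start, since what you actually write is the direct argument, exactly as in the paper.
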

    \begin{proof}
        For every $n \in \mathbb{N}$, there exists $\ell \in \mathbb{N}$ such that $f(\ell) > n$ since $f$ is unbounded.
        Since $f$ is non-decreasing, it holds that $f(\ell') > n$ for all $\ell' \geq \ell$.
        Hence, there are at most finitely many numbers whose image under $f$ is $\ell$.
    \end{proof}

    This concludes the preparations for the proof of \cref{thm:hom-parametrised-simplified}.

    \thmUncolouredhomParametrised*
    
    \begin{proof}[Proof of \cref{thm:hom-parametrised-simplified}]
        By \cref{lem:uncoloured-containment},
        it remains to argue that, if $\tw(F_n)$ is non-decreasing and in $\Omega(1)$,
        then $(\hom_{F_n, n}) \not\in \VP$,
        unless $\VFPT = \VW$.

        Let $\tau > 0$ denote the universal constant from \cref{thm:effective-minor}.\footnote{For the purpose of this proof, the original Grid Minor Theorem \cite{robertson_graph_1986} would be sufficient.}
        Let $g(n) \coloneqq (\tw(F_n))^{1/\tau}$ be a function such that $F_n$ contains a $g(n) \times g(n)$ grid as a minor.
        By assumption, the function $g$ is unbounded and non-decreasing, hence it meets the requirements of \cref{lem:grid-not-in-vp} by \cref{obs:nondecreasing}.
        Finally, let $h(n) \coloneqq \min\{g(n), n^{1/4} \}$.
        By \cref{lem:single-hom-projection-minors,lem:minor-projection},
        for all $n \in \mathbb{N}$,
        \begin{align*}
            \size(\colhom_{G_{h(n) \times h(n)}, \frac18 \sqrt{n}})
            & = \size(\colhom_{G_{h(n) \times h(n)}, \frac{n}{8 (n^{1/4})^2}}) \\
            & \leq \size(\colhom_{G_{h(n) \times h(n)}, \frac{n}{8 h(n)^2}}) \\
            & 
            \leq \size(\colhom_{F_n,  n})  + \operatorname{poly}(n).
        \end{align*}
        By letting $\ell \coloneqq \frac18 \sqrt{n}$,
        \[
           \size(\colhom_{G_{h(64\ell^2) \times h(64\ell^2)}, \ell})
           \leq \operatorname{poly}(\ell).
        \]
        The function $\ell \mapsto h(64\ell^2)$ satisfies the assumption of \cref{lem:grid-not-in-vp}.
        Hence, $\VFPT = \VW$, as desired.
    \end{proof}

    Towards \cref{cor:symmetric-vs-non-symmetric-computation}, we observe that uncoloured homomorphism polynomials are in \VW when viewed as parametrised families of polynomials, cf.\ \cref{def:parametrised-p-family}.
    
    \begin{lemma}\label{lem:containment-vw1}
        For every $p$-family $(F_k)$ of simple bipartite graphs,
        $(\hom_{F_k, n})_{n,k \in \mathbb{N}} \in \VW$.
    \end{lemma}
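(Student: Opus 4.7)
My plan is to exhibit, for each $N \in \mathbb{N}$, a universal polynomial $g_N$ satisfying the circuit requirements of \cref{def:vw-t} with $t = 1$, and then to provide the fpt-substitution witnessing $(\hom_{F_k, n}) \leq^{fpt}_s \bigl( \sum_{\boldsymbol{e}} g_N(\boldsymbol{x}, \boldsymbol{e}) \bigr)$. Since $(F_k)$ is a $p$-family, there is a polynomial $q$ with $|V(F_k)| \leq q(k)$, which will supply all $k$-side bounds. Crucially, $g_N$ must be independent of $k$, so the combinatorial data of $F_k$ is encoded entirely by the substitution.

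The polynomial I would use is, for $N \in \mathbb{N}$,
\[
    g_N(\boldsymbol{x}, \boldsymbol{y}) \;\coloneqq\; \prod_{1 \leq s < t \leq N} \bigl(1 + y_s y_t (x_{s,t} - 1)\bigr)
\]
in indeterminates $x_{s,t}$ for $1 \leq s < t \leq N$ and $y_s$ for $s \in [N]$. The circuit for $g_N$ suggested by this expression has constant depth; the only gate of fan-in greater than~$2$ is the outermost product over the $\binom{N}{2}$ pairs, so the weft is~$1$ and the size is polynomial in~$N$. A short computation, analogous to the treatment of $\clique_{n,k}$ in \cite{blaser_parameterized_2019} and \cref{lem:grid-parametrised}, shows
\[
    \sum_{\boldsymbol{e} \in \bitvectors{N}{V}} g_N(\boldsymbol{x}, \boldsymbol{e}) \;=\; \sum_{\substack{S \subseteq [N] \\ |S| = V}} \;\prod_{\substack{s < t \\ s, t \in S}} x_{s,t}
\]
for every $V \in \mathbb{N}$.

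I would then specialise. Setting $V \coloneqq |V(F_k)| \leq q(k)$ and $N \coloneqq V \cdot n$ (fpt-bounded via $r(n) = n$, $s(k) = q(k)$), I identify $[N]$ with $V(F_k) \times [n]$, partitioned into $V$ blocks of size $n$ indexed by the vertices of $F_k$. The fpt-substitution replaces $x_{(v,i),(w,j)}$ by $0$ when $v = w$, by $x_{i,j}$ or $x_{j,i}$ (depending on which side of the fixed bipartition $V(F_k) = A \uplus B$ contains $v$) when $vw \in E(F_k)$, and by $1$ otherwise; each substituted polynomial is a constant or a single variable, so its degree and size are trivially fpt-bounded. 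Choosing the bit-vector weight $k' = V$, the same-block zeros annihilate every $\boldsymbol{e}$ whose support is not exactly one element per block (any block with zero $1$s forces some other block to contain $\geq 2$ $1$s, which triggers a $0$ factor), whereas supports with exactly one $1$ per block biject with homomorphisms $h \colon V(F_k) \to [n]$. The non-vanishing summand for such an $h$ evaluates to $\prod_{vw \in E(F_k)} x_{h(v), h(w)}$, so the total sum recovers $\hom_{F_k, n}$.

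The only subtle point is administrative: the substitution must respect the orientation of the variables $x_{i,j}$ of $\hom_{F_k, n}$ with respect to the bipartition of $F_k$, so that each edge contributes the variable with indices in the convention of \cref{eq:homPoly}. Once this bookkeeping is in place, all clauses of \cref{def:vw-t} are met with $t = 1$, giving $(\hom_{F_k, n}) \in \VW$.
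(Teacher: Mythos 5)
Your proof is correct and takes a genuinely different route from the paper.
The paper's universal polynomial $g_n$ (see the display preceding the claim inside the paper's proof) lives in three families of variables $x_{i,j}$, $y_{u,v}$, $z_{u,i}$: the $z$-variables are the Boolean ones summed over, the $y_{u,v}$ are later substituted with $0/1$ to encode the adjacency matrix of $F_k$, a third product of linear forms $(z_{u,i}+z_{u,j}-2)$ enforces that each row of $\boldsymbol{z}$ contains exactly one $1$, and a scalar normalisation $\alpha_n^n$ cancels the spurious $\pm 1$ and $\pm 2$ factors this enforcement introduces.
Your $g_N = \prod_{s<t}(1 + y_s y_t(x_{s,t}-1))$ is the much leaner weighted-clique selector already implicit in the \VW-completeness of $(\clique_{n,k})$ in \cite{blaser_parameterized_2019}: summing over weight-$V$ bit vectors yields $\sum_{|S|=V} \prod_{s<t \in S} x_{s,t}$, and you encode both the adjacency of $F_k$ and the one-per-block constraint in the substitution into the $x_{s,t}$ (zeros on intra-block pairs annihilate any off-diagonal $S$). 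This dispenses entirely with a third variable family, the enforcement product, and the normalisation constant, at the cost of a somewhat larger but still polynomial variable set ($\binom{Vn}{2}$ instead of $3n^2$).
One small point worth being explicit about: \cref{def:vw-t} formally stipulates the same count $q(n)$ of $x$- and $y$-variables, and both your $g_N$ and the paper's $g_n$ have unequal counts; in your case this is fixed by padding $g_N$ with a factor $\prod_{s}(1 - y_s)$ over the dummy $y$-variables (these parallel unbounded-fan-in products keep the weft at $1$). This is a bookkeeping formality the paper also leaves implicit.
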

    \begin{proof}
        First observe that $(\hom_{F_k, n})_{n,k \in \mathbb{N}}$
        is a parametrised $p$-family, see \cref{def:parametrised-p-family}.
        The number of variables in $\hom_{F_k, n}$ is $n^2$.
        The degree of $\hom_{F_k, n}$ is equal to the number of edges in $F_k$ and thus polynomial in $k$, since $(F_k)$ is a $p$-family.
        Containment in \VW follows by generalising the argument from \cite{blaser_parameterized_2019} that $(\clique_{n,k}) \in \VW$.
        Recall \cref{def:vw-t}.
        We define the following universal polynomial
        $g_n$ in variables $x_{i,j}, y_{u,v}, z_{u, i}$ for $i,j,u,v \in [n]$.
        \begin{equation} \label{eq:gn}
            g_n \coloneqq \prod_{u,v\in [n]} \prod_{i, j \in [n]} (x_{i, j} y_{u,v}z_{u, i}z_{v, j} + 1 - z_{u, i}z_{v, j} y_{u,v}) \cdot 
            \alpha_n^{n}
            \cdot
            \prod_{u \in [n]} \prod_{1 \leq i < j \leq n} (z_{u, i} + z_{u, j} -2)
        \end{equation}
        for $\alpha_n \coloneqq (-2)^{- (n-1)(n-2)} (-1)^{n-1}$.
        The variables $x_{i,j}$ will become the variables of the homomorphism polynomials.
        The variables $y_{u,v}$ indicate edges in the pattern graph and will be assigned $0$ or $1$.
        Finally, the variables $z_{u,i}$ encode the homomorphism by indicating that the vertex $u$ is mapped to the vertex $i$.

        \begin{claim}
            $(\hom_{F_k, n})$
            fpt-$s$-reduces to
            $(\sum_{\boldsymbol{z} \in \bitvectors{n^2}{k}} g_n(\boldsymbol{x}, \boldsymbol{y}, \boldsymbol{z}))$.
        \end{claim}
        \begin{claimproof}
            Fix $n,k \in \mathbb{N}$.
            Write $\ell \coloneqq |V(F_k)|$
            and $N \coloneqq \max\{n,\ell\}$.
            We describe a substitution of $\sum_{\boldsymbol{z} \in \bitvectors{N^2}{\ell}} g_N(\boldsymbol{x}, \boldsymbol{y}, \boldsymbol{z})$ that evaluates to $\hom_{F_k, n}$. 

            To ease notation, suppose that $N = n = \ell$.
            The general case can be treated via a padding argument.
            The input values of $\hom_{F_k, n}$ are assigned to $\boldsymbol{x}$.
            It is $y_{u,v} = 1$ if, and only if, $uv \in E(F)$.

            Observe that $\boldsymbol{z} \in \{0,1\}^{V(F) \times n}$ ranges over  bit vectors with precisely $|V(F)|$ ones.
            The entry $z_{u, i}$ indicates that $u \in V(F)$ is mapped to $i \in [n]$.
            The two nested products on the right of \cref{eq:gn} enforce that, for every $u \in V(F)$, precisely one of $z_{u, i}$ for $i \in [n]$ is one.
            In other words, $\boldsymbol{z}$ encodes a map $V(F) \to [n]$.
        \end{claimproof}

        By renaming variables, all four products of unbounded fan-in in \cref{eq:gn} can be combined into one product.
        Hence, $g_n$ is computed by a constant-depth polynomial-size weft-$1$ circuit.
        It follows that $(\hom_{F_k, n})_{n,k \in \mathbb{N}} \in \VW$ 
        by \cref{def:vw-t}.
    \end{proof}
    
    \corSymNonSym*
    \begin{proof}
        Recall the following implications, which hold unconditionally.
        By \cref{thm:symmetric-classes-single-hom},
        for every family $(F_n)$ of bipartite multigraphs,
        \[
            \tw(F_n) \in O(1) \iff (\hom_{F_n, n}) \in \symVP.
        \]
        By \cref{lem:uncoloured-containment},
        for every $p$-family $(F_n)$ of bipartite multigraphs,
        \[
            (\hom_{F_n, n}) \in \VP \impliedby \tw(F_n) \in O(1).
        \]

        Suppose that $\VFPT \neq \VW$
        and let $(F_n)$ be a $p$-family of bipartite multigraphs of non-decreasing $\tw(F_n)$.
        By \cref{thm:hom-parametrised-simplified},
        \begin{equation}\label{eq:very-last-equation}
            (\hom_{F_n, n}) \in \VP \implies \tw(F_n) \in O(1)
        \end{equation}
        as desired.

        Conversely, suppose that $\VFPT = \VW$.
        Let $(G_{k \times k})$ denote the the $p$-family of $k \times k$ grids.
        Their treewidth is $\tw(G_{k \times k}) = k$.
        By \cref{lem:containment-vw1} and \cite[Lemma~4.17]{blaser_parameterized_2019}, 
        there exists a function $f \colon \mathbb{N} \to \mathbb{N}$
        and a $p$-bounded function $q$ such that 
        $\size(\hom_{G_{k \times k}, n}) \leq f(k) q(n)$ for all $n,k \in \mathbb{N}$.
        Without loss of generality, suppose that $k \leq f(k)$ for all $k \in \mathbb{N}$.
        Let $g(n) \coloneqq \max\{k \in \mathbb{N} \mid f(k) \leq q(n)\}$.
        Then $g(n) \leq f(g(n)) \leq q(n)$ and hence, $(G_{g(n) \times g(n)})$ is a $p$-family.
        By construction,
        $\size(\hom_{G_{g(n) \times g(n)}, n}) \leq f(g(n)) \cdot q(n) \leq (q(n))^2$.
        Hence, $(\hom_{G_{g(n) \times g(n)}, n}) \in \VP$.
        Finally, the treewidth of $(G_{g(n) \times g(n)})_{n \in \mathbf{N}}$ is unbounded and non-decreasing since $g$ is unbounded and non-decreasing.
        Hence, $(G_{g(n) \times g(n)})_{n \in \mathbf{N}}$ is a counter example to \cref{eq:very-last-equation}.
    \end{proof}

    The proof of \cref{thm:uncoloured-lincomb-hom-complexity-parametrised} is now conducted analogous to the proof of \cref{thm:uncoloured-lincomb-hom-complexity}.

    \thmLincombHomComplexityParametrised*
    \begin{proof}
        First suppose that $\max \tw(p_n) \in O(1)$.
        Since $(p_n)$ is a $p$-family and has sublinear volume,
        each constituent homomorphism polynomial counts homomorphism from a pattern $F$ with $\lVert F \rVert$ bounded polynomially in $n$, see \cref{lem:bipartite-multigraph-linear-independent}.
        Hence,
        by \cref{lem:uncoloured-containment}, 
        each constituent homomorphism polynomial is in \VP.
        Since $(p_n)$ has polynomial dimension, it is itself in \VP.

        Conversely, suppose that $\max \tw(p_n) \in \Omega(1)$ and $(p_n) \in \VP$.
        For every $n \in \mathbb{N}$,
        let $F_n$ denote a pattern with $\tw(F_n) = \max \tw(p_n)$ and non-zero coefficient $\alpha_{F_n, n} \neq 0$ in $p_n$.
        As in the proof of \cref{thm:uncoloured-lincomb-hom-complexity},
        by \cref{lem:uncoloured-simple-lovasz-interpolation},
        $\size(\hom_{F_n, n^\epsilon}) \leq \size(p_n) + \poly(n) \leq \poly(n)$.
        The treewidth of $(F_{n^{1/\epsilon}})_{n \in \mathbb{N}}$ is unbounded and non-decreasing.
        Hence, by \cref{thm:hom-parametrised-simplified},
        $(\hom_{F_n, n^\epsilon}) \in \VP$ implies that $\VFPT = \VW$, as desired.
    \end{proof}

    \section{Acknowledgements}

    We are grateful for fruitful discussions with and feedback from Radu Curticapean and Meena Mahajan.

    Prateek Dwivedi thanks the Independent Research Fund Denmark (grant agreement No. 10.46540/3103-00116B) and the support of Basic Algorithms Research Copenhagen (BARC), funded by VILLUM Foundation Grant 54451.
    
    Benedikt Pago received funding from UK Research and Innovation (UKRI) under the UK government’s Horizon Europe funding guarantee: grant number EP/X028259/1.
    
    Tim Seppelt is funded by the European Union (CountHom, 101077083). Views and opinions expressed are however those of the author(s) only and do not necessarily reflect those of the European Union or the European Research Council Executive Agency. Neither the European Union nor the granting
    authority can be held responsible for them.

    \newpage
	\printbibliography
\end{document}